\begin{document}
\pagenumbering{gobble}

\title{On Fairness and Stability in Two-Sided Matchings}
\author{Gili Karni\thanks{Weizmann Institute of Science. Email: \url{gili.karni1210@gmail.com}. Research supported by the
Israel Science Foundation (grant number 5219/17), and by the Simons Foundation Collaboration on the Theory of
Algorithmic Fairness.} \and 
Guy N. Rothblum\thanks{Weizmann Institute of Science. Email: \url{rothblum@alum.mit.edu}. This project has received funding from the European Research
Council (ERC) under the European Union’s Horizon 2020 research and innovation programme (grant agreement No. 819702), from the
Israel Science Foundation (grant number 5219/17), and from the Simons Foundation Collaboration on the Theory of
Algorithmic Fairness.}  \and Gal Yona\thanks{Weizmann Institute of Science. Email: \url{gal.yona@weizmann.ac.il} Supported by the European Research Council (ERC) (grant agreement No. 819702), by the Israel Science Foundation (grant number 5219/17), by the Simons Foundation Collaboration on the Theory of Algorithmic Fairness, by the Israeli Council for Higher Education via the Weizmann Data Science Research Center, by a Google PhD fellowship, and by a grant from the Estate of Tully and Michele Plesser.} }
\date{}

\maketitle

\begin{abstract}

There are growing concerns that algorithms, which increasingly make or influence important decisions pertaining to individuals, might produce outcomes that discriminate against protected groups. We study such fairness concerns in the context of a two-sided market, where there are two sets of agents, and each agent has preferences over the other set. The goal is producing a matching between the sets. Throughout this work, we use the example of matching medical residents (who we call ``doctors'') to hospitals.  This setting has been the focus of a rich body of work. The seminal work of Gale and Shapley  formulated a {\em stability} desideratum, and showed that a stable matching always exists and can be found in polynomial time. 

With fairness concerns in mind, it is natural to ask: might a stable matching be discriminatory towards some of the doctors? How can we obtain a {\em fair} matching? The question is interesting both when  hospital preferences might be discriminatory, and also when each hospital's preferences are fair.

We study this question through the lens of metric-based fairness notions (Dwork {\em et al.} [ITCS 2012] and Kim {\em et al.} [ITCS 2020]). We formulate appropriate definitions of fairness and stability in the presence of a similarity metric, and ask: does a fair and stable matching always exist? Can such a matching be found in polynomial time? Can classical Gale-Shapley algorithms find such a matching? Our contribution are as follows:

\begin{itemize}

\item {\bf Composition failures for classical algorithms.} We show that composing the Gale-Shapley algorithm with fair hospital preferences can produce blatantly unfair outcomes. 

\item {\bf New algorithms for finding fair and stable matchings.} Our main technical contributions are efficient new algorithms for finding fair and stable matchings when: $(i)$ the hospitals' preferences are fair, and $(ii)$ the fairness metric satisfies a strong ``proto-metric'' condition: the distance between every two doctors is either zero or one. In particular, these algorithms also show that, in this setting, fairness and stability are compatible.
 
\item {\bf Barriers for finding fair and stable matchings in the general case.} We show that if the hospital preferences can be unfair, or if the metric fails to satisfy the proto-metric condition, then no algorithm in a natural class can find a fair and stable matching. The natural class includes the classical Gale-Shapley algorithms and our new algorithms.


\end{itemize}

\end{abstract}

\newpage
\tableofcontents
\newpage

\pagenumbering{arabic}

\section{Introduction}

Algorithms are increasingly influencing or replacing human decision-makers in several sensitive domains, including the criminal justice system, online advertising, and medical risk prediction. Along with the benefits of automated decision-making, there is also a potential risk of discrimination towards groups of individuals, which might be illegal or unethical \cite{o2016weapons, bogen2019all}. Examples for unintended but harmful behavior have been shown to happen in algorithms that allocate resources in domains such as online advertising \cite{ali2019discrimination, tobin2019hud}, healthcare systems \cite{obermeyer2019dissecting, rajkomar2018ensuring}, and more.



Many resource allocation problems, such as online advertising and assigning hospitals to medical students for residency, can be viewed as \emph{two-sided markets}. In this setting, there are two sets of agents, that both have preferences over the other set. We seek a matching: a symmetric allocation in which every member from the first set is matched to a member of the other set (ads to users, drivers to customers, hospitals to students). A well-studied desideratum for such two-sided matchings is \emph{stability} \cite{gale1962college}. A matching is stable only if it leaves no pair of agents on opposite sides of the market who are not matched to each other, but would both prefer to be. In their seminal work, Gale and Shapley \cite{gale1962college} proved that in two-sided markets a stable matching always exists and can be found efficiently using a simple procedure known as the \emph{Gale-Shapley} algorithm. 
Stability represents the incentive to participate in the matching, i.e., no unmatched pair can {\em both} improve their situation by being matched to each other. 
This problem was introduced by \cite{gale1962college} and has been studied broadly, see e.g. the books by Roth and Sotomayor \cite{roth1992two}, by Knuth \cite{knuth1997stable}, and by Gusfield and Irving  \cite{gusfield1989stable}. In particular, Roth \cite{roth1984evolution, roth1986allocation} studied the real-world setting of matching medical residents to hospitals. We use this as a running example throughout our work.

There are many instances, however, where we may seek to look beyond utility-based desiderata such as stability. Suppose, for example, in the problem of assigning hospitals for residency, that given two equally-qualified residents, the preferences of some hospitals are discriminatory, e.g.,  they display a preference towards residents who don't live in certain neighborhoods, or residents without children. In this case, a stable matching may be undesirable.
Our goal is preventing discrimination in the two-sided matching setting, where we match individuals to resources such as hospitals, ads, or schools.
The emerging literature about algorithmic fairness typically focuses on one-sided allocation problems (e.g., supervised learning).
The two-sided setting is unique in that we have two sets of agents, both of which have preferences over the other side.
Studying fairness in the context of two-sided markets is well-motivated since many of the 
the resource allocation problems captured by two-sided markets have high stakes for the individuals involved.

\subsection{This Work: Fair and Stable Matchings}
 
We would like the matching to be both fair and stable.
This requires appropriate definitions of fairness and of stability.
In this work, we embark on a study of this question;
we set out to present such definitions and study when and how fairness and stability can be compatible.
 
A fundamental prerequisite is a notion of what makes a given allocation fair. We build on the approach of \emph{individual fairness} (IF) \cite{dwork2012fairness}, which assumes the existence of a task-specific similarity metric that measures how ``similar'' two individuals are (say, how similarly qualified they are). We would like to ensure that the eventual allocation satisfies \emph{preference-informed individual fairness} (PIIF) \cite{kim2020preference}, which roughly means that there is no envy between similar individuals (i.e., a doctor $i_1$ will not prefer the outcome that a similarly qualified doctor $i_2$ receives).
In the case where similar doctors have the same preferences, a fair deterministic solution does not exist. Thus, we focus on finding a fair {\em distribution} over matchings.

In this work, we begin to chart the landscape of \emph{fair and stable} matchings in the context of two-sided markets. 
Specifically,  our contributions are:

\begin{itemize}
    \item \textbf{Fair hospitals might arrive at unfair allocations}. Our exploration builds on an important but counter-intuitive result. We show that even when the hospitals' preferences satisfy a very strong notion of fairness (strong indifference between equally-qualified candidates), running the classic Gale-Shapley algorithm  is \emph{not} guaranteed to result in a fair allocation. This can be seen as a failure of \emph{composition}: even when the inputs to an algorithm are ``fair'', the result may not be. Composition has previously been highlighted as a challenge in the context of designing fair algorithms \cite{dwork2018fairness}. This stands in stark contrast to e.g. the landscape around privacy (where \emph{differential privacy} enjoys graceful composition properties, facilitating the design of complex private algorithms using private ``building blocks''). 

    
    \item \textbf{New algorithms assuming fair preferences and ``simple'' metrics}: Our main technical contribution is a strong positive  result establishing that fairness and stability are compatible, when (i) the preferences of the hospitals are fair, and (ii) we place strong assumptions on the structure of the similarity metric defining fairness. Specifically, our results hold for a class of similarity metrics that we refer to as ``proto-metrics'', in which the distances between every pair of individuals must be either zero or one. Importantly, we show that not only do fair and stable solutions exist, but they can be found efficiently: we provide new algorithms, inspired by the Gale-Shapley algorithm, which obtain fair and stable solutions.
    
    \item \textbf{Barriers for compatibility in the general case}. A natural question is whether the assumptions that the hospital preferences must be fair and that the metric must be ``simple'' are necessary. We provide a partial answer by demonstrating a rich class of natural algorithms (extending our algorithms and the original Gale-Shapley algorithm), and proving no algorithm in this class can guarantee a fair and stable solution if either assumption is removed.

    \item \textbf{New notions of stability}.  The above barrier results require formalizing new notions of stability for this more general setting. We aim to formalize notions of stability that provide a reasonable guarantee for the hospitals without being trivially incompatible with fairness. For example, if the hospital that is preferred by all the doctors has discriminatory preferences, it might be the case that every stable matching is blatantly unfair. Similar situations can arise even when the hospital preferences are fair, if the metric is not a proto-metric. Thus we aim for a relaxed goal, which we find suitable for this setting: obtaining a {\em fair matching} where no hospital prefers any {\em fair alternative} to the allocation it received in the matching.
    Formalizing this intuition presents several subtleties. Thus, we present two definitions, one for the case of unfair preferences and a proto-metric, and a weaker definition for the case of general metrics (we use this weaker definition in our negative results).
    
    
\end{itemize}

\subsection{Related Work}
Our work bridges the established literature on matchings, originating in the seminal work of Gale and Shapley \cite{gale1962college}, and the emerging literature on algorithmic fairness, which seeks to formalize and mitigate discrimination in algorithmic decision-making systems \cite{barocas2016big}. The latter has focused on formalizing and studying different notions of fairness, and understanding the tensions between them and possible ``accuracy'' based desiderata \cite{chouldechova2017fair, kleinberg2016inherent, hardt2016equality}. One popular approach to quantifying unfairness are group-based definitions, in which the objective is equalizing some statistic across a fixed collection of ``protected'' groups. In this work we build on a different approach, using the notion of preference-informed individual fairness \cite{kim2020preference}, which combines the individual-based fairness notion proposed in \cite{dwork2012fairness} with the notion of envy freeness from game theory \cite{foley1967resource, varian1974efficiency}. Recent work \cite{dwork2018fairness} has considered the question of composition in the context of fairness, showing that fairness may fail to compose in a variety of settings. Our results complement their findings by demonstrating such failures may also occur for a natural and widely popular algorithm. 

Fairness has also been studied previously in the context of two-sided matchings. \cite{gusfield1989stable} introduced the \emph{equitable stable marriage problem}, where the objective is to minimize (across stable matchings) the difference between the sum of rankings of each set over their matches. This implies avoiding unequal degrees of happiness among the two sides, and \cite{giannakopoulos2015equitable} showed an algorithm that finds such a solution. This is motivated by the fact that stable matchings are in general not unique (in fact, their number could even be exponential in the size of the sets \cite{knuth1997stable, thurber2002concerning}), and that the algorithm introduced by \cite{gale1962college} finds the optimal stable matching for one of the sets and the worst stable matching for the other. Our approach is inherently different in that we seek a matching that is fair within one of the sets rather than across sets. \cite{freeman2021two} studied the problem of finding an allocation that satisfies \emph{envy-freeness up to one good} (EF1), for both sets simultaneously. This is a relaxation of the classic notion of envy-freeness, that requires that any pairwise envy can be eliminated by removing a single item from the envied agent’s allocation. This formulation is similar to ours in that it tries to guarantee fairness for each set separately. However, our work is conceptually different in that we consider metric-based fairness requirements as well as our focus on the compatibility between fairness and stability, and technically different because in the one-to-one setting that we study the EF1 objective becomes vacuous. Finally, \cite{suhr2019two} study the problem of matching drivers to passengers in ride-hailing platforms, trying to achieve equal income for the drivers. We focus on a different fairness notion, that is appropriate when the preferences of the individuals are potentially diverse.

\subsection{Defining Fairness and Stability}
We focus on an asymmetric setting, such as assigning hospitals to medical students for residency.
Usually in this problem each hospital can be assigned to multiple residents. 
For simplicity, we focus on the setting where each hospital is assigned to a single resident.
We refer to the medical students as doctors. We want to guarantee fairness for the doctors and stability, to be defined, for the hospitals. 
Since the resources are limited and sometimes multiple individuals want a single resource, a deterministic fair solution does not always exist. Thus, we focus on finding probabilistic solutions where the allocation is a distribution over matchings.

\subsubsection{Fairness Requirements}
\label{sec:intro:fairness}

To enforce fairness, we assume that we are given an unbiased similarity metric for the doctors. 
We wish to ensure a metric-based fairness guarantee such as \emph{individual fairness} (IF) \cite{dwork2012fairness}, that is, similar individuals should have a similar outcome.
For instance, the metric can represent differences in the GPA of medical students. Then, students with a GPA of 5.0 should be assigned to a prestigious hospital with the same probability, and students with a GPA of 4.0 can be assigned to that prestigious hospital with a lower probability.
However, sometimes similar individuals have different preferences, e.g., students with the same GPA can prefer different hospitals for reasons such as geographic location or specialization in a particular field. 
In that case, we want to allow these similar individuals to have different outcomes. Thus, we find that \emph{preference-informed individual fairness} (PIIF) \cite{kim2020preference} is a more appropriate fairness notion. PIIF is a relaxation of two fairness requirements: (1) individual fairness; and (2) envy freeness. Envy freeness (EF) requires that no individual prefers the outcome of another individual over their own outcome. In PIIF, we allow deviations from EF, i.e., we allow for an individual $i$ to envy the outcome of another individual $j$, conditioned on the fact that $j$'s outcome can be changed by no more than the distance between $i$ and $j$ to an outcome that $i$ will not envy. We allow deviations from IF, so long as they are in line with individuals’ preferences.

\paragraph{Our focus: proto-metrics.} In many of our results, we focus on the special case where the metric is a {\em proto-metric}: the distances between all the doctors are either 0 or 1. 
In this setting, the doctors are divided into clusters of similar doctors. PIIF means that we require envy-freeness between the doctors in each cluster. 
However, there are no constraints on the allocations of doctors from different clusters. 
We focus on this setting throughout the introduction, {\em except} in Section \ref{sec:intro_neg_results}.

To compare the preferences of the doctors within each cluster, we use \emph{stochastic domination}.
We say one distribution stochastically dominates another if, for every outcome, the probability of getting this outcome or a better one is lower-bounded by the corresponding probability in the other distribution. 
See Definition \ref{def:stochastic_domination} for a formal definition of stochastic domination.
In the presence of a proto-metric, a PIIF allocation implies that each doctor's allocation stochastically dominates the allocation of any other doctor in its cluster.

We find that there are natural settings where the restriction to proto-metrics is reasonable. For example, consider a setting where hospitals are only allowed to distinguish between medical students based on a specialization in their medical studies, e.g., neurology, cardiology, etc.\footnote{For example, in the current Israeli system, hospitals are not allowed to express preferences over the medical students \cite{bronfman2015assigning}. Allowing hospitals to have fair preferences, even in a limited way, could improve  outcomes for the hospitals.} Different hospitals may prefer different specializations. A proto-metric can partition the candidates based on their specialization (one could also add  merit-based sub-categories within each specialization).

\subsubsection{Fair Hospital Preferences}
\label{par:intro:fair_prefs}

We distinguish between two scenarios: (1) Hospitals have fair preferences over the doctors.
For instance, we allow a hospital to prefer some doctors over others because they have a higher GPA or good grades in a particular topic but not because they do not have children or belong to a specific ethnic group.
The fairness requirement for the hospitals' preferences can be formalized in different ways (see below). For now, we emphasize that, even if we impose strict fairness requirements on the individual hospitals' preferences, our results show that obtaining a fair and stable solution can be far from trivial.
(2) Hospital preferences might be discriminatory. This case is interesting since decision-makers can be discriminatory, e.g., because of biased data or prejudice. Finding a fair and stable solution in the presence of unfair hospital preferences is even more challenging (in particular, it is not clear how to define stability), see Section \ref{sec:intro_neg_results}.

Requiring fair hospital preferences is an assumption or restriction on the input to the algorithm that attempts to find a fair and stable allocation (the algorithm's input is the preferences of the doctors and the hospitals). We find the restriction to be natural and well motivated: if our goal is finding a matching that is fair to the doctors, it makes sense to ask the hospitals to indicate fair preferences. As remarked above, even under strong fairness restrictions on the hospitals' preferences, finding a fair and stable matching is challenging.

\paragraph{Our focus: strictly fair hospital preferences.} Focusing on the proto-metric setting (see above), we formalize a strong notion of {\em strictly individually fair} hospital preferences: each hospital can have arbitrary (deterministic) preferences over the clusters, but must be completely indifferent between every two doctors that are in the same cluster.

To reason about the (metric-based) fairness of a hospital's preferences, we view them as probabilistic, i.e., a distribution over ordinal preferences. Strictly IF preferences induce such a distribution, where the ``external'' ordering of the clusters is deterministic, and the ``internal'' ordering within each cluster is a uniformly random permutation of the doctors (the random internal ordering captures  indifference between doctors in the same cluster).\footnote{We note that while strict IF could also be captured with deterministic preferences that allow ties, as was considered in \cite{irving1994stable, kiselgof2014matchings}, reasoning about preferences as distributions over ordinal preference lists allows extensions to other notions of fair preferences, as well general metrics.}

\paragraph{Beyond strictly IF preferences.} Strict IF is a strong restriction or assumption on the hospitals' preferences. This makes our negative results (the failure of the classic Gale Shapley algorithms) stronger: the algorithms fail even under the strong restriction on hospital preferences. For positive results, showing an algorithm that works given a more relaxed notion of fair preferences (even in the proto-metric setting) is an interesting question for future work. 

More generally, we could hope to design algorithms that work with fair hospital preferences under general metrics, or with completely {\em unfair} hospital preferences. This raises subtle difficulties in the definition of stability and encounters natural barriers, see Section \ref{sec:intro_neg_results}.

\subsubsection{Stability Requirements}
\label{subsec:intro:stability}

The uniform distribution over all outcomes is always trivially fair. However, in addition to fairness for the doctors, we want some guarantee for the hospitals. Stability is one such guarantee, which has been studied extensively in the classical setting, with deterministic allocations and without fairness constraints. The classical stability guarantee ensures that there are no ``blocking pairs''. That is, there are no pairs of a hospital and a doctor that prefer each other over their match and are not matched. We extend this notion to the setting where there are fairness constraints, and the allocations are probabilistic. 

\paragraph{Our focus: stability under proto metrics, strictly IF hospital preferences.} For the case of a proto-metric and strict IF hospital preferences, there is a natural extension to the classical notion of stability. We say that a (probabilistic) allocation is unstable if there is a matching with non-zero probability under the probabilistic allocation, where there exists a blocking pair $(d,h)$ that (strongly) prefer each other over their match. We remark that, since we assume the hospital preferences are strictly IF, this can only happen if the doctor $d$ and the doctor who is matched to the hospital $h$ (denote them $d'$), are in different clusters (otherwise $h$, whose preferences are strictly IF, will not prefer $d$ to $d'$).

For instance, suppose there two clusters $i = \set{i_1, i_2}$ and $j = \set{j_1, j_2}$, such that hospital $h$ prefers cluster $i$ over cluster $j$, and doctor $i_1$ prefers hospital $h$ over any other hospital. Assume a probabilistic allocation where, for some matching in the support, hospital $h$ is matched to $j_1$. The pair $h$ and $i_1$ form a blocking pair, since they strongly prefer each other over their match. On the other hand, even if hospital $h$ is matched to doctor $i_2$ in every matching in the support, hospital $h$ and doctor $i_1$ do not form a blocking pair, since hospital $h$ does not have a strong preference between doctors $i_2$ and $i_1$ (since they are in the same cluster).

\paragraph{Approximate stability.} For a probabilistic allocation, we can relax the stability requirement by allowing a blocking pair to occur with small probability. The probability is over the choice of a matching drawn from the probabilistic allocation. We say that an allocation is $\tau$-approximately 
stable if the probability that {\em no} blocking pair occurs is at least $(1 - \tau)$ . 

\paragraph{Beyond proto-metrics and fair preferences.} Defining an appropriate notion of stability under unfair hospital preferences or general metrics is considerably more challenging. We provide a definition for unfair hospital preferences in the presence of a proto-metric. We also formalize a minimal weak stability requirement for general metrics (we use this requirement to show negative results). See Sections \ref{subsec:intro:stability_unfair_prefs} and \ref{subsec:intro:stability_gen_metrics}.





\subsection{Fair Preferences Do Not Guarantee a Fair Allocation}
\label{subsec:intro:composition}
Focusing on the setting of a proto-metric and strictly IF hospital preferences,
a natural way to achieve stability and fairness is to use the probabilistic form of strict IF preferences (see Section \ref{par:intro:fair_prefs}),
sample the hospitals' preferences and run the Gale-Shapley algorithm over these sampled preferences.
The probabilistic allocation is the random variable defined by this procedure.
In Section \ref{sec:failure_in_composition}, we show that while this produces a stable probabilistic allocation, it can lead to unfair outcomes, even when the hospital preferences are themselves strictly IF.

To explain this negative result, we first present the Gale-Shapley algorithm. The algorithm is not symmetric: one of the sets is the proposing set, and the other is the accepting set. Here we present the variant where the doctors are the proposing set: At the initialization, no doctor is matched. The algorithm terminates when each doctor is matched to a hospital. Until then, at each round, an unmatched doctor $d$ is chosen arbitrarily. This doctor $d$ proposes to a hospital $h$, where $d$ chooses $h$ as the most preferred hospital that did not reject it yet. Then, hospital $h$ has to decide whether to accept doctor $d$ or reject it. If hospital $h$ is unmatched too, it will always accept. If hospital $h$ is already matched to a doctor $d'$, it will accept only if it prefers doctor $d$ over doctor $d'$. Otherwise, it will reject. If hospital $h$ accepts doctor $d$, it rejects doctor $d'$.

\begin{theorem}[Running Gale-Shapley over fair preferences Gale-Shapley over fair preferences does not guarantee fairness (informal)]
\label{intro:thm:composition}
The algorithm that generates a probabilistic allocation by sampling hospital preferences from a strict individually fair distribution and running the Gale-Shapley algorithm over the sampled preferences is not fair.
\end{theorem}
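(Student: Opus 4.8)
The plan is to prove the theorem by exhibiting a single explicit instance on which the \emph{sample-then-Gale-Shapley} procedure produces an allocation that violates the necessary condition for PIIF recorded above -- namely, that within a cluster every doctor's allocation stochastically dominates every cluster-mate's under its own preferences. Since PIIF within a cluster is exactly envy-freeness, refuting stochastic dominance for even a single ordered pair of cluster-mates shows the output is unfair, which is all the theorem asks for.

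Before fixing the instance, it is worth isolating what the construction must avoid. If two doctors lie in the same cluster and declare \emph{identical} preferences, then swapping their identities is an automorphism of the (probabilistic) instance -- each hospital is indifferent between them by strict IF, and no doctor has preferences over doctors -- so their marginal allocations coincide and no unfairness can arise between them. Hence the counterexample must place in one cluster two doctors with \emph{different} preference lists, and the failure will surface not as deterministic envy but as an \emph{incomparability} under stochastic dominance: two allocations with different ``risk profiles,'' neither dominating the other.

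Concretely, I would take a proto-metric with a single cluster containing three doctors $d_1,d_2,d_3$ and three hospitals $h_1,h_2,h_3$, with $d_1 : h_2 \succ h_1 \succ h_3$ and $d_2, d_3 : h_1 \succ h_2 \succ h_3$. Strict IF over a single cluster means each hospital's sampled preference is a uniformly random permutation of $\{d_1,d_2,d_3\}$, drawn independently. I would then trace the doctor-proposing Gale-Shapley algorithm, noting that only a handful of internal tie-breaks are ever consulted along the rejection chains, enumerate these finitely many coin outcomes, and average to obtain the marginal allocations. The target outcome is $d_1 \mapsto (h_1:\tfrac14,\ h_2:\tfrac12,\ h_3:\tfrac14)$ and $d_2 \mapsto (h_1:\tfrac38,\ h_2:\tfrac14,\ h_3:\tfrac38)$, with $d_3$ matching $d_2$ by symmetry; a quick sanity check that the three marginals sum to one at each hospital guards against arithmetic slips.

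The payoff step is to read off the dominance failure. Evaluating under $d_2$'s own order $h_1 \succ h_2 \succ h_3$, doctor $d_2$ is ahead at the top threshold, $\Pr[d_2 \text{ gets } h_1] = \tfrac38 > \tfrac14 = \Pr[d_1 \text{ gets } h_1]$, yet behind at the next threshold, $\Pr[d_2 \text{ gets } h_1 \text{ or } h_2] = \tfrac58 < \tfrac34 = \Pr[d_1 \text{ gets } h_1 \text{ or } h_2]$. Thus $d_2$'s allocation does \emph{not} stochastically dominate $d_1$'s, so the stated consequence of PIIF is violated even though every hospital's preferences are strictly individually fair, which establishes the theorem. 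I expect the main obstacle to be the Gale-Shapley bookkeeping: one must track which internal ties are actually triggered (a re-proposing doctor never revisits a hospital that already rejected it), handle the cascades in which a displaced doctor dislodges another, and confirm the branch probabilities, since a single mis-traced rejection chain would corrupt the marginals and could spuriously restore the very symmetry the construction is designed to break.
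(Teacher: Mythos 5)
There is a genuine gap: your tie-breaking model is inconsistent with strict individual fairness, and on your instance the correctly computed output is in fact PIIF. Strict IF preferences are a \emph{distribution over total orders} ($r_h \in \Delta(\Pi_\D)$); the procedure samples one permutation per hospital and then every comparison that hospital makes during the run must be consistent with that single permutation. Your marginals $(h_1:\tfrac14,\,h_2:\tfrac12,\,h_3:\tfrac14)$ for $d_1$ and $(h_1:\tfrac38,\,h_2:\tfrac14,\,h_3:\tfrac38)$ for $d_2$ arise only if each pairwise tie-break along the rejection chain is an \emph{independent} fair coin; with a single uniformly random permutation per hospital the later comparisons are conditioned on the earlier ones (e.g.\ $\Pr[d_1 \succ_{h_1} d_2 \mid d_2 \succ_{h_1} d_3] = \tfrac13$, not $\tfrac12$). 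Tracing the cascade correctly gives
$\pi(d_1) = (h_1:\tfrac16,\ h_2:\tfrac12,\ h_3:\tfrac13)$ and $\pi(d_2)=\pi(d_3) = (h_1:\tfrac{5}{12},\ h_2:\tfrac14,\ h_3:\tfrac13)$, and then \emph{both} dominance checks pass: under $d_2$'s order the top-$1$ comparison is $\tfrac{5}{12} > \tfrac16$ and the top-$2$ comparison is $\tfrac23 = \tfrac23$, and under $d_1$'s order it is $\tfrac12 > \tfrac14$ and $\tfrac23 = \tfrac23$. So your instance is not a counterexample, and the claimed dominance failure at the second threshold ($\tfrac58 < \tfrac34$) is an artifact of the wrong randomness model.

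The paper avoids this trap by using two clusters rather than one. It takes $\D=\set{i_1,i_2,j}$ with $d(i_1,i_2)=0$, doctor preferences $A \succ_{i_1} B \succ_{i_1} C$, $A \succ_{i_2} C \succ_{i_2} B$, $C \succ_j A \succ_j B$, and hospital preferences that are \emph{deterministic across clusters} ($j \succ_A i$, $i \succ_B j$, $i \succ_C j$), so the only randomness is the internal order of $i_1,i_2$ at hospital $A$ -- a single unconditioned coin. The two branches produce asymmetric cascades, and the violation is a clean top-$1$ failure: $i_1$ is never matched to its favorite $A$ while its cluster-mate $i_2$ gets $A$ with probability $\tfrac12$. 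If you want to repair your proof, you should either adopt a cross-cluster construction of this kind or redo the single-cluster computation under the correct per-hospital permutation model and verify that dominance actually fails somewhere, which it does not on the instance you chose.
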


\begin{proof}[Proof sketch.]
Consider the example of three doctors $i_1, i_2, j$, three hospitals $A, B, C$, and the proto-metric $d$, where $d(i_1,i_2) = 0$ and $d(i_1, j) = d(i_2,j) =1$. The doctor preferences are
\begin{equation*}
    A \succ_{i_1} B \succ_{i_1} C, \quad
    A \succ_{i_2} C \succ_{i_2} B, \quad
    C \succ_j A \succ_j B.
\end{equation*}
The hospitals have strictly individually fair preferences: (note that $i_1$ and $i_2$ are interchangeable)
\begin{equation*}
    \begin{cases}
    j \succ_A i_1 \succ_A i_2, & \textit{w.p. } 1/2 \\
    j \succ_A i_2 \succ_A i_1, & \textit{w.p. } 1/2
    \end{cases}, \quad
    \begin{cases}
    j \succ_B i_1 \succ_B i_2, & \textit{w.p. } 1/2 \\
    j \succ_B i_2 \succ_B i_1, & \textit{w.p. } 1/2
    \end{cases}, \quad
    \begin{cases}
    i_1 \succ_C i_2 \succ_C j, & \textit{w.p. } 1/2 \\
    i_2 \succ_C i_1 \succ_C j, & \textit{w.p. } 1/2
    \end{cases}.
\end{equation*}

Running the algorithm described above that samples the preferences and uses the doctor-propose Gale-Shapley algorithm results in the allocation
\begin{equation*}
    \begin{cases}
    (i_1, B), (i_2, A), (j, C),& \textit{w.p. } 1/2, \\
    (i_1, B), (i_2, C), (j, A),& \textit{w.p. } 1/2.
    \end{cases}
\end{equation*}

Hospital $A$ is the most preferred hospital by doctor $i_1$, but doctor $i_2$ (who is similar to doctor $i_1$) is matched to hospital $A$ with higher probability. Thus, this allocation is unfair.
\end{proof}

For a more detailed discussion, see \Cref{thm:doctor_propose_gs_fails}.
In Section \ref{sec:failure_in_composition}, we show a negative example for the hospital-propose Gale-Shapley variant. 
In these examples, although all the participants acted fairly, the outcome was unfair; this joins existing work on fairness failures under composition \cite{dwork2018fairness, dwork2020individual}. 

\paragraph{Digest: towards fairness.}
In the counter-example presented above, when in the sampled preferences hospital $A$ prefers doctor $i_2$ over doctor $i_1$, the algorithm matches hospital $A$ to doctor $i_2$, without making sure that in the corresponding case, where hospital $A$ prefers doctor $i_1$ over doctor $i_2$, it would match hospital $A$ to doctor $i_1$.
Intuitively, we would like the algorithm to make this decision simultaneously, i.e., to either match both doctors $i_1$ and $i_2$ to hospital $A$ with a certain probability or not to match neither of them.
More generally, when matching a doctor $d$ to a hospital $h$, we must ensure that all the doctors in $d$'s cluster have the same opportunity to be matched to this hospital.

\subsection{Positive Results: Algorithms for Fair and Stable Allocations}
\label{subsec:intro:algorithms}
In Section \ref{sec:piif_algs}, we present generalizations of the Gale-Shapley algorithm that achieve both fairness and stability, up to a small error. 

\begin{theorem}[Compatibility of fairness and stability (informal)]
There exists an efficient algorithm that, given a proto-metric, strictly IF hospital preferences, arbitrary doctor preferences, and an approximation parameter $\tau \in (0,1)$, always finds a $\tau$-approximately fair and $\tau$-approximately stable allocation. The algorithm's running time is polynomial in $(1/\tau)$.
\end{theorem}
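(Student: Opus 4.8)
The plan is to generalize the doctor-proposing deferred-acceptance procedure into one that outputs a \emph{distribution} over matchings, fixing the composition failure of Theorem~\ref{intro:thm:composition} by treating all doctors of a cluster symmetrically, exactly as the ``same opportunity'' digest suggests. First I would reduce the goal to a cleaner target. Since the proto-metric partitions the doctors into clusters and the hospitals are strictly IF (indifferent within a cluster, strict across clusters), PIIF collapses to within-cluster envy-freeness, and for a probabilistic allocation this is precisely within-cluster \emph{stochastic dominance}: for clustermates $d,d'$, the marginal distribution of $d$'s hospital should dominate that of $d'$ with respect to $\succ_d$. A convenient sufficient condition is that all doctors of a cluster face a common (random) ``environment'' of available hospitals and each is assigned its most-preferred available one; if $d$ and $d'$ see the same available set $S$ and each takes its favorite in $S$, then $d$ weakly prefers its own choice to $d'$'s choice (both lie in $S$), so there is no envy. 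Thus it suffices to produce a fractional assignment $x_{d,h}$ (the probability that $d$ is matched to $h$) that is within-cluster ordinally envy-free and that can be realized as a distribution over integral matchings which are stable with probability at least $1-\tau$.

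The algorithm I would use is a continuous ``eating'' / water-filling deferred acceptance that fuses two classical mechanisms. Each doctor continuously consumes probability mass of its most-preferred hospital that is currently \emph{available} to its cluster, where a hospital $h$ is available to cluster $c$ as long as $h$ is not saturated by doctors from clusters that $h$ strictly prefers to $c$; when a strictly better cluster begins demanding $h$, it displaces the mass held there by worse clusters. Within a fixed cluster all doctors see the same availability trajectory (availability is determined at the cluster level, since $h$ is indifferent inside a cluster), so the restriction of the process to a cluster is exactly the probabilistic-serial mechanism, which yields ordinal envy-freeness; across clusters, the displacement rule is deferred acceptance, which is what will drive stability. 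To obtain an efficient algorithm I would discretize this process into steps of size $\Theta(\tau)$, so that it runs in $\mathrm{poly}(1/\tau)$ time and outputs an $x$ whose entries are multiples of the step size, which can then be decomposed (Birkhoff--von Neumann) into an explicit distribution over integral matchings.

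For the fairness guarantee I would couple the randomness so that all doctors in a cluster experience identical availability over the run; because each eats its favorite currently-available hospital, $d$'s resulting marginal stochastically dominates any clustermate's marginal under $\succ_d$ (every hospital a clustermate ends up with was also available to $d$, which preferred its own at every instant), and decomposition preserves marginals so this survives to the integral distribution. The discretization introduces at most $\tau$ slack, giving $\tau$-approximate PIIF. For stability I would argue that whenever a doctor $d\in c$ fails to fully consume a hospital $h$ it prefers to its allocation, $h$ was saturated during that time by clusters that $h$ weakly prefers to $c$; hence at the fractional level there is no blocking pair (recall that competition \emph{within} $d$'s own cluster never creates a blocking pair, since $h$ is indifferent and blocking requires a \emph{strict} preference on the hospital side). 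A careful decomposition should then keep each sampled matching stable except on a $\tau$-fraction of probability mass attributable to rounding.

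\textbf{The main obstacle} is reconciling fairness and stability in the decomposition step. The within-cluster probabilistic-serial phase \emph{pins down} the doctors' marginals, and these prescribed marginals must be realizable as a distribution supported (up to $\tau$) on weakly stable integral matchings. This cannot be taken off the shelf: because the hospital preferences have ties (indifference within clusters), the weak-stable-matching polytope need not be integral, so the classical result that a fractionally stable point is a convex combination of stable matchings does not directly apply. The crux of the proof is therefore to show that the specific fair fractional point produced by the eating process admits such a decomposition while incurring only a $\tau$ probability of blocking pairs, together with the supporting claims that the discretized process converges and terminates in $\mathrm{poly}(1/\tau)$ steps and that the two $\tau$-errors (fairness slack from discretization, stability slack from rounding) can be driven down simultaneously by the same step size.
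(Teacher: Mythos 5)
Your algorithm is essentially the paper's: the hospital-first variant (\Cref{alg:GS_PSP}) runs the probabilistic-serial ``eating'' procedure inside each cluster on top of a deferred-acceptance outer loop, the doctor-first variant (\Cref{alg:GS_WA}) uses a rising-tide water-filling rule at each hospital, both truncate once the free probability mass drops below $\tau$, and both finish with a Birkhoff--von Neumann decomposition. Your fairness argument (clustermates face a common availability trajectory and each eats its favorite available hospital, hence within-cluster stochastic dominance) is also the one the paper uses.

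However, the step you single out as the crux --- decomposing the fair fractional point into a distribution supported, up to $\tau$, on integrally stable matchings --- is both misplaced and, as stated, not something you could deliver. The stability guarantee actually proved is \emph{ex-ante} contract stability (Definition \ref{def:contract_stability_proto}): one bounds the probability of the blocking event $\pi(h)=i' \wedge \pi(i)=h'$ by the \emph{marginals} of the doubly stochastic matrix, showing these marginals are zero except for the $\le\tau$ leftover mass allocated after truncation (Lemma \ref{lemma:gs_psp_contract_stability}). Since the bound depends only on the matrix and not on which BvN decomposition is chosen, no structural claim about the weak-stable-matching polytope is needed; conversely, the ex-post guarantee you are implicitly targeting (each sampled matching stable, i.e., local stability as in Definition \ref{def:local_stability}) is explicitly left open in the paper as to whether it is even compatible with PIIF, so a proof plan resting on that decomposition would stall. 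What your sketch is actually missing are the lemmas that do the real work: the monotonicity statements that a doctor's prospect only improves from round to round and that a hospital, once rejected by a cluster, is never wanted again (Claim \ref{claim:hospital_order} for the eating variant, Claim \ref{claim:max_prob_for_rejected} for the rising-tide variant --- the latter rules out the subtle scenario where mass rejected early ``percolates'' down and later makes a doctor look under-demanding within its cluster), together with the convergence argument (Theorem \ref{thm:gs_psp_stop}) showing that each round before termination consumes at least $\tau$ of the remaining proposal capacity, giving $O(\mathrm{poly}(n)/\tau)$ rounds; the paper also shows that with $\tau=0$ the process can cycle forever with geometrically shrinking mass, so the truncation is not merely a discretization convenience but necessary for termination.
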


The allocation returned by the hospital-first variant of the algorithm is (perfectly) PIIF and $\tau$-approximately stable. The allocation returned by the doctor-first variant of the algorithm is $\tau$-approximately PIIF and $\tau$-approximately stable. See the definition of $\tau$-approximate stability in Section \ref{subsec:intro:stability}. $\tau$-approximate fairness means that for every pair $i_1,i_2$ of doctors in the same cluster, $i_2$'s allocation is $\tau$-close (in statistical distance) to an allocation that $i_1$ doesn't envy. 

In these algorithms, 
we allow the parties to propose probability mass to each other.
In the variant where the hospitals propose, they propose to clusters instead of individuals, so that all doctors in the cluster have an equal opportunity to accept. 
In the variant where the doctors propose, the hospitals accept an allocation that will not cause envy within clusters, e.g., in the counter example of Theorem \ref{intro:thm:composition}, hospital $A$ can either accept both doctors $i_1$ and $i_2$ or reject both of them.
If the distances between all the doctors are 1, these algorithms are identical to the Gale-Shapley algorithm.
The formal description of the algorithms and the full proofs are in Section \ref{sec:piif_algs}.

\paragraph{Overview: Fair Propose-and-Reject -- Doctors-First (Algorithm \ref{alg:GS_WA}).}
At the initialization of the algorithm, each doctor has a free probability mass of 1. 
At each round, every doctor proposes its free probability mass to the most preferred hospital that did not reject it yet. 
For every hospital $h$, after the doctors' proposals, the  probability mass proposed by each doctor is composed of the probability mass the doctor proposed in the current round, plus the probability to be matched to this doctor in the previous round.
Given this proposed probability mass, the hospital chooses its allocation for the current round in a way that will not lead to unfairness.

In particular, each hospital $h$ uses the \emph{rising tide algorithm} (Algorithm \ref{alg:water_algorithm}) for choosing its allocation in each round.
At the initialization, hospital $h$ has unallocated probability mass 1.
The algorithm goes from the most preferred cluster to the least preferred cluster by hospital $h$.
When there is no unallocated probability mass or no proposed probability mass, the algorithm terminates.
For each cluster $C$, while there is proposed probability mass from cluster $C$:
Each doctor $d$ in cluster $C$ is allocated with the minimum between: (1) the proposed probability mass from doctor $d$; (2) the unallocated probability mass divided by the number of doctors in the cluster with non-zero proposed probability mass. Every allocated probability mass is removed from the proposed probability mass and from the unallocated probability mass.

After running the algorithm described above, in the allocation of the current round, hospital $h$ is allocated to the most preferred clusters possible given the proposed probability mass. For every two doctors in the same cluster, one can be allocated with less probability mass than the other only when having less proposed probability mass. 

Any probability mass not used for the allocation is rejected and becomes free again for the next round.
The algorithm terminates when there is no more free probability mass, i.e., there is a full allocation, or the free probability mass is very small.

This algorithm is executed \emph{locally}, in the sense that for each hospital, the only necessary knowledge is its own preferences, its previous allocation, and the proposals it received. For each doctor, the only necessary knowledge is its own preferences, its free probability mass, and the hospitals' response. No party needs to be aware of the status of the other parties.

\paragraph{Overview: Fair Propose-and-Reject -- Hospitals-First (\Cref{alg:GS_PSP}).}
This algorithm is similar to \Cref{alg:GS_WA}, except that the hospitals propose their free probability mass to the clusters. At each round, every hospital proposes to the most preferred cluster that did not reject it.
For each cluster, the proposed probability mass from each hospital is composed of the probability mass the hospital proposed in the current round, plus the probability to be matched to this hospital in the previous round.

Given this proposed probability mass, each cluster chooses an envy free allocation for the current round. To achieve envy freeness, we use the \emph{probabilistic serial procedure} due to Bogomolnaia and Moulin \cite{bogomolnaia2001new}, see \Cref{alg:PSP}.
Any probability mass not used for the allocation is rejected and becomes free again for the next round. 

The algorithm terminates when there is no free probability mass, i.e., there is a full allocation, or the free probability mass is very small.

This algorithm is also executed locally: no party has to consider anything but its own interests (though there is some coordination in each round between the doctors in the same cluster, in allocating the mass proposed to that cluster). 

\subsection{Barriers for Unfair Preferences and General Metrics}
\label{sec:intro_neg_results}

In Section \ref{sec:impossibility}, we show that if we relax the strong requirement on the hospitals' preferences by allowing either unfair preferences or a general metric, no algorithm that is ``similar in spirit'' to the algorithms outlined above (and to the classical Gale-Shapley algorithm) can guarantee both fairness and stability.
Towards this, we formalize a class of \emph{local-proposing algorithms}. Here, we focus on the case where the doctors propose, although we also describe a class of hospital-proposing algorithms. The doctor-proposing class includes all algorithms consisting of sequential rounds of proposals. In each round, each doctor proposes its unallocated probability mass to a hospital. Then, each hospital has to use the probability mass proposed to it to choose an allocation. 
We assume that doctors choose the hospital according to their preferences. We also assume that once a doctor proposes some probability mass to a hospital, if the hospital rejects it, the doctor will never propose it to this hospital again. This class, and the corresponding hospital-proposing class, generalizes the algorithms outlined above (as well as the classical Gale-Shapley algorithms).

To provide this negative result, we need definitions of stability that extend beyond the setting of strictly IF hospital preferences. Remaining in the proto-metric setting, we define stability under unfair hospital preferences in Section \ref{subsec:intro:stability_unfair_prefs}. The negative results for local algorithms under unfair preferences are in Section \ref{subsec:intro:barrier_unfair_prefs}. Moving to general metrics, we formalize a a minimal stability requirement in Section \ref{subsec:intro:weak_stability} (a weaker stability requirement makes our negative results stronger). We discuss our negative result for fair hospital preferences under general metrics in Section \ref{subsec:intro:barrier_general_metric}.


\subsubsection{Stability under Unfair Preferences, Proto-Metric}
\label{subsec:intro:stability_unfair_prefs}

If the hospitals can have unfair preferences, then fairness and stability might be trivially incompatible. For example, a prestigious hospital $h$ can express a discriminatory preferences towards members of group $T$. Suppose $h$ is the most-preferred hospital of all doctors: a stable allocation must always match $h$ to a member of $T$, but this is blatantly unfair!

We want to provide a utility guarantee to the (unfair) hospitals, but the example above demonstrates that if we allow unfair hospitals to make unfair deviations from their allocation, then fairness and stability might be trivially incompatible. Since we insist on fairness for the doctors, we find that it is natural to relax stability by requiring that there are no {\em fair} deviations that the hospitals would prefer. Formally, we modify the classical notion of a blocking pair, by only allowing a hospital to form pairs with doctors that are {\em outside} the cluster of the doctor to whom it is matched. This restricts the alternative ``offers'' that a hospital with unfair preferences can make, and ensures that they do not violate the fairness constraints. 

\begin{definition}[Stability under proto-metrics (informal)]
\label{def:intro-proto-stability}
A probabilistic matching is unstable if there exists, with non-zero probability over the matching, a pair of a doctor $d$ and a hospital $h$ that prefer each other to their respective partners, as long as the partner of the hospital $h$ is at distance 1 from the doctor $d$.
The preferences of the hospital can be probabilistic. Thus, the comparison between the hospital's allocations is in terms of stochastic domination.
\end{definition}

See Section \ref{sec:intro:fairness}, for a definition of stochastic domination.

See Definition \ref{def:contract_stability_proto} for formal definition of stability. We note that if the hospitals' preferences are strictly IF, no hospital prefers one doctor over the other if they are in the same cluster. Thus, Definition \ref{def:intro-proto-stability} is equivalent to the definition described in Section \ref{subsec:intro:stability}.

\subsubsection{Barriers for Unfair Preferences, Proto-Metric}
\label{subsec:intro:barrier_unfair_prefs}

We show that no local-proposing algorithm can find a fair and stable matching when the hospital preferences can be unfair (even in the proto-metric setting).

\begin{theorem}[Failure of local algorithms for unfair hospital preferences (informal)]
There does not exist a local algorithm that, when the metric is a proto metric but the hospital preferences might be unfair, always finds an allocation that is both PIIF and stable (see Definition \ref{def:intro-proto-stability}).
\end{theorem}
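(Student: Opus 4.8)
The plan is to exhibit a single proto-metric instance on which the fairness requirement (PIIF) and the relaxed stability requirement (Definition \ref{def:intro-proto-stability}) are jointly unsatisfiable, so that in particular no local-proposing algorithm --- indeed no algorithm at all --- can output a PIIF and stable allocation on it. Concretely, I would take one cluster $I=\{i_1,i_2\}$ together with a third doctor $j$ at distance $1$ from both, and three hospitals $A,B,C$. I make $A$ everyone's favourite ($A \succ_{i_1} \cdots$, $A \succ_{i_2} \cdots$, $A \succ_{j} \cdots$) and give $i_1,i_2$ identical preference lists, so that within-cluster envy-freeness will force them to be treated symmetrically. The crucial ingredient is an \emph{unfair} preference for $A$ that interleaves the cluster with $j$, namely $i_1 \succ_A j \succ_A i_2$; this is exactly the kind of ordering that strict IF preferences forbid, since it splits the cluster $I$ around the outside doctor $j$.

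The argument then has three steps. First, since $i_1$ and $i_2$ have identical preferences and lie at distance $0$, PIIF collapses to mutual envy-freeness within $I$, which (via mutual stochastic domination) forces their allocations to be identical; in particular $\Pr[A \to i_1]=\Pr[A \to i_2]$. Second, I analyse the three possible partners of $A$ in any matching drawn from the allocation. If $A$ is matched to $i_2$ with positive probability, then $(j,A)$ is a blocking pair in the sense of Definition \ref{def:intro-proto-stability}: $A$ prefers $j$ to $i_2$, doctor $j$ prefers $A$ to whatever it receives (as $A$ is $j$'s top choice), and $A$'s partner $i_2$ is at distance $1$ from $j$ --- so stability fails. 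Symmetrically, if $A$ is matched to $j$ with positive probability, then $(i_1,A)$ blocks, since $A$ prefers $i_1$ to $j$, doctor $i_1$ prefers $A$ to its match, and $A$'s partner $j$ is at distance $1$ from $i_1$. Hence stability forces $\Pr[A \to i_2]=\Pr[A \to j]=0$, leaving $\Pr[A \to i_1]=1$. Third, combining with the fairness constraint gives $1=\Pr[A \to i_1]=\Pr[A \to i_2]=0$, a contradiction. Thus no PIIF-and-stable allocation exists on this instance, and a fortiori no local-proposing algorithm produces one.

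The main obstacle, I expect, is not defeating the algorithmic class --- infeasibility handles every algorithm at once --- but rather wielding the \emph{relaxed} stability notion correctly. Definition \ref{def:intro-proto-stability} was deliberately weakened (only cross-cluster deviations count) precisely so that fairness and stability are not \emph{trivially} incompatible; the delicate point is to check that this relaxation still leaves the interleaving preference enough blocking power to generate the two unavoidable blocking pairs above, and that the distance-$1$ side condition is met in each case (it is, because the contested partner is always $j$ paired against a cluster member, or a cluster member paired against $j$). One must also confirm the case split on $A$'s partner is exhaustive and that the preferences for $B,C$ can be fixed arbitrarily without reintroducing feasibility. If one instead prefers an instance that \emph{does} admit a fair-and-stable allocation, so that the restriction to local algorithms becomes essential, the same interleaving gadget can be embedded and the permanent-rejection property of local-proposing algorithms (a doctor never re-proposes to a hospital that rejected it) used to trap any such algorithm into placing positive mass on one of the two blocking configurations; but the infeasibility instance already suffices for the stated theorem.
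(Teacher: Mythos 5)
Your infeasibility argument is internally sound against the strong stability notion you target: with $A$'s deterministic preference $i_1 \succ_A j \succ_A i_2$ interleaving the cluster around $j$, contract stability (Definition \ref{def:contract_stability_proto}) forces $\Pr[\pi(A)=i_2]=\Pr[\pi(A)=j]=0$, while PIIF forces $\Pr[\pi(A)=i_1]=\Pr[\pi(A)=i_2]$, and these are jointly unsatisfiable. So read literally against Definition \ref{def:intro-proto-stability}, the informal statement follows vacuously, for every algorithm. But this is precisely the outcome the paper goes out of its way to avoid, and it is why your proof does not match the theorem the paper actually establishes. The paper states explicitly, right after introducing weak stability, that the full barrier proofs for unfair preferences are carried out for Definition \ref{def:intro:weak_stability} (weak ex-ante stability, Definition \ref{def:selective_stability}) rather than for Definition \ref{def:intro-proto-stability}, and the open-questions section stresses that in the negative examples a fair and stable solution \emph{exists} and only the local-proposing algorithm fails to find it. Indeed, even the allocation the paper's own proof sketch calls ``the only stable one'' (hospital $A$ uniform over the cluster $\{i,j\}$ in scenario 1) is not contract stable: the pair $(A,k)$ yields an active contract against the event $\pi(A)=i$, since $k \succ_A i$ and $d(k,i)=1$. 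Contract stability is simply too strong a target under unfair preferences, and an infeasibility proof against it carries no information about local algorithms.

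The concrete gap is that your argument does not survive the switch to the stability notion the formal theorems use. In your own instance, the allocation in which $A$ is matched uniformly over $\{i_1,i_2\}$ is PIIF and weakly ex-ante stable: the alternative $(A,\{i_1\},\{i_1 \text{ w.p. }1\})$ is not selectively fair because $i_2$ is at distance $0$ from $i_1$ and does not prefer its entire support to $A$, and the alternative $(A,\{j\},\{j\text{ w.p. }1\})$ is not active because $A$ ranks $i_1$ above $j$, so $\{j\}$ does not stochastically dominate the uniform prospect over $\{i_1,i_2\}$. Under that definition fair and stable allocations exist, and the impossibility must be extracted from the structure of local-proposing algorithms. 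That is exactly the two-scenario indistinguishability argument you relegate to your final sentence: one arranges the remaining preferences so that in one world the unique PIIF and weakly stable allocation gives $A$ the uniform distribution over its preferred cluster, and in the other it gives $A$ a single doctor $k$ with probability $1$, while the proposals $A$ receives in the first round are identical in both worlds; the permanent-rejection property then traps the algorithm. Developing that argument, including verifying uniqueness of the stable allocation in each world under Definition \ref{def:selective_stability}, is the actual content of the proof, and your proposal leaves it undone.
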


\begin{proof}[Proof sketch.]
Suppose there are four doctors $i, j, k, l$, where the first and last pairs are at distance 0 (i.e., $(i,j)$ and $(k,l)$), and the rest are at distance 1.
Suppose there exists a hospital $A$, whose preferences are $j \succ_A k \succ_A i \succ_A l$. If hospital $A$ is the most preferred hospital by all the doctors, then the fair allocation that is most preferred by hospital $A$ is to be assigned to doctors $i$ and $j$ uniformly (the other fair possibilities are to be assigned to doctors $k$ and $l$ uniformly, or to some convex combination of these two allocations). Thus, since it is also doctors $i$ and $j$'s most preferred fair allocation (they both rank $A$ first), it is the only stable one (for any other fair allocation, hospital $A$ and doctor $j$ form a blocking pair).
However, if hospital $A$ is the most preferred hospital by doctors $k$ and $i$, but doctors $j$ and $l$ are matched to hospitals they prefer over hospital $A$. Then, the fair allocation that is most preferred by hospital $A$ is to be assigned to doctor $k$ with probability 1 (assuming hospital $A$ cannot be assigned to doctor $j$). Similarly, since this is also doctor $k$'s most preferred allocation (doctor $k$ rank hospital $A$ first), it is the only stable one, otherwise hospital $A$ and doctor $k$ form a blocking pair.

If only the doctors $i$ and $k$ propose to hospital $A$ probability mass 1 in the first round, there is no allocation that hospital $A$ can choose over this probability mass that will always lead to a fair and stable matching, i.e., hospital $A$ does not know whom to accept and whom to reject. 
If doctors $j$ and $l$ will propose to hospital $A$ in a later round, to have a stable and fair solution, hospital $A$ must accept at least probability mass 1/2 from doctor $i$. However, if no other doctor will propose to hospital $A$ in a later round, to have a stable and fair solution, hospital $A$ must accept probability mass 1 from $k$. Since hospital $A$ cannot distinguish these two cases in the first round, the algorithm will return an unfair or unstable output in at least one of them.
\end{proof}

Unfair preferences allow us to create a situation where hospital $A$ actually wants to accept probability mass from doctor $j$, but under the fairness requirement, in order to accept probability mass from doctor $j$, hospital $A$ must accept some probability mass of doctor $i$ (since their distance on the metric is 0). However, if doctor $j$ will never propose to hospital $A$ then hospital $A$ does not want to accept any probability mass from doctor $i$.

Now,
suppose the preferences were fair; if hospital $A$ would want to be matched to doctor $j$, it would equally want to be matched to doctor $i$. Thus, hospital $A$ would be able to decide whether to accept or reject doctor $i$'s probability mass independently of whether doctor $j$ will propose to it in a later round or not. Thus, we must allow unfair hospital preferences to achieve the example above, under a proto-metric.

\subsubsection{Stability for General Metrics}
\label{subsec:intro:stability_gen_metrics}

Intuitively, a probabilistic allocation is stable if no hospital can offer to some doctors to be matched to it in a way that will improve both the hospital's allocation and those of the doctors. 
If the hospitals' preferences are not fair, running the Gale-Shapley algorithm fails to output a fair solution for obvious reasons (see Section \ref{subsec:intro:barrier_unfair_prefs}). 
However, even if the hospitals' preferences are fair, finding a non-trivial fair allocation can be quite challenging. This happens because even a mild probabilistic preference for one doctor over another can lead to a situation where a hospital prefers to {\em always} be matched to one of the doctors and not the other.

Consider the example of two doctors $i$ and $j$ at distance 1/3, and two hospitals $A$ and $B$. Suppose both doctors prefer hospital $A$ over hospital $B$, and hospital $A$'s preferences are 
\begin{equation*}
    \begin{cases}
        i \succ_A j,& \textit{w.p. }2/3 \\
        j \succ_A i,& \textit{w.p. }1/3
    \end{cases}.
\end{equation*}
Hospital $A$'s preferences are fair. However, in any PIIF allocation, hospital $A$ prefers to be always matched to doctor $i$, over its outcome in the allocation. This is because if hospital $A$ is always matched to doctor $i$, it is matched to its first preference with probability 2/3.
However, in any PIIF allocation, hospital $A$ is matched to doctor $j$ with probability at least 1/3, which implies that it is matched to its first preference with probability no more than $\frac{2}{3}\cdot\frac{2}{3} + \frac{1}{3}\cdot\frac{1}{3} = \frac{5}{9} < \frac{2}{3}$.

In the above example, $A$ and $i$ were a blocking pair, but the issue was that this alternative is {\em unfair} (to $j$).
Motivated by this example, and to avoid trivial incompatibilities between fairness and stability, we would like to force the hospitals to make only {\em fair offers}.
However, formalizing such a stability definition that, on the one hand, is strong enough to have a meaningful guarantee for the hospitals, and on the other hand, is compatible with our fairness requirement (or at least is not trivially incompatible), presents several subtleties. We elaborate on this in Section \ref{sec:stability}.
Instead, we present a minimal (weak) stability definition for general metrics that we use in our negative results (using a weak definition makes the negative results stronger).

\paragraph{Weak Stability.}
\label{subsec:intro:weak_stability}
The weak stability definition is guided by simple scenarios. 
Suppose that the hospitals are $A, B$ and $C$ and that the doctors are $i_1, i_2$ and $j$, where doctors $i_1$ and $i_2$ are similar (at distance 0) and doctor $j$ is far from them (at distance 1). 
Consider the following two cases:
\paragraph{Case 1.} Hospital $A$ is the most preferred hospital by all the doctors, and hospital $A$ prefers doctors $i_1$ and $i_2$ over doctor $j$. It is natural to require that hospital $A$ should be matched to the uniform distribution over $i_1$ and $i_2$.
Intuitively, this is the ``best'' allocation for hospital $A$ and doctors $i_1$ and $i_2$, subject to fairness. In this case, we maintained fairness by saying that if hospital $A$ is not matched to $i_1$ and $i_2$ with probability 1, it is allowed to prefer the alternative allocation of being matched to the uniform distribution over $i_1$ and $i_2$ since it is IF.
    
\paragraph{Case 2.} 
Hospital $A$ is the most preferred hospital by doctors $i_1$ and $j$, but not by doctor $i_2$, and hospital $A$ still prefers doctors $i_1$ and $i_2$ over doctor $j$. Suppose that doctor $i_2$ prefers hospital $B$ over hospital $A$, that doctor $i_2$ is matched to hospital $B$ with probability 1, and that the allocation of hospitals $A$ and $C$ has not been determined yet. This time, 
it is natural to require that hospital $A$ should be matched to doctor $i_1$ with probability 1. However, if hospital $A$ and doctor $i_1$ are not matched with probability 1, allowing hospital $A$ to prefer this alternative allocation implies that we allow hospital $A$ to prefer an alternative allocation which does not satisfy IF. We choose to allow this since doctor $i_2$ prefers its own allocation over being matched to hospital $A$.

The following definition captures the above intuitions:

\begin{definition}[Weak stability (informal)]
\label{def:intro:weak_stability}
An allocation is (strongly) unstable if there exists a hospital $h$ and an alternative allocation $\nu$ over the doctors, such that: (1) The hospital $h$ prefers the alternative allocation $\nu$ over its own allocation.
(2) Every doctor in the support of the alternative allocation $\nu$ prefers hospital $h$ over the hospitals in its support.
(3) For every doctor that is not in the support of the alternative allocation $\nu$, either (i) the doctor is at distance 1 from any doctor in the support of the alternative allocation $\nu$ or (ii) the doctor prefers every hospital in its own support over the hospital $h$.
(4) For every two doctors in the support of the alternative allocation $\nu$, the allocation $\nu$ satisfies IF.

If there is no such hospital and alternative allocation, then we say that the allocation is weakly stable.
\end{definition}
See Definition \ref{def:selective_stability} for formal definition of weak stability.

We note that Definition \ref{def:intro-proto-stability}, which is relevant in the proto-metric setting, is a stronger stability. In particular, it implies Definition \ref{def:intro:weak_stability} (when the metric is a proto-metric). In the full proofs in section \ref{sec:impossibility}, we show the barriers for unfair preferences for Definition \ref{def:intro:weak_stability}, instead of Definition \ref{def:intro-proto-stability}.

\subsubsection{Barriers for General Metrics}
\label{subsec:intro:barrier_general_metric}
In the setting of a general metric and IF hospital preferences, we show that we can make doctors $i$ and $j$ far enough that even under the IF requirement, hospital $A$'s preferences would actually be as described above, i.e., $j \succ_A k \succ_A i \succ_A l$. However, we can make doctors $i$ and $j$ close enough that an allocation where hospital $A$ is always matched to doctor $j$, and never to doctor $i$, would be considered unfair. Then we can arrange the proposals as we did in the case of unfair preferences such that any decision that hospital $A$ makes in the first round can lead to an unfair allocation. See \Cref{sec:failure_fair_prefs} for details.

\subsection{Open Questions}
The new frontier of fairness in two-sided markets raises many fundamental questions for further study.
In this work, we present algorithms for finding fair and stable allocations under some restrictions. We show that generalizing this result presents several difficulties. A natural question for further work is either extending the negative results beyond the class of local-proposing algorithms, or finding an algorithm for a more general setting, such as general metrics or unfair preferences.

A possible direction for generalizing the results for general metrics is to have a stronger requirement over the hospital preferences. In the negative results, we use IF preferences for a general metric, and show that effectively they behave similarly to unfair preferences. This indicates that, in the case of a general metric, individual fairness might not be a strong enough fairness requirement over the hospital preferences. We elaborate on this in Section \ref{sec:conclussion}.

We present two algorithms in Section \ref{sec:piif_algs}, which have almost the same guarantees concerning fairness and stability. It is known that different variants of the Gale-Shapley algorithms have different guarantees for optimality and incentive compatibility. In \Cref{sec:optimality}, we show that doctor-proposing variant (\Cref{alg:GS_WA}) is not optimal for the doctors.
For the other variant (\Cref{alg:GS_PSP}), we leave this as an open question. We also leave the question of incentive compatibility for future work.

\paragraph{Organization.} Sections \ref{sec:defs} and \ref{sec:stability} contain the definitions of fairness and stability. 
In Section \ref{sec:failure_in_composition} we describe the negative results for running the Gale-Shapley algorithm over fair preferences. 
Section \ref{sec:piif_algs} contains the algorithms for fair and stable allocations in the setting of proto-metric and fair hospital preferences.
Section \ref{sec:impossibility} contains the impossibility results for local algorithms in general settings.
\section{Definitions and Problem Formulation}
\label{sec:defs}

Let $\D$ be a collection of $n$ doctors and $\H$ be a collection of $n$ hospitals.
Let $\Pi_{\D}, \Pi_H$ denote all permutations over $\D$ and $\H$, respectively. 
A \emph{matching} $m$ is a mapping $\D$ to $\H$, i.e., each doctor is mapped to a single hospital. 
A \emph{probabilistic allocation} $\pi$ is a mapping from $\D$ to $\Delta(\H)$, where for $i \in \D$, $\pi(i)$ represents the \emph{prospect} of doctor $i$ (the probability distribution over hospitals that they receive). Similarly, for a hospital $h \in \H$, $\pi(h)$ represents the \emph{prospect} of doctor $h$.
We focus on the setting where we assign each doctor to a single hospital, so the probabilistic allocation is a distribution over matchings.

We assume we are given a \emph{similarity metric} for the doctors $d : \D \times \D \rightarrow [0,1]$. 

\begin{definition}[Pseudometric]
A function $d : \D \times \D \to [0,1]$ is a pseudometric over the set $\D$, if $d$ is non-negative, symmetric and satisfies the triangle inequality. For every $i \in \D$, $d(i, i) = 0$ but we allow also for $j \in \D \backslash \{i\}$, $d(i,j) = 0$.
\end{definition}

We focus on the case of \emph{proto-metrics}, where the distances are either 0 or 1:

\begin{definition}[Proto-Metric]
A pseudometric $d : \D \times \D \to [0,1]$ is a proto-metric if for every $i,j \in \D$, $d(i,j)$ is either 0 or 1. From the triangle inequality,
there exists a partition of $\D$ to clusters such that $d$ assigns a distance 0 for doctors in the same cluster, and 1 for doctors in different clusters.
\end{definition}

\subsection{Preferences} We assume each doctor $i \in \D$ has an ordinal preference over hospitals: $r_{i} \in \Pi_{\H}$, where, e.g., $r_{i}(1) \in \H$ is $i'$s favourite hospital and $r_{i}(n) \in \H$ is $i'$s least-favourite hospital. For simplicity, for a hospital $h$, we denote with $r_{i}^{-1}(h) \in [n]$ the rank of this hospital in $i$'s preferences. For $h_1, h_2 \in \H$, we use the notation $h_1 \succ_i h_2$ to indicate that $r_i^{-1}(h_1) < r_i^{-1}(h_2)$, i.e., doctor $i$ prefers hospital $h_1$ over hospital $h_2$.

We assume each hospital $h \in \H$ has a probabilistic ordinal preference over doctors: $r_{h} \in \Delta(\Pi_D)$, where, $r_{h}(1)$ is a random variable for $h'$s favourite doctor and $r_{h}(n)$ is a random variable for $h$'s least-favourite doctor. For simplicity, for a doctor $i$, we denote with $r_{h}^{-1}(i)$ the random variable containing the rank of $i$ in $h$'s preferences.

Given a preference function $r$, \emph{stochastic domination} provides a natural way to compare two prospects.

\begin{definition}[Stochastic Domination]
\label{def:stochastic_domination}
Let $r$ be a deterministic or probabilistic preference function over a set of individuals (doctors or hospitals) and let $p$ and $q$ be prospects over the same set of individuals.
We say that $p$ stochastically dominates $q$, $p \succeq_r q$, if the following holds:
\end{definition}

\begin{equation}
    \forall k \in [n]: \quad \Pr_{o \sim p,r}\sbr{r^{-1}(o) \leq k} \geq \Pr_{o \sim q,r}\sbr{r^{-1}(o) \leq k} 
\end{equation}

in other words,  for every $k$, the probability for getting a top-$k$ outcome (according to $r$) under $p$ should be at least at large as the probability for getting a top-$k$ outcome under $q$. When there exists a $k' \in [n]$ for which the above holds with a \emph{strict} inequality we say that $p$ strongly-dominates $q$. We use $\succeq_r$ and $\succ_r$ to denote the partial orders that correspond to weak domination and strong domination, respectively. (Note that this is a partial ordering over prospects, since the two outcomes $p,q$ may be incomparable  -- neither one dominates the other).

\subsection{Fairness}

Our goal is achieving fairness with respect to the doctors.
One way to define this notion of fairness is \emph{individual fairness} (IF) by Dwork \emph{et al.} \cite{dwork2012fairness}, where similar individuals should be assigned similar outcomes (e.g., because they have similar merit):

\begin{definition} [Individual Fairness \cite{dwork2012fairness}]
\label{def:IF}
An allocation $\pi$ is individually-fair (IF) with respect to a divergence $D$, and a similarity metric $d$, if for all pairs of individuals $i, j \in {\D}$, the Lipschitz condition $D(\pi(i), \pi(j)) \le d(i, j)$ is satisfied.
\end{definition}

One natural choice for the divergence function $D$ is the \emph{total variation distance}:
\begin{definition}[Total Variation Distance]
The total variation distance between two prospects $p, q \in \Delta(\H)$ is
\begin{equation}
    D_{TV}(p,q) = \frac{1}{2}\sum_{h \in \H}|\Pr_{o \sim p}[o = h] - \Pr_{o \sim q}[o = h]|.
\end{equation}
\end{definition}

Definition \ref{def:IF} does not consider the preferences of the doctors. 
Intuitively, $p \succeq_r q$ appeals to a strong way in which the prospect $p$ is at least as good as $q$.
Another way to define fairness is \emph{envy freeness}, where we disregard merit and require that every two individuals should not envy each other's outcomes.

\begin{definition}[Envy Freeness] An allocation $\pi$ is envy-free (EF) with respect to individual preferences $\{\succeq_i\}$ if for all individuals $i$, for all other individuals $j$, $\pi(i) \succeq_i \pi(j)$.
\end{definition}

We can consider both merit and individual preferences by using \emph{preference-informed individual fairness} (PIIF) due to Kim \emph{et al.} \cite{kim2020preference}.

\begin{definition}[Preference-Informed Individual Fairness \cite{kim2020preference}]
\label{PIIF}
Fix doctors with preferences $r_1, \dots, r_n$.
An allocation $\pi$ is preference-informed individually fair with respect to a divergence $D$ and a similarity metric $d$, if and only if for every two doctors $i,j$, there exists an alternative allocation $p^{i;j}$ such that
\begin{equation}
\label{eq:alternative_close}
    D(p^{i;j}, \pi(j)) \le d(i,j)
\end{equation}
\begin{equation}
\label{eq:prefer_over_alternative}
    \pi(i) \succeq_i p^{i;j}.
\end{equation}
\end{definition}

Definition \ref{PIIF} allows different allocations to similar individuals when the differences are aligned with the individuals' preferences.

We say that $\pi$ is PIIF with respect to a pair of doctors $i$ and $j$, if such an alternative allocation $p^{i;j}$ that satisfies \cref{eq:alternative_close} and \cref{eq:prefer_over_alternative} exists.

\begin{corollary}[PIIF under proto-metrics]
Under proto-metrics we get the following definition:
Fix doctors with preferences $r_1, \dots, r_n$.
An allocation $\pi$ is preference-informed individually fair if for every two doctors $i,j$, if $d(i,j) = 0$, then either: (i) $\pi(i) = \pi(j)$ or (ii) $\pi(i) \succ_i \pi(j)$.
\end{corollary}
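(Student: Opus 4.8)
The plan is to specialize the general definition of PIIF (Definition~\ref{PIIF}) to the case where the divergence is the total variation distance $D = D_{TV}$ and the metric $d$ is a proto-metric, and to observe that the existential quantifier over the alternative allocation $p^{i;j}$ collapses in each of the two possible regimes $d(i,j) \in \{0,1\}$. Since PIIF is a conjunction of conditions, one for each ordered pair of doctors $(i,j)$, it suffices to analyze a single such pair and determine exactly which constraint it contributes; reading the corollary as a characterization, I would establish the equivalence in both directions.

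First I would dispatch the case $d(i,j) = 1$. Here the closeness requirement $D_{TV}(p^{i;j}, \pi(j)) \le d(i,j) = 1$ is vacuous, since the total variation distance is always at most $1$. Choosing $p^{i;j} = \pi(i)$ then satisfies the domination requirement $\pi(i) \succeq_i p^{i;j}$ by reflexivity of $\succeq_i$, so pairs at distance $1$ impose no constraint, consistent with the corollary, which records requirements only for pairs at distance $0$. Next, for $d(i,j) = 0$, the closeness requirement becomes $D_{TV}(p^{i;j}, \pi(j)) \le 0$; since $D_{TV}(p,q) = 0$ if and only if $p = q$, this forces $p^{i;j} = \pi(j)$, leaving no freedom in the choice of the alternative allocation. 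The PIIF condition for the pair $(i,j)$ therefore reduces exactly to $\pi(i) \succeq_i \pi(j)$.

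It remains to show that, for a pair at distance $0$, the weak-domination condition $\pi(i) \succeq_i \pi(j)$ is equivalent to the stated disjunction: $\pi(i) = \pi(j)$ or $\pi(i) \succ_i \pi(j)$. Unwinding Definition~\ref{def:stochastic_domination}, $\pi(i) \succeq_i \pi(j)$ asserts $\Pr_{o \sim \pi(i)}[r_i^{-1}(o) \le k] \ge \Pr_{o \sim \pi(j)}[r_i^{-1}(o) \le k]$ for every $k \in [n]$, while strong domination $\pi(i) \succ_i \pi(j)$ additionally requires at least one strict inequality. Hence weak domination holds precisely when either some inequality is strict (giving case (ii)) or all of them are equalities. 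The main point, which I would treat carefully, is that the latter case is exactly $\pi(i) = \pi(j)$: because doctor $i$'s preference $r_i$ is a permutation, $r_i^{-1}$ is a bijection from $\H$ to $[n]$, so equality of the cumulative quantities $\Pr[r_i^{-1}(o) \le k]$ across all $k$ determines the mass $\Pr[r_i^{-1}(o) = k]$ of every rank by differencing, and thus the probability of every hospital, yielding $\pi(i) = \pi(j)$ as distributions. Conversely, $\pi(i) = \pi(j)$ trivially gives weak domination and $\pi(i) \succ_i \pi(j)$ implies it by definition, closing the equivalence. The only subtlety to guard against is conflating ``weak domination with no strict gap'' with genuine equality of distributions; this is precisely where the bijectivity of $r_i$ is used, and it is the one step I expect to require care rather than mere unfolding of definitions.
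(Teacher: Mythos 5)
Your proof is correct and follows the only natural route: the paper states this corollary without proof, treating it as an immediate specialization of Definition~\ref{PIIF} with $D = D_{TV}$, and your argument supplies exactly the omitted details (the vacuity of the constraint at distance $1$, the collapse $p^{i;j} = \pi(j)$ at distance $0$, and the decomposition of weak domination into equality-of-distributions or strong domination). Your care with the bijectivity of $r_i$ in recovering $\pi(i) = \pi(j)$ from equality of all cumulative probabilities is the right point to be careful about, and it is consistent with the paper's follow-up remark that unfairness means a violation of $\pi(i) \succeq_i \pi(j)$ for some in-cluster pair.
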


In other words, $\pi$ is \emph{unfair} if there exist two doctors $i,j$ in the same cluster and $k \in [n]$ such that the probability of $i$ receiving a top-$k$ outcome is larger under prospect $\pi(j)$ than under prospect $\pi(i)$.

We also introduce a relaxation \emph{$\tau$-preference-informed individual fairness}. Where for every two individuals, we allow the matching to be unfair up to a small constant $\tau$.

\begin{definition}[$\tau$-Preference-Informed Individual Fairness]
An allocation $\pi$ is $\tau$-PIIF with respect to a similarity metric $d$, if it is PIIF with respect to the similarity metric $d^\tau$, where $d^\tau$ is defined as follows
\begin{equation*}
\forall i, j, \in \D: d^\tau(i,j) = \min\set{d(i,j) + \tau, 1}.
\end{equation*}
\end{definition}

For the case of a proto-metric, $\tau$-PIIF means that for every two doctors $i_1, i_2$ from the same cluster, i.e., $d(i_1,i_2) = 0$, 
we allow that $\pi(i_1) \nsucceq_{i_1} \pi(i_2)$, if we can change the prospect $\pi(i_2)$ by no more than $\tau$ $\left(D(p^{i_1;i_2}, \pi(i_2)) \le \tau\right)$ to create an alternative allocation $p^{i_1;i_2}$ that satisfies 
\begin{equation*}
    \pi(i_1) \succeq_{i_1} p^{i_1;i_2}.
\end{equation*}


\subsection{Fair Preferences}

For much of this work, we study the setting where the preferences of the hospitals are fair. 
To do so, we allow the hospitals to have probabilistic preferences.
In order to determine what are fair preferences, we use the notion of \emph{individual fairness} by Dwork \emph{et al.}  \cite{dwork2012fairness}, see Definition \ref{def:IF}.

It is left to define the divergence $D$. On one hand, we would like to define the divergence $D$ to be strong enough to guarantee fairness when running ``natural'' algorithms. On the other hand, we would like to allow the hospitals to express rich preferences to the greatest extent possible, as long as the final assignment is fair.

In this section, we provide two definitions for fair preferences, where the second definition is a relaxation of the first one. 
The first definition is restricted to proto-metrics, and we use it in our positive results, in \Cref{sec:piif_algs}. The second definition generalizes to any pseudometric. Thus, we use it in our negative results, in \Cref{sec:impossibility}, wheres we discuss general metrics.

For the case of proto-metrics, we introduce \emph{strict individual fairness}.
\begin{definition}[Strict Individually Fair Preferences]
\label{def:strict_IF}
A set of preferences is strictly individually fair with respect to a proto-metric $d : \D \times \D \to \set{0,1}$ if for every hospital $h \in \H$:
\begin{itemize}
    \item For every cluster $C \subseteq \D$ and two doctors $i_1, i_2 \in C$, i.e., such that $d(i_1, i_2) = 0$:
    \begin{equation*}
        \forall r \in [n]:\Pr[r_h(r) = i_1] = \Pr[r_h(r) = i_2].
    \end{equation*}
    \item For every two clusters $C_1, C_2 \subseteq \D$, either $C_1 \succ_h C_2$ or $C_2 \succ_h C_1$. Where $C_1 \succ_h C_2$ if:
    \begin{equation*}
        \forall i \in C_1, j \in C_2: \Pr[r_h^{-1}(i) < r_h^{-1}(j)] = 1.
    \end{equation*}
\end{itemize}
\end{definition}

When discussing a proto-metric and strict individually fair preferences we sometimes use the notation of naming the clusters $i, j, ...$ and the doctors in cluster $i$, $i_1, i_2, ...$ if $|i| > 1$ or just $i$ otherwise. If hospital $h$ prefers cluster $i$ over $j$ we denote this by $i \succ_h j$ which means that $h$'s preferences are the uniform distribution over all the permutations over of doctors in $i$ and in $j$, e.g. for $i = \set{i_1, i_2}$ and $j = \set{j}$, if hospital $h$ prefers doctor $i$ over doctor $j$, instead of writing that hospital $h$'s preferences are 
$$\begin{cases}
i_1 \succ_h i_2 \succ_h j,& \textit{w.p. } 1/2\\
i_2 \succ_h i_1 \succ_h j,& \textit{w.p. } 1/2
\end{cases}
$$
we just write $i \succ_h j$.

Definition \ref{def:strict_IF} is places a strong functional requirement over the hospital  preferences. 
It is not necessarily clear why doctors in the same cluster must be consecutive in the ordered preferences.
Consider the case of ${\D} = \set{i_1, i_2, j, k}, \H = \set{A, B, C, D}$ where doctors $i_1, i_2$ belong to one cluster and $j$ and $k$ belong to other clusters. The following preferences would not be considered strictly individually fair, even though doctors $i_1$ and $i_2$ are treated identically.
$$r_h^1 = \begin{cases}
i_1 \succ_h j \succ_h i_2 \succ_h k & \text{w.p. } 1/2 \\
i_2 \succ_h j \succ_h i_1 \succ_h k & \text{w.p. } 1/2
\end{cases}.$$

There are several ways to relax strict IF if we want to allow such preferences, one of them is \emph{mutual replacement individual fairness}. 
This definition guarantees that for every two doctors $i,j$ and hospital $h$, the statistical distance between hospital $h$'s preference $r_h$ and hospital $h$'s preference after switching doctors $i$ and $j$, is bounded by the distances between doctors $i$ and $j$ in the metric.

This means that for the set $\D = \set{i_1,i_2, j, k}$, the preference $r_h^1$ is mutual replacement individually fair but the preference $r_h^2$ is not:
$$r_h^2 = \begin{cases}
i_1 \succ_h k \succ_h i_2 \succ_h j & \text{w.p. } 1/2 \\
i_2 \succ_h j \succ_h i_1 \succ_h k & \text{w.p. } 1/2.
\end{cases}$$

That is, because the distance between doctors $i_1$ and $i_2$ is 0, but swapping them in the preference $r_h^2$ creates a different distribution. 

\begin{definition}[Mutual Replacement Individual Fairness Preferences]
\label{def:mutual_replacement_individual_fairness}
Let ${\D}$ be a set of doctors, $h \in \H$ a hospital, $d : {\D} \times {\D} \to [0,1]$ a metric and let $r_h$ be $h$'s (probabilistic) preference.
For any two doctors $i, j \in \D$, let us denote by $r_h^{i;j}$ the probabilistic preference where every appearance of doctor $i$ is replaced by doctor $j$ and every appearance of doctor $j$ is replaced by doctor $i$.
The preference $r_h$ is considered mutual replacement individually fair, if for every two doctors $i, j \in {\D}$,
\begin{equation*}
    D_{TV}(r_h, r_h^{i;j}) \le d(i,j)
\end{equation*}
\end{definition}

\section{Stability}
\label{sec:stability}

Finding a PIIF allocation is trivial, since the uniform distribution over all allocations is PIIF. 
However, we want the hospitals to have an incentive to participate in the mechanism. 
In the classical version of the matching problem, without metric induced fairness constraints and with deterministic preferences, this is captured by \emph{stability}.

\begin{definition}[Stable Matching]
\label{def:classic_stability}
Given a deterministic allocation $\pi$, deterministic preferences $\{r_i\}_{i\in {\D}}$, $\{r_h\}_{h\in \H}$, a pair $(i, h) \in {\D} \times \H$ is a blocking pair if $r_i(h) < r_i(\pi(i))$ and $r_h(i) < r_h(\pi(h))$.

The allocation $\pi$ is considered a stable matching if there are no blocking pairs.
\end{definition}

If doctor $i$ and hospital $h$ are a blocking pair, they prefer being matched to each other over participating in the matching. This means that they do not have an incentive to participate in the matching.

\begin{definition}[Stable Matching Algorithm]
A stable matching algorithm is an algorithm that outputs a stable matching given a set of preferences.
\end{definition}

For the case of deterministic preferences, a stable matching always exists and can be found in polynomial time using the \emph{Gale-Shapley} algorithm due to Gale and Shapley \cite{gale1962college} (described in Section \ref{sec:gs_alg}).

We generalize this notion to probabilistic allocations. 
Our goal is to have a stability definition that preserves the hospitals' incentive to participate in the matching while being compatible with fairness.

There is a significant difference between the case where the preferences of the hospitals are fair and the case where the hospitals have arbitrary preferences. In the latter case, for any PIIF allocation, there might be alternative allocations that some hospitals and doctors want. However, we do not want to consider those alternative allocations because they are unfair. We defer the discussion of stability under unfair preferences to \cref{sec:unfair_prefs}, where we show that unfair hospital preferences present serious barriers in finding a fair and stable solution.

In the rest of this section, we assume the hospitals have fair probabilistic preferences, and for this setting, we want to define stability. 

We focus on an ex-ante (a priori) stability definition, i.e., a guarantee for the distribution of probabilistic allocations.
In ex-ante stability, we think of the preferences and the allocation as independent random variables.
We discuss ex-post (a posteriori) stability, a guarantee for every sampled allocation, in Section \ref{sec:ex_post_stability}.

\subsection{Ex-Ante Stability}
We start by introducing a strong ex-post stability requirement, \emph{contract stability}, for the setting of a proto-metric in \Cref{sec:contract_stability}. We used this definition in our positive results in \Cref{sec:piif_algs}. In  \Cref{sec:weak_ex_post_stability}, we introduce a weak ex-post stability definition for general metrics, \emph{weak ex-ante stability}, which we use for our negative results in \Cref{sec:impossibility}. In \Cref{sec:generalizing_stability_fails}, we present the subtleties and difficulties in generalizing the strong ex-post definition presented in \Cref{sec:contract_stability} to general metrics.

\subsubsection{Contract Stability}
\label{sec:contract_stability}
Our primary notion of ex-ante stability is \emph{contract stability}, where an allocation is stable if there is no active contract. A contract is an agreement between some doctors and a hospital that are not satisfied with the allocation and can act together to improve it. 
This kind of agreement, however, can lead to unfairness. On the one hand, our challenge is to allow contracts that indicate that hospitals are unhappy with the allocation in a ``fair'' way and, on the other hand, not to allow contracts caused by ``unfair'' incentives.
We emphasize that our main motivation is understanding when and how fairness and stability are compatible. We want to ``rule out'' unfair contracts to the extent that they lead to incompatibility with fairness. Achieving a strong notion of stability, which also allows the existence of unfair contracts, is only positive, as long as it does not lead to incompatibility with fairness.

Since we focus, in this work, on proto-metrics, we present a definition for contract stability for the proto-metric setting.
This definition is a generalization of the concept of blocking pairs in the classical stability definition. However, for a matching $m$, instead of considering only the blocking pair $(i, h) \in \D \times \H$ we also consider the doctor matched to $h$, $m(h)$. We allow doctor $i$ and hospital $h$ to be a blocking pair only if doctor $i$ and doctor $m(h)$ do not belong to the same cluster. 

\begin{definition}[Contract]
\label{def:contract}
Given deterministic doctor preferences $\mathcal{P}_D = \{r_i\}_{i\in \D}$, probabilistic hospitals preferences $\mathcal{P}_H = \{r_h\}_{h\in \H}$ and a probabilistic allocation $\pi$, a tuple $\mu = (h, i; h', i') \in \H \times \D \times \H \times \D$ is a contract if $d(i,i') = 1$, $\pi^\mu(h) \succ_h \pi(h)$ and $r_i(h) < r_i(h')$, where:
\begin{align}
\forall j \in \D \backslash \set{i, i'}: &\pi^\mu(j) = \pi(j)\\
&\pi^\mu(i) =
\begin{cases}
h, & \pi(i) = h' \wedge \pi(i') = h \\
\pi(i), & \text{otherwise}
\end{cases}\\
&\pi^\mu(i') =
\begin{cases}
h', & \pi(i) = h' \wedge \pi(i') = h \\
\pi(i'), & \text{otherwise}
\end{cases}
\end{align}
\end{definition}

\begin{definition}[Active Contract]
\label{def:active_contract}
Given deterministic doctor preferences $\mathcal{P}_D = \{r_i\}_{i\in \D}$, probabilistic hospitals preferences $\mathcal{P}_H = \{r_h\}_{h\in \H}$ and a probabilistic allocation $\pi$, a contract $\mu = (h, i; h', i') \in \H \times \D \times \H \times \D$ is an active contract if 
$$
\Pr[\pi(h) = i' \wedge \pi(i) = h'] > 0.
$$
\end{definition}

\begin{definition}[Contract Stability]
\label{def:contract_stability_proto}
The allocation $\pi$ is contract stable if there are no active contracts.
\end{definition}

We note that for fair hospital preferences, the requirement that $d(i,i') = 1$ is redundant since it is implied from the requirement that $\pi^\mu(h) \succ_h \pi(h)$.
Thus, we allow the hospital to prefer only alternatives that will not contradict fairness if the preferences are fair. For fair preferences, a contract will exist for any doctor in $i$'s cluster that prefers $h$ over $h'$.

Since sometimes it is not possible to achieve contract stability, we also define the following relaxation for proto-metrics, where the probability that there exists an active contract is not 0 but bounded by $\tau$:

\begin{definition}[$\tau$-Contract Stability]
Denote by $S_\pi \subseteq \D \times \H \times \D \times \H$ the set of all active contracts in an allocation $\pi$.
An allocation $\pi$ is $\tau$-singleton contract stable if
$$\Pr[\bigvee_{(h, i ;h', i') \in S_\pi}(\pi(h) =i'\wedge \pi(h') = i)] \le \tau.$$
\end{definition}

In Section \ref{sec:piif_algs}, we show that $\tau$-contract stability is compatible with fairness when the hospital preferences are fair.

\subsubsection{Weak Ex-Ante Stability}
\label{sec:weak_ex_post_stability}
In \Cref{sec:contract_stability} we presented a stability definition for the setting of a proto-metric.
Generalizing Definition \ref{def:contract_stability_proto} to general metrics presents several subtleties, we defer this discussion to Section \ref{sec:generalizing_stability_fails}. Here, we focus on formulating a weaker ex-ante stability requirement for general metrics. Looking ahead, in Section \ref{sec:impossibility}, we show barriers for finding and proving the existence of fair and stable allocations, even for this minimal requirement. 

Our weak stability definition is guided by bad scenarios that we want to avoid.
Consider the example where $\H = \set{A, B, C}$ and $\D = \set{i_1, i_2, j}$ where $d(i_1, i_2) = 0$ and $d(i_1, j) = d(i_2,j) = 1$. Hospital $A$ is the most preferred hospital by all the doctors and hospital $A$'s preferences are $i \succ j$. In this simple case, we would expect that in any fair and stable allocation $\pi$, the prospect of hospital $A$ will be:
\begin{align*}
    \pi(A) = 
    \begin{cases}
        i_1,&\text{ w.p. }1/2\\
        i_2,&\text{ w.p. }1/2
    \end{cases}.
\end{align*}
Intuitively, this is the ``best'' prospect for hospital $A$, subject to fairness: there is no other prospect hospital $A$ prefers that is PIIF with respect to doctors $i_1$ and $i_2$.
Thus, we say that an allocation $\pi$ is not even \emph{weakly stable} if there exists an alternative IF prospect $\sigma$ that hospital $A$ prefers over its current prospect, $\pi(A)$, and the doctors in the support of $\sigma$ prefer hospital $A$ over the hospitals they are matched to in $\pi$.
    
Moreover, strongly unstable allocations can exist even if the preferences of the doctors are different. Consider the same example as described above: $\H = \set{A, B, C}$ and $\D = \set{i_1, i_2, j}$ where $d(i_1, i_2) = 0$ and $d(i_1, j) = d(i_2,j) = 1$. But now, hospital $A$ is the most preferred hospital by doctors $i_1$ and $j$, but not by doctor $i_2$, and hospital $A$'s preferences are still $i \succ j$. Suppose that $B \succ_{i_2} A$, that $\pi(i_2) = \set{B, \textit{w.p. 1}}$ and that the allocation of hospitals $A$ and $C$ has not been determined yet. This time, we would expect that in any fair and stable allocation $\pi$, hospital $A$'s prospect would be:
\begin{align*}
    \pi(A) = 
    \begin{cases}
        i_1,&\text{ w.p. }1
    \end{cases}.
\end{align*}
Note that this is not an IF alternative. Thus, we need only require that the alternative allocation is IF over the doctors that prefer hospital $A$ over the hospitals in their prospect (in the previous example $\set{i_1, i_2}$, but in this example only $\set{i_1}$).

With this in mind, we formally define \emph{weak ex-ante stability}:
\begin{definition}[Weak Ex-Ante Stability]
\label{def:selective_stability}
Let $\pi$ be a probabilistic allocation, $h \in \H$ be a hospital, $\D^* \subseteq \D$ be a set of doctors and $\sigma \in \Delta(\D^*)$ be a distribution. We say that $\nu = (h, \D^*, \sigma)$ is a selectively fair alternative allocation if $\sigma$ satisfies:
\begin{itemize}
    \item For every two doctors $i,j \in \D^*$, the distribution $\sigma$ is individually fair with respect to $i$ and $j$.
    \item For every doctor $i \in \D \backslash \D^*$, if there exists a doctor $j \in \D^*$ such that $d(i,j) < 1$, then for every hospital $h' \in supp\left(\pi(i)\right)$, $h' \succ_i h$.
    \item For every $i \in \D^*$ and $h' \in supp\left(\pi(i)\right)$, $h \succeq_i h'$.
\end{itemize}
We say that $\nu$ is active if $\sigma \succ_h \pi(h)$.
We say that the allocation $\pi$ is weakly ex-ante stable if there are no active selectively fair alternative allocations.
\end{definition}

For this definition too, we provide an approximation.
\begin{definition}[$\tau$-Weak Ex-Ante Stability]
An allocation $\pi$ is $\tau$-weakly ex-ante stable if there exists a weakly ex-ante stable allocation $\pi'$ that satisfies that
\begin{equation*}
    D_{TV}(\pi, \pi') \le \tau.
\end{equation*}
\end{definition}

\begin{claim}
\label{claim:contract_stability_implies_weak_stability}
For the setting of a proto-metric and strictly IF hospital preferences, contract stability implies weak ex-ante stability.
\end{claim}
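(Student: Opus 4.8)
The plan is to prove the contrapositive: assuming $\pi$ is \emph{not} weakly ex-ante stable, I will exhibit an active contract, so $\pi$ is not contract stable. The one technical device I need is a reformulation of stochastic domination under strictly IF preferences. Since $h$'s preferences are strictly IF, its clusters are totally ordered, say $C_1 \succ_h \cdots \succ_h C_M$, and within each cluster the rank is a uniform random position. Consequently, for any prospect $p \in \Delta(\D)$ the curve $k \mapsto \Pr_{o \sim p, r_h}[r_h^{-1}(o) \le k]$ is piecewise-linear in $k$, and its values at the cluster boundaries are exactly $F_p(\ell) := \Pr_{o \sim p}[o \in C_1 \cup \cdots \cup C_\ell]$. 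I would first record the resulting characterization: for prospects $p,q$ one has $p \succeq_h q$ iff $F_p(\ell) \ge F_q(\ell)$ for all $\ell$, and $p \succ_h q$ iff in addition the inequality is strict for some $\ell$ (a piecewise-linear function dominates another everywhere iff it does so at all breakpoints). This is the only place where strict individual fairness of the hospital is essential, since it is what places the breakpoints exactly at cluster boundaries.

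Now suppose $\pi$ is not weakly ex-ante stable, witnessed by an active selectively fair alternative $\nu = (h, \D^*, \sigma)$ with $\sigma \succ_h \pi(h)$. By the characterization, let $\ell^*$ be the smallest index with $F_\sigma(\ell^*) > F_{\pi(h)}(\ell^*)$; minimality together with $F_\sigma \ge F_{\pi(h)}$ gives $F_\sigma(\ell^*-1) = F_{\pi(h)}(\ell^*-1)$, hence $\sigma(C_{\ell^*}) > \pi(h)(C_{\ell^*}) \ge 0$. So there is a doctor $i \in \D^* \cap C_{\ell^*}$ (using $\sigma \in \Delta(\D^*)$). Since $\sigma$ and $\pi(h)$ are both probability distributions, $F_\sigma(\ell^*) > F_{\pi(h)}(\ell^*)$ forces $\pi(h)$ to place strictly more mass than $\sigma$ on clusters worse than $C_{\ell^*}$; in particular there is a doctor $i'$ in some cluster $C_{\ell'}$ with $\ell' > \ell^*$ and $\Pr[\pi(h) = i'] > 0$. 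Note $d(i,i') = 1$ because $\ell^* \ne \ell'$.

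It remains to produce the hospital $h'$ and to verify the contract conditions. Because $\Pr[\pi(h) = i'] > 0$ and there are finitely many matchings, pick a matching $m$ in the support with $m(h) = i'$ and set $h' := m(i)$; since $h$ is matched to $i' \ne i$ in $m$ we have $h' \ne h$, and $\Pr[\pi(h) = i' \wedge \pi(i) = h'] \ge \Pr[\{m\}] > 0$, which is exactly activeness (Definition \ref{def:active_contract}). For the preference condition, $h' \in supp(\pi(i))$ and $i \in \D^*$, so the third bullet of Definition \ref{def:selective_stability} gives $h \succeq_i h'$; as $h' \ne h$ and doctor preferences are strict orderings, this is $r_i(h) < r_i(h')$. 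Finally, $\pi^\mu(h)$ differs from $\pi(h)$ only on the positive-probability event $\{\pi(i) = h' \wedge \pi(i') = h\}$, on which it transfers mass from $i' \in C_{\ell'}$ to $i \in C_{\ell^*}$; this raises $F_{\pi^\mu(h)}(\ell)$ over $F_{\pi(h)}(\ell)$ by a positive amount on exactly the indices $\ell^* \le \ell < \ell'$ and leaves it unchanged elsewhere, so $\pi^\mu(h) \succ_h \pi(h)$ by the characterization. Hence $\mu = (h,i;h',i')$ satisfies all requirements of Definition \ref{def:contract} and is active, contradicting contract stability.

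The main obstacle is the stochastic-domination characterization of the first paragraph: it is what lets me convert the ``global'' witness $\sigma$ into the ``local'' single-swap improvement that a contract demands, and getting the piecewise-linear claim right (uniform-within-cluster ranks, breakpoints at cluster boundaries) is where the strict-IF hypothesis does its work. The second delicate point is activeness, which is a statement about a \emph{single} matching where $h \to i'$ and $i \to h'$ co-occur; this is why I read $h'$ off from a concrete positive-probability matching rather than from the marginals $\pi(i)$ and $\pi(h)$. Once these two points are settled, the existence of $i$ and $i'$ and the monotonicity of $F$ under a good-cluster-to-bad-cluster transfer are routine.
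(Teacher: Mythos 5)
Your proof is correct and follows essentially the same route as the paper's: both arguments locate a cluster where $\sigma$ out-allocates $\pi(h)$, a lower-ranked cluster where $\pi(h)$ retains positive mass, extract doctors $i,i'$ and a hospital $h'$ from a positive-probability matching, and check that $(h,i;h',i')$ is an active contract. The only difference is presentational: you make explicit the cluster-boundary characterization of stochastic domination under strictly IF preferences, which the paper uses implicitly.
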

\begin{proof}
Let $\pi$ be a contract stable allocation. Assume for contradiction that the allocation $\pi$ is not weakly ex-ante stable. That is, that there exists a hospital $h \in H$, a set of doctors $\D^* \subseteq \D$ and $\sigma \in \Delta(\D^*)$ such that $\nu = (h,\D^*,\sigma) $ is an active selectively fair alternative allocation.

Let $C \in \C$ be the highest ranked cluster by hospital $h$ that satisfies that $\Pr[\sigma \in C] > \Pr[\pi(h) \in C]$, since $\sigma \succ_h \pi(h)$ and the preferences of $h$ are strictly IF, there exists such a cluster.  

If there exists a cluster $C' \in \C$ that satisfies that $\Pr[\sigma \in C'] < \Pr[\pi(h) \in C']$, we also have that $C' \succ_h C$, then we have that $\sigma \nsucc_h \pi(h)$, which contradicts the assumption that $\nu$ is an active selectively fair alternative allocation.
Let $C' \in C$, be some cluster that satisfies that $\Pr[\pi(h) \in C'] < \Pr[\pi(h) \in C']$ and $C \succ_h C'$.

Thus, there exist doctors $i \in C$ and $i' \in C'$ that satisfies that 
$\Pr[\sigma = i] > \Pr[\pi(h) = i]$ and
$\Pr[\sigma = i'] < \Pr[\pi(h) = i']$.

Let $h' \in \H$ be a hospital that satisfies that 
$$
\Pr[\pi(h) = i' \wedge \pi(h') = i] > 0.
$$
Since $\nu$ is an active selectively fair alternative allocation, and doctor $i$ is in the support of $\sigma$, $h \succ_i h'$.

Let $\mu = (h,i;h',i')$ be a contract. Since $C \succ_h C'$ and doctor $i'$ in the support of hospital, we have that $\pi^\mu(h) \succ_h \pi(h)$ and the contract $\mu$ is active. By this, we obtain a contradiction to the assumption that the allocation $\pi$ is contract stable.
\end{proof}

\begin{claim}
For the setting of a proto-metric and strictly IF hospital preferences,
$\tau$-contract stability implies $\tau$-weak ex-ante stability.
\end{claim}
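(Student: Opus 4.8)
The plan is to reduce the approximate statement to its exact counterpart, Claim \ref{claim:contract_stability_implies_weak_stability} (contract stability $\Rightarrow$ weak ex-ante stability), by producing, for any $\tau$-contract stable $\pi$, a genuinely contract stable allocation $\pi'$ within total variation distance $\tau$ of $\pi$. Concretely, I would let $G$ be the event over matchings $m$ drawn from $\pi$ that \emph{no} active contract of $\pi$ is realized, i.e.\ $m \in G$ iff for every $(h,i;h',i') \in S_\pi$ we do not have $m(h) = i' \wedge m(h') = i$. By the definition of $\tau$-contract stability this bad event has mass $\Pr_\pi[G^c] \le \tau < 1$, so $\Pr_\pi[G] > 0$ and I may define $\pi'$ to be $\pi$ conditioned on $G$. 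A routine computation gives $D_{TV}(\pi,\pi') = \Pr_\pi[G^c] \le \tau$ (and since prospects are pushforwards of the matching distribution, the same bound holds for any prospect-level reading of $D_{TV}$), which is exactly the closeness the definition of $\tau$-weak ex-ante stability requires.

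The crux is showing that $\pi'$ is \emph{exactly} contract stable, which I would argue by contradiction: suppose $\mu' = (h,i;h',i')$ is an active contract of $\pi'$. Then some matching $m$ in the support of $\pi'$ — hence $m \in G$ and $m$ also in the support of $\pi$ — satisfies $m(h) = i'$ and $m(h') = i$. The key observation, which I would isolate as a short lemma, is that for strictly IF hospital preferences the comparison ``$\rho^{\mu'}(h) \succ_h \rho(h)$'' does not actually depend on the allocation $\rho$: the swap only relocates mass of $h$'s prospect off $i'$ and onto $i$, and since $d(i,i') = 1$ forces $i,i'$ into distinct clusters, this is a strict stochastic improvement for $h$ \emph{iff} $h$ ranks $i$'s cluster strictly above $i'$'s — a property of $r_h$ alone (given that the realization has positive probability, so that some mass is genuinely moved). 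Because $\mu'$ is active in $\pi'$ it satisfies this cluster condition, and because the same structural facts ($d(i,i')=1$, $r_i(h) < r_i(h')$, the cluster condition, and positive realization probability, now witnessed by $m$ in the support of $\pi$) all transfer, $\mu'$ is also an active contract of $\pi$. But then $m$ realizes an active contract of $\pi$, contradicting $m \in G$. Hence $S_{\pi'} = \emptyset$ and $\pi'$ is contract stable.

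Finally, since conditioning changes neither the proto-metric nor the hospitals' strictly IF preferences, we remain in the setting of Claim \ref{claim:contract_stability_implies_weak_stability}, which then gives that the contract stable $\pi'$ is weakly ex-ante stable; together with $D_{TV}(\pi,\pi') \le \tau$ this exhibits the weakly ex-ante stable allocation demanded by the definition of $\tau$-weak ex-ante stability. I expect the main obstacle to be precisely the distribution-independence of the hospital's preference comparison in the contradiction step: one must verify carefully that ``$h$ prefers the swapped prospect'' collapses to a fixed cluster-ordering condition, so that an active contract detected under the conditioned allocation $\pi'$ provably lifts back to an active contract under $\pi$. The total-variation estimate and the concluding appeal to Claim \ref{claim:contract_stability_implies_weak_stability} are routine.
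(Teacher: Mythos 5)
Your proof follows the same route as the paper's: condition $\pi$ on the event that no active contract is realized, argue that the conditioned allocation is contract stable, invoke Claim~\ref{claim:contract_stability_implies_weak_stability}, and bound the total-variation distance by $\tau$. The only difference is that you explicitly verify that no \emph{new} contracts become active after conditioning — via the correct observation that, for strictly IF preferences, the comparison $\pi^\mu(h) \succ_h \pi(h)$ collapses to a fixed cluster-ordering condition on $r_h$ once the swap event has positive probability — whereas the paper simply asserts that $\pi|\bar{A}$ is contract stable; so your argument is the same approach carried out with more care.
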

\begin{proof}
Let $\pi$ be a $\tau$-contract stable allocation. Denote by $A$ the event where there is an active contract in $\pi$.
From the definition of $\tau$-contract stability, we know that $\Pr[A] \le \tau$. Thus, in the allocation $\pi|\Bar{A}$ there are no active contract, i.e., it is contract stable. From Claim \ref{claim:contract_stability_implies_weak_stability}, the allocation $\pi|\Bar{A}$ is weakly ex-ante stable.

Thus, since the allocation $\pi$ satisfies that $D_{TV}(\pi, \pi|\Bar{A}) = \tau$, it is $\tau$-weakly ex-ante stable.
\end{proof}


Definition \ref{def:selective_stability} is minimal in two senses. First,
it is very fragile. Suppose some doctor in the alternative allocation is matched to a more preferred hospital with a very small probability and less preferred hospitals with a high probability. In that case, the allocation will be weak ex-ante stable, although it might be very close to an unstable allocation. 

Second, this definition also has weak guarantees for the hospitals.
Suppose hospital $B$ is ranked second by doctors $i_1$ and $i_2$ and hospital $B$'s preferences are $i \succ_B j$, one would expect that in any stable allocation, hospital $B$ would be matched to doctor $i_1$ or doctor $i_2$ with probability 1. However, hospital $B$ has no guarantees over the outcome.

While Definition \ref{def:selective_stability} is quite minimal, it can be relaxed even further. In particular, we allow the hospital $h$ to propose an IF alternative. In Section \ref{sec:generalizing_stability_fails}, we show that requiring an IF alternative allocation from each hospital is not enough to guarantee that the resulting allocation is PIIF. 
This can happen if the accumulated distance between two similar individuals in the allocation is larger than the distance in the prospect of each hospital, and the individuals have similar preferences.
One possibility is to make the stronger requirement that after the contract is activated, PIIF is maintained, which makes the definition weaker. This presents difficulties (see Section \ref{sec:generalizing_stability_fails}), and it only guarantees a very weak notion of stability, but it may allow for algorithms that satisfy both fairness and a weak stability notion.


\subsubsection{Generalization of Contract Stability to General Metrics}
\label{sec:generalizing_stability_fails}
We presented a strong definition for ex-ante stability for proto-metrics in \Cref{sec:contract_stability}, and a weak defintion for ex-ante stability for general metrics in \Cref{sec:weak_ex_post_stability}.
In this section, we discuss the generalization of Definition \ref{def:contract_stability_proto} to general metrics. We present several examples that demonstrate the subtleties of this generalization.

Suppose a hospital has the incentive to offer a doctor (or doctors) to be matched to it and knows the doctors will accept. In that case, this hospital does not have an incentive to participate in the matching.
Thus, we would like that, in such a case, the allocation would be considered unstable. However, if there is no restriction over the hospital's ability to suggest matches, the stability guarantee might be incompatible with fairness, even if the preferences of the hospital are fair.
Consider the example where
$$
\D = \set{i,j}, \quad \H = \set{A, B}, \quad d(i,j) = 1/3
$$
with the preferences
$$
A \succ_i B, \quad A \succ_j B, \quad \begin{cases}
    i \succ_A j&\text{ w.p. } 2/3 \\ j \succ_A i&\text{ w.p. } 1/3
\end{cases}
, \quad 
\begin{cases}
    i \succ_B j&\text{ w.p. } 2/3\\j \succ_B i&\text { w.p. } 1/3
\end{cases}
.
$$
These preferences are mutual-replacement individually fair.
Consider the probabilistic allocation 
$$
\pi_1 = \begin{cases}
    (A, i), (B, j)& \textit{w.p. }p \\
    (A, j), (B, i)& \textit{w.p. } 1-p.
\end{cases}
$$

The allocation $\pi_1$ is PIIF wherever $1/3 \le p \le 2/3$. For any value of $p$ in this range, hospital $A$ prefers to propose to doctor $i$ to always be matched to it since
\begin{align*}
    \forall p \in [1/3, 2/3]: 
    \Pr_{o\sim \pi_1(A),r_A}[r_A^{-1}(o) = 1] &= 2/3 \cdot p  + 1/3 \cdot (1 - p) \\ 
    & \le 5/9 < 2/3 = \Pr[r_A^{-1}(i) = 1].
\end{align*}
If we allow hospital $A$ the alternative of always being matched to doctor $i$, a fair and stable solution for the problem above does not exist. However, this might not be a ``real'' issue, since it is an unfair alternative. 

The example above demonstrates that the case of general metrics requires more caution than the proto-metric case. To ensure the alternative does not contradict fairness, it must be an allocation over all the doctors and not only a single doctor, which satisfies some fairness conditions.

One natural generalization to the concept of contracts is to let the hospital $h$ have a set of doctors $\D'$ that $h$ is unsatisfied with. Whenever hospital $h$ is matched to a doctor in $\D'$, it samples a doctor $d$ from some alternative prospect $\sigma$ and if doctor $d$ prefers hospital $h$ over its match, doctor $d$ and hospital $h$ are matched and doctor $d$ and hospital $h$'s matches are matched. We want such contracts to be active if hospital $h$ prefers the resulting prospect over its prospect. 

Defining the fairness constraint, however, is non-trivial.
We do not necessarily mind considering contracts that lead to unfair behavior, except that they might cause incompatibility with fairness.

We present a generalization for contract stability, where we require fairness from the alternative allocation and weakly require fairness from the resulting allocation. For the setting of proto-metric and strict IF hospital preferences, this definition is equivalent to Definition \ref{def:contract_stability_proto}.

\begin{definition}[Set Contract Stability]
\label{def:contract_stability_general}
Let $h \in H$\ be a doctor, $\D' \subseteq \D$ a set of doctors, $a\in [0,1]$ a constant and $\sigma \in \Delta(\D \backslash \D')$ a distribution over $\D \backslash \D'$. 

We say that $\mu = (h,\D',a,\sigma)$ is a set contract with respect to an allocation $\pi$ if:
\begin{itemize}
    \item For every two $i, j \in supp(\sigma)$, $\sigma$ is IF with respect to $i$ and $j$.
    \item Define $\pi^\mu$ as follows: if $\pi(h) \in \D'$, then w.p. $1-a$, $\pi^\mu(h) = \pi(h)$ and with probability $a$, hospital $h$ samples a doctor $i$ from $\sigma$ and ``proposes'' to $i$, and $i$ accepts the ``proposal'' if $h \succ_i \pi(i)$. If $i$ accepts, then $\pi^\mu(h) = i$ and $\pi^\mu(\pi(i)) = \pi(h)$, otherwise $\pi^\mu(h) = \pi(h)$ and $\pi^\mu(i) = \pi(i)$.
    
    If $\pi(h) \not\in \D'$, then $\pi^\mu(h) = \pi(h)$.
    \item For every two doctors $i \in \D \backslash \D'$ and $j \in \D'$, the allocation $\pi^\mu$ is PIIF with respect to $i$ and $j$ (in both directions).
\end{itemize}

We say the set contract is active if $\pi^\mu(h) \succ_h \pi(h)$.

We say an allocation $\pi$ is set contract stable if there are no active set contracts.
\end{definition}

In the definition above, we have a PIIF requirement only from pairs where one of the doctors is in $\D'$ and the other is in $\D\backslash\D'$. We show the intuition behind this decision by example:
$\D = \set{i_1, i_2, j, k}$, $\H= \set{A, B, C, D}$, $d(i_1, i_2) = 0$ and the other distances are 1. The preferences of the doctors are $A \succ B \succ C \succ D$ and the preferences of hospital $A$ are $k \succ i \succ j$. Consider the allocation:
\begin{align*}
    \pi_2 = \begin{cases}
        (i_1, A), (i_2, B), (k, C), (l, D), &\textit{w.p. } 1/2 \\
        (i_2, A), (i_1, B), (k, D), (l, C), &\textit{w.p. } 1/2
    \end{cases}
\end{align*}
This allocation is PIIF, but seems unstable. All the doctors prefer hospital $A$, so hospital $A$ should be assigned to its most preferred doctor, $k$. However, ``applying'' the alternative allocation $\set{k \text{ w.p. }1}$ yields the allocation 
\begin{align*}
    \pi_2' = \begin{cases}
        (k, A), (i_2, B), (i_1, C), (l, D), &\textit{w.p. } 1/2 \\
        (k, A), (i_1, B), (i_2, D), (l, C), &\textit{w.p. } 1/2
    \end{cases}
\end{align*}
which is not PIIF because doctor $i_2$ envies doctor $i_1$.  However, since this behaviour is not hospital $A$'s ``fault'', we allow such contracts to be active, where it is important to keep in mind that we require that no such contract exists and do not actually use the activation of the contracts. Thus, it is only important to ensure that the stability definition does not lead to incompatibility with fairness, i.e., that there exists a PIIF allocation where there are no active contracts. 

\paragraph{On fairness of the alternative prospect.}
Another possibility for a fairness requirement is to require that the alternative prospects are IF, as in Definition \ref{def:selective_stability}. However, the fact that each hospital's prospect is IF does not imply that the entire allocation is PIIF:
For $\D = \set{i,j,k,l}$ and $\H = \set{A, B, C, D}$. The preferences of all doctors are $A \succ B \succ C \succ D$, $d(i,j) =1/5$ and the other distances are 1. The allocation

\begin{align*}
    \pi_3 = \begin{cases}
        (i,A), (j,B), (k,C), (l,D) , &\textit{w.p. } 2/5 \\
        (i,B), (j,C), (k,A), (l, D) , &\textit{w.p. } 1/5 \\
        (i,C), (j, A), (k,B), (l, D) , &\textit{w.p. } 1/5 \\
        (i,C), (j, D), (k, B), (l, A) , &\textit{w.p. } 1/5
    \end{cases}
\end{align*}
satisfies that for each hospital $h \in \H$ the prospect $\pi_3(h)$ is IF. However, $\pi_3$ is not PIIF since $\pi_3(i) \succ_j \pi_3(j)$ and $D_{TV}(\pi_3(i), \pi_3(j))  = 2/5 > 1/5 = d(i,j)$.

Thus, we considered, instead of enforcing fairness on the proposed alternative, to enforce fairness over the resulting allocation via a PIIF requirement between pairs of doctors in $\D$.
However, this requirement turned out to be too strong (i.e., it results in a weak stability definition).
Consider the example where
\begin{align*}
&\D = \set{i_1, i_2, j}, \quad \H = \set{A, B, C}, \\
&d(i_1,i_2) = 0, \quad d(i_1,j) = 1, \quad d(i_2,j) = 1,
\end{align*}
with the mutual replacement IF preferences
\begin{align*}
&A \succ_{i_1} B \succ_{i_1} C, \quad 
A \succ_{i_2} B \succ_{i_2} C, \quad 
A \succ_j B \succ_j C, \\
&i \succ_A j, 
\quad i \succ_B j,
\quad i \succ_C j.    
\end{align*}

It looks like the only fair and stable allocation in this case should be 
\begin{align*}
    \pi_4 = \begin{cases}
        (i_1, A), (i_2, B), (j, C), &\textit{ w.p. } 1/2\\
        (i_1, B), (i_2, A), (j, C), &\textit{ w.p. } 1/2.
    \end{cases}
\end{align*}

However, the allocation 
\begin{align*}
    \pi_5 = \begin{cases}
        (i_1, A), (i_2, B), (j, C), &\textit{ w.p. } 1/3\\
        (i_1, C), (i_2, A), (j, B), &\textit{ w.p. } 1/3\\
        (i_1, B), (i_2, C), (j, A), &\textit{ w.p. } 1/3
    \end{cases}
\end{align*}
is PIIF, and any alternative prospect that hospital $A$ can propose that involves doctors $i_1$ and $i_2$ will result in an unfair allocation, where doctor $i_2$ is matched to hospital $B$ with higher probability than doctor $i_1$. Thus, if we require PIIF between doctors $i_1$ and $i_2$ in the resulting allocation, the allocation $\pi_5$ would be considered stable. We find this makes the stability requirement too weak.

\subsection{Ex-Post (Local) Stability}
\label{sec:ex_post_stability}
The definitions above are ex-ante (a priori) definitions, i.e., guarantees for the probabilistic allocation before sampling. We also present an ex-post (a posteriori) definition, i.e., a guarantee for the sampled allocation, \emph{local stability}. This definition is stronger than the ex-ante definitions, we prove this on Claim \ref{claim:local_stability_implies_contract_stability}, and we will use it in some of our negative results in Section \ref{sec:failure_in_composition}.

\begin{definition}[Local Stability]
\label{def:local_stability}
Given deterministic doctor preferences $\mathcal{P}_{\D} = \{r_i\}_{i\in \D}$ and probabilistic hospital preferences $\mathcal{P}_{\H} = \{r_h\}_{h\in \H}$, a probabilistic allocation $\pi$ is locally stable if there exists a joint distribution $\nu$ over the preferences and assignment $(\mathcal{P}_{\H}, \pi)$, such that for every possible deterministic hospital preferences $\Theta \in \Pi_\D^n$ and deterministic allocation $a: \D \to \H$, if 
\begin{align*}
    \Pr_{(P,m) \sim \nu}[P = \Theta \wedge m = a] > 0
\end{align*}
then there is no blocking pair in $a$ with respect to the preferences $\Theta$.
\end{definition}

In Definition \ref{def:local_stability}, we assume that the preferences of the hospital are given, and we want to find an allocation that satisfies a condition. Thus, although the condition is over the joint distribution of the preferences and the allocation, we say that the allocation is locally stable (or unstable) and not the joint distribution.

We note that local stability is equivalent to requiring $\pi$ is generated by running a stable matching algorithm on $\mathcal{P}_{\D}$ and $\mathcal{P}_{\H}$.

\subsubsection{Local Stability Implies Contract Stability}

This section shows the connection between ex-post and ex-ante definitions, focusing on proto-metrics and strict individually fair preferences.

\begin{claim}
\label{claim:local_stability_implies_contract_stability}
Let $\pi$ be a locally stable probabilistic allocation with respect to strict IF hospital preferences (Definition \ref{def:strict_IF}), then $\pi$ is contract stable (Definition \ref{def:contract_stability_proto}).
\end{claim}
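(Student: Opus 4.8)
The plan is to argue by contradiction: I assume that $\pi$ is locally stable (Definition~\ref{def:local_stability}) yet fails to be contract stable, i.e. it admits an active contract $\mu = (h,i;h',i')$ (Definitions~\ref{def:contract} and \ref{def:active_contract}), and I derive a blocking pair inside a matching that the coupling guaranteed by local stability places positive weight on.

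First I would analyze how $\pi^\mu(h)$ differs from $\pi(h)$. By the definition of $\pi^\mu$, the allocation is modified only on the \emph{swap event} $E = \{\pi(i)=h' \wedge \pi(h)=i'\}$, and on $E$ hospital $h$'s partner changes from $i'$ to $i$, while off $E$ the two allocations coincide. Hence, writing $p = \Pr[E]$, the prospect $\pi^\mu(h)$ is exactly $\pi(h)$ with probability mass $p$ moved from $i'$ onto $i$; and since $\mu$ is active, $p>0$. Using that in the ex-ante setting $r_h$ is independent of the allocation, the two CDFs of Definition~\ref{def:stochastic_domination} satisfy, for every $k$,
\begin{equation*}
\Pr_{o\sim \pi^\mu(h),r_h}[r_h^{-1}(o)\le k] - \Pr_{o\sim \pi(h),r_h}[r_h^{-1}(o)\le k] = p\bigl(\Pr[r_h^{-1}(i)\le k] - \Pr[r_h^{-1}(i')\le k]\bigr).
\end{equation*}
Because $p>0$, the contract hypothesis $\pi^\mu(h)\succ_h\pi(h)$ is therefore equivalent to $i\succ_{r_h} i'$, i.e.\ doctor $i$ strongly stochastically dominates $i'$ under $h$'s probabilistic preference.

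Next I would invoke strict individual fairness (Definition~\ref{def:strict_IF}). Since $d(i,i')=1$, doctors $i$ and $i'$ lie in distinct clusters, which by strict IF are totally ordered by $h$; so either the cluster of $i$ is preferred by $h$ to that of $i'$, or vice versa. The relation $i\succ_{r_h}i'$ rules out the latter (it would force the reverse strong domination $i'\succ_{r_h} i$, which cannot coexist with $i\succ_{r_h} i'$), so the cluster of $i$ is strictly preferred. By the definition of $\succ_h$ on clusters, this gives $\Pr[r_h^{-1}(i)<r_h^{-1}(i')]=1$: \emph{every} realization of $r_h$ ranks $i$ above $i'$. Finally I would combine this with local stability: fix a coupling $\nu$ of $(\mathcal{P}_{\H},\pi)$ witnessing Definition~\ref{def:local_stability}. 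Since $\Pr[E]=p>0$, there is a pair $(\Theta,a)$ in the support of $\nu$ with $a(i)=h'$ and $a(i')=h$. In $a$ doctor $i$ is matched to $h'$ but prefers $h$ (the contract gives $h\succ_i h'$), while hospital $h$ is matched to $i'$ and, under the preferences $\Theta$, ranks $i$ above $i'$ by the previous step. Thus $(i,h)$ is a blocking pair in $a$ with respect to $\Theta$, contradicting the defining property of $\nu$; hence no active contract exists and $\pi$ is contract stable.

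I expect the main obstacle to be the first step: verifying that $\pi^\mu(h)$ really is a clean mass-shift of $\pi(h)$ (in particular that the modification is confined to $E$ and perturbs $h$'s prospect nowhere else), and correctly translating the strong-domination hypothesis into the pointwise CDF identity displayed above using ex-ante independence. Once that reduction is in hand, the passage from $i\succ_{r_h}i'$ to the deterministic cluster ordering, and from there to a blocking pair, is routine.
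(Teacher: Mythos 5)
Your proposal is correct and follows essentially the same route as the paper's proof: derive from the activeness of the contract that $h$ strictly prefers $i$'s cluster to $i'$'s (hence ranks $i$ above $i'$ in every realization of its strict-IF preference), then use the positive probability of the swap event to extract a matching in the support of the local-stability coupling in which $(i,h)$ is a blocking pair. The only difference is that you spell out the mass-shift/CDF computation behind the inference $\pi^\mu(h)\succ_h\pi(h)\Rightarrow C\succ_h C'$, which the paper states without detail.
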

\begin{proof}
Assume for contradiction that there is an active contract, i.e., there exists  a contract $\mu = (h, i; h', i') \in \H \times \D \times \H \times \D$ such that $h \succ_i h'$, $\pi^\mu(h) \succ_h \pi(h)$ and
\begin{equation}
\label{eq:active_contract_proof_for_claim_local_contract}
    \Pr[\pi(i') = h \wedge \pi(i) = h'] > 0.
\end{equation}
Since the preferences of $h$ are strict individually fair and $\mu$ is an active contract we know that $d(i,i') = 1$. Let us denote by $C$ and $C'$ the clusters that contain doctors $i$ and $i'$ respectively. Since $\pi^\mu(h) \succ_h \pi(h)$, we know that $C \succ_h C'$.

Since the allocation $\pi$ is locally stable there exists a joint distribution over $\mathcal{P}_{\H}$ and $\pi$, $\nu$, as described in the definition. Let us denote by $\nu_h$ the joint distribution of $r_h$ and $\pi$ that is a marginal distribution of $\nu$.
From the definition of local stability and \cref{eq:active_contract_proof_for_claim_local_contract}, there exists a permutation $\theta$ of the doctors in $\D$ and deterministic allocation $a$, that satisfies:
\begin{itemize}
    \item $\Pr_{(o,m) \sim \nu_h}[o = \theta \wedge m = a] > 0$.
    \item $a(i') = h, a(i) = h'$. 
\end{itemize} 

Since $C \succ_h C'$, we have that $i \succ_h i'$, for any sampled preferences in $r_h$, and specifically for $\theta$. Since $h \succ_i h'$ and hospital $h$ and doctor $i$ are not matched in $a$, the pair $(h,i)$ is a blocking pair for the allocation $a$, which contradicts the assumption that $\pi$ is locally stable.
\end{proof}
\section{Gale-Shapley on Fair Preferences Does Not Guarantee Fairness}
\label{sec:failure_in_composition}

For the classical setting of deterministic preferences and deterministic allocations, the celebrated \emph{Gale-Shapley} algorithm \cite{gale1962college} finds a stable matching in polynomial time.
It is natural to ask whether we can find a fair and stable allocation by sampling from the (probabilistic) fair preferences and run the Gale-Shapley algorithm.

Since the Gale-Shapley algorithm guarantees classical stability in the deterministic case, the algorithm described above ensures local stability. 

In this section, we show that running this canonical algorithm over fair preferences does not always lead to a fair allocation. We show this for the case of proto-metric with the strongest fairness requirement over the hospitals' preferences, strict individual fairness. This joins existing work due to Dwork and Ilvento \cite{dwork2018fairness} of fairness under composition that shows that even if all the participants act fairly, the outcome might be unfair. The question of whether local stability and PIIF are compatible is still open.

\subsection{The Gale-Shapley Algorithm}
\label{sec:gs_alg}
The Gale-Shapley (GS) algorithm is described in Figure \ref{alg:Gale_Shapley}. We assume we are given two sets of individuals $A$ and $B$ of size $n$. Each individual has deterministic ordered preferences over the individuals in the other set. 
The algorithm returns a \emph{matching}.

\begin{definition}[Matching]
A matching is a function $m : A \cup B \rightarrow A \cup B$ where:
\begin{itemize}
    \item For every $i \in A$, if $m(i) \in B$, $m(i)$ represents $i$'s match. Otherwise, $m(i) = i$, and $i$ is not matched. $m(i) \in A \backslash \set{i}$ is not a valid value.
    \item For every $j \in B$, if $m(j) \in A$, $m(j)$ represents $j$'s. Otherwise, $m(j) = j$, and $j$ is not matched. $m(j) \in B \backslash \set{j}$ is not a valid value.
    \item The matching must be mutual so if $m(i) = j$ then $m(j) = i$. Thus, if $m(i) = j$ then for any other individual $i'$, $m(i') \ne j$.
\end{itemize}
\end{definition}

The algorithm is not symmetric; there is a proposing set and an accepting set. We will assume that $A$ is the proposing set and $B$ is the accepting set.
At each iteration of the algorithm, some unmatched individual in $A$ proposes to its most favored individual in $B$ that did non reject it yet. If there is no unmatched individual in $A$, the algorithm terminates.
This algorithm is also called the \emph{deferred acceptance} algorithm since when an individual in $B$ gets a proposal, if it is worst than its current match, they will reject it immediately. However, if it is better than its current match, they will temporarily accept the proposal and reject its current match, but they can still reject the proposal if a better one comes along. The acceptance is final once the algorithm terminates.

\makeatletter
\renewcommand{\ALG@name}{Figure}
\makeatother

\begin{algorithm}[H]
\caption{Gale-Shapley Algorithm}
\label{alg:Gale_Shapley}
\begin{algorithmic}[1]
\STATE $\forall i \in A : m(i) \leftarrow i$, $\forall j \in B : m(j) \leftarrow j$.
\WHILE{$\exists i \in A: m(i) = i$}
    \STATE Choose arbitrarily $i \in A : m(i) = i$.
    \STATE Let $j \in B$ be the first on $i$'s preferences list.
    \IF{$m(j) = j$}
        \STATE $m(j) \leftarrow i$, $m(i) \leftarrow j$.
    \ELSIF{$m(j) \succ_j i$}
        \STATE Remove $j$ from $i$'s preferences list.
    \ELSE
        \STATE Remove $j$ from $m(j)$'s preferences list.
        \STATE $m(m(j)) \leftarrow m(j)$, $m(j) \leftarrow i$, $m(i) \leftarrow j$.
    \ENDIF
\ENDWHILE
\RETURN $m$.
\end{algorithmic}
\end{algorithm}

\makeatletter
\renewcommand{\ALG@name}{Algorithm}
\makeatother

Gale and Shapley \cite{gale1962college} showed that the GS algorithm terminates in $O(n^2)$ rounds and outputs a stable matching that is optimal among all stable matchings for each individual in the proposing set.

\subsection{Doctor Proposing Gale-Shapley}

We start by showing that for the GS variant where the doctors propose to the hospitals, there exists strict individually fair preferences that lead to an unfair solution.

\begin{theorem}
\label{thm:doctor_propose_gs_fails}
There exists a set of doctors $\D$, a set of hospitals $\H$, a proto-metric $d : \D \times \D \rightarrow \set{0,1}$, deterministic doctor preferences $\mathcal{P}_{\D} = \{r_i\}_{i\in \D}$ and probabilistic and strictly individually fair hospital preferences $\mathcal{P}_\H = \{r_h\}_{h\in \H}$, such that an algorithm that outputs a probabilistic allocation $\pi$ by sampling $\mathcal{P}_{\H}$ and running the Gale-Shapley algorithm with $\D$ as the proposing set, outputs an unfair allocation.
\end{theorem}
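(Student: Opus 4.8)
The plan is to prove the theorem by exhibiting an explicit instance and tracing the sampling-plus-Gale-Shapley procedure on it; this is exactly the three-doctor, three-hospital construction previewed in Theorem~\ref{intro:thm:composition}, which I would now carry out rigorously. I take $\D = \set{i_1, i_2, j}$ and $\H = \set{A, B, C}$, with the proto-metric placing $i_1, i_2$ in one cluster ($d(i_1,i_2)=0$) and $j$ alone ($d(i_1,j)=d(i_2,j)=1$). The doctor preferences are $A \succ_{i_1} B \succ_{i_1} C$, $\;A \succ_{i_2} C \succ_{i_2} B$, and $C \succ_j A \succ_j B$. Each hospital ranks the cluster $\set{i_1,i_2}$ and the singleton $\set{j}$ by a fixed cluster order — $j$ above $\set{i_1,i_2}$ for $A$ and $B$, and $\set{i_1,i_2}$ above $j$ for $C$ — while ordering $i_1$ versus $i_2$ internally by an independent fair coin. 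The first step is to check that these preferences satisfy Definition~\ref{def:strict_IF}: within the cluster the two doctors are treated symmetrically (the uniform internal permutation gives $\Pr[r_h(r)=i_1]=\Pr[r_h(r)=i_2]$ at every rank), and the two clusters are totally ordered by each hospital, as required.

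The core of the argument is to compute the probabilistic allocation the procedure induces. I would first invoke the classical fact (recalled in Section~\ref{sec:gs_alg}) that, for any fixed deterministic profile, doctor-proposing Gale-Shapley outputs a single matching independent of the order in which proposals are processed; hence the only randomness in the output comes from the three internal coin flips, giving a distribution over at most $2^3$ matchings. The key simplification — and the step I expect to require the most care — is to show this distribution collapses to two equiprobable outcomes depending solely on $A$'s coin. Concretely, I would trace the two cases for $A$'s internal order: when $A$ ranks $i_1 \succ_A i_2$ the run ends at $(i_1,B),(i_2,C),(j,A)$, and when $A$ ranks $i_2 \succ_A i_1$ it ends at $(i_1,B),(i_2,A),(j,C)$. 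In verifying these traces I would record that $C$'s coin never matters (whenever $C$ must compare a cluster doctor against $j$, the cluster order $\set{i_1,i_2}\succ_C \set{j}$ decides it regardless of the internal tie-break) and that $B$'s coin never matters (in both runs $B$ receives a proposal only from $i_1$, so its internal order is never consulted). This pins the output to
\begin{equation*}
\pi = \begin{cases}(i_1,B),(i_2,A),(j,C), & \text{w.p. } 1/2,\\ (i_1,B),(i_2,C),(j,A), & \text{w.p. } 1/2.\end{cases}
\end{equation*}

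Finally I would read off the marginal prospects $\pi(i_1)=B$ with probability $1$, while $\pi(i_2)=A$ with probability $1/2$ and $\pi(i_2)=C$ with probability $1/2$, and invoke the proto-metric form of PIIF (the corollary following Definition~\ref{PIIF}), which for the same-cluster pair $i_1,i_2$ demands $\pi(i_1)\succeq_{i_1}\pi(i_2)$. Under $i_1$'s ranking $A\succ_{i_1} B\succ_{i_1} C$, the top-$1$ coordinate of stochastic domination already fails: $\Pr[\pi(i_1)=A]=0$ while $\Pr[\pi(i_2)=A]=1/2$, so $i_1$ strictly envies $i_2$ and the allocation is not PIIF. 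Since $i_1$ and $i_2$ are at distance $0$, this violates fairness, completing the construction. The only genuinely delicate point is the collapse argument of the previous paragraph — ruling out that the $B$ and $C$ flips perturb the matching — and I would handle it by the explicit proposal-by-proposal traces above rather than by any general lemma.
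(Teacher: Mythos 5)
Your proposal is correct and follows essentially the same route as the paper's proof: the identical three-doctor, three-hospital instance, a case split on hospital $A$'s internal coin yielding the two equiprobable matchings, and the top-$1$ violation of stochastic domination for $i_1$ versus $i_2$. The extra care you take in justifying order-independence and in noting that $B$'s and $C$'s internal coins never influence the outcome is sound (and the minor difference in $B$'s cluster ranking relative to the paper's body is immaterial for exactly that reason).
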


\begin{proof}
Consider the following example:
Let $\D = \set{i_1, i_2, j}$ be the set of doctors, $\H = \set{A, B, C}$ be the set of hospitals. The doctors $i_1, i_2$ are in the same cluster $i$, and $j$ is in a different cluster $j$.

The preference of the doctors are: 
$$
A \succeq_{i_1} B \succeq_{i_1} C, \quad A \succeq_{i_2} C \succeq_{i_2} B, \quad C \succeq_j A \succeq_j B,
$$
and the preferences of the hospitals are
$$
j \succeq_A i, \quad i \succeq_B j, \quad i \succeq_C j.
$$ 

When running GS, at the beginning both doctors $i_1$ and $i_2$ propose to hospital $A$ and doctor $j$ proposes to hospital $C$, hospital $C$ always accepts doctor $j$ at this point. 
There are two cases for the preferences of hospital $A$:
\begin{enumerate}
    \item Hospital $A$'s preferences are $i_1 \succ_A i_2$.
    Hospital $A$ accepts doctor $i_1$ and rejects doctor $i_2$. Doctor $i_2$ proposes to hospital $C$, hospital $C$ accepts doctor $i_2$ and rejects doctor $j$. Doctor $j$ proposes to hospital $A$, hospital $A$ accepts doctor $j$ and rejects doctor $i_1$. Doctor $i_1$ proposes to hospital $B$ and hospital $B$ accepts. 
    All the doctors are matched so the algorithm terminates with the allocation $(i_1, B), (i_2, C), (j, A)$.
    \item Hospital $A$'s preferences are $i_2 \succ_A i_1$.
    Hospital $A$ accepts doctor $i_2$ and rejects doctor $i_1$. Doctor $i_1$ proposes to hospital $B$, and hospital $B$ accepts. 
    All the doctors are matched so the algorithm terminates with the allocation $(i_1, B), (i_2, A), (j, C)$.
\end{enumerate}

Hence the algorithm outputs the allocation
$$ \pi = 
\begin{cases}
(i_1, B), (i_2, C), (j, A),& \text{w.p. } 1/2 \\
(i_1, B), (i_2, A), (j, C),& \text{w.p. } 1/2.
\end{cases}
$$

By doctor $i_1$'s preferences
$$
\Pr_{o\sim \pi(i_1)}[r_{i_1}^{-1}(o) = 1] = 0 < \Pr_{o\sim \pi(i_2)}[r_{i_1}^{-1}(o) = 1] = 1/2,
$$
 which implies that $\pi(i) \not\succeq_i \pi(j)$ so allocation $\pi$ is not PIIF.
\end{proof}

\subsection{Hospital Proposing Gale-Shapley}

Next, we show that for the GS variant where the hospitals propose the doctors, there exist strict individually fair preferences that leads to an unfair solution.

\begin{theorem}
There exists a set of doctors $\D$, a set of hospitals $\H$, a proto-metric $d : \D \times \D \rightarrow \set{0,1}$, deterministic doctor preferences $\mathcal{P}_{\D} = \{r_i\}_{i\in \D}$, probabilistic and strictly individually fair hospital preferences $\mathcal{P}_\H = \{r_h\}_{h\in \H}$, such that an algorithm that outputs a probabilistic allocation $\pi$ by sampling $\mathcal{P}_H$ and running the Gale-Shapley algorithm with $\H$ as the proposing set, outputs an unfair allocation.
\end{theorem}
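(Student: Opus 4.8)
The plan is to mirror the structure of the doctor-proposing counterexample (Theorem~\ref{thm:doctor_propose_gs_fails}): I would exhibit an explicit instance with a single nontrivial cluster $\set{i_1,i_2}$ and one outside doctor $j$, enumerate the finitely many realizations of the strictly individually fair hospital preferences (the only randomness is each hospital's internal within-cluster ordering), trace the hospital-proposing Gale--Shapley execution on each realization, aggregate the resulting matchings into the probabilistic allocation $\pi$, and finally point to a rank $k$ at which $\pi(i_2)$ beats $\pi(i_1)$ in $i_1$'s ranking, contradicting PIIF via the proto-metric corollary.

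Concretely, I would take $\D = \set{i_1,i_2,j}$ with $d(i_1,i_2)=0$ and $d(i_1,j)=d(i_2,j)=1$, hospitals $\H=\set{A,B,C}$, doctor preferences
\begin{equation*}
A \succ_{i_1} C \succ_{i_1} B, \quad A \succ_{i_2} B \succ_{i_2} C, \quad A \succ_j B \succ_j C,
\end{equation*}
and strictly IF hospital preferences in which every hospital prefers the cluster $\set{i_1,i_2}$ to $j$, i.e. $i \succ_A j$, $i \succ_B j$ and $i \succ_C j$. The crucial feature is that $i_1$ and $i_2$ share the same top choice $A$ but have swapped second and third choices.

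Tracing the hospital-proposing algorithm, the only relevant random bit is the order in which $A$ proposes within the cluster. When $A$ proposes to $i_1$ first it keeps $i_1$ (its top), and the contest for $B$ is won by $i_2$ (whose second choice is $B$), giving the matching $(i_1,A),(i_2,B),(j,C)$; when $A$ proposes to $i_2$ first it keeps $i_2$, and now $i_1$, preferring $C$ to $B$, grabs $C$ and leaves $B$ to $j$, giving $(i_1,C),(i_2,A),(j,B)$. One checks that these outcomes are independent of the internal orders of $B$ and $C$, so that each occurs with probability $1/2$. Hence
\begin{equation*}
\pi(i_1)=\set{A:\tfrac12,\ C:\tfrac12}, \qquad \pi(i_2)=\set{A:\tfrac12,\ B:\tfrac12}.
\end{equation*}
In $i_1$'s ranking $A\succ B\succ C$, the probability of a top-$2$ outcome is $1$ under $\pi(i_2)$ but only $\tfrac12$ under $\pi(i_1)$, so $\pi(i_1)\nsucceq_{i_1}\pi(i_2)$ and $\pi$ is not PIIF.

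The main obstacle is finding the instance itself: naive attempts (all hospitals indifferent with both cluster doctors sharing one ranking, or cluster doctors with distinct favourites) yield symmetric or PIIF-fair allocations, because whichever same-cluster doctor a hospital reaches first simply keeps its top choice, and each doctor then retains its own preferred fallback. The asymmetry must instead be engineered through the cluster doctors' differing \emph{lower} preferences: $i_1$'s ``contrarian'' ordering $C\succ B$ makes it skip the broadly desirable middle hospital $B$ exactly in the event that it loses $A$, so that $i_2$ always lands in $i_1$'s top two while $i_1$ does not. The remaining effort is the routine but careful verification that the traced outcome is identical for every realization of the within-cluster orderings, so that the aggregate $\pi$ is as claimed.
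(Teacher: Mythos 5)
There is a genuine gap: the instance you construct does not actually violate PIIF, and the error is in the final verification step. You define $i_1$'s preferences as $A \succ_{i_1} C \succ_{i_1} B$, but in the last paragraph you compute the top-$2$ probabilities as if $i_1$'s ranking were $A \succ B \succ C$. With the ranking you actually specified, $i_1$'s top-$2$ set is $\{A, C\}$, so under $\pi(i_1) = \{A: \tfrac12,\, C: \tfrac12\}$ the probability of a top-$2$ outcome is $1$, while under $\pi(i_2) = \{A: \tfrac12,\, B: \tfrac12\}$ it is only $\tfrac12$. Checking all ranks, $\pi(i_1) \succeq_{i_1} \pi(i_2)$ and, symmetrically, $\pi(i_2) \succeq_{i_2} \pi(i_1)$ (for $i_2$ the top-$2$ set is $\{A,B\}$), so neither doctor envies the other and the allocation is PIIF. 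Your trace of the hospital-proposing run is correct; it is the conclusion drawn from it that fails.

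The failure is not incidental to this instance but to the design principle behind it. In hospital-proposing deferred acceptance each doctor keeps the best proposal it receives \emph{according to its own ranking}, so whichever of $i_1, i_2$ loses $A$ simply falls back to its own second choice; since PIIF under a proto-metric compares prospects by stochastic domination in the envier's own preference order, "first-or-second choice with probability $\tfrac12$ each" for both doctors is automatically fair, no matter how you permute the lower preferences. To break this you need the loser's fallback to sit strictly lower in the loser's own ranking than the winner's entire prospect does. The paper's proof achieves this with four doctors and four hospitals: there $i_2$ ranks $B$ first and $A$ second, so $\pi(i_2) = \{A:\tfrac12,\, B:\tfrac12\}$ lies entirely within $i_1$'s top two, while $i_1$'s fallback $C$ is $i_1$'s third choice, giving $\Pr_{o\sim\pi(i_1)}[r_{i_1}^{-1}(o)\le 2] = \tfrac12 < 1 = \Pr_{o\sim\pi(i_2)}[r_{i_1}^{-1}(o)\le 2]$. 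The extra doctor $k$ and hospital $D$ are there to absorb proposals so that this asymmetric cascade actually occurs. Your overall strategy (enumerate realizations, trace the run, exhibit a rank where domination fails) is the right one, but you need an instance with this kind of cross-ranking asymmetry rather than one driven only by the cluster doctors' differing lower preferences.
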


\begin{proof}
Consider the following example:
Let $\D = \set{i_1, i_2, j, k}$ be the set of doctors, $\H = \set{A, B, C, D}$ be the set of hospitals. The doctors $i_1, i_2$ are in the same cluster $i$, and doctors $j$ and $k$ are in different clusters $j$ and $k$ respectively.

The preference of the doctors are: 
$$A \succeq_{i_1} B \succeq_{i_1} C \succeq_{i_1} D, \quad B \succeq_{i_2} A \succeq_{i_2} D \succeq_{i_2} C, \quad C \succeq_j B \succeq_j A \succeq_j D, \quad D \succeq_k A \succeq_k B \succeq_k C$$
and the preferences of the hospitals are
$$i \succeq_A j \succeq_A k, \quad j \succeq_B i \succeq_B k, \quad i \succeq_C j \succeq_C k, \quad i \succeq_D k \succeq_D j.$$

Let us simulate the run of GS. At the beginning of the run, hospital $B$ proposes to doctor $j$ and doctor $j$ accepts. As for hospital $A$,
there are two cases for the preferences of hospital $A$:

\begin{enumerate}
    \item Hospital $A$'s preferences are $i_1 \succ_A i_2$.
    Hospital $A$ proposes to doctor $i_1$ and doctor $i_1$ accepts.
    If hospital $D$'s preferences are $i_1 \succ_D i_2$, hospital $D$ proposes to doctor $i_1$ and doctor $i_1$ rejects it, then hospital $D$ proposes to doctor $i_2$ and doctor $i_2$ accepts.
    If hospital $D$'s preferences are $i_2 \succ_D i_1$, hospital $D$ proposes to doctor $i_2$ and doctor $i_2$ accepts.
    Hospital $C$ proposes to doctors $i_1$ and $i_2$ in some order and they both reject it, then hospital $C$ proposes to doctor $j$ and doctor $j$ accepts.
    If hospital $B$'s preferences are $i_1 \succ_B i_2$, hospital $B$ proposes to doctor $i_1$ and doctor $i_1$ reject it, then hospital $B$ proposes to doctor $i_2$ and doctor $i_2$ accepts.
    If hospital $B$'s preferences are $i_2 \succ_B i_1$, hospital $B$ proposes to doctor $i_2$ and doctor $i_2$ accepts.
    If hospital $D$'s preferences are $i_2 \succ_D i_1$, hospital $D$ proposes to doctor $i_1$ and doctor $i_1$ rejects.
    Hospital $D$ proposes to doctor $k$ and doctor $k$ accepts.
    All the hospitals are matched so the algorithm terminates with the matching
    $
    (i_1, A), (i_2, B), (j, C), (k, D).
    $

    \item Hospital $A$'s preferences are $i_2 \succ_A i_1$.
    Hospital $A$ proposes to doctor $i_2$ and doctor $i_2$ accepts.
    If hospital $C$'s preferences are $i_2 \succ_C i_1$, hospital $C$ proposes to doctor $i_2$ and doctor $i_2$ reject it, then hospital $C$ proposes to doctor $i_1$ and doctor $i_1$ accepts.
    If hospital $C$'s preferences are $i_1 \succ_C i_2$, hospital $C$ proposes to doctor $i_1$ and doctor $i_1$ accepts.
    Hospital $D$ proposes to doctors $i_1$ and $i_2$ in some order and they both reject it, then hospital $D$ proposes to doctor $k$ and doctor $k$ accepts.
    All the hospitals are matched so the algorithm terminates with the matching
    $
    (i_1, C), (i_2, A), (j, B), (k, D).
    $
    
    Hence the algorithm outputs the allocation
    $$ \pi =
    \begin{cases}
    (i_1, A), (i_2, B), (j, C), (k, D) ,& \text{w.p. } 1/2 \\
    (i_1, C), (i_2, A), (j, B), (k, D) ,& \text{w.p. } 1/2.
    \end{cases}
    $$
    
    By doctor $i_1$'s preferences
    $$
    \Pr_{o\sim \pi(i_1)}[r_{i_1}^{-1}(o) \le 2] = 1/2 < \Pr_{o\sim \pi(i_2)}[r_{i_1}^{-1}(o) \le 2] = 1
    $$
     which implies that $\pi(i) \not\succeq_i \pi(j)$ so  the allocation $\pi$ is not PIIF.
\end{enumerate}
\end{proof}
\section{PIIF and Contract Stable Algorithms Given a Proto-Metric}
\label{sec:piif_algs}
In the previous section, we demonstrated that a natural extension of the Gale-Shapley algorithm to our setting, in which the hospitals have probabilistic preferences, fails to guarantee fairness. This leaves open the question of whether fairness and stability are inherently incompatible, of whether the Gale-Shapley algorithm is simply not the “right” approach. In this section, we resolve this question in the proto-metric setting, proving that fairness and stability are compatible. Our proof is constructive: we give two algorithms (akin to the “doctor proposing” and “hospital proposing” variants of the Gale-Shapley algorithm) that are guaranteed to output a fair and stable solution when the preferences of the hospitals are fair to begin with.

We show that for a proto-metric and strict individually fair hospital preferences, there exists an algorithm that outputs a PIIF and contract stable allocation.
We start by showing the two algorithms for PIIF and contract stable allocations for strict individually fair hospital preferences. Then, we show a reduction from rank individually fair preferences to strict individually fair preferences that preserves contract stability.

In the two algorithms, we have a meta-algorithm that is similar to the Gale-Shapley algorithm. In this meta-algorithm, each participant in the proposing set proposes probability mass to participants in the other set. Then, the participant in the other set has to decide whether to accept or reject the probability mass. The main difference from the Gale-Shapley algorithm is that participants can decide to accept part of the proposed probability mass. Moreover, the algorithms incorporate fairness considerations when deciding what proposals to make and what proposals to accept.
In the variant where the hospitals propose, they propose to the entire cluster instead of a single doctor. Then the cluster chooses an envy-free allocation.
In the variant where the doctors propose, the hospitals choose an allocation that will not lead to unfairness.

\begin{remark}
In \Cref{sec:reduction}, we show that this result can be generalized. For a weaker requirement over the fairness of the hospital preferences, rank IF, we can achieve PIIF and set contract stability (see \Cref{def:contract_stability_general}). We do that by showing a reduction from rank IF preferences to strict IF preferences that maintains stability.
\end{remark}

\subsection{Allocation Sub-Routines}
\subsubsection{Probabilistic Serial Procedure}
In the first algorithm, \Cref{alg:GS_PSP}, we make use of the \emph{probabilistic serial procedure} (PSP) due to Bogomolnaia and Moulin \cite{bogomolnaia2001new}. This procedure is used to allocate $n$ goods to a set of $n$ individuals, who are assumed to have preferences over those goods. Bogomolnaia and Moulin \cite{bogomolnaia2001new} showed that the output of the probabilistic serial procedure is envy-free.

The intuition for this procedure in the paper is as if there are $n$ individuals and $n$ loaves of bread. Each individual has ordered preferences over the loaves of bread. The duration of the algorithm is one time unit. At each time point $0 \le t \le 1$, each individual eats their favorite bread loaf, that has not been entirely consumed ("eaten\footnote{Because of the analogy above, in the paper, this algorithm is also called the eating algorithm; we also use this name.}") yet. In the original paper, the individuals can have different speeds; here, we employ a simplified version of the algorithm where each individual has speed 1 $\frac{\text{bread loaf}}{\text{time unit}}$. The portion of bread that each individual has eaten from each bread loaf is the probability of this individual to be assigned to this bread loaf in the probabilistic allocation. At the end of the run, we have a doubly stochastic matrix that, using the Birkhoff von Neumann algorithm, indeed a probabilistic allocation.

\paragraph{Example: probabilistic serial procedure.} Consider the following toy examples for the set of individuals is $\mathcal{I}=\set{a,b,c}$ and the set of bread loaves is $\mathcal{B} = \set{\ell_1, \ell_2, \ell_3}$. 
\begin{itemize}
    \item 
    If the preferences are
    $$
    \ell_1 \succ_a \ell_2 \succ_a \ell_3 ,\quad \ell_2 \succ_b \ell_1 \succ_b \ell_3, \quad \ell_3 \succ_c \ell_1 \succ_c \ell_2.
    $$
    Then at the beginning of the run, each individual eats their favorite bread loaf. Since only one individual eats each bread loaf, they finish eating at time 1 when the algorithm terminates. The output is the doubly stochastic matrix
    \begin{center}
        \begin{tabular}{c|c|c|c}
             & $\ell_1$ & $\ell_2$ & $\ell_3$ \\
             \hline
             $a$ & 1 & 0 & 0 \\
             \hline
             $b$ & 0 & 1 & 0 \\
             \hline
             $c$ & 0 & 0 & 1
        \end{tabular}
    \end{center}
    \item If the preferences are
    $$
    \ell_1 \succ_a \ell_2 \succ_a \ell_3,\quad \ell_1 \succ_b \ell_3 \succ_b \ell_2, \quad \ell_2 \succ_c \ell_1 \succ_c \ell_3.
    $$
    Then at the beginning of the run, both $a$ and $b$ start eating $\ell_1$ and $c$ starts eating $\ell_2$. At time 1/2, there will be no more bread left in $\ell_1$ and $c$ finished eating half of $\ell_2$. Then $a$ starts eating $\ell_2$ and $b$ starts eating $\ell_3$. At time 3/4, there will be no more bread left in $\ell_2$ and $3/4$ left in $\ell_3$. Then $a$ and $c$ start eating $\ell_3$, at time time 1 there is no more bread left in $\ell_3$ and the algorithm terminates with the doubly stochastic matrix
    \begin{center}
        \begin{tabular}{c|c|c|c}
             & $\ell_1$ & $\ell_2$ & $\ell_3$ \\
             \hline
             $a$ & 1/2 & 1/4 & 1/4 \\
             \hline
             $b$ & 1/2 & 0 & 1/2 \\
             \hline 
             $c$ & 0 & 3/4 & 1/4
        \end{tabular}
    \end{center}
\end{itemize}

In the case where the sizes of the sets are not equal Bogomolnaia and Moulin \cite{bogomolnaia2001new} and  Kojima and Manea \cite{kojima2010incentives} showed generalizations using dummy individuals or dummy bread loaves, where being assigned to a dummy individual or bread loaf means being unassigned.

\setcounter{algorithm}{0}
\begin{algorithm}[H]
    \caption{Probabilistic Serial Procedure}
    \hspace*{\algorithmicindent} \textbf{Input} $\D$, $\C$, $\H$, $\forall C \in \C:p_C \in [0,1]^{|\H|}$, $\forall i \in \D:r_i \in \Pi_\D$.
    
    \begin{algorithmic}[1]
    \label{alg:PSP}
        \STATE $t \leftarrow 0$, $\forall h \in H: p_h \leftarrow p_C[h]$, $\forall i \in \D: \pi(i) \in [0,1]^{|H|} \leftarrow \mathbf{0}$.
        \STATE $t$ goes from 0 to 1 with speed 1.
        \WHILE{$t < 1$ and $\exists h'\in H: p_{h'} > 0$} 
            \FORALL{$i \in \D$ (simultaneously)} 
                \STATE $h$ is the highest ranked hospital in $i$'s preferences list.
                \WHILE{$p_h > 0$}
                    \STATE Eat $p_h$ with speed 1.
                \ENDWHILE
                \STATE Remove $h$ from $i$'s preferences list.
                \STATE $\pi(i)[h] \leftarrow$ the amount $i$ ate from $h$.
            \ENDFOR
        \ENDWHILE
        \RETURN $\forall i\in \D: \pi(i)$.
    \end{algorithmic}
\end{algorithm}

\subsubsection{Rising Tide Algorithm}
Similarly, in the second algorithm, \Cref{alg:GS_WA},
each hospital has to choose its prospect given the proposed probability mass in every round in this algorithm.
We use \emph{rising tide Algorithm}, described in \Cref{alg:water_algorithm}, to choose a probabilistic prospect for a specific hospital given the doctors' proposals.  Unlike the PSP algorithm, in this case we find an prospect for a single hospital and not for an entire cluster.
This algorithm outputs the "best" output for the hospital that will not violate the fairness requirement.

It may be helpful to think about each doctor as a vessel with a capacity equal to the proposed probability mass, and the vessels of the doctors in the same cluster are connected. The hospital has one unit of water. Going from its most preferred cluster to its least preferred cluster, the hospital pours water in the vessels that belong to the cluster until the vessels are full or there is no water for the hospital to pour.

The amount of water in each doctor's vessel is the probability of this doctor being assigned to the hospital.

In \Cref{alg:water_algorithm}, we denote the capacity of the doctors' vessels (the proposed probability mass) as the vector $p_h$, where $p_h[i]$ is the capacity of the doctor $i$.

\begin{algorithm}[H]
    \caption{Rising tide}
    \hspace*{\algorithmicindent} \textbf{Input} a hospital $h \in \H$, $p_h \in [0,1]^{n}$, $\D$,$\mathcal{C}$, $r_h:[|\C|] \to \C$.
    \begin{algorithmic}[1]
    \label{alg:water_algorithm}
    \STATE $p \leftarrow 1$, $\pi \in [0,1]^{|D|} \leftarrow \mathbf{0}$.
    \WHILE{$p > 0$ and $\exists i \in \D: p_h[i] > 0$}
        \STATE $C \leftarrow \argmin_{C \in \mathcal{C}}\{r_h^{-1}(C): \exists i \in C \,s.t. \, p_h[i] > 0\}$. \hfill\COMMENT Most preferred cluster that still has probability mass proposed to $h$.
        \STATE $n_C \leftarrow |\{ i \in C: p_h[i] > 0 \}|$.
        \hfill\COMMENT Number of doctors in $C$ that has probability mass proposed to $h$.
        \STATE $x \leftarrow \min\{p / n_C, \min_{i \in C}\{p_h[i] : p_h[i] > 0\}\}$.
        \FORALL{$i \in C: p_h[i] > 0$}
            \STATE $p_h[i] \leftarrow p_h[i] - x$.
            \STATE $\pi[i] \leftarrow \pi[i] + x$.
        \ENDFOR
        \STATE $p \leftarrow p - n_C \cdot x$.
    \ENDWHILE
    \RETURN $\pi$.
    \end{algorithmic}
\end{algorithm}

\begin{remark}
In both algorithms in this section, when finding the probabilistic allocation, we first find a doubly stochastic matrix representing the probability of every two parties from the different sets to be matched. Then we use \emph{Birkhoff von Neumann algorithm} due to Birkhoff \cite{birkhoff1946tres}, which decomposes the doubly stochastic matrix into a convex combination of permutation matrices, where a permutation matrix is an allocation when the size of the two sets is equal. Thus, this convex combination of permutation matrices is equivalent to a probabilistic allocation and is the output of the algorithm.
\end{remark}
\subsection{Hospital-Proposing Algorithm}

\subsubsection{Fair Propose-and-Reject -- Hospitals-First}
Next, we present our first algorithm for PIIF and contract stable allocations given strict individually fair hospital preferences. Our algorithm is a generalization of the Gale Shapley algorithm (see Figure \ref{alg:Gale_Shapley}) to preferences over clusters instead of preferences over individuals. 
If each cluster consists of a single doctor, i.e., there is no fairness requirement, then \Cref{alg:GS_PSP} is equivalent to Gale Shapley with the hospitals as the proposing set. 
In the case of a single cluster consisting of all the doctors, i.e., there is a strong fairness requirement among all doctors, \Cref{alg:GS_PSP} is equivalent to \Cref{alg:PSP} (the "eating" algorithm).

At the beginning of the run, each hospital has a free probability mass $p_h = 1$. We sometime denote the free probability mass of hospital $h$ at round $k$ as $p_{h,k}$.

At each round $k$, each hospital $h \in \H$ proposes its entire free probability mass to the cluster with the lowest rank (most favored) that did not yet reject it. As an analogy, it might be helpful to think of proposing free probability mass as free hospitals proposing to the next doctor in their preferences list in the original algorithm.

After receiving the proposals, each cluster runs the eating algorithm over the newly proposed probability mass and the mass from the assignment in the previous round. Any non-allocated probability mass is rejected by the cluster, and becomes free probability mass in the next round. 
As a similar analogy, it might be helpful to think of this part as the accepting set in Gale Shapley choosing between its current allocation and a new proposal and rejecting the less preferred one.

The algorithm terminates when there is no more free probability mass, or the amount of free probability mass is very small. Then the remaining probability mass is distributed between the clusters that are not fully assigned.

\begin{algorithm}[H]
    \caption{Fair Propose-and-Reject -- Hospitals-First}
    \begin{algorithmic}[1]
    \label{alg:GS_PSP}
        \STATE $\forall h \in \H: p_h \leftarrow 1$, $P \in [0,1]^{n \times n} \leftarrow \textbf{0}$, $k \leftarrow 1$, $\forall C \in \C : p_C \in [0,1]^n \leftarrow \mathbf{0}$.
        \WHILE{$\sum_{ h \in \H}p_h > \tau$}
            \FORALL{$h \in \H: p_h > 0$} 
                \STATE $C \in \mathcal{C}$ is the highest ranked cluster by $h$. 
                \STATE $p_C [h] \leftarrow p_C[h] + p_h$.
                \hfill\COMMENT{$h$ proposes its free probability mass.}
                \STATE $p_h \leftarrow 0$.
            \ENDFOR
            \FORALL{$C \in \mathcal{C}$}
                \STATE Run \Cref{alg:PSP} over the probability masses $p_C$ and $C$ and get the marginal distribution  $\pi_{k}(i)$ for every $i \in C$.
                \FORALL{$i \in C, h \in \H$}
                    \STATE $P[i, h] \leftarrow \Pr[\pi_k(i) = h]$.
                    \hfill\COMMENT{Update the output probability matrix.}
                \ENDFOR
                \FORALL{$h \in \H$}
                    \STATE $s \leftarrow p_C[h] - \sum_{i\in C}\Pr[\pi_k(i) = h]$
                    \hfill\COMMENT{Calculate the uneaten probability mass.}
                    \IF{$s > 0$}
                        \STATE $p_h \leftarrow p_h + s$.
                        \hfill\COMMENT{$C$ rejects the amount $s$ of $h$.}
                        \STATE $p_C[h] \leftarrow p_c[h] - s$.
                        \STATE Remove $C$ from $h$'s preference list.
                    \ENDIF
                \ENDFOR
            \ENDFOR
            \STATE $k \leftarrow k + 1$.
        \ENDWHILE
        \label{step_ef}
        \IF{$\sum_{ h \in \H}p_h > 0$}
        \label{alg:step:gs_psp_alloc_mass}
            \FORALL{$h \in \H: p_h > 0$}
                \WHILE{$p_h > 0$}
                    \STATE Find some $C \in \mathcal{C}$ such that $\sum_{h \in \H}p_{C}[h] < |C|$.
                    \hfill\COMMENT{Allocating the remaining mass.}
                    \STATE $x \leftarrow \min\{p_h, |C| - \sum_{h \in \H}p_{C}[h]\}$.
                    \STATE $p_h \leftarrow p_h - x$.
                    \STATE $p_C[h] \leftarrow p_C[h] + x$.
                    \FORALL{$i \in C$}
                        \STATE $P[i,h] \leftarrow P[i,h] + x / |C|$.
                    \ENDFOR
                \ENDWHILE
            \ENDFOR
        \ENDIF
        \STATE Find an allocation $\pi$ by running Birkhoff von Neumann algorithm over $P$.
        \RETURN $\pi$.
    \end{algorithmic}
\end{algorithm}

\begin{remark}
\label{remark:doubly_stochastic}
The allocation of the remaining probability mass $\tau$ forms a doubly stochastic matrix.
This is because the total amount of probability mass is $n$, and the total amount we can allocate with the doctor is $n$. After distributing the mass, all the mass is allocated. By way of choosing how much mass to allocate at each step, we make sure that no doctor or hospital is allocated to more than one unit of mass. Thus, the matrix $P$ is doubly stochastic.
\end{remark}

\begin{theorem}
\label{thm:gs_psp_piif_stable}
Given a set of doctors $\D$, a set of hospitals $\H$, a proto-metric $d : \D \times \D \rightarrow \set{0,1}$, deterministic doctor preferences $\mathcal{P}_{\D} = \{r_i\}_{i\in \D}$, probabilistic and strictly individually fair hospital preferences $\mathcal{P}_\H = \{r_h\}_{h\in \H}$ and a constant $\tau \in (0,n]$, \Cref{alg:GS_PSP} outputs an allocation $\pi$ such that $\pi$ is preference-informed individually fair and $\tau$-contract stable.
\end{theorem}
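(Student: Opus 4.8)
The plan is to establish the two guarantees separately: \emph{PIIF}, which is clean, and \emph{$\tau$-contract stability}, where the real work lies. For PIIF, by the PIIF corollary under proto-metrics it suffices to show that for every cluster $C$ and every $i,j\in C$ we have $\pi(i)\succeq_i\pi(j)$, i.e. the per-cluster allocation is sd-envy-free. The final $P$-row of a doctor $i\in C$ is the marginal produced by the last invocation of \Cref{alg:PSP} on $p_C$, plus (if $C$ is not fully assigned) the uniform leftover added in the terminal step. Bogomolnaia and Moulin \cite{bogomolnaia2001new} guarantee that the eating algorithm is sd-envy-free, so $\pi_k(i)\succeq_i\pi_k(j)$. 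Since every doctor has a complete preference list over $\H$, no doctor stops eating while mass remains, so all doctors in $C$ eat at the common unit rate, accumulate the \emph{same} total mass, and therefore receive the \emph{same} leftover sub-distribution $\delta$. Adding the same $\delta$ to both sides of a stochastic-dominance relation preserves it (each cumulative probability increases by the same $\Pr_\delta[\cdot\succeq_i g]$), so $\pi(i)\succeq_i\pi(j)$, which is exactly PIIF.

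For $\tau$-contract stability, I would first prove that the pre-leftover allocation $\pi_{\mathrm{pre}}$ (the matrix $P$ before the terminal leftover step) is contract stable, and then pay $\tau$ for the leftover. Fix a candidate active contract $\mu=(h,i;h',i')$. Under strict IF preferences, $\pi^\mu(h)\succ_h\pi(h)$ forces $h$ to strictly prefer the cluster of $i$ to the cluster of $i'$, while $h$ holds positive mass in the cluster of $i'$. Because hospitals propose to clusters in decreasing preference order and never re-propose to a cluster that rejected them, $h$ can reach the less-preferred cluster of $i'$ only after being rejected by the cluster $C_i$ of $i$ at some round $k_0$.

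The technical core is a two-part claim about $C_i$ once it rejects $h$. \textbf{Per-round greedy step:} at round $k_0$ the uneaten $h$-mass is positive, so every doctor in $C_i$ is full; since $h$ is never exhausted during that run, no doctor ever eats a loaf it ranks below $h$ (it would prefer the still-available $h$), hence $\Pr[\pi_{k_0}(i)\prec_i h]=0$. \textbf{Cross-round monotonicity:} this ``fullness above $h$'' persists, i.e. each doctor's prospect weakly sd-improves from round to round, so $\Pr[\pi_{\mathrm{pre}}(i)\prec_i h]=0$ at termination. Granting this, $\pi_{\mathrm{pre}}(i)$ places no mass on any $h'\prec_i h$, so the firing event $\pi_{\mathrm{pre}}(i)=h'$ with $h\succ_i h'$ has probability $0$, and $\pi_{\mathrm{pre}}$ is contract stable. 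I expect the cross-round monotonicity to be the main obstacle: it reduces to a supply-monotonicity lemma for \Cref{alg:PSP} (feeding each round's eaten mass back together with new proposals weakly sd-improves every eater), which must be argued against the eating dynamics and is complicated by the fact that doctors within a cluster may have different preference orders, so the natural ``upper set above $h$'' is not aligned across the cluster.

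Finally I would account for the terminal leftover. That step distributes total probability mass $\sum_h p_h\le\tau$ across the rows of $P$, with hospital $h$'s leftover added only to column $h$, so the leftover matrix $L$ satisfies $\sum_{i,h}L[i,h]\le\tau$. Since $P_{\mathrm{pre}}[i,h']=0$ for every $h'$ that could witness a contract firing with $i$, we get $(P_{\mathrm{pre}}+L)[i,h']=L[i,h']$ there, and a union bound over all such pairs bounds the probability that any active contract fires by $\sum_{i,h'}L[i,h']\le\tau$, which is exactly $\tau$-contract stability. I would close by noting termination (the free mass is driven below $\tau$ within polynomially many rounds, since rejected mass strictly migrates down hospitals' preference lists) and that the Birkhoff--von Neumann decomposition of the doubly stochastic $P$ (Remark \ref{remark:doubly_stochastic}) faithfully realizes these marginals as a distribution over matchings, so both guarantees transfer to the returned $\pi$.
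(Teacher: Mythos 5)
Your skeleton matches the paper's: envy-freeness within clusters for PIIF (Lemma \ref{lemma:gs_psp_piif}), and for stability a ``once rejected by a preferred cluster, that cluster never looks back'' argument plus a $\tau$ charge for the terminal leftover (Lemma \ref{lemma:gs_psp_contract_stability}). But the load-bearing step --- what you call cross-round monotonicity, i.e.\ that each doctor's prospect weakly sd-improves from round to round --- is exactly where the paper spends almost all of its effort, and you leave it unproven. The paper establishes it as Claim \ref{claim:PS_SD}, which rests on Claim \ref{claim:hospital_order}: for every hospital $h$ proposed to the cluster in consecutive rounds, its exhaustion time satisfies $t_{h,k}\ge t_{h,k-1}$. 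That claim is proved by induction over the order in which hospitals are entirely eaten within round $k$, with a case analysis on whether a doctor eating $h$ in round $k$ also ate it in round $k-1$, and on the ``new'' hospitals in $H_k\setminus H_{k-1}$ that can delay a doctor's arrival at $h$. Your worry that the argument is complicated by doctors within a cluster having different preference orders is well-founded --- that is precisely what the induction handles --- so flagging this as the main obstacle and stopping there leaves the stability half of the theorem without its proof.

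There is also a smaller slip in your leftover accounting. You infer that $h$ must have been rejected by $C_i$ from the fact that $h$ holds positive mass in the cluster of $i'$, but in the final allocation that mass can arise entirely from the terminal step (step \ref{alg:step:gs_psp_alloc_mass}), which dumps residual mass into arbitrary under-filled clusters without $h$ ever having proposed to, or been rejected by, $C_i$. In that case $P[i,h']$ need not vanish before the leftover is added, and a union bound over the cells $L[i,h']$ does not cover the offending contracts. The fix is the paper's case split: either $\Pr[\pi_m(h)=i']>0$ already in the main loop, in which case Claim \ref{claim:no_contracts} forces doctor $i$ to be fully allocated to hospitals it weakly prefers to $h$ so the contract cannot fire at all, or $\Pr[\pi_m(h)=i']=0$ and the event $\pi(h)=i'$ is supported entirely on leftover mass, so the union of these events over all active contracts has probability at most $\tau$. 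Your PIIF argument is fine and matches the paper's in substance (the paper derives per-round envy-freeness directly from the eating times rather than citing Bogomolnaia--Moulin, and the uniform leftover within each cluster preserves dominance exactly as you say).
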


In order to prove \Cref{thm:gs_psp_piif_stable}, we first prove that the output of \Cref{alg:GS_PSP} satisfies each of the properties separately. This is done in Lemmas \ref{lemma:gs_psp_contract_stability} and \ref{lemma:gs_psp_piif}. 

\paragraph{Notations:} 
For every cluster $C \in \mathcal{C}$, round $k$ and hospital $h \in \H$: $H_k \subseteq \H$ is the set of hospitals that propose to $C$ in round $k$ ($\forall h \in H_k:p_{C,k}[h] > 0$). $C_{h,k} \subseteq C$ is the set of doctors that eat $h$ in round $k$. For every $i \in C_{h,k}$, $s_{i,h,k}$ is the time $i$ started to eat $h$ in round $k$, if $i \not\in C_{h,k}$ then $s_{i,h,k} = 1$.
For every $h \in \H$, $t_{h,k}$ is the latest time in round $k$ that $h$'s probability mass is greater than 0 (this means that if $h \not\in H_k$ then $t_{h,k} = 0$). Note that for every $i \in C_{h,k}$, $t_{h,k} \ge s_{i,h,k}$. 


\subsubsection{The Output of \Cref{alg:GS_PSP} is $\tau$-Contract Stable}

In the following claims we prove that Algorithm \ref{alg:GS_PSP} is contract stable. 

\begin{lemma}
\label{lemma:gs_psp_contract_stability}
The output of Algorithm \ref{alg:GS_PSP} is $\tau$-contract stable.
\end{lemma}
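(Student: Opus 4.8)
The plan is to split the output into its ``clean'' part, produced during the main loop, and the small remainder of total weight at most $\tau$ that the leftover step distributes. Write $\tilde P$ for the matrix $P$ at the moment the loop terminates and $P = \tilde P + \Delta$ for the final matrix, where $\Delta \ge 0$ collects the leftover contribution. Since the loop exits only when $\sum_{h \in \H} p_h \le \tau$, the total mass of $\Delta$ is at most $\tau$. I would first argue that $\tilde P$ admits no active contract at all, and then charge every contract introduced by the leftover distribution against the mass of $\Delta$.

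Two structural facts about the loop drive the argument. First (monotonicity): for every doctor $i$, the round-$k$ prospect $\pi_k(i)$ improves with respect to $\succeq_i$ as $k$ grows. This holds because the mass a cluster accepts in one round is carried into the next round's call to the eating procedure (\Cref{alg:PSP}), so each doctor can always replicate the loaves it ate before, while new proposals only enlarge the set of available loaves. Second (rejection forces fullness): if cluster $C$ rejects any positive mass of a hospital $h$ in some round $k_0$, then $\Pr[\pi_{k_0}(i) \succeq_i h] = 1$ for every $i \in C$. Indeed, leftover mass in loaf $h$ means $h$ is never exhausted during that round and is thus available to every doctor at all times; since a doctor always eats its favourite available loaf and eats a full unit by time $1$, it never consumes anything ranked below $h$. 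Combining the two facts, once $C$ rejects $h$ we have $\Pr[\pi(i) \succeq_i h] = 1$ already in $\tilde P$ for every $i \in C$.

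Now fix an active contract $\mu = (h, i; h', i')$, with $i$ in cluster $C$ and $i'$ in cluster $C'$. Because the hospital preferences are strictly individually fair and $d(i,i') = 1$, the condition $\pi^\mu(h) \succ_h \pi(h)$ forces $C \succ_h C'$, and the contract gives $h \succ_i h'$, i.e. $h' \prec_i h$. I distinguish two cases according to $\tilde P$. If $h$ is matched to $i' \in C'$ with positive probability already in $\tilde P$, then since $h$ proposes to clusters in decreasing order of preference and $C \succ_h C'$, hospital $h$ must have been rejected by $C$ during the loop; by the structural facts $\tilde P[i,h'] = 0$, because $h' \prec_i h$. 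Otherwise the entry $\tilde P[i', h]$ (the ``$\pi(h)=i'$'' leg of the bad event) is $0$. In either case at least one of the two legs of the event $\set{\pi(h)=i' \wedge \pi(i)=h'}$ sits on an entry that is zero in $\tilde P$, and hence carries only leftover mass.

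To finish, let $E$ be the set of matrix entries obtained by selecting, for each active contract, one such $\tilde P$-zero leg. Every contract's bad event is contained in $\bigcup_{(d,h'') \in E} \set{\pi(d)=h''}$, so by a union bound the quantity in the definition of $\tau$-contract stability is at most $\sum_{(d,h'') \in E} P[d,h''] = \sum_{(d,h'') \in E} \Delta[d,h'']$, using only the marginals $\Pr[\pi(d)=h''] = P[d,h'']$ and $\tilde P[d,h'']=0$. Since the entries of $E$ are distinct, this is bounded by the total mass of $\Delta$, which is at most $\tau$, as required. The main obstacle is the second structural fact: turning ``$C$ rejects $h$'' into the clean statement that every doctor of $C$ is saturated with hospitals it weakly prefers to $h$ requires a careful analysis of the within-round eating dynamics (via the quantities $s_{i,h,k}$ and $t_{h,k}$) together with the cross-round monotonicity, and it is here that strict individual fairness is essential, since it is what lets us treat each cluster as a single unit and identify $\pi^\mu(h) \succ_h \pi(h)$ with $C \succ_h C'$.
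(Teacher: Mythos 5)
Your proposal is correct and follows essentially the same route as the paper's proof: the two structural facts you isolate are exactly the paper's Claim \ref{claim:PS_SD} (round-to-round stochastic improvement of each doctor's prospect) and Claim \ref{claim:no_contracts} (rejection forces saturation with weakly preferred hospitals), and your $\tilde P + \Delta$ decomposition with a union bound over entries that vanish in $\tilde P$ is the same charging argument the paper uses against the leftover mass of at most $\tau$. The one place where your justification is only heuristic is the cross-round monotonicity --- ``new proposals only enlarge the set of available loaves'' glosses over the possibility that a new loaf diverts eaters and thereby shortens another loaf's lifetime, which is precisely what the induction on eating-termination times in Claim \ref{claim:hospital_order} is needed to rule out --- but you correctly flag this as the technical crux rather than treating it as established.
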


\paragraph{Proof outline.} We prove that at each round, the doctors' outcomes can only improve. Thus, if hospital $h \in \H$ was rejected from cluster $C \in \mathcal{C}$ in round $k$, $C$ will not ``change its mind'' later so there will be no active contracts between $h$ and $C$.

The main claim is \ref{claim:hospital_order}, which shows that for each hospital, eating termination time only increases. In Claim \ref{claim:PS_SD} we show this implies that each doctor's distribution can only (weakly) improve from round to round. In Claim \ref{claim:no_contracts} we show no doctor prefers to be matched to a rejected hospital over their final allocation. Lemma \ref{lemma:gs_psp_contract_stability} shows that this implies contract stability.

\begin{claim}
\label{claim:hospital_order}
Let $C \in \mathcal{C}$ be a cluster of doctors, for every round $k >1$ and hospital $h \in H_k \cap H_{k-1}$, $t_{h,k} \ge t_{h,k-1}$. 
\end{claim}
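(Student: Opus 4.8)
The plan is to characterize the exhaustion time $t_{h,k}$ of each loaf (hospital) purely through the continuous dynamics of the eating algorithm (\Cref{alg:PSP}) run on the cluster $C$, and then compare rounds $k-1$ and $k$ through these times. First I would record the structure of the eating process: since each doctor's preference $r_i$ is a strict order and at every instant a doctor eats its most preferred loaf that still has mass, doctor $i$ begins eating $h$ only once every loaf it strictly prefers to $h$ has been exhausted, so $s_{i,h,k}=\max(\{0\}\cup\{t_{h',k}:h'\succ_i h\})$. Writing $E_{h,k}(t)=\sum_{i\in C}(\min(t,t_{h,k})-s_{i,h,k})^+$ for the amount of $h$ eaten by time $t$, the loaf $h$ is exhausted at the first time $t_{h,k}\le 1$ with $E_{h,k}(t_{h,k})=p_{C,k}[h]$, and $t_{h,k}=1$ otherwise. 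I would also record the relation between rounds: the mass carried into round $k$ equals the amount eaten in round $k-1$, so $p_{C,k}[h]=a_{h,k-1}+n_{h,k}$, where $a_{h,k-1}:=E_{h,k-1}(1)$ is that eaten amount and $n_{h,k}\ge 0$ is any fresh mass $h$ proposes in round $k$; in particular $p_{C,k}[h]\ge a_{h,k-1}$.

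Next I would argue by contradiction. Suppose some $h\in H_k\cap H_{k-1}$ has $t_{h,k}<t_{h,k-1}$. Among all such \emph{violating} loaves (those in $H_k\cap H_{k-1}$ with $t_{\cdot,k}<t_{\cdot,k-1}$) choose one, $h$, minimizing $t_{h,k}$, and set $t^*:=t_{h,k}$. Since $h\in H_k$ we have $p_{C,k}[h]>0$, hence $t^*>0$, and since $t^*<t_{h,k-1}\le 1$ the loaf $h$ is genuinely exhausted at $t^*$ in round $k$, i.e.\ $p_{C,k}[h]=\sum_{i\in C}(t^*-s_{i,h,k})^+$.

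The heart of the argument, and the step I expect to be the main obstacle, is to show that every doctor $i$ still eating $h$ at time $t^*$ in round $k$ (those with $s_{i,h,k}<t^*$) reaches $h$ no earlier than it did in round $k-1$, i.e.\ $s_{i,h,k}\ge s_{i,h,k-1}$. If $s_{i,h,k-1}=0$ this is trivial; otherwise let $g\succ_i h$ realize $s_{i,h,k-1}=t_{g,k-1}$. If $g$ is not a violator, then $s_{i,h,k}\ge t_{g,k}\ge t_{g,k-1}=s_{i,h,k-1}$. If $g$ is a violator present in both rounds, minimality gives $t_{g,k}\ge t^*$, whence $s_{i,h,k}\ge t_{g,k}\ge t^*$, contradicting $s_{i,h,k}<t^*$; so this cannot occur. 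The delicate case is a blocker that \emph{vanishes} between rounds ($g\notin H_k$): then $g$ was eaten by no doctor in round $k-1$, which forces its mass to stay positive throughout, so $t_{g,k-1}=1$ and thus $s_{i,h,k-1}=1$. But then $i$ is kept busy until time $1$ in round $k-1$ by loaves it strictly prefers to $h$, and the last such loaf $\ell$ it consumes is itself eaten (so $a_{\ell,k-1}>0$, hence $\ell\in H_k$) and has $t_{\ell,k-1}=1$; applying the two previous sub-cases to $\ell$ yields $s_{i,h,k}\ge t_{\ell,k}\ge t^*$, again contradicting $s_{i,h,k}<t^*$. Hence a vanished blocker never occurs for these doctors, and $s_{i,h,k}\ge s_{i,h,k-1}$ holds in every case. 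The key point resolving the obstacle is that a disappearing blocker is always backed by an eaten loaf that is carried over and continues to block.

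Finally I would combine the estimates. Summing the start-time inequality over $i$ gives $p_{C,k}[h]=\sum_i(t^*-s_{i,h,k})^+\le\sum_i(t^*-s_{i,h,k-1})^+=E_{h,k-1}(t^*)$. On the other hand, since $t^*<t_{h,k-1}$ the loaf $h$ is still being eaten strictly after time $t^*$ in round $k-1$ (some doctor with $a_{h,k-1}>0$ consumes $h$ on a nondegenerate subinterval of $(t^*,1]$), so $E_{h,k-1}(t^*)<E_{h,k-1}(1)=a_{h,k-1}$. Together with $a_{h,k-1}\le p_{C,k}[h]$ this yields the chain $p_{C,k}[h]\le E_{h,k-1}(t^*)<a_{h,k-1}\le p_{C,k}[h]$, a contradiction. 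Therefore no violating loaf exists and $t_{h,k}\ge t_{h,k-1}$ for every $h\in H_k\cap H_{k-1}$. I note that this is a comparison between two consecutive rounds that needs no separate outer induction on $k$, since all ingredients ($n_{h,k}\ge 0$, the carried-mass identity, and the exhaustion-time characterization) hold for each round on its own.
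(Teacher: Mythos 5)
Your proof is correct, and it gets to the conclusion by a route that is organized differently from the paper's. The paper argues by induction over the order in which the hospitals of $H_k\cap H_{k-1}$ are entirely eaten in round $k$: in each case it either establishes $C_{h,k}\subseteq C_{h,k-1}$ together with $s_{i,h,k}\ge s_{i,h,k-1}$ and then invokes, as an unproved ``observation,'' that these two facts plus the carried-over mass force $t_{h,k}\ge t_{h,k-1}$, or, when a new doctor joins $C_{h,k}$, exhibits an intermediate hospital $h'$ giving the chain $t_{h,k}\ge t_{h',k}\ge t_{h',k-1}\ge t_{h,k-1}$. You instead take a minimal counterexample among violators and close with the explicit mass-accounting chain $p_{C,k}[h]\le E_{h,k-1}(t^*)<a_{h,k-1}\le p_{C,k}[h]$. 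The shared engine is the same in both arguments: start times of the relevant eaters do not decrease, proved by a case analysis on the loaf blocking doctor $i$ from $h$, with the ``vanished blocker'' resolved by the fact that a hospital leaves $H$ only if it was entirely uneaten, hence kept the doctor occupied until time $1$ via some eaten, carried-over loaf. What your version buys is that the accounting step makes rigorous exactly the implication the paper leaves as an observation, and that summing $(t^*-s_{i,h,k})^+$ over all doctors absorbs the paper's separate ``new eater'' case (a doctor with $s_{i,h,k}\ge t^*$ contributes nothing, so you never need $C_{h,k}\subseteq C_{h,k-1}$); what the paper's version buys is the eater-set inclusion as an explicit intermediate fact. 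One presentational caveat: your formula $s_{i,h,k}=\max\left(\{0\}\cup\{t_{h',k}:h'\succ_i h\}\right)$ agrees with the paper's convention ($s_{i,h,k}=1$ for non-eaters) only for doctors who actually eat $h$, but since every term you manipulate vanishes for the other doctors under either convention, this does not affect the argument.
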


\begin{proof}
Let $k > 1$ be a round. 

We prove the claim by induction over the order in which hospitals in $H_k \cap H_{k-1}$ were entirely eaten in round $k$. (Note that the claim is immediate for hospitals that were not entirely eaten).

The following observation will be useful in the induction: For some hospital $h \in H_k \cap H_{k-1}$, we can think about the effective probability mass of $h$ in round $k-1$ as the probability mass of $h$ that was eaten. Thus, $h$ will propose at least this probability mass to $C$ in round $k$. (If $p_{C,k-1}[h]$ was entirely eaten, then $h$ can propose in round $k$ a probability mass that was rejected from another cluster in round $k-1$, in addition to $p_{C,k-1}[h]$, such that $p_{C,k}[h] > p_{C,k-1}[h]$). Hence, in the event that $C_{h,k} \subseteq C_{h,k-1}$ and $\forall i \in C_{h,k}: s_{i, h,k} \ge s_{i,h,k-1}$ we know that $t_{h,k} \ge t_{h, k-1}$.
(We note that while this observation will be useful, the condition does not always hold - see below).

\textbf{Base:} Let $h \in H_k \cap H_{k-1}$ be a hospital such that $h \in \arg\min_{h'\in H_k \cap H_{k-1}}t_{h',k}$, i.e. the first hospital in $H_k \cap H_{k-1}$ that got entirely eaten in round $k$.

Let $i \in C_{h,k}$ be some doctor.
We will divide to two cases:
\begin{enumerate}
    \item $\mathbf{s_{i,h,k} = 0}$. 
    This means that doctor $i$ ranks hospital $h$ first in $H_k$. Thus, hospital $h$ must be doctor $i$'s highest ranked hospital in $H_{k-1}$ too.
    Observe that only uneaten hospitals got rejected, so $H_k \cap H_{k-1}$ is the set of hospitals that were eaten in round $k-1$ and they were all available at time 0 in round $k$. If doctor $i$ chose to eat another hospital at time 0 in round $k-1$, then doctor $i$ would have chosen to eat this hospital at round $k$ too.
    Thus, $i \in C_{h,k-1}$ and $s_{i,h,k-1} = 0$.
    
    \item $\mathbf{s_{i,h,k} > 0}$.
    There exists some ``new'' hospital $h'\in H_k \backslash H_{k-1}$, such that $t_{h',k} = s_{i,h,k}$ and $t_{h',k} < t_{h,k}$, that  doctor $i$ eats before hospital $h$ in round $k$. This is true since doctors start to eat hospitals at the beginning of the round or after another hospital was entirely eaten, and we assumed that $s_{i,h,k} > 0$ and hospital $h$ is the first hospital in $H_k \cap H_{k-1}$ to get entirely eaten.
    
    Then, in round $k-1$ $h$ was the first hospital that doctor $i$ ate.
    This is true since all the hospitals in $H_k \cap H_{k-1}$ are available at $t_{h',k}$. Thus, if doctor $i$ would have preferred another hospital in round $k-1$ then doctor $i$ would prefer this hospital in round $k$ too, hence $i \in C_{h, k-1}$ and $s_{i,h,k-1} = 0$.
\end{enumerate}
Thus, $C_{h,k} \subseteq C_{h,k-1}$ and $\forall i \in C_{h,k} : s_{i, h, k} \ge s_{i, h,k-1} = 0$, this implies $t_{h,k} \ge t_{h,k-1}$.

\textbf{Step:} Let $h$ be the $m$-th hospital in $H_k \cap H_{k-1}$ that got entirely eaten. 
Let us divide into two cases:
\begin{enumerate}
    \item There exists a doctor $i \in C_{h,k} \backslash C_{h,k-1}$. Since doctor $i \not\in C_{h,k-1}$ there exists a hospital $h' \in H_{k-1}$ such that $h' \succ_i h$, $t_{h',k-1} \ge t_{h,k-1}$ and $i \in C_{h',k-1}$ (doctor $i$ ate hospital $h'$ in round $k-1$ and by the time hospital $h'$ was entirely eaten and doctor $i$ tried to eat hospital $h$, hospital $h$ was already entirely eaten too).

    Since $i \in C_{h',k-1}$, $h'$ proposes in round $k$ so $h' \in H_k$. Since $d \in C_{h,k}$ and $h' \succ_i h$ we know that $s_{i,h,k}\ge t_{h',k}$ (doctor $i$ starts eating hospital $h$ only after all the hospitals doctor $i$ ranks higher are entirely eaten).
    
    Since $h' \in H_k \cap H_{k-1}$ and $t_{h,k} \ge s_{i,h,k}\ge t_{h',k}$, by the induction assumption $t_{h',k} \ge t_{h',k-1}$. This implies $$t_{h,k} \ge s_{i,h,k}\ge t_{h',k} \ge t_{h',k-1} \ge t_{h,k-1}.$$
    
    \item $C_{h,k} \subseteq C_{h,k-1}$. Let $i \in C_{h,k}$ be a doctor.

    If $s_{i,h,k} = 0$, then from the same arguments as in the base $s_{i,h,k} = s_{i,h,k-1} = 0$.
    
    If $s_{i,h,k-1} = 0$, then trivially $s_{i,h,k} \ge s_{i,h,k-1} = 0$.
  
    Otherwise, $s_{i,h,k} > 0$ and $s_{i,h,k-1} > 0$:
    
    Let $h' \in H_{k-1}$ be the hospital that doctor $i$ ate before hospital $h$ in round $k-1$, i.e. $i \in C_{h',k-1}$ and $s_{i,h,k-1} = t_{h',k-1}$. By this assumption, $h' \succ_i h$. Since $i \in C_{h',k-1}$, hospital $h'$ was not rejected and $h' \in H_k \cap H_{k-1}$.
    
    Since $h' \succ_i h$, we know that $s_{i,h,k} \ge t_{h',k}$. Since $i \in C_{h,k}$, we know that $t_{h,k} \ge s_{i,h,k}$. 
    
    Thus, $h'\in H_k \cap H_{k-1}$ and $t_{h,k} \ge t_{h',k}$ so by the induction assumption $t_{h',k} \ge t_{h',k-1}$, which implies that
    $$
    s_{i,h,k} \ge t_{h',k} \ge t_{h',k-1} = s_{i,h,k-1}.
    $$
    Hence, $\forall i \in C_{h, k}: s_{i,h,k} \ge s_{i,h,k-1}$ and $C_{h,k} \subseteq C_{h,k-1}$, which implies $t_{h,k} \ge t_{h,k-1}$.
\end{enumerate}
\end{proof}

\begin{claim}
\label{claim:pr_eq_max}
Let $i \in \D$ be a doctor in cluster $C \in \mathcal{C}$ and let $h_1, ..., h_n \in \H$ be the hospitals in $\H$ ordered by the preferences of $i$. Let $k$ be a round in $\Cref{alg:GS_PSP}$ and $\pi_k(i)$ be $i$'s prospect after step $k$. Then for every rank $r \in [n]$, 
$$
\Pr_{o\sim \pi_k(i)}[r_i^{-1}(o) \le r] = \max\{t_{h_1, k}, ..., t_{h_r, k}\}
$$
\end{claim}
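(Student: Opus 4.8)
The plan is to rewrite the left-hand side as the total probability mass that doctor $i$ eats from its $r$ most preferred hospitals during the run of \Cref{alg:PSP} in round $k$, and then to show that the greedy eating rule confines this consumption to a single initial time interval whose length is exactly $\max\{t_{h_1,k},\ldots,t_{h_r,k}\}$. Since $h_1,\ldots,h_n$ is $i$'s preference order, $\Pr_{o\sim\pi_k(i)}[r_i^{-1}(o)\le r]=\sum_{l=1}^{r}\Pr_{o\sim\pi_k(i)}[o=h_l]$, and by the definition of the eating algorithm each term $\Pr_{o\sim\pi_k(i)}[o=h_l]$ is the amount doctor $i$ ate from $h_l$. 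Because $i$ eats at speed $1$, this sum equals the total length of time during round $k$ that $i$ spends eating some hospital in $\{h_1,\ldots,h_r\}$.

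First I would set $T_r:=\max\{t_{h_1,k},\ldots,t_{h_r,k}\}$ and fix an index $l^*\le r$ with $t_{h_{l^*},k}=T_r$. The heart of the argument is a dichotomy about the time $t$. For $t<T_r$ the remaining mass of $h_{l^*}$ is still positive, because the mass of every hospital is non-increasing over the run and $t_{h_{l^*},k}$ is by definition the last time $h_{l^*}$ has positive mass; hence $h_{l^*}$ is available, and the hospital $i$ eats at time $t$, being $i$'s most preferred available one, is at least as preferred as $h_{l^*}$ and so has rank at most $l^*\le r$, i.e. lies in $\{h_1,\ldots,h_r\}$. Thus $i$ eats continuously from within its top-$r$ set throughout $[0,T_r)$ (with no gaps, since $h_{l^*}$ is available the whole time), contributing exactly $T_r$. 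Conversely, for $t>T_r$ every $h_l$ with $l\le r$ satisfies $t_{h_l,k}\le T_r<t$, so all of $h_1,\ldots,h_r$ are already fully consumed; $i$ then eats only hospitals outside $\{h_1,\ldots,h_r\}$ (or stops), contributing nothing further. Combining the two cases yields $\sum_{l=1}^{r}\Pr_{o\sim\pi_k(i)}[o=h_l]=T_r$, which is the claim.

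Since the whole argument rests on the monotonicity of each hospital's remaining mass together with the greedy eating rule, I do not expect a genuine obstacle; the care lies in a few routine points. One is that the single boundary instant $t=T_r$ contributes zero and so does not affect the integrated amount eaten. Another is that the identity holds regardless of whether the mass proposed to the cluster suffices for $i$ to become fully matched: even if $i$ exhausts all hospitals it finds acceptable and stops eating, this can only occur at or after time $T_r$, leaving the consumption from the top-$r$ set untouched. Finally, I would note that $t_{h,k}$ is a global, cluster-level fill time (driven by all doctors of $C$ eating simultaneously), so the availability test ``$t<t_{h_{l^*},k}$'' is exactly its definition and needs no separate justification.
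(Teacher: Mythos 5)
Your proof is correct and takes essentially the same approach as the paper's: both observe that doctor $i$ eats outside its top-$r$ set only once all of $h_1,\ldots,h_r$ are exhausted, which first happens at time $\max\{t_{h_1,k},\ldots,t_{h_r,k}\}$, and that mass eaten equals assignment probability. Your write-up simply makes explicit the routine details (monotonicity of remaining mass, the zero-measure boundary instant, the partially-matched case) that the paper's one-line observation leaves implicit.
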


\begin{proof}
Fix a rank $r \in [n]$.
Observe that doctor $i$ chooses to eat a hospital with rank greater than rank $r$ only if hospitals $h_1, ..., h_r$ are not available, and the first time that this event happens is at time $\max\{t_{h_1, k}, ..., t_{h_r, k}\}$. In the eating algorithm (PSP), the amount of hospitals that a doctor $i$ ate is equal to the probability these hospitals would be assigned to doctor $i$.

Note that if $\sum_{h\in \H} p_{C,k}[h] < |C|$, this equality still hold, but then for every rank $r \in [n]$, 
$$
\Pr_{o\sim \pi_k(i)}[r_i^{-1}(o) \le r] = \max\{t_{h_1, k}, ..., t_{h_r, k}\} < 1.
$$
In that case,
$$
\Pr_{o\sim \pi_k(i)}[o = unallocated]= \Pr_{o\sim \pi_k(i)}[r_i^{-1}(o) = n+1] =1 -  \max\{t_{h_1, k}, ..., t_{h_n, k}\} < 1.
$$
\end{proof}

\begin{claim}
\label{claim:PS_SD}
Let $i \in \D$ be a doctor in cluster $C \in \mathcal{C}$, denote $i$'s allocation after step $k$ by $\pi_k(i)$. Then $\forall k > 1$: $\pi_k(i) \succeq_i \pi_{k-1}(i)$.
\end{claim}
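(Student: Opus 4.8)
The plan is to reduce the stochastic-domination statement to a pointwise comparison of the quantities supplied by Claim~\ref{claim:pr_eq_max}, and then to feed those comparisons through Claim~\ref{claim:hospital_order}. Concretely, fix the doctor $i$ and let $h_1,\dots,h_n$ be the hospitals ordered by $i$'s preferences. By the definition of $\succeq_i$ (Definition~\ref{def:stochastic_domination}), proving $\pi_k(i)\succeq_i\pi_{k-1}(i)$ amounts to showing, for every rank $r\in[n]$, that
\[
\Pr_{o\sim\pi_k(i)}[r_i^{-1}(o)\le r]\;\ge\;\Pr_{o\sim\pi_{k-1}(i)}[r_i^{-1}(o)\le r].
\]
By Claim~\ref{claim:pr_eq_max} the two sides equal $\max\{t_{h_1,k},\dots,t_{h_r,k}\}$ and $\max\{t_{h_1,k-1},\dots,t_{h_r,k-1}\}$ respectively, so the whole claim collapses to the statement that the round-$k$ prefix maximum dominates the round-$(k-1)$ prefix maximum, for every $r$.

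To handle a fixed $r$, I would set $M:=\max\{t_{h_1,k-1},\dots,t_{h_r,k-1}\}$, dispose of the trivial case $M=0$, and otherwise let $j^*$ be the \emph{smallest} index in $[r]$ with $t_{h_{j^*},k-1}=M$. The intended finish is clean: since $t_{h_{j^*},k-1}=M>0$ we have $h_{j^*}\in H_{k-1}$, and provided also $h_{j^*}\in H_k$, Claim~\ref{claim:hospital_order} gives $t_{h_{j^*},k}\ge t_{h_{j^*},k-1}=M$, whence $\max\{t_{h_1,k},\dots,t_{h_r,k}\}\ge M$ as required.

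The one gap, which I expect to be the main obstacle, is ruling out the case $h_{j^*}\notin H_k$, i.e. the possibility that the prefix-maximizing hospital has all its proposed mass rejected after round $k-1$ and so drops out of cluster $C$. I would close this with the minimality of $j^*$. If $h_{j^*}$ is eaten by no doctor of $C$ in round $k-1$, then its mass stays positive for the entire eating phase, forcing $t_{h_{j^*},k-1}=1$ and hence $M=1$; applying Claim~\ref{claim:pr_eq_max} at rank $j^*$ shows $i$ is fully allocated within its top $j^*$ in round $k-1$, so $i$ eats throughout the whole unit of time, and since $i$ never eats $h_{j^*}$ all of this consumption lies in $\{h_1,\dots,h_{j^*-1}\}$. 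Claim~\ref{claim:pr_eq_max} at rank $j^*-1$ then gives $\max\{t_{h_1,k-1},\dots,t_{h_{j^*-1},k-1}\}=1=M$, contradicting the minimality of $j^*$. Hence $h_{j^*}$ is eaten by some doctor in round $k-1$, so its retained (eaten) mass is strictly positive, giving $p_{C,k}[h_{j^*}]>0$ and thus $h_{j^*}\in H_k\cap H_{k-1}$. Claim~\ref{claim:hospital_order} then applies, the comparison of prefix maxima holds for every $r$, and the stochastic domination $\pi_k(i)\succeq_i\pi_{k-1}(i)$ follows.
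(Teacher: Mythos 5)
Your proof is correct and follows essentially the same route as the paper: both reduce the claim to comparing prefix maxima of the termination times via Claim~\ref{claim:pr_eq_max} and then invoke Claim~\ref{claim:hospital_order} for hospitals surviving in $H_k \cap H_{k-1}$. The only difference is in handling a hospital of the prefix that drops out of $H_k$: the paper shows both prefix maxima equal $1$ by exhibiting a surviving, better-ranked hospital that $i$ was eating at time $1$, while you show the minimal prefix-maximizer cannot drop out at all --- two phrasings of the same underlying observation that a vanished hospital was uneaten and hence available throughout round $k-1$.
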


\begin{proof}
Fix a hospital $h \in \H$. If $h \not\in H_{i-1} \backslash H_i$, there are the following possible cases:
\begin{itemize}
    \item If $h \not\in H_{k-1} \cup H_k$, then $t_{h, k} = t_{h, k-1} = 0$.
    \item If $h \in H_k \backslash H_{k-1}$, then $t_{h,k-1} = 0$ so $t_{h, k} \ge t_{h,k-1}$.
    \item If $h\in H_k \cap H_{k-1}$, from Claim \ref{claim:hospital_order} we know that $t_{h, k} \ge t_{h, k-1}$. 
\end{itemize}

Fix a rank $r \in [n]$ and denote $i$'s $r$-th ranked hospital as $h_r$. 
If $\{h_1, ..., h_r\} \cap  H_{k-1} \backslash H_k = \emptyset$, then
$$
\max\{t_{h_1, k}, ..., t_{h_r, k}\} \ge \max\{t_{h_1, k-1}, ..., t_{h_r, k-1}\}
$$

Assume there exists a hospital $h \in \{h_1, ..., h_r\}$ such that $h \in H_{k-1} \backslash H_k$. Then, $t_{h, k-1} = 1$ and $t_{h, k} = 0$. By the algorithm, if $h \in H_{k-1} \backslash H_k$ then hospital $h$ was not eaten in round $k-1$. Thus, there exist a hospital $h' \in H_{k-1}$ such that $t_{h',k-1} = 1$, $h' \succ_i h$ and $i \in C_{h',k-1}$, otherwise doctor $i$ would choose to eat $h$. Since hospital $h'$ was eaten in round $k-1$, we know that $h' \in H_k \cap H_{k-1}$ and from Claim \ref{claim:hospital_order}, $t_{h',k} \ge t_{h',k-1} = 1$. Thus, 

$$
\max\{t_{h_1, k}, ..., t_{h_r, k}\} = \max\{t_{h_1, k-1}, ..., t_{h_r, k-1}\} = 1.
$$

Then, from Claim \ref{claim:pr_eq_max}:
$$
\Pr_{o\sim \pi_k(i)}[r_i^{-1}(o) \le r] = \max\{t_{h_1, k}, ..., t_{h_r, k}\} \ge \max\{t_{h_1, k-1}, ..., t_{h_r, k-1}\} = \Pr_{o\sim \pi_{k-1}(i)}[r_i^{-1}(o) \le r],
$$

which implies that $\pi_k(i) \succeq_i \pi_{k-1}(i)$.
\end{proof}

\begin{claim}
\label{claim:no_contracts}
Let $h \in \H$ be a hospital, $C \in \mathcal{C}$ be a cluster and $m$ be a round such that $\Pr[\pi_m(h) \in C] > 0$. If there exists a cluster $C' \in \mathcal{C}$ such that $C' \succ_h C$, then for every doctor $i \in C'$ and every hospital $h'$ in the support of $\pi_m(i)$, $h' \succeq_i h$, and $\Pr[\pi_m(i) \text{ allocated}]  =1$.
\end{claim}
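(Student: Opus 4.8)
The plan is to trace back to the round at which cluster $C'$ rejected hospital $h$, exploit the rejection to pin down that $h$'s mass was not fully consumed in that round, and then propagate the resulting guarantee forward to round $m$ using the monotonicity of the doctors' prospects (Claim \ref{claim:PS_SD}).

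First I would argue that $h$ must have been rejected by $C'$ at some round $k^* \le m$. Since $\Pr[\pi_m(h) \in C] > 0$, hospital $h$ proposed to $C$ at some point; and because a hospital proposes only to the most-preferred cluster that has not yet rejected it, and $C' \succ_h C$, every cluster ranked above $C$ in $h$'s preference list, and in particular $C'$, must have removed itself from $h$'s list, i.e. rejected a positive amount of $h$'s proposed mass. Let $k^*$ be the round of this rejection. The key observation is that a rejection means the uneaten mass $s = p_{C',k^*}[h] - \sum_{i \in C'}\Pr[\pi_{k^*}(i)=h]$ is strictly positive, so $h$'s mass in $C'$ stays positive throughout round $k^*$; by the definition of $t_{h,k^*}$ this forces $t_{h,k^*} = 1$.

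Next I would apply Claim \ref{claim:pr_eq_max} to each doctor $i \in C'$ at round $k^*$. Taking $r = r_i^{-1}(h)$, the top-$r$ list $\{h_1,\dots,h_r\}$ of $i$ contains $h$ itself, so $\Pr_{o\sim\pi_{k^*}(i)}[r_i^{-1}(o) \le r_i^{-1}(h)] = \max\{t_{h_1,k^*},\dots,t_{h_r,k^*}\} \ge t_{h,k^*} = 1$. Hence at round $k^*$ doctor $i$ is fully allocated and its prospect is supported entirely on hospitals $h'$ with $h' \succeq_i h$. Finally I would propagate this property to round $m$: by Claim \ref{claim:PS_SD} and transitivity of stochastic domination, $\pi_m(i) \succeq_i \pi_{k^*}(i)$ for every $i \in C'$ (as $m \ge k^*$). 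Evaluating domination at rank $r_i^{-1}(h)$ gives $\Pr_{o\sim\pi_m(i)}[r_i^{-1}(o) \le r_i^{-1}(h)] \ge 1$, hence equality; thus $i$ is still fully allocated at round $m$ (its rank is $\le r_i^{-1}(h) \le n$, never the ``unallocated'' rank $n+1$), and every hospital $h'$ in the support of $\pi_m(i)$ satisfies $h' \succeq_i h$. This yields both conclusions.

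I expect the main obstacle to be the first step: cleanly justifying that the hypothesis $\Pr[\pi_m(h) \in C] > 0$ forces a genuine rejection of $h$ by $C'$ (as opposed to $h$ simply not having proposed to $C'$ yet) at a round $k^* \le m$, and that this rejection is precisely what pins $t_{h,k^*}=1$. Once the link between the cluster-level rejection and the eating time $t_{h,k^*}$ is established, the remaining steps are direct applications of Claims \ref{claim:pr_eq_max} and \ref{claim:PS_SD}.
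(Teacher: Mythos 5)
Your proposal is correct and follows essentially the same route as the paper's proof: identify the round $k^*$ at which $C'$ rejected some of $h$'s mass, observe that the rejection means $h$ remained available throughout that round (so every $i \in C'$ is fully allocated to hospitals it ranks at least as high as $h$), and then propagate to round $m$ via Claim \ref{claim:PS_SD}. Your explicit invocation of Claim \ref{claim:pr_eq_max} and the identity $t_{h,k^*}=1$ just formalizes the step the paper states verbally.
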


\begin{proof}
Since $C' \succ_h C$ we know that $h$ proposed to $C'$ at some round $k \le m$ and some of its probability mass got rejected. By the algorithm we know that in round $k$, $i$ chose to eat only hospitals that $i$ ranks at least the same as $h$, since $h$ was available the entire run. Therefore $\Pr_{o\sim \pi_k(i)}[r_i^{-1}(o) \le r_i^{-1}(h)] = 1$. From Claim \ref{claim:PS_SD} $\pi_m(i) \succeq_i \pi_k(i)$. 
Thus, 
$$\Pr_{o\sim \pi_m(i)}[r_i^{-1}(o) \le r_i^{-1}(h)] \ge \Pr_{o\sim \pi_k(i)}[r_i^{-1}(o) \le r_i^{-1}(h)] = 1.$$
Hence, $\Pr_{o\sim \pi_m(i)}[r_i^{-1}(o) \le r_i^{-1}(h)] = 1$, and the claim follows.
\end{proof}

We are now set to prove Lemma \ref{lemma:gs_psp_contract_stability}.

\begin{proof}[Proof of Lemma \ref{lemma:gs_psp_contract_stability}]
We start by showing that at the end of the first part of the algorithm, there are no active contracts, ignoring unallocated mass. In other words, if we denote the allocation at the end of the first part of the algorithm as $\pi_m$, then there are no doctors $i, i' \in \D$ and hospitals $h, h' \in \H$ such that the contract $(h, i; h', i')$ is active.

Assume for contradiction there exists an active contract $\mu = (h, i; h', i')$ in $\pi_m$.
This implies that $h \succ_i h'$, $i \succ_h i'$, doctors $i$ and $i'$ are from different clusters $C, C'$ and $\Pr[\pi_m(i) = h' \wedge  \pi_m(h) = i'] > 0$. Thus, $C \succ_h C'$ and $\Pr[\pi_m(h) \in C'] >0$, from Claim \ref{claim:no_contracts} this implies that for every hospital $h''$ in the support of the prospect $\pi_m(i)$, $h'' \succeq_i h$. Thus, hospital $h'$ is not in the support of $\pi_m(i)$ which contradicts the assumption $\mu$ is an active contract.

Denote by $S_\pi \in \H \times \D \times \H \times \D$ the set of all active contracts in the allocation $\pi$.
From claim \ref{claim:no_contracts}, if there is an active contract $(h, i; h', i') \in S_{\pi}$ in the final allocation it means that 
$$\Pr[\pi_m(h) = i'] = 0.$$
Otherwise, from Claim \ref{claim:no_contracts}, doctor $i$ must be allocated with probability 1 to hospitals that doctor $i$ prefer over hospital $h$, specifically not to hospital $h'$. However, if doctor $i$ does not have ``room'' for more allocations and is not allocated to hospital $h'$ at round $m$, it would not be allocated to hospital $h'$ in the second part of the algorithm.

Since the total amount of mass that is being allocated in the second part of the algorithm is bounded by $\tau$, we have that 
$$\Pr[\bigvee_{(h, i ;h', i') \in S_\pi}\pi(h) =i'] \le \tau,$$
which implies that
$$
 \Pr[\bigvee_{(h, i ;h', i') \in S_\pi}(\pi(h) =i'\wedge \pi(h') = i)] \le \Pr[\bigvee_{(h, i ;h', i') \in S_\pi}\pi(h) =i'] \le \tau.
$$

\end{proof}

\subsubsection{The Output of \Cref{alg:GS_PSP} is Preference-Informed Individually Fair}

\begin{lemma}
\label{lemma:gs_psp_piif}
The output of Algorithm \ref{alg:GS_PSP} is PIIF.
\end{lemma}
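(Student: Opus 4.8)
The plan is to reduce PIIF to a within-cluster envy-freeness statement. Under a proto-metric, PIIF only constrains pairs of doctors at distance $0$, so it suffices to fix an arbitrary cluster $C$ and two doctors $i,j \in C$ and show that the final prospects satisfy $\pi(i) \succeq_i \pi(j)$. I would decompose $\pi(i) = \pi_m(i) + \delta_i$, where $\pi_m$ is the allocation recorded in $P$ when the while loop exits (say at round $m$), and $\delta_i$ is the extra mass doctor $i$ receives in the second (``remaining mass'') phase. The argument then has two parts: the first summand already gives within-cluster stochastic dominance, and the second summand is identical for all doctors of $C$ and hence cannot break it.

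For the first part, note that the rows of $P$ for $C$ are overwritten every iteration, so after the loop they equal the marginals of the probabilistic serial procedure (\Cref{alg:PSP}) run on the masses $p_{C,m}$. All doctors of $C$ face the same available supplies $p_{C,m}[\cdot]$ and differ only in their preference lists, so the sd-envy-freeness of the eating procedure \cite{bogomolnaia2001new} yields $\pi_m(i)\succeq_i \pi_m(j)$. I would also give a self-contained proof via \Cref{claim:pr_eq_max}: letting $S$ be $i$'s top-$r$ hospitals, that claim gives $\Pr_{o\sim\pi_m(i)}[r_i^{-1}(o)\le r]=\max_{h\in S}t_{h,m}=:T$. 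Every hospital of $S$ is exhausted cluster-wide by time $T$, hence is unavailable to $j$ after time $T$; since $j$ eats at unit speed, the total mass $j$ consumes from $S$ is at most $T$, i.e. $\Pr_{o\sim\pi_m(j)}[r_i^{-1}(o)\le r]\le T$. As this holds for every rank $r$, we again get $\pi_m(i)\succeq_i\pi_m(j)$.

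For the second part, I would first observe (taking $r=n$ in \Cref{claim:pr_eq_max}) that every doctor of $C$ is allocated the same total mass $\max_{h\in\H}t_{h,m}$ after the loop, hence has the same amount of free room. The remaining-mass phase adds, for each chosen hospital $h$, exactly $x/|C|$ to $P[i,h]$ for all $i\in C$ simultaneously, so the extra sub-distribution $\delta_i=\delta_C$ is the same for every doctor of the cluster. Finally I would check that adding a common sub-distribution preserves stochastic dominance: for every rank $r$,
\begin{equation*}
\Pr_{o\sim\pi(i)}[r_i^{-1}(o)\le r]=\Pr_{o\sim\pi_m(i)}[r_i^{-1}(o)\le r]+\Pr_{o\sim\delta_C}[r_i^{-1}(o)\le r],
\end{equation*}
and likewise for $j$; the $\delta_C$ term is identical in both expressions, so the inequality from the first part carries over to $\pi(i)\succeq_i\pi(j)$. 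Since $i,j\in C$ were arbitrary and there is no PIIF constraint across clusters, the output is PIIF.

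The step I expect to be the most delicate is the within-cluster envy-freeness: one must argue carefully that measuring $j$'s prospect under $i$'s ranking is bounded by the availability window of $i$'s preferred hospitals, rather than merely invoking that the eating procedure is envy-free with respect to each agent's \emph{own} ranking (the two coincide, but the bookkeeping through the termination times $t_{h,m}$ must be done with respect to $i$'s list while the consumption is $j$'s). By contrast, the remaining-mass bookkeeping is routine once one notes that the increments are uniform across each cluster; and the cross-round monotonicity claims (\Cref{claim:PS_SD}) used for stability are not needed here.
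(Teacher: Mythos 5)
Your proposal is correct and follows essentially the same route as the paper: reduce PIIF to within-cluster envy-freeness, use \Cref{claim:pr_eq_max} to express $\Pr_{o\sim\pi_m(i)}[r_i^{-1}(o)\le r]$ as the max termination time of $i$'s top-$r$ hospitals, bound $j$'s consumption of those hospitals by their availability window, and note that the remaining-mass phase adds an identical sub-distribution to every doctor in the cluster. The only difference is presentational — you make the decomposition $\pi(i)=\pi_m(i)+\delta_C$ and the "common increment preserves dominance" step explicit, where the paper states it in one sentence — and your delicate step (measuring $j$'s prospect under $i$'s ranking via the $t_{h,m}$'s) is exactly the contradiction argument the paper gives.
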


\begin{proof}
We start by showing that at the end of the algorithm's first part, the allocation is envy-free for each cluster.
Let $i,j \in D$ be doctors in the same cluster $C \in \mathcal{C}$ and let $h_1, ..., h_n$ be the hospitals in $\H$ ordered by the preferences of doctor $i$.
Let us denote the total number of rounds in the run of the algorithm as $m$, and the allocation at the end of the $m$-th round as $\pi_m$.

Fix a rank $r \in [n]$.
From Claim \ref{claim:PS_SD}, $\Pr_{o\sim\pi_m(i)}[r_i^{-1}(o) \le r] = \max\{t_{h_1, m}, ..., t_{h_r, m}\}$.

Assume for contradiction that $$\Pr_{o\sim\pi_m(j)}[r_i^{-1}(o) \le r] > \Pr_{o\sim\pi_m(i)}[r_i^{-1}(o) \le r] = \max\{t_{h_1, m}, ..., t_{h_r, m}\}.$$

This means that doctor $j$ ate some hospital $h \in \{h_1, ..., h_r\}$ at time $t > \max\{t_{h_1, m}, ..., t_{h_r, m}\} \ge t_{h,m}$. However, by the definition of $t_{h,m}$, at any time later than $t_{h,m}$, hospital $h$ was not available, which contradicts the assumption.

Hence, 
\begin{align*}
    &\forall r \in [n]:\Pr_{o\sim\pi_m(i)}[r_i^{-1}(o) \le r] \ge \Pr_{o\sim\pi_m(j)}[r_i^{-1}(o) \le r]
\end{align*}
so $\pi(i) \succeq_{i} \pi(j)$.

We assign the same probability to each doctor in each cluster at the second part, so the allocation is still envy-free.
Since for every cluster, the allocation is envy-free, for the set of all doctors $\D$, the allocation $\pi$ is PIIF.
\end{proof}

Now we are set to prove \Cref{thm:gs_psp_piif_stable}.
\begin{proof}[Proof of \Cref{thm:gs_psp_piif_stable}]
The proof is immediate from Lemma \ref{lemma:gs_psp_piif} and Lemma \ref{lemma:gs_psp_contract_stability}.
\end{proof}

\subsubsection{\Cref{alg:GS_PSP} Converges}

\begin{theorem}
\label{thm:gs_psp_stop}
Algorithm \ref{alg:GS_PSP} terminates after at most $O(n\cdot |\C|/\tau)$ rounds.
\end{theorem}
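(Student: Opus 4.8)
The plan is to bound the number of rounds by a ``mass accounting'' argument: track the total probability mass that gets \emph{rejected} over the course of the run, bound it below in terms of the number of rounds, and bound it above by $n\cdot|\C|$. For the lower bound, note that at the start of every round $k\ge 2$ the free mass $\sum_{h\in\H}p_h$ equals exactly the mass rejected during round $k-1$: the proposal phase zeroes out each hospital's free mass (the update $p_h\leftarrow 0$), and the only operation that restores free mass is the rejection step $p_h\leftarrow p_h+s$. Hence, since the loop guard requires $\sum_{h\in\H}p_h>\tau$ to enter a round, each of rounds $1,\dots,K-1$ (where $K$ is the total number of rounds) must have rejected strictly more than $\tau$ units of mass. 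Summing, the total mass rejected over the whole execution is at least $(K-1)\tau$.

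The heart of the proof is the matching upper bound: the total mass ever rejected is at most $n\cdot|\C|$. I would prove this by charging each rejected piece of mass to a pair $(h,C)\in\H\times\C$ and showing that every such pair is charged at most $1$ in total. The key structural observation comes directly from the algorithm: whenever any of hospital $h$'s mass is rejected from cluster $C$ — either because $C$ does not fully consume $h$'s fresh proposal, or because previously-eaten mass of $h$ in $C$ is later displaced by a more-preferred hospital that proposes to $C$ — the algorithm removes $C$ from $h$'s preference list, so $h$ never proposes to $C$ again. Consequently any mass rejected from $C$ can only flow on to clusters that $h$ still ranks, and can never return to $C$. Thus each unit of $h$'s total mass of $1$ enters $C$ at most once, so the mass $C$ returns to $h$ over the entire run is at most $1$. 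Summing over all $n\cdot|\C|$ pairs gives the claimed bound.

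Combining the two estimates yields $(K-1)\tau \le n\cdot|\C|$, i.e. $K \le n\cdot|\C|/\tau + 1 = O(n\cdot|\C|/\tau)$. It then remains only to observe that the post-loop cleanup phase, which distributes the at most $\tau$ leftover mass, terminates after $O(n\cdot|\C|)$ additional steps and so does not affect the round count.

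I expect the main obstacle to be making the per-pair charging fully rigorous in the presence of ``stale'' rejections, where mass that a cluster had already accepted is pushed back out in a later round. The subtle point is that, although such displacement can occur over many rounds, the total mass of $h$ that ever passes through $C$ is still capped at $1$: once $C$ is deleted from $h$'s list it is never re-proposed, so no unit of mass can re-enter $C$ after leaving it. The clean way to see this is that a hospital only acquires mass in $C$ after every cluster it ranks above $C$ has already been removed, so rejected mass always flows strictly \emph{downward} in $h$'s preference order; formalizing the invariant that $(\text{mass of }h\text{ eaten in }C) + (\text{cumulative mass rejected from }(h,C)) \le 1$ is maintained throughout is the one place that needs care. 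The per-round lower bound and the concluding arithmetic are routine.
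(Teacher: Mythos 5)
Your proof is correct and is essentially the paper's argument in dual form: the paper tracks the decreasing potential $\sum_{h,C}a_{h,C,k}$ of mass each hospital can still propose to each cluster (initially $n\cdot|\C|$, dropping by more than $\tau$ per round), whereas you track the cumulative rejected mass (at least $(K-1)\tau$, at most $n\cdot|\C|$); both rest on the same two facts — conservation of each hospital's unit of mass, and the permanence of a cluster's removal from a hospital's preference list after any rejection. No gap.
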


\begin{proof}
For a doctor $h\in \H$, denote by $p_{h,k}$ hospital $h$'s free probability mass in round $k$. Let us denote by $p_k$ the total free probability at round $k$, $p_k = \sum_{h\in \H}p_{h,k}$. 
For any round $k$ before the algorithm terminates, $p_k > \tau$. The total number of rounds is denoted by $k^*$.

For hospital $h \in \H$, cluster $C \in \C$, round $k \le k^*$, we denote by $a_{h,C,k}$ the available probability mass that hospital $h$ can propose to cluster $C$ in round $k$. At round $1$, for every hospital $h \in \H$ and cluster $C \in \C$, $a_{h,C,1} = 1$.

Let $h \in \H$ be a hospital and $C \in \C$ be a cluster and fix some round $k < k^*$. We observe the following:

\begin{enumerate}
    \item Assume $p_{h,k} = 0$. Then, $a_{h, C,k+1} = a_{h,C, k}$.
    \item Assume $C$ rejected some probability mass from hospital $h$ at some round $k' < k$. Then $a_{h,C,k''} = 0$ for every $k'' \ge k'$ and specifically $a_{h,C,k} = 0$.
    \item Assume $C$ is the cluster that hospital $h$ proposed to in round $k$. If cluster $C$ accepted all the probability mass that hospital $h$ proposed then $a_{h,C,k+1} = a_{h,C,k} - p_{h,k}$, otherwise $a_{h,C,k+1} = 0$. Since hospital $h$ proposed to cluster $C$, we know that $a_{h,C,k} \ge p_{h,k}$. Thus, either way in this case $a_{h,C,k+1} \le a_{h,C,k} - p_{h,k}$.
\end{enumerate}

Thus, for every hospital $h \in \H$, cluster $C \in \C$ and round $k < k^*$ we have that $0 \le a_{h,C,k+1} \le a_{h,C,k}$, and for at least one cluster $C^*$, $0 \le a_{h,C^*,k+1} \le a_{h,C^*,k} - p_{h,k}$.

Summing over all clusters and hospitals, we get that
$$
    0 \le \sum_{c\in \C}\sum_{h \in \H} a_{h,C,k} \le
    \sum_{C\in \C}\sum_{h \in \H} a_{h,C,k-1} - \sum_{h\in \H}p_{h,k-1} =  
    \sum_{C\in \C}\sum_{h \in \H} a_{h,C,k-1} - p_{k-1}  \overset{p_{k-1} > \tau}{<}$$$$
    \sum_{C\in \C}\sum_{h \in \H} a_{h,C,k-1} - \tau < ...< n\cdot |\C| - k \cdot \tau.
$$

Which implies that $k <  n\cdot |\C| / \tau$ for every round $k$ in the run of the algorithm. This implies that $k^* = O(n\cdot |\C| / \tau)$.
\end{proof}

Next, we show that if we wish to guarantee exact contract stability (rather than $\tau$-contract stability for $\tau>0$), Algorithm $\ref{alg:GS_PSP}$ might not converge.

\begin{claim}
For $\tau = 0$, the number of iterations in Algorithm \ref{alg:GS_PSP} might be unbounded.
\end{claim}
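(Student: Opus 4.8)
The plan is to exhibit an explicit instance---a proto-metric containing at least one cluster of size $\ge 2$, together with deterministic doctor preferences and strictly individually fair hospital preferences---on which the total free probability mass $p_k = \sum_{h\in\H} p_{h,k}$ remains strictly positive at every finite round $k$ while tending to $0$ as $k\to\infty$. Since the outer loop of \Cref{alg:GS_PSP} exits only once $\sum_{h\in\H}p_h \le \tau$, and here $\tau = 0$, showing $p_k > 0$ for all $k$ immediately yields that the loop runs for unboundedly many iterations. This is the exact complement of the proof of \Cref{thm:gs_psp_stop}: there the hypothesis $\tau > 0$ forced $p_{k-1} > \tau$ and hence a decrease of the potential $\sum_{C,h} a_{h,C,k}$ by a fixed amount each round; with $\tau = 0$ the potential can instead decrease by a positive but shrinking amount forever.

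The mechanism I would exploit is that \Cref{alg:GS_PSP} re-runs the probabilistic serial procedure (\Cref{alg:PSP}) every round over each cluster's \emph{accumulated} masses: the mass retained from the previous round's eaten allocation together with the new proposals. When a more-preferred hospital's mass arrives at a cluster, the eating dynamics displace a fraction of a less-preferred hospital's retained mass, which is then rejected and becomes free again. I would arrange the clusters and preferences into a cycle so that a single ``packet'' of free mass perpetually circulates among the clusters: on each pass through a cluster a fixed fraction of the packet is permanently eaten (and retained) while the remainder is rejected onward to the next cluster. Choosing the geometry so that the eaten fraction is a constant $\rho\in(0,1)$ of the packet makes the surviving free mass decay geometrically, $p_k = \Theta(\rho^{k})$, which is summable (consistent with the total mass $n$ being finite) yet never exactly $0$.

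Concretely, the key steps are: (1) write down the instance; (2) establish a self-similarity invariant by induction on the rounds (or on complete cycles)---namely that the configuration of free and retained masses after a round is a scaled copy, by the factor $\rho$, of the configuration after the previous one; and (3) conclude $p_k > 0$ for every finite $k$, so that the termination test is never met. Step (2) is where the real work lies: one must compute the PSP outcome each round exactly, tracking the eating start and finish times and the displaced (rejected) amounts, in the spirit of Claim \ref{claim:pr_eq_max}. The monotonicity established in Claim \ref{claim:PS_SD} is a useful sanity check---it guarantees $p_k$ is non-increasing and that no rejected mass ever ``returns'' to a doctor who already gave it up---but the main obstacle is designing the instance so that the amount eaten per round is a \emph{constant fraction} of the remaining free mass, ensuring $p_k$ stays positive at every finite round rather than reaching $0$ in finitely many steps.
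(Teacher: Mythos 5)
Your plan matches the paper's proof in both strategy and mechanism. The paper proves the claim by exhibiting a concrete instance with four doctors ($i_1,i_2$ forming one cluster, $j$ and $k$ singletons) and four hospitals, simulating \Cref{alg:GS_PSP} for three rounds, and observing that round 3 reproduces round 1 with exactly half the free probability mass — so the configuration is self-similar with ratio $1/2$ per two rounds and $p_k>0$ forever. Your description of the displacement mechanism (a newly arriving, more-preferred hospital's mass causes the PSP re-run to eject part of a less-preferred hospital's previously retained mass, which then circulates onward) is precisely what happens in that example, and your ``scaled copy of the configuration'' invariant is exactly the observation the paper uses to conclude.

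However, as written the proposal has a genuine gap: it never produces the instance. You explicitly defer the only nontrivial content of the proof — writing down the preferences and verifying, by tracking the eating times and rejected amounts through the PSP runs, that the post-round configuration really is a $\rho$-scaled copy of the pre-round one. Without that, nothing rules out the \emph{a priori} plausible alternatives that every instance either terminates in finitely many rounds (free mass hits exactly $0$) or that the per-round eaten fraction of the circulating packet decays, etc. The claim is an existence statement, so the construction \emph{is} the proof; a description of the desired qualitative behavior, however accurate, does not establish that an instance with that behavior exists. To close the gap you would need to supply the concrete preferences (the paper's four-doctor example works) and carry out the two-round simulation showing the halving, after which the induction you outline is immediate.
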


\begin{proof}
Consider the following example: 
Let $\D = \set{i_1, i_2, j, k}$ be the set of doctors, $\H = \set{A, B, C, D}$ be the set of hospitals. The doctors $i_1, i_2$ are in the same cluster $i$, and doctors $j$ and $k$ are in a different clusters $j$ and $k$ respectively.

The preference of the doctors are: 
$$A \succeq_{i_1} B \succeq_{i_1} C \succeq_{i_1} D, \quad B \succeq_{i_2} A \succeq_{i_2} D \succeq_{i_2} C, \quad C \succeq_j B \succeq_j A \succeq_j D, \quad D \succeq_k A \succeq_k B \succeq_k C,$$
and the preferences of the hospitals are
$$i \succeq_A j \succeq_A k, \quad j \succeq_B i \succeq_B k, \quad i \succeq_C j \succeq_C k, \quad i \succeq_D k \succeq_D j.$$

See the following simulation of the algorithm run:

\textbf{Round 1:} Hospital $A$ proposes 1 to doctor $i$, hospital $B$ proposes 1 to doctor $j$, hospital $C$ proposes 1 to doctor $i$, hospital $D$ proposes 1 to doctor $i$.
\begin{align*}
 &\pi_2(i_1) = \begin{cases}
    A,&\textit{ w.p. }1/2\\
    C,&\textit{ w.p. }1/2
\end{cases}, \quad
\pi_1(i_2) = \begin{cases}
    A,&\textit{ w.p. }1/2\\ 
    D,&\textit{ w.p. }1/2
\end{cases}, \\
&\pi_1(j) = \begin{cases}
    B,&\textit{ w.p. }1
\end{cases}, \quad
\pi_1(k) = \begin{cases}
    unallocated,&\textit{ w.p. } 1
\end{cases}.
\end{align*}
Probability mass $1/2$of hospital $C$ rejected, probability mass $1/2$ of hospital $D$ rejected.

\textbf{Round 2:} Hospital $C$ proposes $1/2$ to doctor $j$, hospital $D$ proposes $1/2$ to doctor $k$.
\begin{align*}
    &\pi_2(i_1) = \begin{cases}
        A,&\textit{ w.p. }1/2 \\
        C,&\textit{ w.p. }1/2
    \end{cases}, \quad
    \pi_2(i_2) = \begin{cases}
        A,&\textit{ w.p. }1/2\\
        D,&\textit{ w.p. }1/2
    \end{cases}, \\
    &\pi_2(j) = \begin{cases}
        B,&\textit{ w.p. }1/2 \\
        C,&\textit{ w.p. }1/2
    \end{cases}, \quad
    \pi_2(k) = \begin{cases}
        D,&\textit{ w.p. }1/2 \\
        unallocated,&\textit{ w.p. }1/2
    \end{cases}.
\end{align*}
Probability mass $1/2$ of hospital $B$ rejected.

\textbf{Round 3:} Hospital $B$ proposes $1/2$ to doctor $i$.
\begin{align*}
    &\pi_3(i_1) = \begin{cases}
        A,&\textit{ w.p. }3/4 \\
        C,&\textit{ w.p. }1/4  
    \end{cases},\quad
    \pi_3(i_2) = \begin{cases}
        A,&\textit{ w.p. }1/4 \\
        B,&\textit{ w.p. }1/2 \\
        D,&\textit{ w.p. }1/4
    \end{cases}, \\
    &\pi_3(j) = \begin{cases}
        B,&\textit{ w.p. }1/2 \\
        C,&\textit{ w.p. } 1/2
    \end{cases},\quad
    \pi_3(k) = \begin{cases}
        D,&\textit{ w.p. }1/2 \\
        unallocated,& \textit{w.p. }1/2
    \end{cases}.
\end{align*}

Probability mass $1/4$ of hospital $C$ rejected, probability mass $1/4$ of hospital $D$ rejected.

Round 3 is identical to round 1 with half the probability mass proposed, so round 4 will be identical to round 2 with half the probability mass and so on. Thus, the algorithm will never stop. The ``free'' probability mass of the hospitals gets smaller by 1/2 every 2 rounds.
\end{proof}

\subsection{Doctor-Proposing Algorithm}
Next, we present our second algorithm for PIIF and contract stable allocations given strict individually fair hospital preferences. This algorithm is also a generalization of the Gale Shapley algorithm (Figure \ref{alg:Gale_Shapley}). However, in this version, the doctors are the proposing set.

\subsubsection{Fair Propose-and-Reject -- Doctors-First}
At the beginning of the algorithm, each doctor has a free probability mass of 1. We denote the free probability mass of doctor $i \in \D$ as $p_i$. Sometimes we refer to the free probability mass of doctor $i$ in round $k$ as $p_{i,k}$.

Each round, every doctor proposes its free probability mass, $p_i$, to the most preferred hospital that did not reject it yet. As an analogy, it might be helpful to think of proposing to the next hospital on the list in the Gale Shapley algorithm.

For every hospital $h \in \H$, after the doctors' proposals, the proposed probability mass vector $p_h$ (sometimes denoted as $p_{h,k}$ as the vector in round $k$) contains for each doctor $i \in \D$, the probability mass doctor $i$ proposed in the current round plus the probability to be matched to doctor $i$ in hospital $h$'s prospect in the previous round.
Hospital $h$ runs \Cref{alg:water_algorithm} over this vector, $p_h$, to find its prospect for the current round.

Any probability mass not used for the allocation is rejected and becomes free again. As an analogy, it may be helpful to think of a hospital rejecting their current match in the Gale Shapley algorithm if a more preferred doctor proposes to it.

The algorithm terminates when there is no more free probability mass, i.e., there is a full allocation, or the free probability mass is very small.

\begin{algorithm}[H]
    \caption{Fair Propose-and-Reject -- Doctors-First}
    \begin{algorithmic}[1]
    \label{alg:GS_WA}
        \STATE $\forall i \in \D: p_i \leftarrow 1$, $P \in [0,1]^{n \times n} \leftarrow \textbf{0}$, $k \leftarrow 1$, $\forall h \in \H : p_h \in [0,1]^n \leftarrow \mathbf{0}$.
        \WHILE{$\sum_{ i \in \D}p_i > \tau$}
            \FORALL{$i\in \D: p_i > 0$} 
                \STATE $h \in \H$ is the highest ranked hospital by $i$. 
                \STATE $p_h [i] \leftarrow p_h[i] + p_i$.
                \hfill\COMMENT{$i$ proposes its remaining probability mass (like GS)}
                \STATE $p_i \leftarrow 0$.
            \ENDFOR
            \FORALL{$h \in \H$}
                \STATE Find an allocation $\pi_k(h)$ by running \Cref{alg:water_algorithm} over $p_h$.
                \FORALL{$i \in \D$}
                    \STATE $P[i, h] \leftarrow \Pr[\pi_k(h) = i]$.
                    \hfill\COMMENT{Update the output probability matrix.}
                    \STATE $s \leftarrow p_h[i] - \pi_k(h)[i]$
                    \hfill\COMMENT{Calculate the unallocated probability mass.}
                    \IF{$s > 0$}
                        \STATE $p_i \leftarrow p_i + s$.
                        \hfill\COMMENT{$h$ rejects the amount $s$ of $i$. (like GS)}
                        \STATE $p_h[i] \leftarrow p_h[i] - s$.
                        \STATE Remove $h$ from $i$'s preference list.
                    \ENDIF
                \ENDFOR
            \ENDFOR
            \STATE $k \leftarrow k + 1$.
        \ENDWHILE
        \label{partial_ef}
        \STATE \label{step:alloc_prob_mass} \textbf{Allocating the remaining probability mass:} 
        
        Split the remaining probability mass of the doctors by the hospitals such that $P$ is a doubly stochastic matrix.
        \STATE Find an allocation $\pi$ by running Birkhoff von Neumann algorithm over $P$.
        \RETURN $\pi$.
    \end{algorithmic}
\end{algorithm}

\begin{remark}
In step \ref{step:alloc_prob_mass}, we say that we allocate the remaining probability mass to form a doubly stochastic matrix.
One possible way to do it is by the following algorithm, which greedily distributes the free probability mass in a way that results in a doubly stochastic matrix:
\begin{algorithm}[H]
    \caption{Allocate free probability mass}
    \label{alg:alloc_free_prob_mass}
    \hspace*{\algorithmicindent} \textbf{Input} $\forall d \in \D: p_d \in [0,1]$, $\forall h \in \H: p_h \in [0,1]^{n}$, $P \in [0,1]^{n \times n}$.
    
    \begin{algorithmic}[1]
        \FORALL{$i \in \D$}
            \FORALL{$h \in \H$}
                \STATE $p \leftarrow \min\set{p_i, 1-\sum_{j\in\D}p_h[j]}$.
                \STATE $P[i,h] \leftarrow P[i,h] + p$.
                \STATE $p_i \leftarrow p_i - p$
                \STATE $p_h[i] \leftarrow p_h[i] + p$.
            \ENDFOR
        \ENDFOR
        \RETURN $P$.
    \end{algorithmic}
\end{algorithm}
Since the total amount of probability mass is $n$, and the total amount we can allocate with the hospitals is $n$, after running \Cref{alg:alloc_free_prob_mass}, all the probability mass is allocated. By way of choosing how much mass to allocate at each step, we make sure that no doctor or hospital is allocated to more than one unit of mass. Thus, the matrix $P$ is doubly stochastic.
\end{remark}

We show that \Cref{alg:GS_WA} is $2\tau$-PIIF, $\tau$-contract stable and terminates after $O(n^2 / \tau)$ rounds.

\begin{theorem}
\label{thm:gs_wa_fair_stable}
Given a set of doctors $\D$, a set of hospitals $\H$, a proto-metric $d : \D \times \D \rightarrow \set{0,1}$, deterministic doctor preferences $\mathcal{P}_{\D} = \{r_i\}_{i\in \D}$, probabilistic and strictly individually fair hospital preferences $\mathcal{P}_\H = \{r_h\}_{h\in \H}$ and a constant $\tau \in (0,n]$, \Cref{alg:GS_WA} outputs an allocation $\pi$ such that $\pi$ is $2\tau$-preference-informed individually fair and $\tau$-contract stable.
\end{theorem}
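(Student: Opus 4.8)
The plan is to prove the three guarantees of \Cref{thm:gs_wa_fair_stable} --- $\tau$-contract stability, $2\tau$-PIIF, and the $O(n^2/\tau)$ round bound --- separately via lemmas, mirroring the analysis of the hospital-proposing \Cref{alg:GS_PSP}. The engine is a monotonicity argument \emph{dual} to the one used there: since the doctors now propose, a hospital's prospect can only improve from round to round, while the new mass a doctor can still place only shrinks.

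\textbf{Contract stability.} First I would prove the key monotonicity claim: for every hospital $h$ and rank $r$, the probability that $\pi_k(h)$ lands in $h$'s top-$r$ clusters is non-decreasing in $k$. The reason is that in \Cref{alg:water_algorithm} the total mass $h$ allocates to its top-$r$ clusters equals $\min\{1,\,(\text{total mass proposed to those clusters})\}$, because the rising tide greedily saturates the most preferred clusters first; and the mass proposed to the top-$r$ clusters in round $k$ is at least the previous round's allocation to them, since $p_{h,k}[i]$ consists of the re-proposed previous allocation $P_{k-1}[i,h]$ plus fresh proposals. From this I would deduce, as in \Cref{claim:no_contracts}, that once $h$ rejects a doctor $i$ in cluster $C$, all of $h$'s mass forever sits on clusters $\succeq_h C$; hence $h$ can be matched to a strictly less preferred cluster $C'$ only through the completion phase. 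Since that phase allocates total mass at most $\tau$, the probability that any active contract's swap event occurs is at most $\tau$, yielding $\tau$-contract stability exactly as in \Cref{lemma:gs_psp_contract_stability}.

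\textbf{Fairness.} This is the crux, and where both the approximation and the factor of $2$ enter. Unlike \Cref{alg:GS_PSP}, the doctors of a cluster are served by each hospital independently (through the rising tide), so they may be ``out of phase,'' and the end-of-loop allocation is only approximately envy-free. I would prove a per-round invariant: for doctors $i_1,i_2$ in a cluster $C$ and every rank $r$ in $i_1$'s order, $\Pr_{o\sim\pi_k(i_1)}[r_{i_1}^{-1}(o)\le r] \ge \Pr_{o\sim\pi_k(i_2)}[r_{i_1}^{-1}(o)\le r] - p_{i_1,k}$, i.e.\ the envy is controlled by the envying doctor's free mass. The mechanism is that the rising tide hands $i_1$ and $i_2$ equal mass at each hospital up to their proposed amounts, so any deficit of $i_1$ at a top-$r$ hospital is either compensated by a surplus at a more-preferred hospital (also counted in the top-$r$ set) or is reflected in $i_1$'s free mass. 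At termination the total free mass is at most $\tau$, bounding the end-of-loop envy by $\tau$; completing the residual mass in the second phase moves each doctor's (at most $\tau$) free mass onto real hospitals, which can only help $i_1$ but can raise $i_2$'s coverage of $i_1$'s top ranks by at most the residual, for a total stochastic-dominance violation of at most $2\tau$. Finally I would convert a uniform rank-CDF violation of at most $2\tau$ into the required witness: taking $p^{i_1;i_2}$ to be the prospect whose rank-CDF (in $i_1$'s order) is the pointwise minimum of those of $\pi(i_1)$ and $\pi(i_2)$ gives $D_{TV}(p^{i_1;i_2},\pi(i_2))\le 2\tau$ together with $\pi(i_1)\succeq_{i_1} p^{i_1;i_2}$, which is precisely $2\tau$-PIIF under the relaxed metric $d^\tau$.

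\textbf{Termination and the main obstacle.} Mirroring \Cref{thm:gs_psp_stop}, I would use the potential $\sum_{i\in\D,\,h\in\H} a_{i,h,k}$, where $a_{i,h,k}$ is the mass doctor $i$ can still propose to $h$ (permanently zeroed once $h$ rejects $i$). This potential starts at $n^2$, is non-increasing, and drops by at least the total free mass --- which exceeds $\tau$ in every non-terminal round --- so the algorithm runs for $O(n^2/\tau)$ rounds. The hard part is the fairness invariant: establishing that end-of-loop envy is governed by the small unallocated mass requires a careful induction over the rising-tide dynamics, analogous to \Cref{claim:hospital_order}--\Cref{claim:PS_SD}, because, in contrast to the hospital-proposing algorithm, there is no single joint envy-free subroutine (the PSP) acting on an entire cluster at once; each hospital equalizes within a cluster locally, and one must argue that these local equalizations compose into a globally near-envy-free prospect.
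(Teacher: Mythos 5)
Your decomposition (separate lemmas for stability, fairness, and termination), your monotonicity engine (the hospital's prospect under the rising tide only improves round to round, because each round's proposal vector dominates the previous round's accepted allocation), and your potential function $\sum_{i,h}a_{i,h,k}$ all coincide with the paper's proof (Claims \ref{claim:doc_first_alg_SD_hospitals}, \ref{claim:max_prob_for_rejected}--\ref{claim:doc_alloc_better_in_support}, Lemma \ref{lemma:gs_wa_contract_stable}, Theorem \ref{thm:gs_wa_stop}). One small omission in the stability part: your argument only treats the case where $h$ has rejected mass from $i$'s cluster (forcing $h$'s mass onto weakly better clusters forever); you also need the complementary case where $h$ never rejected anyone in $i$'s cluster, in which $i$ never descends to $h'$ and so $\Pr[\pi_k(i)=h']=0$. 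Both cases are needed to kill the joint event $\pi(h)=i'\wedge\pi(i)=h'$.

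The genuine gap is in the last step of the fairness argument. A uniform rank-CDF violation of at most $2\tau$ does \emph{not} by itself yield a witness within total variation distance $2\tau$: writing $D(r)=\max\set{0,F_{i_2}(r)-F_{i_1}(r)}$, your pointwise-minimum prospect $q$ satisfies $D_{TV}(q,\pi(i_2))=\frac{1}{2}\sum_r|D(r)-D(r-1)|$, which is the total variation of the sequence $D$ and can be as large as $\Omega(n\tau)$ if $D$ oscillates, even though $\|D\|_\infty\le 2\tau$. So the reduction ``sup-norm CDF envy $\le 2\tau$ $\Rightarrow$ $2\tau$-PIIF'' is invalid as stated. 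The paper avoids this by choosing a witness whose TV-closeness is immediate --- $p^{i_1;i_2}$ equals $\pi_k(i_2)$ with mass $\tau$ moved to ``unallocated'' --- and then proving domination $\pi_k(i_1)\succeq_{i_1}p^{i_1;i_2}$ directly from the structural fact (Claim \ref{claim:doc_alloc_better_in_support}) that any per-hospital deficit of $i_1$ at a hospital $h$ forces \emph{all} of $i_1$'s allocated mass onto hospitals $i_1$ strictly prefers to $h$, so that $F_{i_1}(r)\ge 1-\tau$ at and beyond the first deficit rank. You state essentially this mechanism when justifying your invariant, and it is exactly what rescues the construction: it forces $D$ to be identically zero before the first deficit and non-decreasing (and bounded by the free mass) afterwards, so its total variation is small. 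But that structural monotonicity must be proved and invoked explicitly; the sup-norm bound you reduce to is too weak to finish the proof.
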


In order to prove \Cref{thm:gs_wa_fair_stable}, we first prove that the output of \Cref{alg:GS_WA} satisfies each of the properties separately. This is done in Lemmas \ref{lemma:GS_WA_tau_piff} and \ref{lemma:gs_wa_contract_stable}. 

\subsubsection{The Output of Algorithm \ref{alg:GS_WA} is $2\tau$-PIIF}
\begin{lemma}
\label{lemma:GS_WA_tau_piff}
The output of \Cref{alg:GS_WA} is $2\tau$-PIIF.
\end{lemma}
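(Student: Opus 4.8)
The plan is to reduce the claim to an approximate stochastic-domination statement, proved rank by rank. Recall that for a pair $i_1,i_2$ in the same cluster, $\pi$ is $\delta$-PIIF with respect to $(i_1,i_2)$ whenever $\pi(i_1)$ dominates $\pi(i_2)$ up to $\delta$, in the sense that for every rank $r$ in $i_1$'s order (with top-$r$ set $S_r=\{h_1,\dots,h_r\}$) we have $\sum_{h\in S_r}\pi(i_1)[h]\ge\sum_{h\in S_r}\pi(i_2)[h]-\delta$. Indeed, given such pointwise slack one can transport downward (toward less preferred hospitals in $i_1$'s order) a total of at most $\max_r\big(\sum_{h\in S_r}\pi(i_2)[h]-\sum_{h\in S_r}\pi(i_1)[h]\big)\le\delta$ mass of $\pi(i_2)$, producing an alternative prospect $p^{i_1;i_2}$ with $D_{TV}(p^{i_1;i_2},\pi(i_2))\le\delta$ and $\pi(i_1)\succeq_{i_1}p^{i_1;i_2}$. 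Hence it suffices to establish $\delta=2\tau$ pointwise slack, for every same-cluster pair and in both orderings; the two directions are symmetric since \Cref{alg:water_algorithm} is symmetric within a cluster, so I treat $(i_1,i_2)$.

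The two structural facts I would use are: (i) within a cluster $C$ containing $i_1,i_2$, \Cref{alg:water_algorithm} treats the two doctors symmetrically, so in every round $k$ and at every hospital $h$ there is a common water level $L_{h,k}$ with $\pi_k(h)[i]=\min(p_{h,k}[i],L_{h,k})$ for $i\in\{i_1,i_2\}$; and (ii) a monotonicity lemma, the doctor-first analog of \Cref{claim:hospital_order}: for a fixed cluster the water level $L_{h,k}$ at each hospital $h$ is non-increasing in $k$. I would justify (ii) by noting that in a doctor-proposing process each doctor only proposes to successively less preferred hospitals, so the aggregate demand at $h$ coming from clusters that $h$ prefers to $C$ only accumulates over rounds, which can only push down the level available to $C$. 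A convenient consequence: once $i_1$ is rejected by $h$ (so $h$ is removed from $i_1$'s list and $i_1$ freezes its held mass there), its final mass at $h$ stays exactly at the water level, $\pi_m(i_1)[h]=L_{h,m}$, rather than falling below it.

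With these in hand I would analyze the end-of-main-loop allocation $\pi_m$ and fix a rank $r$. Let $r^*$ be the rank in $i_1$'s order of $i_1$'s current proposal target, and $f_1\le\tau$ its free mass (termination guarantees $\sum_i p_i\le\tau$, hence every doctor's free mass is at most $\tau$). If $r^*\le r$, then all of $i_1$'s allocated mass sits inside $S_{r^*}\subseteq S_r$, so $\sum_{h\in S_r}\pi_m(i_1)[h]=1-f_1\ge 1-\tau$ and the slack at rank $r$ is at most $f_1\le\tau$. If $r^*>r$, then $i_1$ has been rejected by every $h_j$ with $j\le r$; by the consequence above $\pi_m(i_1)[h_j]=L_{h_j,m}$, whereas by fact (i) $\pi_m(i_2)[h_j]=\min(p_{h_j,m}[i_2],L_{h_j,m})\le L_{h_j,m}$, so summing over $j\le r$ gives $\sum_{h\in S_r}\pi_m(i_1)[h]\ge\sum_{h\in S_r}\pi_m(i_2)[h]$ with no slack needed. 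Thus at the end of the main loop the pointwise discrepancy is at most $\tau$. The final allocation step (\Cref{step:alloc_prob_mass}) only distributes the remaining free mass, changing $\pi(i_2)$ by total variation at most its free mass $\le\tau$, which in the worst case adds $\tau$ to the discrepancy at some rank. Combining, the pointwise slack is at most $2\tau$, which yields $2\tau$-PIIF by the first paragraph.

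The main obstacle is fact (ii), the monotonicity of water levels together with the resulting claim that a frozen doctor stays exactly at the water level. The delicate point is that if the level at some $h_j$ could rise after $i_1$ leaves, then $i_1$'s frozen mass there might fall strictly below the new level while $i_2$, proposing later, could reach it, producing genuine within-cluster envy not controlled by the free mass. Ruling this out requires the doctor-first counterpart of the induction in \Cref{claim:hospital_order} (over the order in which hospitals fill and over rounds), showing that the cumulative demand at each hospital from more-preferred clusters never decreases, so the level offered to $C$ is non-increasing. I expect this monotonicity argument, rather than the bookkeeping around $\tau$, to be the technically involved part.
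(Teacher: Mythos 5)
Your proposal is correct and follows essentially the same route as the paper: your facts (i) and (ii) are the paper's Claims \ref{claim:max_prob_for_rejected} and \ref{claim:doc_alloc_better_in_support} (the ``bar'' a hospital sets for a cluster at the moment of a rejection is never exceeded afterwards, and a doctor allocated less than a cluster-mate at $h$ has all its mass at hospitals it prefers to $h$), your case split on $r^*\le r$ versus $r^*>r$ mirrors the case analysis in Claim \ref{claim:pi_k_tau_piif}, and the $\tau+\tau$ accounting for the free mass and the final allocation step matches Lemma \ref{lemma:GS_WA_tau_piff}; the only packaging difference is that you convert the pointwise partial-sum slack into a PIIF witness by transporting mass downward, where the paper uses the explicit alternative $p^{i;j}=(1-\tau)\pi_k(j)+\tau\cdot\text{unallocated}$. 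One small caveat: your monotonicity claim (ii) is false as a blanket statement (before a hospital saturates, the level offered to a cluster can rise across rounds), but you only invoke it at hospitals that have already rejected $i_1$, which is exactly the regime in which the paper proves it, so your argument is unaffected.
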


\paragraph{Proof outline.} The intuition behind the fairness of \Cref{alg:GS_WA} is that for every cluster $C \in \C$ and hospital $h \in \H$ at each round, all the doctors in cluster $C$ are assigned with the same probability to hospital $h$ unless they propose less probability mass than other doctors in the cluster. We assume that they proposed less probability mass since the rest of their probability mass is assigned to more preferred hospitals.

The main concern is the scenario where for cluster $C \in \C$ and doctor $i \in C$, in some early round, some probability proposed by doctor $i$ was rejected by hospital $h$ and "percolated" down to less preferred hospitals. If in some later round, doctor $i$ would appear to propose less probability mass than other doctors in cluster $C$, doctor $i$ would be allocated with less probability mass. However, doctor $i$'s probability mass is allocated with less preferred hospitals. This kind of scenario will result in an unfair solution. 
In Claim \ref{claim:max_prob_for_rejected}, we show that this scenario is not possible and that once the probability mass of doctor $i$ got rejected by hospital $h$, no other doctor in the cluster would be assigned with more probability than doctor $i$.

In claim \ref{claim:doc_alloc_better_in_support}, we show that this implies that if a doctor is assigned with less probability mass by $h$ compared to other doctors in its cluster, then the rest of this doctor's assigned probability mass is assigned to more preferred hospitals.

Claim \ref{claim:pi_k_tau_piif} and 
lemma \ref{lemma:GS_WA_tau_piff} conclude the proof by showing how to construct an alternative allocation for every two similar doctors that satisfy the fairness condition.

For every $h \in \H$, denote by $\pi_k(h)$ the prospect hospital $h$ finds in round $k$.

\begin{claim}
\label{claim:max_prob_for_rejected}
Let $h \in \H$ be a hospital and $i \in \D$ be a doctor in cluster $C \in \mathcal{C}$. For every round $k$, if at round $k$, hospital $h$ rejected probability mass $p > 0$ of doctor $i$. Then for every $k' \ge k$ and for every doctor $j \in C$, $\pi_k(h)[i] \ge \pi_{k'}(h)[j]$.
\end{claim}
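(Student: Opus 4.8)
The plan is to separate the statement into a within-round (base) part and a cross-round monotonicity part. First I would analyze the output of the rising tide procedure (\Cref{alg:water_algorithm}) restricted to the cluster $C$. Since the procedure pours mass into the doctors of $C$ simultaneously and at equal rate, capping each doctor $d$ at its own proposed mass $p_{h,k}[d]$, its output has a ``water level'' structure: there is a level $\ell$ such that every $d \in C$ receives $\pi_k(h)[d] = \min(p_{h,k}[d],\ell)$, and the doctors receiving exactly $\ell$ are precisely those whose proposed mass was still positive when $h$ exhausted its unit of unallocated mass. Because $h$ rejects positive mass of $i$ at round $k$, doctor $i$ is one of these ``capped-by-water'' doctors, so $\pi_k(h)[i]=\ell=\max_{d\in C}\pi_k(h)[d]$; this gives the base case $\pi_k(h)[i]\ge\pi_k(h)[j]$ for every $j\in C$, i.e.\ the statement for $k'=k$.

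For $k'>k$ I would induct on the round and show that the level of $C$ at $h$ never rises above $\ell$. Two structural facts drive this. (i) The mass $h$ devotes to clusters strictly preferred to $C$ is non-decreasing across rounds --- the hospital-side analogue of the stochastic-domination statement \Cref{claim:PS_SD}. Since $i$ is rejected at round $k$, all of $h$'s unit of mass is already consumed by clusters ranked at least as high as $C$ at round $k$; monotonicity then forces every strictly less-preferred cluster to receive nothing from round $k$ onwards, and the capacity $R_{k'}$ left for $C$ to satisfy $R_{k'}\le R_k$. (ii) The ``reject $\Rightarrow$ delete from list'' rule of \Cref{alg:GS_WA}: once $h$ rejects any positive mass of a doctor, that doctor never proposes fresh mass to $h$ again, so its entry in $p_{h,k'}$ is thereafter only its carried-over allocation; in particular $i$ persists in $C$ contributing $p_{h,k'}[i]=\pi_{k'-1}(h)[i]\le\ell$.

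The level $\ell_{k'}$ at round $k'$ is at most $\ell$ exactly when the ``demand at level $\ell$'', $\sum_{d\in C}\min(p_{h,k'}[d],\ell)$, is at least the available capacity $R_{k'}$. This holds with equality at round $k$ (the level-$\ell$ water-filling used up exactly $R_k$), and by (i) the capacity only shrinks, so I would complete the induction by arguing that the level-$\ell$ demand of $C$ never drops below the current capacity. Combined with $\pi_{k'}(h)[j]\le\ell_{k'}\le\ell$, this yields the claim for all $j\in C$ and all $k'\ge k$.

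The hard part is controlling this demand against a doctor's ability to \emph{increase} the mass it proposes to $h$. A doctor $j\in C$ that $h$ did not reject at round $k$ (so $p_{h,k}[j]\le\ell$ and $j$ is still on $h$'s list) can, in a later round, have mass it had parked at a more-preferred hospital rejected there; that freed mass cascades down $j$'s list and is re-proposed to $h$, so $p_{h,k'}[j]$ may exceed $\ell$. I must rule out that such a ``dump'' lets $j$ be allocated more than $\ell$. The resolution I expect is that any increase in the mass $C$ proposes to $h$, set against the non-increasing capacity from (i), can only depress the water level, so $j$ is still capped at a level $\le\ell$; making this monotonicity of the level quantitative --- and separately handling the degenerate case in which $C$ becomes entirely satisfiable, so the level is governed by the largest proposed mass rather than by exhausted capacity --- is the delicate step of the induction.
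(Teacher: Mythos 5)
Your plan follows the same route as the paper's proof: first establish that the rejection at round $k$ forces $i$ to sit at the cluster maximum, $\pi_k(h)[i]=\max_{d\in C}\pi_k(h)[d]$ (correct, and identical to the paper's argument), and then show by induction over rounds that this maximum never rises, driven by the fact that the capacity $h$ leaves for $C$ after serving the higher-ranked clusters is non-increasing. The problem is that you stop exactly at the decisive step: the assertion that ``any increase in the mass $C$ proposes to $h$, set against the non-increasing capacity, can only depress the water level'' is stated as something you \emph{expect}, and you explicitly defer making it quantitative. That is the entire content of the claim beyond the base case, so as submitted this is a plan rather than a proof. Two supporting facts are also asserted rather than proved: (i) the non-increasing capacity for $C$ is not literally an instance of Claim~\ref{claim:PS_SD} (which concerns \Cref{alg:GS_PSP}); for \Cref{alg:GS_WA} one must show that once $h$'s unit is exhausted by $C$ and the clusters $h$ prefers to $C$, it remains exhausted by them in every later round, and that the mass landing in the strictly-preferred clusters is non-decreasing (the paper proves this by observing that mass allocated to a higher cluster is re-proposed there and can only migrate further up $h$'s ranking); and (ii) your induction must carry the invariant that $C$'s demand continues to meet its capacity, so that the level is genuinely governed by capacity exhaustion and nothing spills to clusters below $C$.

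For what it is worth, the step you flag as delicate does close, in either of two ways. The paper's way: with $M_{C,m}=\sum_{d\in C}\pi_m(h)[d]$ shown non-increasing, suppose the maximum $b_{C,m}$ increased at round $m+1$. Every doctor proposes at least its carried-over allocation, doctors strictly below the old maximum were never water-capped and so propose at least their full old allocation, and the new level exceeds $b_{C,m}$; hence every allocation in $C$ weakly increases and the new maximizer's strictly increases, forcing $M_{C,m+1}>M_{C,m}$ --- a contradiction. Your way: the demand at the old level satisfies $\sum_{d\in C}\min(p_{h,k'}[d],\ell_{k'-1})\ge\sum_{d\in C}\pi_{k'-1}(h)[d]=R_{k'-1}\ge R_{k'}$, because each carried-over allocation is at most $\ell_{k'-1}$ and each proposal dominates it; hence the new level is at most the old one. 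Under this framing the ``dump'' scenario you worry about is harmless --- extra re-proposed mass only increases demand and therefore lowers the level --- so your instinct about the resolution is right, but the argument itself is the missing piece.
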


\begin{proof}
For a round $m$, we will use the following definitions:
\begin{itemize}
    \item The maximal probability mass assigned to a doctor in cluster $C$ in round $m$ is denoted by $b_{C,m}$, $$b_{C,m} = \max_{j \in C}\pi_m(h)[j].$$
    \item The total amount of probability mass in the prospect $\pi_m$ that is assigned to doctors in cluster $C$ in round $m$ is denoted by $M_{C,m}$, $$M_{C,m} = \sum_{j \in C}\pi_m(h)[j].$$
    \item The set of clusters that are ranked higher than cluster $C$ by hospital $h$ is denoted by $\mathcal{S}_h^{>C}$, and the set of clusters that are ranked lower than cluster $C$ by hospital $h$ is denoted by $\mathcal{S}_h^{<C}$,
    $$
    \mathcal{S}_h^{>C} = \{C' \in \mathcal{C}: C' \succ_h C\}, \quad \mathcal{S}_h^{<C} = \{C' \in \mathcal{C}: C \succ_h C'\}.
    $$
    \item The set of doctors in cluster $C$ that are assigned with probability mass $b_{C,m}$ in round $m$ is denoted by $C_{m}^*$, $$C_{m}^* = \{j \in C: \pi_m(h)[j] = b_{C,m}\}.$$
\end{itemize}

In order to prove the claim, consider some round $k$, and a doctor $i$ who had some probability mass rejected, we proceed as follows:
\begin{enumerate}
    \item 
    \label{base_claim}
    In round $k$, 
    $\Pr[\pi_k(h) \in \mathcal{S}_h^{>C}] = 1 - M_{C,k}$.
    
    Since hospital $h$ rejected some probability mass at round $k$ we know that $$\sum_{j \in \D} \pi_k(h)[j] = 1.$$
    Thus, probability mass $1 - M_{C,k}$ is assigned to clusters in $\mathcal{C} \backslash \{C\}$. By \Cref{alg:water_algorithm}, we know that a cluster in $\mathcal{S}_h^{<C}$ is assigned to hospital $h$ only if all the probability mass that cluster $C$ proposed to hospital $h$ was assigned. In this case, some probability mass that was proposed by cluster $C$ was rejected. Thus, probability mass $1 - M_{C,k}$ is assigned to clusters in $\mathcal{S}_h^{>C}$.
    
    \item 
    \label{step_claim}
    If in round $m$, 
    $\Pr[\pi_m(h) \in \mathcal{S}_h^{>C}] = 1 - M_{C,m}$, then in round $m+1$, $$\Pr[\pi_{m+1}(h) \in \mathcal{S}_h^{>C}] = 1 - M_{C,m+1} \ge 1 - M_{C,m}.$$
    
    In step $m$, probability mass $1-M_{C,m}$ was assigned to clusters in $\mathcal{S}_h^{>C}$. Thus, this probability was proposed by the same clusters in round $m+1$ too. If for every cluster $C' \in \mathcal{S}_h^>C$, $\Pr[\pi_{m+1}(h) \in C'] \ge \Pr[\pi_{m}(h) \in C']$, then we are done. Otherwise, for some cluster $C' \in \mathcal{S}_h^{>C}$, $\Pr[\pi_{m+1}(h) \in C'] < \Pr[\pi_{m}(h) \in C']$. By \Cref{alg:water_algorithm}, this probability mass was assigned to clusters in $\mathcal{S}_h^{>C'}$ and since $C' \in \mathcal{S}_h^{>C}$, $\mathcal{S}_h^{>C'} \subset \mathcal{S}_h^{>C}$.
    Thus, at least $1 - M_{C,m}$ probability mass is assigned to clusters in $\mathcal{S}_h^{>C}$, i.e. 
    $$\Pr[\pi_{m+1}(h) \in \mathcal{S}_h^{>C}] \ge 1 - M_{C,m}.$$
    
    Since $M_{C,m}$ probability mass of cluster $C$ is available in round $m+1$, no probability mass will be assigned to clusters in $\mathcal{S}_h^{<C}$ so
    $$\Pr[\pi_{m+1}(h) \in \mathcal{S}_h^{>C}] = 1 - M_{C,m+1}.$$
    
    \item 
    \label{m_c_i_plus_one_ge_m_c_i}
    For every $m \ge k$, $M_{C,m+1} \le M_{C,m}$. This is implied from \ref{base_claim} and \ref{step_claim}.
    
    \item For every $m \ge k$, $b_{C,m+1} \le b_{C,m}$, which implies that $b_{C,m} \le b_{C,k}$.
    
    Assume for contradiction that for some round $m$, $b_{C,m+1} > b_{C,m}$.
    By \Cref{alg:GS_WA}, for every doctor $j \in \D$, $p_{h,m+1}[j] \ge \pi_m(h)[j]$, i.e., every doctor will propose to hospital $h$ in round $m+1$ at least the probability mass that was assigned to hospital $h$ in round $m$. 
    For every doctor $j \in C_{m}^*$, $\pi_m(h)[j] = b_{C,m}$, so $p_{h,m+1}[j] \ge b_{C,m}$. Since $b_{C,m} < b_{C,m+1}$, doctor $j$ will be assigned with at least $b_{C,m}$ in round $m+1$, i.e. $$\forall j \in C_{m}^*: \pi_{m+1}(h)[j]\ge b_{C,m} = \pi_{m}(h)[j].$$
    For every doctor $j \in C \backslash C_{m}^*$, no probability mass proposed by this doctor was rejected in round $m$. Thus, doctor $j$ will propose at least the same as in the previous round, i.e., $p_{h,m+1}[j] \ge p_{h,m}[j] = \pi_{m}(h)[j]$. Since $p_{h,m}[j] < b_{C,m} < b_{C,m+1}$ we know that $$\forall j \in \C \backslash \C^*\pi_{m+1}(h)[j]\ge p_{h,m}[j] = \pi_{m}(h)[j].$$
    We also know that there exists some doctor $j^* \in C$, such that $$\pi_m(h)[j^*] = b_{C,m+1} > b_{C,m} \ge \pi_{m}(h)[j^*].$$
    
    Thus, for every doctor $j \in C$, $\pi_{m+1}(h)[j]\ge \pi_{m}(h)[j]$ and for doctor $j^*$, $\pi_{m+1}(h)[j] > \pi_{m}(h)[j]$. This implies that $$M_{C,m+1} = \sum_{j \in C} \pi_{m+1}(h)[j] > \sum_{j \in C} \pi_{m}(h)[j] =  M_{C,m},$$ which is a contradiction to \ref{m_c_i_plus_one_ge_m_c_i}.
    
    \item In round $k$, $\pi_k(h)[i] = b_{C,k}$. 
    
    By \Cref{alg:water_algorithm}, doctor $i$ is assigned with
    $\pi_m(h)[i] = \min\{b_{C,m}, p_{h,m}[i]\}$.
    Since some of $m$'s probability mass was rejected we know that $\pi_m(h)[i] < p_{h,m}[i]$ which implies that $\pi_m(h)[i] = b_{C,m}$.
\end{enumerate}

Thus, in every round $k' > k$, we have that for every doctor $j \in C$, $$\pi_{k'}(h)[j] \le b_{C,k'} \le b_{C,k} = \pi_{k}(h)[i].$$

\end{proof}

\begin{claim}
\label{claim:doc_alloc_better_in_support}
Let $h \in \H$ be a hospital and $i,j \in \D$ be doctors in cluster $C \in \mathcal{C}$. For every round $k$, if $\pi_k(h)[j] > \pi_k(h)[i]$, then for every $h' \in \H$ in the support of $\pi_k(i)$, $h' \succ_{i} h$.
\end{claim}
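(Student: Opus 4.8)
The plan is to reduce the statement to a single structural fact: under the hypothesis $\pi_k(h)[j] > \pi_k(h)[i]$, the hospital $h$ has \emph{never} rejected any probability mass of doctor $i$ up to and including round $k$. Granting this, the conclusion follows from the proposal discipline of \Cref{alg:GS_WA}: a doctor proposes its free mass only to the most preferred hospital still on its list, and a hospital is removed from $i$'s list exactly when it rejects some of $i$'s mass. Hence if $h$ is still on $i$'s list at round $k$, doctor $i$ has never proposed to (and so holds no mass at) any hospital less preferred than $h$; all of $i$'s allocated mass therefore sits on $h$ together with hospitals strictly preferred to $h$, so every hospital in the support of $\pi_k(i)$ other than $h$ itself is strictly preferred to $h$, which is what the claim asserts.

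So essentially all the work is in showing that $h$ never rejected $i$. First I would record the within-cluster characterization of the rising tide already used in \Cref{claim:max_prob_for_rejected}: for every doctor $d \in C$, $\pi_k(h)[d] = \min\{b_{C,k},\, p_{h,k}[d]\}$, where $b_{C,k} = \max_{d \in C}\pi_k(h)[d]$ is the water level that cluster $C$ reaches at $h$ in round $k$. Since $\pi_k(h)[j] > \pi_k(h)[i]$ and every allocation is at most $b_{C,k}$, doctor $i$ cannot be at the water level, i.e. $\pi_k(h)[i] < b_{C,k}$, which forces $\pi_k(h)[i] = p_{h,k}[i]$: none of $i$'s round-$k$ mass at $h$ was rejected in round $k$ itself.

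The main obstacle is ruling out a rejection in an earlier round $k_0 < k$, and this is exactly where I would lean on \Cref{claim:max_prob_for_rejected} and the monotonicity $b_{C,m+1} \le b_{C,m}$ (for $m \ge k_0$) established in its proof. Arguing by contradiction, if $h$ rejected $i$ at round $k_0$, then $h$ is removed from $i$'s list, so $i$ proposes no new mass to $h$ afterwards and $p_{h,m}[i] = \pi_{m-1}(h)[i]$ for all $m > k_0$. A short induction then shows $\pi_m(h)[i] = \min\{b_{C,m},\, \pi_{m-1}(h)[i]\} = b_{C,m}$ for every $m \ge k_0$: at $k_0$ the rejected doctor sits at the water level, and at each later round the carried-over mass $\pi_{m-1}(h)[i] = b_{C,m-1} \ge b_{C,m}$ is again capped at $b_{C,m}$. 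In particular $\pi_k(h)[i] = b_{C,k} = \max_{d \in C}\pi_k(h)[d] \ge \pi_k(h)[j]$, contradicting $\pi_k(h)[j] > \pi_k(h)[i]$; the case $k_0 = k$ is the same observation without the induction. Thus no rejection of $i$ by $h$ can have occurred, which completes the reduction and hence the proof. The only delicate point is the bookkeeping that a rejected doctor's kept mass tracks the non-increasing water level downward, so that it remains pinned at $b_{C,m}$ in every subsequent round — precisely the content supplied by \Cref{claim:max_prob_for_rejected}.
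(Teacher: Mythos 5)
Your proof is correct and follows essentially the same route as the paper's: reduce the claim to showing that $h$ never rejected any of $i$'s probability mass through round $k$, and derive that from the water-level analysis behind Claim \ref{claim:max_prob_for_rejected}. If anything, your version is slightly more careful than the paper's one-line invocation of that claim, since you explicitly propagate $\pi_m(h)[i]=b_{C,m}$ forward from the rejection round to round $k$ (the literal statement of Claim \ref{claim:max_prob_for_rejected} only bounds things in terms of $i$'s mass at the rejection round itself), which is exactly the link needed to contradict $\pi_k(h)[j] > \pi_k(h)[i]$.
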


\begin{proof}
In \Cref{alg:GS_WA}, doctor $i$ will propose to hospitals less preferred than hospital $h$, only after being rejected by hospital $h$. Since $\pi_k(h)[j] > \pi_k(h)[i]$, we know from Claim \ref{claim:max_prob_for_rejected} that in every round $k' \le k$, no probability mass proposed by doctor $i$, was rejected by hospital $h$. Thus, doctor $i$ proposed only to hospitals ranked higher than hospital $h$ and the claim follows.
\end{proof}

\begin{claim}
\label{claim:pi_k_tau_piif}
Let $k$ be the number of rounds in the run of \Cref{alg:GS_WA} and $\pi_k$ be the allocation at the end of the $k$-th round, then $\pi_k$ is $\tau$-PIIF.
\end{claim}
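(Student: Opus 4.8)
The plan is to fix two doctors $i_1, i_2$ in a common cluster $C$ and exhibit, for each ordered pair, an alternative prospect $p^{i_1;i_2}$ that is $\tau$-close to $\pi_k(i_2)$ in total variation and is stochastically dominated by $\pi_k(i_1)$ under $i_1$'s preferences. Order the hospitals $h_1 \succ_{i_1} \cdots \succ_{i_1} h_n$ and work with the cumulative functions $F_1(r) = \Pr_{o\sim\pi_k(i_1)}[r_{i_1}^{-1}(o)\le r]$ and $F_2(r) = \Pr_{o\sim\pi_k(i_2)}[r_{i_1}^{-1}(o)\le r]$, both taken with respect to $i_1$'s ranking. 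The whole argument reduces to the single inequality $F_1(r) \ge F_2(r) - \tau$ for every rank $r$, followed by converting this bounded dominance gap into an actual alternative allocation.

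First I would bound the gap $F_2(r) - F_1(r)$. Recalling that $\pi_k(h)[i] = \pi_k(i)[h]$ is the common entry of the matrix $P$, I split the top-$r$ hospitals by whether $\pi_k(h_s)[i_1] \ge \pi_k(h_s)[i_2]$ or not. If $i_1$ receives at least as much mass as $i_2$ on each of $h_1,\dots,h_r$, then trivially $F_1(r) \ge F_2(r)$. Otherwise let $s^\ast \le r$ be the smallest rank with $\pi_k(h_{s^\ast})[i_1] < \pi_k(h_{s^\ast})[i_2]$; applying Claim \ref{claim:doc_alloc_better_in_support} with $h = h_{s^\ast}$, $j = i_2$, $i = i_1$ shows that the entire support of $\pi_k(i_1)$ lies among $\{h_1,\dots,h_{s^\ast-1}\}$. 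Hence all of $i_1$'s allocated mass already sits in the top $s^\ast-1$ ranks, so $F_1(r) = 1 - p_{i_1,k}$, while $F_2(r) \le 1 - p_{i_2,k}$. This yields $F_2(r) - F_1(r) \le p_{i_1,k} - p_{i_2,k} \le p_{i_1,k} \le \sum_{i\in\D} p_{i,k} \le \tau$, where the last inequality holds because the main loop has terminated. Combining both cases gives $F_1(r) \ge F_2(r) - \tau$ for every $r$.

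Next I would turn the bound $\max_r \bigl(F_2(r) - F_1(r)\bigr)^+ \le \tau$ into the required alternative prospect. Using the standard fact that the minimal total-variation perturbation of a distribution needed to make it stochastically dominated by a fixed distribution equals the maximal positive CDF gap, I can move at most $\tau$ of $\pi_k(i_2)$'s mass to less-preferred (or unallocated) outcomes, obtaining $p^{i_1;i_2}$ with $D_{TV}(p^{i_1;i_2}, \pi_k(i_2)) \le \tau$ and $\pi_k(i_1) \succeq_{i_1} p^{i_1;i_2}$; concretely one takes the distribution whose cumulative function is $\min(F_1,F_2)$, completed to a valid CDF. Since the roles of $i_1$ and $i_2$ are symmetric, the same construction handles the reversed ordered pair, and since $C$ and the pair were arbitrary, $\pi_k$ is $\tau$-PIIF.

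I expect the main obstacle to be the first step: correctly invoking Claim \ref{claim:doc_alloc_better_in_support} to pin down that a single deficit rank $s^\ast$ forces $i_1$'s entire allocated mass above it, and then accounting carefully for the residual free masses $p_{i_1,k}, p_{i_2,k}$ so that the dominance deficit is charged exactly to the global leftover mass $\sum_{i} p_{i,k} \le \tau$. The conversion step of the last paragraph is routine once the CDF inequality is in hand.
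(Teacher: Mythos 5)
Your proposal is correct and its core is the same as the paper's: identify the first rank $s^\ast$ at which $i_1$ receives less mass from a hospital than $i_2$, invoke Claim \ref{claim:doc_alloc_better_in_support} to conclude that all of $i_1$'s allocated mass then sits strictly above $s^\ast$, and charge the resulting dominance deficit to the residual free mass, which is at most $\tau$ at termination. The only real divergence is in how the alternative prospect is packaged. The paper simply takes $p^{i;j}$ to be $\pi_k(j)$ damped by a factor $(1-\tau)$ with the remaining $\tau$ sent to the unallocated outcome, and verifies dominance directly; you instead prove the cumulative inequality $F_1 \ge F_2 - \tau$ and then convert it into a prospect with CDF $\min(F_1,F_2)$. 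One caution on that last step: the ``standard fact'' you cite is false in general --- the minimal total-variation perturbation needed to make a distribution dominated by a fixed one equals the \emph{total positive variation} of the gap $\delta(r) = \bigl(F_2(r)-F_1(r)\bigr)^+$, not its maximum, and these differ whenever $\delta$ is non-monotone. Your construction is nonetheless valid here, but only because your own Case 2 analysis shows $\delta$ vanishes below $s^\ast$ and $F_1$ is constant from $s^\ast-1$ onward, so $\delta$ is non-decreasing and its total positive variation coincides with $\delta(n)\le\tau$. You should either make that monotonicity explicit or adopt the paper's damping construction, which sidesteps the issue entirely.
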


\begin{proof}
Let $i \in \D$ be a doctor in cluster $C \in \mathcal{C}$
and let $j \in \D$ be a doctor too. If $j \not\in C$ then $d(i,j) = 1$ and $p^{i;j} = \pi_k(i)$ satisfies the conditions.

Otherwise, $j \in C$ and $d(i,j) = 0$. Let's define 
$$
p^{i; j} = \begin{cases}
\pi_k(j), & \textit{w.p. } 1-\tau \\
unallocated, & \textit{w.p. } \tau
\end{cases}.
$$

Then, $D(p^{i; j}, \pi_k(j)) \le \tau$, and it is left to show that $\pi_k(i) \succeq_{i} p^{i; j}$. We need to show that
$$
\forall r \in [n]: \Pr_{o\sim \pi_k(i)}[r_i^{-1}(o) \le r] \ge \Pr_{o\sim p^{i;j}}[r_i^{-1}(o) \le r].
$$

Denote by $h_r$ the hospital that is ranked $r$-th by doctor $i$, i.e. $r_i(r) = h_r$.
Fix a rank $r \in [n]$.

If for every rank $r' \le r$, $\Pr[\pi_k(i) = h_{r'}] \ge \Pr[p^{i;j} = h_{r'}]$ then $$
\Pr_{o\sim \pi_k(i)}[r_i^{-1}(o) \le r] = \sum_{r'\le r}\Pr[\pi_k(i) = h_{r'}] \ge \sum_{r' \le r}\Pr[p^{i;j} = h_{r'}] = \Pr_{o\sim p^{i;j}}[r_i^{-1}(o) \le r].
$$
Otherwise, there exists some rank $r' \le r$ such that 
$$\Pr[\pi_k(i) = h_{r'}] < \Pr[p^{i;j} = h_{r'}] = (1-\tau) \Pr[\pi_k(j) = h_{r'}] \le \Pr[\pi_k(j) = h_{r'}].$$

From Claim \ref{claim:doc_alloc_better_in_support}, this implies that for every hospital $h'$ in the support of $\pi_k(i)$, $h' \succ_{i} h_{r'}$.
Since at the last round the free probability mass is at most $\tau$, this means that 
$$1 - \tau \le \Pr_{o\sim \pi_k(i)}[r_i^{-1}(o) \le r']$$
and since $r' \le r$
$$1 - \tau \le \Pr_{o\sim \pi_k(i)}[r_i^{-1}(o) \le r].$$

On the other hand, $\Pr_{o\sim p^{i;j}}[r_i^{-1}(o) \le r] \le \Pr[p^{i;j}\text{ allocated}] \le 1 - \tau$.

This implies that, 
$$
\Pr_{o\sim \pi_k(i)}[r_i^{-1}(o) \le r] \ge \Pr_{o\sim p^{i;j}}[r_i^{-1}(o) \le r]. 
$$

This means that $\pi_k(i) \succ_{i} p^{i;j}$ and the claim follows.
\end{proof}

Now we are set to prove Lemma \ref{lemma:GS_WA_tau_piff}.

\begin{proof}[Proof of Lemma \ref{lemma:GS_WA_tau_piff}]
Let $k$ be the number of rounds in the run of \Cref{alg:GS_WA}. From Claim \ref{claim:pi_k_tau_piif}, $\pi_k$ is $\tau$-PIIF.
Let $i,j \in \D$ be doctors such that $d(i,j) = 0$, there exists $p^{i;j}$ such that $D(p^{i;j}, \pi_k(j))\le \tau$ and $\pi_k(i) \succ_{i} p^{i;j}$.

Note that the allocation $\pi_k$ is not the final allocation since some probability mass might not be assigned yet. We still need to allocate this unassigned probability mass, and we have no guarantee over the assignment of the free probability mass.

Let $\pi$ be the final allocation obtained by the algorithm.
Since the remaining free probability mass at the end of round $k$ is at most $\tau$ we know that $D(\pi(j), \pi_k(j)) \le \tau$. Thus, from the triangle inequality, 
$$D(p^{i;j}, \pi(j))\le D(p^{i;j}, \pi_k(j)) +D(\pi_k(j), \pi(j))  \le 2\tau.$$

Since in the second part of the algorithm, only unallocated mass is being allocated to hospitals, we know that 
$$
\forall h \in H: \Pr[\pi(i) = h] \ge \Pr[\pi_k(i) = h].
$$

Denote by $h_r$ the hospital that is ranked $r$-th by dcotor $i$. Fix a rank $r \in [n]$.
$$
\Pr_{o\sim \pi(i)}[r_i^{-1}(o) \le r] = \sum_{r' \le r}\Pr[\pi(i) = h_{r'}] \ge \sum_{r' \le r}\Pr[\pi_k(i) = h_{r'}] = \Pr_{o\sim \pi_k(i)}[r_i^{-1}(o) \le r].
$$

Thus, $\pi(i) \succ_{i} \pi_k(i) \succ_{i} p^{i;j}$ and the claim follows.
\end{proof}
\subsubsection{Contract Stability}
Next, we prove that \Cref{alg:GS_WA} is $\tau$-contract stable.

\begin{lemma}
\label{lemma:gs_wa_contract_stable}
The outcome of \Cref{alg:GS_WA} is $\tau$-contract stable.
\end{lemma}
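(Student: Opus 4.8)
The plan is to mirror the structure of the proof of Lemma \ref{lemma:gs_psp_contract_stability}, but to use the monotonicity established for the doctor-proposing algorithm inside Claim \ref{claim:max_prob_for_rejected}. The single fact I want to extract and reuse is: once hospital $h$ rejects positive probability mass from some doctor in cluster $C$ at a round $k$, then in every round $m \ge k$ hospital $h$ assigns no probability mass to any cluster ranked strictly below $C$. This is exactly the invariant proved in steps \ref{base_claim} and \ref{step_claim} of Claim \ref{claim:max_prob_for_rejected}, which show $\Pr[\pi_m(h) \in \mathcal{S}_h^{>C}] = 1 - M_{C,m}$ for all $m \ge k$, i.e.\ all of $h$'s allocated mass stays in $\mathcal{S}_h^{>C} \cup \{C\}$. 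I would either restate this as a standalone sub-claim or cite those steps directly.

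First I would show that at the end of the first part of the algorithm, say round $m^\ast$ with allocation $\pi_{m^\ast}$, there are no active contracts (ignoring the still-unallocated mass). Suppose towards contradiction that $\mu = (h, i; h', i')$ is active at round $m^\ast$, so that $\Pr[\pi_{m^\ast}(h) = i' \wedge \pi_{m^\ast}(i) = h'] > 0$. By Definition \ref{def:contract}, $h \succ_i h'$ and $d(i,i') = 1$, so $i$ and $i'$ lie in distinct clusters $C \ni i$ and $C' \ni i'$; moreover $\pi^\mu(h) \succ_h \pi(h)$, which under strict IF preferences (where $h$ is indifferent within each cluster) forces $C \succ_h C'$. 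Since $\pi_{m^\ast}(i) = h'$ occurs with positive probability and $h \succ_i h'$, doctor $i$ must have proposed its free mass to $h$ and had some of it rejected before ever proposing below $h$ --- in Algorithm \ref{alg:GS_WA} a doctor only proposes to a less preferred hospital after that hospital is removed from its list, which happens precisely upon rejection. Let $k$ be such a rejection round. By the extracted monotonicity, for every $m \ge k$ hospital $h$ assigns no mass to clusters ranked below $C$; as $C' \prec_h C$, this yields $\Pr[\pi_{m^\ast}(h) \in C'] = 0$, hence $\Pr[\pi_{m^\ast}(h) = i'] = 0$, a contradiction.

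Finally I would account for the second (mass-distribution) part. The while loop terminates once $\sum_{i \in \D} p_i \le \tau$, so the total probability mass allocated in step \ref{step:alloc_prob_mass} is at most $\tau$, and this step only adds mass to previously unfilled hospital slots without moving any first-part mass downward. For any active contract $(h,i;h',i') \in S_\pi$ in the final allocation, the event $\pi(h) = i'$ requires $h$ to be matched to the below-level cluster $C'$, which has probability $0$ under $\pi_{m^\ast}$; hence all of this probability is contributed by the second part. Every such active-contract event therefore forces some hospital to be matched to a cluster strictly below the level it filled in the first part, and the probability of this single ``below-level'' event is bounded by the total second-part mass. Thus
\begin{equation*}
\Pr\Big[\bigvee_{(h,i;h',i') \in S_\pi} \pi(h) = i'\Big] \le \tau,
\end{equation*}
and consequently
\begin{equation*}
\Pr\Big[\bigvee_{(h,i;h',i') \in S_\pi} \big(\pi(h) = i' \wedge \pi(h') = i\big)\Big] \le \tau,
\end{equation*}
which is precisely $\tau$-contract stability.

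I expect the main obstacle to be making the extracted monotonicity fully rigorous: Claim \ref{claim:max_prob_for_rejected} is phrased in terms of the per-doctor masses $\pi_k(h)[i]$ rather than directly asserting that no mass leaks to lower-ranked clusters, so I would carefully re-invoke its internal invariant $\Pr[\pi_m(h) \in \mathcal{S}_h^{>C}] = 1 - M_{C,m}$. A secondary subtlety is the final bound: because distinct contracts may share the same pair $(h,i')$, I would avoid a naive union bound and instead argue, as above, that all active-contract events are subsumed by the single event ``some hospital is matched to a cluster below its filled level'', whose probability equals the mass introduced in the second part and is hence at most $\tau$.
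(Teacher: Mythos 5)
Your overall strategy matches the paper's: show that no contract can be active in the allocation produced by the proposing rounds, then charge any residual active-contract probability to the at-most-$\tau$ mass distributed in step \ref{step:alloc_prob_mass}. Your choice of monotonicity tool is a legitimate variant: the paper proves a dedicated Claim \ref{claim:doc_first_alg_SD_hospitals} ($\pi_k(h) \succ_h \pi_{k-1}(h)$) and combines it with the fact that a rejection at round $k'$ forces $\Pr_{o\sim \pi_{k'}(h)}[r_h^{-1}(o) \le r_h^{-1}(C)] = 1$, whereas you extract the invariant $\Pr[\pi_m(h) \in \mathcal{S}_h^{>C}] = 1 - M_{C,m}$ from inside Claim \ref{claim:max_prob_for_rejected}; these deliver the same conclusion (no mass of $h$ ever sits below $C$ after a rejection of $C$'s mass), so either route is fine, though restating it as a standalone sub-claim, as you suggest, would be cleaner than citing internal steps of another proof.

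The one step that does not go through as written is the final accounting. You assert that for \emph{every} contract active in the final allocation, $\Pr[\pi_{m^*}(h) = i'] = 0$, so that all of $\Pr[\pi(h)=i']$ is second-phase mass. But your derivation that $i$ was rejected by $h$ relies on $\Pr[\pi_{m^*}(i) = h'] > 0$; if instead the pairing $\pi(i) = h'$ arises \emph{only} from the second-phase distribution of $i$'s leftover free mass, then $i$ may never have been rejected by $h$, and $h$ can carry first-phase mass on cluster $C'$, so $\Pr[\pi(h)=i']$ need not be small. In that case the joint event is controlled not through $h$ but through $i$: $\Pr[\pi(h)=i' \wedge \pi(i)=h'] \le \Pr[\pi(i)=h']$, which is bounded by the second-phase mass given to $i$. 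This is exactly how the paper closes the argument — the residual joint probability is charged to second-phase mass allocated to \emph{either} $i$ \emph{or} $h$, and the total such mass is at most $\tau$. The fix is one sentence, but as stated your bound $\Pr[\bigvee \pi(h)=i'] \le \tau$ is not justified in this case.
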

 
 In order to prove Lemma \ref{lemma:gs_wa_contract_stable}, we first prove the following claim:
 
\begin{claim}
\label{claim:doc_first_alg_SD_hospitals}
Let $h \in \H$ be a hospital and $k > 1$ be a round, $\pi_k(h) \succ_h \pi_{k-1}(h)$.
\end{claim}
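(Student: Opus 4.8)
The plan is to reduce stochastic domination with respect to the (strictly IF) hospital preferences to a statement about cumulative mass over cluster-prefixes, and then to track how these cumulative masses evolve through the rising-tide allocation from one round to the next. Write $C_1 \succ_h C_2 \succ_h \cdots \succ_h C_L$ for the clusters ordered by $h$'s (deterministic) ranking of clusters, and for a prospect $p$ let $p(C) = \Pr_{o \sim p}[o \in C]$ denote the mass $p$ places on cluster $C$, with prefix masses $F_s = \sum_{s' \le s} p(C_{s'})$. First I would record a simple reduction: because $h$ is strictly IF, within each cluster its realized order is a uniform random permutation and $h$ is indifferent between its members, so for any rank $r$ the probability $\Pr_{o\sim p, r_h}[r_h^{-1}(o) \le r]$ equals a convex combination $F_{t-1}(1-\lambda) + F_t \lambda$ of two consecutive prefix masses, where $C_t$ is the ``boundary'' cluster straddling rank $r$ and $\lambda \in [0,1]$. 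This is nondecreasing in every $F_s$, so if the cluster-prefix masses of one prospect dominate those of another, the first weakly dominates the second under $\succeq_h$. It therefore suffices to show that for every $t$, the prefix mass $\sum_{s \le t} \pi_k(h)(C_s)$ of the round-$k$ prospect is at least the corresponding prefix mass of $\pi_{k-1}(h)$ (this yields $\pi_k(h)\succeq_h\pi_{k-1}(h)$, the weak domination the claim asserts, in line with the analogous doctor-side Claim~\ref{claim:PS_SD}).

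The key structural fact I would establish next is the ``deferred-acceptance'' monotonicity of the proposals: for every doctor $i$, $p_{h,k}[i] \ge \pi_{k-1}(h)[i]$. This can be read directly off Algorithm~\ref{alg:GS_WA}: at the end of round $k-1$ the algorithm subtracts from $p_h[i]$ exactly the rejected mass $s = p_h[i] - \pi_{k-1}(h)[i]$, leaving $p_h[i] = \pi_{k-1}(h)[i]$; entering round $k$, doctor $i$ can only add free mass to $p_h[i]$. Hence the round-$k$ capacity of each cluster, $\mathrm{cap}_k(C) = \sum_{i\in C} p_{h,k}[i]$, satisfies $\mathrm{cap}_k(C) \ge \sum_{i \in C}\pi_{k-1}(h)[i] = \pi_{k-1}(h)(C)$, the mass $h$ allocated to $C$ in round $k-1$.

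The second ingredient is a closed form for the rising-tide output. I would show, by analysing Algorithm~\ref{alg:water_algorithm}, that the total mass it places on cluster $C_s$ is $\min(w_s, \mathrm{cap}(C_s))$, where $w_s$ is the water remaining on reaching $C_s$; unrolling the recursion $w_{s+1} = \max(0, w_s - \mathrm{cap}(C_s))$ with $w_1 = 1$ gives the identity that the cumulative mass over the top-$t$ clusters equals $\min\!\bigl(1, \sum_{s \le t}\mathrm{cap}(C_s)\bigr)$. Applying this in both rounds and combining with the monotonicity above yields, for every $t$,
\begin{equation*}
\sum_{s\le t}\pi_k(h)(C_s) = \min\!\Bigl(1, \sum_{s\le t}\mathrm{cap}_k(C_s)\Bigr) \ge \min\!\Bigl(1, \sum_{s\le t}\pi_{k-1}(h)(C_s)\Bigr) = \sum_{s\le t}\pi_{k-1}(h)(C_s),
\end{equation*}
where the last equality is the rising-tide identity in round $k-1$ and the fact that the prefix masses of $\pi_{k-1}(h)$ are already at most $1$. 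By the reduction in the first paragraph, this establishes $\pi_k(h) \succeq_h \pi_{k-1}(h)$.

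I expect the main obstacle to be the careful verification of the rising-tide identity $\min\bigl(1, \sum_{s\le t}\mathrm{cap}(C_s)\bigr)$ — in particular confirming that Algorithm~\ref{alg:water_algorithm} pours exactly $\min(\text{remaining water},\mathrm{cap}(C))$ into each cluster before advancing to the next, independently of how the within-cluster water level splits that mass among individual doctors (irrelevant here, since $h$ is indifferent within a cluster). The proposal monotonicity is the conceptual heart but is essentially immediate from the bookkeeping in the pseudocode; the only point requiring attention is that a \emph{partial} rejection of $i$ by $h$ removes $h$ from $i$'s preference list yet retains the accepted mass $\pi_{k-1}(h)[i]$ inside $p_h[i]$, so the inequality $p_{h,k}[i]\ge \pi_{k-1}(h)[i]$ is never violated.
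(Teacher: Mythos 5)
Your proof is correct, and it rests on the same two pillars as the paper's own argument: (i) the mass that $h$ accepted in round $k-1$ is automatically re-proposed in round $k$, so each cluster's capacity at $h$ weakly increases, and (ii) this forces the cluster-prefix masses of $\pi_k(h)$ to dominate those of $\pi_{k-1}(h)$. Where you differ is in how step (ii) is executed. The paper argues by cases: if some cluster-prefix mass at a rank $r'\le r$ decreased, then $h$ must have rejected mass proposed by that cluster in round $k$, and by the rising-tide procedure this happens only when $h$ is already fully allocated to clusters ranked at most $r'$, so the prefix probability equals $1$ and dominates trivially. You instead derive the exact identity $\sum_{s\le t}\Pr[\pi_k(h)\in C_s]=\min\bigl(1,\sum_{s\le t}\mathrm{cap}_k(C_s)\bigr)$ and conclude by monotonicity of $\min(1,\cdot)$; this is a sharper statement that subsumes the paper's case analysis at the cost of one extra verification of the rising-tide bookkeeping. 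You also make explicit a step the paper leaves implicit: translating domination of cluster-prefix masses into doctor-rank stochastic domination in the sense of Definition~\ref{def:stochastic_domination}, via the convex-combination formula $F_{t-1}(1-\lambda)+F_t\lambda$ valid for strictly IF preferences. Finally, both proofs in fact establish only weak domination $\succeq_h$; the $\succ_h$ in the claim statement is a notational slip in the paper, which you correctly flag and which is harmless since only weak domination is used downstream.
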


\begin{proof}
Denote by $C_r$ the cluster that is ranked $r$-th by hospital $h$.

Fix a rank $r \in [|\mathcal{C}|]$.
If for every rank $r' \le r$, $\Pr[\pi_k(h) \in C_{r'}] \ge \Pr[\pi_{k-1}(h) \in C_{r'}]$ then 
$$
\Pr_{o\sim \pi_k(h)}[r_h^{-1}(o) \le r] \ge \Pr_{o\sim \pi_{k-1}(h)}[r_h^{-1}(o) \le r].
$$

Otherwise, there exists some rank $r' \le r$ such that $\Pr[\pi_k(h) \in C_{r'}] < \Pr[\pi_{k-1}(h) \in C_{r'}]$.

By \Cref{alg:water_algorithm}, if probability mass $\Pr[\pi_{k-1}(h) \in C_{r'}]$ is assigned to doctors in $C_{r'}$ in round $k-1$, then these doctors will propose this probability mass to $h$ in round $k$. Thus, if $\Pr[\pi_k(h) \in C_{r'}] < \Pr[\pi_{k-1}(h) \in C_{r'}]$,
then $h$ rejected probability mass of $C_r'$, and this can happen only if $\Pr_{o\sim \pi_k(h)}[r_h^{-1}(o) \le r'] = 1$. Thus,
$$
\Pr_{o\sim \pi_k(h)}[r_h^{-1}(o) \le r] \ge \Pr_{o\sim \pi_k(h)}[r_h^{-1}(o) \le r'] = 1 \ge \Pr_{o\sim \pi_k(h)}[r_h^{-1}(o) \le r].
$$

Thus, for every rank $r \in [|\mathcal{C}|],$
$$
\Pr_{o\sim \pi_k(h)}[r_h^{-1}(o) \le r] \ge \Pr_{o\sim \pi_{k-1}(h)}[r_h^{-1}(o) \le r]
$$
and $\pi_k(h) \succ_h \pi_{k-1}(h)$.
\end{proof} 

\begin{proof}[Proof of Lemma \ref{lemma:gs_wa_contract_stable}]
Let $h, h' \in \H$ be hospitals, $i, i' \in \D$ be doctors and $C, C' \in \mathcal{C}$ be clusters such that $i \in C, i' \in C'$ and $i \succ_h i', h \succ_i h'$.

Fix a round $k$. There are two options:
\begin{enumerate}
    \item No probability mass from cluster $C$ was rejected in any round $k' \le k$. Then, by \Cref{alg:GS_WA}, this means that for every hospital $h'' \in \H$ in the support of the prospect $\pi_k(i)$ satisfies that $h'' \succ_i h$. Since for any hospital $x \prec_d h$, doctor $i$ would have proposed to hospital $x$ only after being rejected from hospital $h$ and we assumed that this did not happen.
    Thus,
    $$
    \Pr[\pi_k(i) = h'] = 0 \Rightarrow \Pr[\pi_k(i) = h' \wedge \pi_k(h) = i'] = 0
    $$
    \item For some doctor $j \in C$, some probability mass of doctor $j$ was rejected in round $k' \le k$.
    Thus, by \Cref{alg:water_algorithm}, in round $k'$, $\Pr_{o\sim \pi_{k'}(h)}[r_h^{-1}(o) \le r_h^{-1}(C)] = 1$. From Claim \ref{claim:doc_first_alg_SD_hospitals}, $\pi_k(h) \succ_h \pi_{k'}(h)$, so $\Pr_{o\sim \pi_{k}(h)}[r_h^{-1}(o) \le r_h^{-1}(C)] = 1$ which implies that
    $$
    \Pr[\pi_k(h) = i'] = 0 \Rightarrow \Pr[\pi_k(i) = h'\wedge \pi_k(h) = i'] = 0
    $$
\end{enumerate}
Thus, if $\Pr[\pi(i) = h'\wedge \pi(h) = i'] > 0$ and the contract $(h, i ;h', i')$ is active, it means that this probability mass was allocated to $i$ or $h$ (or both) at the final allocation step. 

Denote by $S_\pi \in \D \times \H \times \D \times \H$ the set of all active contracts in the allocation $\pi$.
Since the total amount of mass that is being allocated in the second part of the algorithm is bounded by $\tau$, we have that 
$$
\Pr[\bigvee_{(h, i ;h', i') \in S_\pi}(\pi(h) =i'\wedge \pi(h') = i)] \le \tau.
$$
\end{proof}

\subsubsection{Algorithm \ref{alg:GS_WA} Converges}

\begin{theorem}
\label{thm:gs_wa_stop}
Algorithm \ref{alg:GS_WA} terminates after $O(n^2/\tau)$ rounds.
\end{theorem}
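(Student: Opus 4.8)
The plan is to mirror the potential-function argument used for the hospitals-first variant in Theorem~\ref{thm:gs_psp_stop}, but now tracking the mass each \emph{doctor} can still propose to each \emph{hospital}. Let $k^*$ denote the total number of rounds. For a doctor $i \in \D$, a hospital $h \in \H$ and a round $k \le k^*$, let $a_{i,h,k}$ denote the available probability mass that doctor $i$ may still propose to hospital $h$ from round $k$ onward: concretely $a_{i,h,k} = 1 - T_{i,h,k}$ if $h$ is still on $i$'s preference list at the start of round $k$, where $T_{i,h,k}$ is the total mass $i$ has proposed to $h$ in rounds $1,\dots,k-1$, and $a_{i,h,k} = 0$ once $h$ has been removed from $i$'s list. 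Since $T_{i,h,1}=0$, the potential $\Phi_k = \sum_{i \in \D}\sum_{h \in \H} a_{i,h,k}$ starts at $\Phi_1 = n^2$.

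I would then establish the following per-round behavior, exactly paralleling the three observations in the proof of Theorem~\ref{thm:gs_psp_stop}. First, if doctor $i$ does not propose to hospital $h$ in round $k$, then $T_{i,h,k+1} = T_{i,h,k}$, so $a_{i,h,k+1} = a_{i,h,k}$ whenever $h$ stays on $i$'s list, and $a_{i,h,k+1} = 0 \le a_{i,h,k}$ if $h$ is removed during round $k$. Second, once hospital $h$ rejects any of doctor $i$'s mass it is removed from $i$'s list by \Cref{alg:GS_WA}, so $a_{i,h,k'} = 0$ for all later $k'$. Third, if $h$ is the highest ranked hospital to which $i$ proposes its free mass $p_{i,k}$ in round $k$, then $T_{i,h,k+1} = T_{i,h,k} + p_{i,k}$, which gives $a_{i,h,k+1} \le a_{i,h,k} - p_{i,k}$ whether or not $h$ rejects part of the proposal. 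Consequently $0 \le a_{i,h,k+1} \le a_{i,h,k}$ for every pair, and for the single hospital to which each doctor proposes its free mass the available mass drops by at least $p_{i,k}$.

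Summing over all doctors and hospitals, and using that each doctor proposes all of its free mass to a single hospital in each round while $\sum_{i \in \D} p_{i,k} = p_k > \tau$ (the loop condition), I obtain
\begin{equation*}
\Phi_{k+1} \;\le\; \Phi_k - \sum_{i \in \D} p_{i,k} \;=\; \Phi_k - p_k \;<\; \Phi_k - \tau .
\end{equation*}
Since $\Phi_k \ge 0$ always and $\Phi_1 = n^2$, this forces $k^* < n^2/\tau$, i.e.\ the algorithm terminates after $O(n^2/\tau)$ rounds.

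The main obstacle is justifying that $a_{i,h,k}$ is nonnegative, i.e.\ that $T_{i,h,k} \le 1$ while $h$ remains on $i$'s list (which is also what makes the inequality $a_{i,h,k}\ge p_{i,k}$ needed in the rejecting case of the third observation hold). Unlike the hospitals-first variant, here a hospital \emph{can} displace previously accepted (``held'') mass when a more preferred cluster proposes. The structural fact that rescues the argument is that in \Cref{alg:GS_WA} any reduction of $i$'s held mass at $h$ is recorded as a rejection ($s>0$), which immediately removes $h$ from $i$'s list; hence, as long as $h$ is still on $i$'s list, none of $i$'s mass ever leaves $h$, so the total mass $i$ has proposed to $h$ equals the mass currently held there and is at most $1$. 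Establishing this invariant --- that each unit of a doctor's mass is held in at most one place at a time and leaves a hospital only via a list-removing rejection, so mass is never re-proposed to a hospital still on the list --- is the crux; once it is in place, the potential computation above is routine.
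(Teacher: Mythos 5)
Your proposal is correct and follows essentially the same route as the paper's proof: the same potential $\sum_{i,h} a_{i,h,k}$ starting at $n^2$, the same three per-round observations, and the same conclusion that the potential drops by more than $\tau$ per round. The only difference is that you make the definition of $a_{i,h,k}$ and the nonnegativity invariant explicit, which the paper leaves implicit; this is a minor tightening, not a different argument.
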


\begin{proof}
For a doctor $i\in \D$, denote by $p_{i,k}$ doctor $i$'s free probability in round $k$. Denote by $p_k$ the total free probability at round $k$, i.e. $p_k = \sum_{i\in \D}p_{i,k}$. 
For any round $k$, before the algorithm terminates, $p_k > \tau$.

Let $k^*$ be the total number of rounds in the algorithm.
For doctor $i \in \D$, hospital $h \in \H$ and round $k \le k^*$, denote by $a_{h,i,k}$ the available probability mass that doctor $i$ can propose to hospital $h$ in round $k$. At round $1$, for every doctor $i \in \D$ and hospital $h \in \H$, $a_{i,h,1} = 1$.

An important observation for this proof is that for every hospital $h \in \H$, doctor $i \in \D$ and round $k\ge1$ we have that $0 \le a_{i,h,k+1} \le a_{i,h,k}$, and for at least one hospital $h^*$, $0 \le a_{i,h^*,k+1} \le a_{i,h^*,k} - p_{i,k}$. We show this observation by dividing into cases:
\begin{enumerate}
    \item $p_{i,k} = 0$. Then, $a_{i, h,k+1} = a_{i,h, k}$.
    \item Hospital $h$ rejected some probability mass from doctor $i$ at some round $k' \le k$. Then $a_{i,h,k''} = 0$ for every $k'' \ge k'$ and specifically $a_{i,h,k} = a_{i,h,k+1} = 0$.
    \item Hospital $h$ is the hospitals that doctor $i$ proposed to in round $k$. If hospital $h$ accepted all the probability mass that doctor $i$ proposed then $a_{i,h,k+1} = a_{i,h,k} - p_{i,k}$, otherwise $a_{i,h,k} = 0$. Since doctor $i$ proposed to hospital $h$, we know that $a_{i,h,k} \ge p_{i,k}$. Thus, either way in this case $a_{i,h,k+1} \le a_{i,h,k} - p_{i,k}$.
\end{enumerate}
Combining the above with the fact that while $p_k > \tau$, some doctor proposes to some hospital in round $k$, we get that the observation is true.
Summing over all doctors and hospitals we get that
$$
0 \le \sum_{i\in \D}\sum_{h \in \H} a_{i,h,k} \le 
\sum_{i\in \D}\sum_{h \in \H} a_{i,h,k-1} - \sum_{i\in \D}p_{i,k-1} =  
\sum_{i\in \D}\sum_{h \in \H} a_{i,h,k-1} - p_{k-1} \overset{p_{k-1} > \tau}{<}
$$ 
$$
\sum_{i\in \D}\sum_{h \in \H} a_{i,h,k-1} - \tau < ...< n^2 - k \cdot \tau
$$

Which implies that $k < n^2 / \tau$ for every round in the run of the algorithm. This implies that the number of rounds, $k^*$ is final and smaller than $n^2 / \tau$.
\end{proof}

Next, we show that for this variant too, if we require exact contract stability, Algorithm \ref{alg:GS_WA} might not converge.

\begin{claim}
For $\tau = 0$, the number of iterations in \Cref{alg:GS_WA} might be unbounded.
\end{claim}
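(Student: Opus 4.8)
The plan is to mirror the non-termination proof already given for Algorithm~\ref{alg:GS_PSP}: I would exhibit a single explicit instance---four doctors forming one two-element cluster $i=\{i_1,i_2\}$ together with two singletons $j,k$, four hospitals $A,B,C,D$, strictly individually fair hospital preferences, and deterministic doctor preferences---on which Algorithm~\ref{alg:GS_WA} never empties the free probability mass. The instance must be genuinely new: the four-doctor example used for Algorithm~\ref{alg:GS_PSP} is already a stable matching when the \emph{doctors} propose (each doctor obtains its top choice in round one), so the doctor-first dynamics terminate on it immediately.

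The core of the argument is to find a period $p$ and a scale factor $\rho\in(0,1)$ such that the configuration at round $k+p$ is an exact $\rho$-scaled copy of the configuration at round $k$: the same doctors hold free mass at the same positions in their (truncated) preference lists, the same combinatorial pattern of which clusters fill each hospital recurs, and every free and parked quantity is multiplied by $\rho$. The engine producing $\rho<1$ is the two-element cluster. Whenever mass re-enters a hospital at which cluster $i$ is parked and a more-preferred doctor displaces it, the rising-tide subroutine (Algorithm~\ref{alg:water_algorithm}) splits the displacement equally between $i_1$ and $i_2$; I would route the preferences so that one half is permanently absorbed by a hospital that still has room, while the other half is fed back toward the cluster bottleneck at reduced magnitude. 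This is precisely the mechanism behind the halving in the Algorithm~\ref{alg:GS_PSP} example---there the empty cluster $k$ played the role of the absorbing slot, whereas here the absorbing slot is a hospital, since the roles of proposer and acceptor are swapped.

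Concretely I would simulate the first several rounds, tracking for every hospital the proposed-mass vector $p_h$, the rising-tide output $\pi_k(h)$, and the rejected mass $s$ that is returned to each doctor and sent one step down its list, until I identify a block of $p$ consecutive rounds that recurs at scale $\rho$. Once the base period is verified by direct computation, I would invoke the scale-invariance of Algorithm~\ref{alg:water_algorithm}: if all entries of $p_h$ are multiplied by $\rho$ and the relative order of which doctors are water-limited versus proposal-limited is unchanged, then $\pi_k(h)$ and $s$ scale by $\rho$ as well. Hence the recurrence propagates for all $k$, and the total free mass at round $k$ equals $\rho^{\lfloor k/p\rfloor}$ times a positive constant, which is strictly positive for every finite $k$. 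Therefore, with $\tau=0$, the loop guard $\sum_{i\in\D}p_i>0$ in Algorithm~\ref{alg:GS_WA} is never falsified, and the number of iterations is unbounded.

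The main obstacle is the construction itself rather than the concluding argument. I must choose the doctor and hospital preferences so that three conditions hold simultaneously: the freed mass is routed in a closed cycle back to the two-element cluster; a constant fraction is absorbed permanently each period so that the total free mass genuinely decays geometrically (and does not merely walk down the preference lists at constant magnitude, in which case some doctor would exhaust its list and the algorithm would terminate); and no doctor ever needs to propose to a hospital that has already rejected it. Verifying the rising-tide allocation at each hospital in each round---especially the displacement step, where a re-proposed doctor evicts the parked mass of a less-preferred cluster---is the delicate, calculation-heavy part, and getting the absorbing slot to remove exactly a constant fraction each period is the crux of making the geometric decay (rather than termination) occur.
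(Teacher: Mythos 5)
Your strategy is exactly the paper's: exhibit a concrete instance on which the free mass enters a periodic cycle that shrinks by a fixed factor $\rho<1$ each period, then argue by scale-invariance of the rising-tide subroutine that the cycle repeats forever, so the loop guard is never falsified for $\tau=0$. You also correctly identify the driving mechanism (the two-element cluster is split at a bottleneck hospital; one rejected half is permanently absorbed, the other half displaces a more-preferred singleton, whose displaced mass returns and re-splits the cluster) and correctly observe that the instance from the \Cref{alg:GS_PSP} non-termination proof cannot be reused, since there every doctor's top choice is distinct and the doctor-proposing dynamics terminate in one round. Your closing scale-invariance argument is, if anything, stated more carefully than the paper's ``round $3$ is identical to round $1$ with half the mass.''

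The gap is that you never produce the instance, and for an existence claim of this form the instance \emph{is} the proof. You describe its intended shape (a two-element cluster, two singletons, four hospitals) and then explicitly defer the construction, calling it ``the crux''; as written there is no verification that the cycle actually closes, that a constant fraction is genuinely absorbed each period, or that no doctor runs off the end of its preference list. For the record, the paper realizes your mechanism with an even smaller instance: three doctors $i_1,i_2$ (one cluster) and $j$ (a singleton), three hospitals, with $A\succ_{i_1}B\succ_{i_1}C$, $A\succ_{i_2}C\succ_{i_2}B$, $C\succ_j A\succ_j B$, and hospital preferences $j\succ_A i$, $i\succ_B j$, $i\succ_C j$. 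Round $1$: $i_1,i_2$ each park $1/2$ at $A$ and each have $1/2$ rejected; $j$ parks $1$ at $C$. Round $2$: $i_1$'s $1/2$ is absorbed at $B$, while $i_2$'s $1/2$ displaces $1/2$ of $j$ at $C$. Round $3$: $j$'s freed $1/2$ goes to $A$, which prefers $j$ and therefore rejects $1/4$ from each of $i_1,i_2$ --- the round-$1$ configuration at half scale, giving period $2$ and $\rho=1/2$. Your plan is realizable, but until an instance like this is written down and its rounds checked, the claim is not proved.
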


\begin{proof}
Consider the following example:
Let $\D = \set{i_1, i_2, j}$ be the set of doctors, $\H = \set{A, B, C}$ be the set of hospitals. The doctors $i_1$ and $i_2$ are in the same cluster $i$, and doctor $j$ is in a different cluster $j$.

The preference of the doctors are: 
$$
A \succeq_{i_1} B \succeq_{i_1} C, \quad A \succeq_{i_2} C \succeq_{i_2} B, \quad C \succeq_j A \succeq_j B,
$$
and the preferences of the hospitals are
$$
j \succeq_A i, \quad i \succeq_B j, \quad i \succeq_C j.
$$

Let's simulate a run of \Cref{alg:GS_WA} over these two sets of individuals.

\textbf{Round 1:} Doctors $i_1$ and $i_2$ proposes 1 to hospital $A$, doctor $j$ proposes $1$ to hospital $C$.

$\pi_1(A) = \set{i_1: 1/2, i_2: 1/2}$, 
$\pi_1(B) = \set{unallocated: 1}$, 
$\pi_1(C) = \set{k :1 }$.

Probability mass $1/2$ of doctor $i_1$ and probability mass $1/2$ of doctor $i_2$ rejected by $A$.

\textbf{Round 2:} Doctor $i_1$ proposes $1/2$ to hospital $B$, doctor $i_2$ proposes $1/2$ to hospital $C$.

$\pi_2(A) = \set{i_1: 1/2, i_2: 1/2}$, 
$\pi_2(B) = \set{i_1: 1/2, unallocated: 1/2}$, 

$\pi_2(C) = \set{i_2:1/2, k :1/2}$.

Probability mass $1/2$ of doctor $k$ rejected by hospital $C$.

\textbf{Round 3:} Doctor $k$ proposes $1/2$ to hospital $A$.

$\pi_3(A) = \set{i_1: 1/4, i_2: 1/4, k: 1/2}$,
$\pi_3(B) = \set{i_1: 1/2, unallocated: 1/2}$, 

$\pi_3(C) = \set{i_2:1/2, k :1/2}$.

Probability mass  $1/4$ of doctor $i_1$ and probability mass $1/4$ of doctor $i_2$ rejected by hospital $A$.

Round 3 is identical to round 1 with half the probability mass. Therefore, this cycle repeats itself indefinitely and the total free probability mass is reduced by a factor of 2 every two rounds.
\end{proof}

\subsection{Comparison Between the Two Algorithms}
\label{sec:optimality}

We present two variants of algorithms with almost the same fairness and stability guarantees. A natural question that comes to mind is what are the differences between these two algorithms and whether they have other guarantees. 
In Claim \ref{claim:algs_different}, we show that these two algorithms do not always output the same allocation.  

\begin{claim}
\label{claim:algs_different}
Algorithms \ref{alg:GS_PSP} and \ref{alg:GS_WA} do not always return the same output.
\end{claim}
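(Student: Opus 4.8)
The plan is to exhibit a single instance on which the two algorithms disagree, which suffices to establish the claim. The cleanest route is to reduce to the classical situation in which the doctor-proposing and hospital-proposing Gale-Shapley variants select distinct stable matchings. As observed earlier in this section, when the proto-metric assigns distance $1$ to every pair of doctors --- so that each cluster is a singleton and the fairness requirement is vacuous --- Algorithm~\ref{alg:GS_PSP} coincides with Gale-Shapley where the hospitals propose, and Algorithm~\ref{alg:GS_WA} coincides with Gale-Shapley where the doctors propose. Since strict individual fairness is trivially satisfied when all clusters are singletons, any instance with all distances equal to $1$ is a legal input to both algorithms.

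First I would take $\D = \{i, j\}$ and $\H = \{A, B\}$ with the proto-metric $d(i,j)=1$, and the preferences
\begin{equation*}
A \succ_i B, \quad B \succ_j A, \quad j \succ_A i, \quad i \succ_B j.
\end{equation*}
Then I would trace both executions. In Algorithm~\ref{alg:GS_WA} (doctors propose) doctor $i$ proposes all of its mass to its top choice $A$ and doctor $j$ proposes all of its mass to its top choice $B$; each hospital receives a single proposal and accepts it in full, yielding the matching $(i,A),(j,B)$. In Algorithm~\ref{alg:GS_PSP} (hospitals propose) hospital $A$ proposes to its top choice $j$ and hospital $B$ proposes to its top choice $i$; each doctor receives a single proposal and accepts it, yielding $(i,B),(j,A)$. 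These two deterministic matchings are distinct, which proves the claim.

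The only points that need checking are routine: that the reductions to the two Gale-Shapley variants hold for this singleton-cluster instance (already argued in the text preceding the statements of the two algorithms), and that for this preference profile the two variants genuinely produce different matchings. Both matchings above are in fact stable, but the proposer-optimal one differs between the two sides --- which is precisely the classical phenomenon exploited here. There is no real obstacle; the entire content lies in choosing an instance where the proposer-optimal stable matching is not unique, so that each Gale-Shapley variant favors its own proposing side.
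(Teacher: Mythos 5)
Your proof is correct, but it takes a different route from the paper's. The paper exhibits a four-doctor, four-hospital instance with two non-singleton clusters ($\{i_1,i_2\}$ and $\{j_1,j_2\}$) and computes the two fractional output matrices explicitly, so the discrepancy is demonstrated in a regime where the fairness machinery (rising tide, probabilistic serial, fractional proposals) is genuinely active. You instead collapse both algorithms to the classical Gale-Shapley variants by taking all clusters to be singletons, and then invoke the standard asymmetry between the doctor-optimal and hospital-optimal stable matchings on a $2\times 2$ instance. Your traces check out: in Algorithm~\ref{alg:GS_WA} each hospital receives a single unit proposal and the rising tide allocates it in full, giving $(i,A),(j,B)$; in Algorithm~\ref{alg:GS_PSP} each singleton cluster receives a single unit proposal and the eating procedure consumes it in full, giving $(i,B),(j,A)$; no mass is ever rejected, so both runs terminate after one round with deterministic, distinct outputs. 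Since the claim only asserts that the outputs are not always equal, any valid instance suffices, and yours is both legal (singleton clusters make any deterministic hospital preferences strictly IF) and more elementary. What the paper's choice buys is a demonstration that the two algorithms diverge even when the clusters are non-trivial — i.e., that the difference is not merely inherited from classical Gale-Shapley — but that stronger point is not required by the statement. One small caution: you should rely on your explicit traces rather than on the informal "equivalence to Gale-Shapley" remarks in the text, since those are stated without proof; as written, your traces do carry the argument on their own.
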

\begin{proof}
Consider the following example:
Let $\D = \set{i_1, i_2, j_1, j_2}$ be the set of doctors and $\H = \set{A, B, C, D}$ be the set of hospitals. 
There are two clusters, $i = \set{i_1, i_2}$ and $j = \set{j_1, j_2}$.
The preferences of the doctors are
$$
A \succ_{i_1} B \succ_{i_1} C \succ_{i_1} D, \quad 
B \succ_{i_2} A \succ_{i_2} D \succ_{i_2} C,
$$
$$
C \succ_{j_1} A \succ_{j_1} B \succ_{j_1} D, \quad 
A \succ_{j_2} D \succ_{j_2} B \succ_{j_2} C.
$$
The preferences of the hospitals are
$$
j \succ_A i, \quad
j \succ_B i, \quad
i \succ_C j, \quad
j \succ_D i.
$$
When running \Cref{alg:GS_PSP} we get the output
\begin{center}
    $\pi_1$ = 
    \begin{tabular}{c|c|c|c|c}
         & $A$ & $B$ & $C$ & $D$  \\
         \hline
         $i_1$ & 0 & 1/4 & 3/4 & 0 \\
         \hline
         $i_2$ & 0 & 1/4 & 1/4 & 1/2 \\
         \hline
         $j_1$ & 1/2 & 1/2 & 0 & 0 \\
         \hline
         $j_2$ & 1/2 & 0 & 0 & 1/2
    \end{tabular}
\end{center}

When running \Cref{alg:GS_WA} we get the output
\begin{center}
    $\pi_2$ = 
    \begin{tabular}{c|c|c|c|c}
         & $A$ & $B$ & $C$ & $D$  \\
         \hline
         $i_1$ & 0 & 1/2 & 1/2 & 0 \\
         \hline
         $i_2$ & 0 & 1/2 & 0 & 1/2 \\
         \hline
         $j_1$ & 1/2 & 0 & 1/2 & 0 \\
         \hline
         $j_2$ & 1/2 & 0 & 0 & 1/2
    \end{tabular}
\end{center}
\end{proof}

Algorithms $\ref{alg:GS_PSP}$ and $\ref{alg:GS_WA}$ are a generalization of the Gale Shapley algorithm (Figure \ref{alg:Gale_Shapley}) where in \Cref{alg:GS_PSP} the hospitals are the proposing set, and in \Cref{alg:GS_WA} the doctors are the proposing set. 
Gale and Shapley \cite{gale1962college} showed that in the Gale-Shapley algorithm, the output is optimal for each individual in the proposing set among all stable matchings. 
It is natural to ask whether an analogy statement is true in our case.
In Claim \ref{claim:gs_wa_not_opt}, we show that for \Cref{alg:GS_WA}, where the proposing set is the doctors, it is not the case. Specifically, \Cref{alg:GS_WA} does not output the optimal solution for the doctors among all fair and stable solutions.

\begin{claim}
\label{claim:gs_wa_not_opt}
Algorithm \ref{alg:GS_WA} does not return the optimal PIIF and contract stable solution for the set of the doctors.
\end{claim}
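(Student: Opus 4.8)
The plan is to prove the claim by exhibiting a single counterexample, in the same spirit as \Cref{claim:algs_different}: an instance consisting of doctors partitioned into clusters, a set of hospitals, a proto-metric, deterministic doctor preferences and strictly individually fair hospital preferences, on which \Cref{alg:GS_WA} returns a concrete allocation $\pi$, together with a second allocation $\pi'$ that is itself PIIF and contract stable and that some doctor $d$ \emph{strictly} prefers, i.e. $\pi'(d) \succ_d \pi(d)$. Since $\pi'$ is fair and stable, this directly witnesses that the output $\pi$ of \Cref{alg:GS_WA} is not optimal for the doctors among all PIIF and contract stable solutions. (Equivalently, one can aim to produce two PIIF and contract stable allocations that are doctor-incomparable, so that no allocation simultaneously dominates all others for every doctor; I would still phrase the conclusion through the concrete pair $\pi,\pi'$.)

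To obtain $\pi$, I would take a two-cluster instance of the form used in \Cref{claim:algs_different} (clusters $i=\{i_1,i_2\}$ and $j=\{j_1,j_2\}$ over four hospitals) and simulate \Cref{alg:GS_WA} round-by-round exactly as in the simulations accompanying \Cref{thm:gs_wa_stop} and the non-convergence claims: each doctor proposes its free mass to its top non-rejected hospital, each hospital applies \Cref{alg:water_algorithm}, rejected mass is returned and the corresponding hospital permanently removed from the proposer's list, and I track the marginals $\pi_k(\cdot)$ until the free mass vanishes. The instance must be engineered so that the local, greedy nature of this process is exploited: because rising tide is applied hospital-by-hospital and a fractional rejection permanently deletes a hospital from a doctor's list, the dynamics can settle on a fair and stable allocation in which a doctor was rationed out of a hospital it could retain (in larger share) in a globally coordinated fair and stable assignment. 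Concretely, I want a cluster-mate to transiently flood a hospital $h$ that $d$ ranks highly, causing $h$ to fractionally reject and thus permanently discard $d$, after which the coordinated allocation $\pi'$ reallocates the freed capacity of $h$ back toward $d$.

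With $\pi$ in hand, the bulk of the work is verifying $\pi'$. For fairness I would check the \Cref{PIIF} characterization under proto-metrics directly: for each pair of doctors in a common cluster I verify mutual stochastic domination of the prospects (both $\pi'(i_1)\succeq_{i_1}\pi'(i_2)$ and $\pi'(i_2)\succeq_{i_2}\pi'(i_1)$), rank by rank, using \Cref{def:stochastic_domination}. For stability I would exhibit an explicit coupling (a distribution over matchings with the prescribed marginals, obtained via Birkhoff--von Neumann) and verify \Cref{def:contract_stability_proto}: for every candidate contract $(h,d;h',d')$ with $d(d,d')=1$, $h\succ_{d}h'$ and $\pi'^{\mu}(h)\succ_h \pi'(h)$, I check that the activation event $\Pr[\pi'(h)=d'\wedge \pi'(d)=h']=0$. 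The two pinning arguments I already use implicitly — that a hospital must be matched to the cluster it prefers whenever a member of that cluster wants it, and that a doctor is never matched below a hospital it is simultaneously envied out of — are the tools for ruling out the relevant contracts. Finally I exhibit the single rank at which $\pi'(d)$ strongly dominates $\pi(d)$.

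The main obstacle is the construction itself rather than the verification: the within-cluster PIIF constraints are very rigid (when two cluster-mates both rank a hospital as their best \emph{available} option, PIIF forces an equal split, which is exactly what rising tide produces), so a naive instance will have \Cref{alg:GS_WA} already output the doctor-optimal allocation. The instance must therefore be designed so that the gap between the local dynamics and the global optimum is created not by unequal splitting within a cluster but by the permanent deletion of a hospital from a doctor's preference list after a transient fractional rejection, while keeping $\pi'$ genuinely contract stable. Checking that this alternative coordinated allocation admits no active contract — which, unlike PIIF, is a property of the \emph{coupling} and not only of the marginals — is the delicate part, and is where I expect most of the case analysis to go.
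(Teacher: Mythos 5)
Your high-level strategy is the same as the paper's: exhibit an instance on which \Cref{alg:GS_WA} outputs $\pi$ together with another PIIF and contract stable allocation $\pi'$ that some doctor strictly prefers. However, as written your proposal defers the entire construction ("the main obstacle is the construction itself"), so it is a plan rather than a proof, and the plan takes on far more verification burden than necessary. The paper sidesteps both of the difficulties you flag with two observations. First, it places \emph{all} doctors in a single cluster (three doctors $i_1,i_2,i_3$, hospitals $A,B,C$, with $A \succ_{i_1} B \succ_{i_1} C$, $A \succ_{i_2} C \succ_{i_2} B$, $C \succ_{i_3} A \succ_{i_3} B$); since a contract $(h,i;h',i')$ in Definition \ref{def:contract_stability_proto} requires $d(i,i')=1$, no contract can exist in a one-cluster instance, so contract stability of any candidate $\pi'$ is vacuous and the hospitals need no preferences at all. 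Second, rather than hand-building and hand-verifying $\pi'$, it takes $\pi'$ to be the output of \Cref{alg:GS_PSP} on the same instance, which is already certified PIIF and contract stable by \Cref{thm:gs_psp_piif_stable}; the only remaining check is the rank-by-rank stochastic-domination comparison showing every doctor strictly prefers $\pi'$ to $\pi$ (e.g.\ $i_1$ gets $A$ with probability $1/2$ under $\pi'$ versus $1/3$ under $\pi$). Your proposed two-cluster instance and explicit Birkhoff--von Neumann coupling check would likely also work, but the "delicate part" you anticipate — verifying that the coordinated allocation admits no active contract — is exactly the work the paper's choices make unnecessary, and until you produce a concrete instance and carry out that case analysis, the claim is not actually established. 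One further small correction to your intuition: in the paper's example the loss to the doctors comes from \emph{dilution} (rejected mass from a cluster-mate percolates to a lower-ranked hospital, displaces a third doctor's mass back to $A$, and the rising tide re-splits $A$ three ways down to $1/3$ each), not from a hospital being permanently deleted from a doctor's list.
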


\begin{proof}
Consider the following example:
Let $\D = \set{i_1, i_2, i_3}$ be the set of doctors and $\H = \set{A, B, C}$ be the set of hospitals. All doctors are in the same cluster $i$.
The preferences of the doctors are
$$
A \succ_{i_1} B \succ_{i_1} C, \quad
A \succ_{i_2} C \succ_{i_2} B, \quad
C \succ_{i_3} A \succ_{i_3} B.
$$
Since there is only one cluster, the hospitals have no preferences over the doctors.

When running \Cref{alg:GS_PSP} we get the output
\begin{center}
    $\pi_1$ = 
    \begin{tabular}{c|c|c|c}
         & $A$ & $B$ & $C$  \\
         \hline
         $i_1$ & 1/2 & 1/2 & 0 \\
         \hline
         $i_2$ & 1/2 & 1/4 & 1/4 \\
         \hline
         $i_3$ & 0 & 1/4 & 3/4
    \end{tabular}
\end{center}

and when running \Cref{alg:GS_WA} we get the output
\begin{center}
    $\pi_2$ = 
    \begin{tabular}{c|c|c|c}
         & $A$ & $B$ & $C$  \\
         \hline
         $i_1$ & 1/3 & 2/3 & 0 \\
         \hline
         $i_2$ & 1/3 & 1/6 & 1/2 \\
         \hline
         $i_3$ & 1/3 & 1/6 & 1/2
    \end{tabular}
\end{center}
Where, $\pi_1(i_1) \succ_{i_1} \pi_2(i_1)$, $\pi_1(i_2) \succ_{i_2} \pi_2(i_2)$ and $\pi_1(i_3) \succ_{i_3} \pi_2(i_3)$.
\end{proof}

We leave open the question of whether \Cref{alg:GS_PSP} return the optimal solution for the hospitals.
\section{Impossibility Results}
\label{sec:impossibility}
In \Cref{sec:piif_algs}, we showed positive results under two non-trivial assumptions on the structure of both the hospital preferences and the metric. It is natural to ask whether those are necessary or whether there exist more general algorithms that do not rely on these assumptions.
In this section, we show that with the type of algorithms we have used so far, this is not possible. We define a class of \emph{local-proposing algorithms} and show that, for this class of algorithms, no algorithm outputs PIIF and ``stable'' allocations, even for our most minimal stability definition, weak ex-ante stability (see Definition \ref{def:selective_stability}).

In particular, we show two impossibility results for this class of algorithms:
\begin{enumerate}
    \item For unfair preferences, even if the fairness metric is a proto-metric, there is no algorithm in this class that outputs a PIIF and weakly ex-ante stable allocation.
    \item For fair preferences and general metrics, there is no algorithm in this class that outputs a PIIF and weakly ex-ante stable allocation.
\end{enumerate}

\subsection{Defining the Class of Local-Proposing Algorithms}
We start by defining two classes of algorithms, one for algorithms where the doctors propose and one for algorithms where the hospitals propose. The classes are inspired by the Gale-Shapley algorithm and by our new algorithms.

\paragraph{Doctor-proposing local algorithms.}
For the class of doctor-proposing algorithms, the algorithm proceed in rounds, where in each round, we assume each doctor proposes a probability mass to the hospitals. Since the doctors have deterministic preferences and do not have fairness constraints concerning the hospitals, we assume the doctors follow their preferences, and propose to the hospitals according to this order (more about this in Remark \ref{remark:doctor_propose_by_prefs}). The hospitals might have probabilistic preferences and do have fairness constraints concerning the doctors. Thus, we do not assume a specific behavior from the hospitals. 

\begin{definition}[Local-proposing algorithm with doctors proposing]
A {\fontfamily{lmtt}\selectfont 
local-proposing algorithm with doctors proposing
} (LADP) is an algorithm that proceeds in rounds such that:
\begin{itemize}
    \item At the beginning of the first round, each doctor has a free probability mass of 1. 

    \item At each round, each doctor chooses a hospital to propose its free probability mass to.
    The doctors must propose to the most preferred hospital that did not reject them yet.

    \item After the doctors' proposals, each hospital has the new probability mass proposed to it by each doctor at the current round, and the probability mass it did not reject in earlier rounds available. The hospital chooses some (possibly partial) probabilistic prospect using its available probability mass. The probability mass that is not used for the prospect is rejected.
\end{itemize}
\end{definition}

The algorithms in this class differ by the process of choosing the probabilistic prospect at each round by each hospital. We assume that in this process, the hospitals are not aware of the preferences of the other hospitals or doctors.

\paragraph{Hospital-proposing local algorithms.}
We also define the corresponding class where the hospitals propose to the doctors. In this case, motivated by \Cref{alg:GS_PSP}, where hospitals propose to clusters, we allow the hospitals to treat sets of doctors as equals, i.e., propose probability mass to a set and let the doctors in the set decide how to split it. For generality, we assume the hospitals can choose which sets of doctors to treat as equals for the current round at each round (this generalizes our \Cref{alg:GS_PSP}, where the clusters were fixed by the proto-metric). 
Further, we allow these sets of equal doctors to reallocate preexisting mass (from previous rounds) among them, as we allow the clusters in \Cref{alg:GS_PSP}. Thus, we allow each set treated equally in round $t$ by hospital $h$ to reallocate the probability mass hospital $h$ proposed to the set at this round and in earlier rounds. 

Although we allow this generalization, we still make assumptions about how the doctors split the probability mass inside these sets.
The assumptions we make are natural because it means the doctors are acting according to their preferences.
These assumptions are also aligned with the way each cluster splits the probability mass in \Cref{alg:GS_PSP}, using the probabilistic serial procedure (see \Cref{alg:PSP}). 
We assume that a doctor will never give up any probability mass from the highest ranked hospital that proposed to it and that it will not be assigned with less than an equal share for this hospital than any other doctor in its cluster (unless it contradicts the previous assumption). 
Also, we assume that a cluster of doctors rejects a hospital's probability mass only if all the doctors in the cluster prefer every hospital they are assigned to over this hospital.
We explain why these assumptions are important in Remark \ref{remark:doctor_propose_by_prefs}.

We allow another generalization for this class, by allowing the hospitals to propose at each round, less than their entire free probability mass. We explain more about why this generalization is essential in the proofs of Theorems \ref{thm:fair_prefs_lahp} and \ref{thm:unfair_prefs_lahp}.

\begin{definition}[Local-proposing algorithm with hospitals proposing]
A {\fontfamily{lmtt}\selectfont 
local-proposing algorithm with hospitals proposing
} ($\alpha$-LAHP) with parameter $\alpha \in (0,1]$ is an algorithm that proceeds in rounds such that:
\begin{itemize}
    \item At the beginning of the first round, each hospital $h$ has a free probability mass 1. We denote the free probability mass at round $t$ by $p_{h,t}$.
    
    \item At each round $t \ge 1$, each hospital chooses a partition to clusters $\C_{h,t}$ (clustering) of the doctors $\D$, and a function $f_{h,t} : \C_{h,t} \rightarrow [0,1]$ such that $h$ proposes to each cluster of doctors $C \in \C_{h,t}$ probability mass $f_{h,t}(C)$. For a doctor $d\in\D$ that is clustered in cluster $C \in \C_{h,t}$ in round $t$, we denote that $f_{h,t}(d) = f_{h,t}(C)$.
    The function $f_{h,t}$ must satisfy the following conditions:
    \begin{itemize}
        \item $\sum_{C \in \C_{h,t}} f_{h,t}(C) \in [\min\set{\alpha, p_{h,t}}, p_{h,t}]$ (the hospital proposes at least $\alpha$ mass, if it can). 
        \item $\forall d \in \D:\sum_{t} f_{h,t}(d) \le 1$ (the sum of all proposals to any doctor is at most 1).
    \end{itemize}

    \item After the proposals of the hospitals, for every hospital $h$ and cluster $C \in \C_h$, the proposed probability mass for cluster $C$ by hospital $h$ is 
        \begin{equation*}
            g_{h,t}(C) = \sum_{d \in C}\pi_{t-1}(d)[h] + f_{h,t}(C),
        \end{equation*}
        where $\forall d \in \D:\pi_{0}(d)[h] = 0$.
    
    \item Then, each doctor $d$ has to choose a (possibly partial) probabilistic prospect $\pi_t(d)$. Any proposed probability mass that is not used in the allocations is rejected. The allocations must satisfy the following conditions:
    \begin{itemize}
        \item For every hospital $h$ and cluster $C \in \C_{h,t}$:
        \begin{equation*}
            \sum_{d \in C} \pi_{t}(d)[h] \le g_{h,t}(C).
        \end{equation*}
        \item For every doctor $d$, denote by $h^*_d$ the highest ranked hospital by doctor $d$ that satisfies that for some cluster $C \in \C_{h^*_d,t}$ such that $d \in C$, $g_{h,t}(C) > 0$, i.e., hospital $h^*_d$ is the highest ranked hospital that proposed to a cluster that contains doctor $d$ in round $t$. Then $\pi_t(d)$ satisfies that:
        \begin{itemize}
            \item Doctor $d$ will not accept less probability mass than other doctors in the cluster, unless they carried it from previous rounds, i.e., for every $d' \in C$:
            \begin{equation*}
                \pi_t(d)[h^*_d] \ge \pi_t(d')[h^*_d],
            \end{equation*}
            or
            \begin{equation*}
                \pi_t(d')[h^*_d] = \pi_{t-1}(d')[h^*_d].
            \end{equation*}
            \item Doctor $d$ will accept at least the same probability mass as in the previous round, i.e.,
            \begin{equation*}
                \pi_t(d)[h^*_d] \ge \pi_{t-1}(d)[h^*_d].
            \end{equation*}
        \end{itemize}
        \item For every hospital $h$ and cluster $C\in \C_{h,t}$, if
        \begin{equation*}
            \sum_{d \in C} \pi_{t}(d)[h] < g_{h,t}(C),
        \end{equation*}
        i.e., the cluster $C$ rejects probability mass proposed by hospital $h$, then for every doctor $d \in C$
        \begin{equation*}
            \sum_{h' \in \H} \pi_t(d)[h'] = 1,
        \end{equation*}
        and
        \begin{equation*}
            \forall h' \in supp(\pi_t(d)): h' \succeq_d h.
        \end{equation*}
        In words, if the cluster $C$ rejects probability mass from hospital $h$, then all the doctors in cluster $C$ have full allocations and they prefer all the hospitals in their prospect over hospital $h$.
    \end{itemize}
    \item Any probability mass that is unallocated by the doctors becomes free again, i.e.,
    \begin{equation*}
        \forall h \in \H: p_{h,t+1} = \sum_{C \in \C_h}\left(g_{h,t}(C) - \sum_{d \in C}\pi_t(d)[h]\right).
    \end{equation*} 
\end{itemize}
\end{definition}

The algorithms in this class differ by the process of choosing the proposals at each round by each hospital, and the process of choosing the prospects in each round by the clusters.
We assume that the hospitals are not aware of the preferences of the other hospitals or doctors.

The class of algorithms for which we want to show impossibility is the union of the two classes defined above.

\begin{definition}[Local-proposing algorithm]
A local-proposing algorithm is either a local-proposing algorithm with doctors proposing or a local-proposing algorithm with hospitals proposing.
\end{definition}

\paragraph{Further notes.} The following remarks explain why the assumptions made in these classes definitions are reasonable, and how they can be generalized to allow a small error.
\begin{remark}
\label{remark:doctor_propose_by_prefs}
In LAHP, the assumption that doctors accept the proposals according to their true preferences is essential. Otherwise, for a small value of $\alpha$, the doctors and hospitals can communicate their preferences (see below for more about how this could work). After a few rounds of communication, each hospital can run a centralized algorithm to find a PIIF and weakly ex-ante allocation and propose its allocation to the doctors. The doctors accept at this point. Since $\alpha$ is small, the amount of probability mass ``wasted'' in these communication rounds is small too. Thus, the output is $\epsilon$-close to a PIIF and weakly ex-ante allocation.
Thus, if we allow the doctors to accept the proposals arbitrarily, there is no meaning in restricting to this algorithm class. 

The communication can be executed as follows: At the initial round, each hospital $h$ proposes $\frac{\alpha}{n}$ mass to each doctor, and the doctor accepts $t\cdot \frac{\alpha}{n}$, where $t$ is the position of hospital $h$ in its ranking. That way, the doctors can communicate its entire order of preferences to $h$. Similarly, the hospitals communicate their preferences to the doctors. Then, the doctors communicate the preferences of all the hospitals to the doctors. 
\end{remark}

\begin{remark}
The Gale-Shapley algorithm is a local proposing algorithm. Algorithm \ref{alg:GS_PSP} and Algorithm \ref{alg:GS_WA} are nearly in the class. The only difference is that these algorithms, we allow to stop if a very small probability mass $\tau$ has left unallocated, and because of this, these algorithms achieve $2\tau$-PIIF and $\tau$-weak ex-ante stability. Allowing this for the class of local-proposing algorithms will not change our results for $\tau$ small enough. 

For the classes LADP and 1-LAHP, we show that the algorithms in these classes cannot return a solution without a constant error (1/3). Thus, for $\tau$ smaller than $1/6$, allowing the algorithms to output an approximation will not "help".

For the class $\alpha$-LAHP, in our counter example we show that it is impossible to achieve $\tau < 1/\left(\ceil{\frac{1}{\alpha}} + 3\right) \approx \alpha$.
\end{remark}

\subsection{Impossibility Result for Unfair Hospital Preferences}
\label{sec:unfair_prefs}
Our first impossibility result shows that no local-proposing algorithm outputs a PIIF and weakly ex-ante stable allocation for unfair hospital preferences.

\begin{theorem}
\label{thm:unfair_prefs_fails}
There is no local-proposing algorithm that outputs a PIIF and weakly ex-ante stable allocation if the preferences of the hospitals are not individually fair.
\end{theorem}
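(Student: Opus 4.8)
The plan is to prove impossibility separately for the two sub-classes (doctor-proposing and hospital-proposing), in both cases via an \emph{indistinguishability} argument organized around a single \emph{unfair} hospital $A$ with preference $j \succ_A k \succ_A i \succ_A l$, where $\{i,j\}$ and $\{k,l\}$ are the two proto-metric clusters (note this $A$ is not individually fair, as it splits the cluster $\{i,j\}$). For each class I build a pair of instances that the algorithm cannot tell apart (in the sense relevant to that class) until it has already been forced into an irreversible decision about doctor $i$, yet whose unique PIIF and weakly ex-ante stable allocations disagree on $i$. In Instance $1$ all four doctors find $A$ to be their best \emph{attainable} hospital (every hospital they rank above $A$ is occupied by a doctor that hospital prefers), while in Instance $2$ only $i$ and $k$ want $A$ and both $j,l$ are matched to hospitals they strictly prefer to $A$. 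I augment each instance with a few auxiliary agents whose sole role is to render the ``better than $A$'' hospitals unavailable and to schedule the proposal dynamics; these agents never interact with $A$, so $A$'s local view is unaffected.

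First I pin down the two target allocations using the PIIF corollary under proto-metrics and weak ex-ante stability (Definition~\ref{def:selective_stability}). In Instance $1$, PIIF forces equal within-cluster shares $\Pr[\pi(i){=}A]=\Pr[\pi(j){=}A]$ and $\Pr[\pi(k){=}A]=\Pr[\pi(l){=}A]$ (since every hospital $i$ or $j$ prefers to $A$ is unavailable to both), and a direct stochastic-dominance computation shows the uniform-over-$\{i,j\}$ prospect strictly $A$-dominates every other such prospect; hence the selectively fair alternative $\nu=(A,\{i,j\},\mathrm{Unif})$ is active against any other allocation, forcing $A$ to be matched to $i$ with probability exactly $1/2$. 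In Instance $2$, because $j$ is matched with probability $1$ to a hospital it prefers to $A$, $A$ cannot be matched to $j$, so any positive probability placed on $i$ (rank $3$ for $A$) is dominated by the alternative $\nu=(A,\{k\},\delta_k)$ (selectively fair since $l$, the only doctor close to $k$, is satisfied elsewhere), forcing $A\to k$ with probability $1$ and therefore $A\to i$ with probability $0$.

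For the doctor-proposing class the two instances are \emph{exactly} indistinguishable in round $1$: in both, the only doctors proposing to $A$ are $i$ and $k$, each with full mass $1$, while $j,l$ propose to their first choices. Since a doctor-proposing hospital's round-$1$ action depends only on its own preferences and the proposals it received, $A$ holds identical masses $a$ from $i$ and $b$ from $k$ at the end of round $1$ in both instances, with $a+b\le 1$. Because doctors never re-propose rejected mass and $A$ can only shed held mass afterwards, $A$'s final match with $i$ is at most $a$ and with $k$ at most $b$. Instance $1$ then demands $a\ge 1/2$, while Instance $2$ demands $b=1$ and hence $a=0$ --- a contradiction, so no such algorithm succeeds on both instances.

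For the hospital-proposing class the roles are dual and the argument becomes quantitative. In round $1$ the proposer $A$ has no information, so it chooses the same clustering and proposals in both instances. The key mechanism is that $A$ is $i$'s most-preferred hospital, so any probability mass $i$ accepts is irrevocable (the acceptance rules forbid a doctor from ever relinquishing mass on its top hospital, and a cluster may reject $A$'s mass only if \emph{all} its members prefer their current hospitals to $A$, which $i$ never does while holding $A$). I schedule Instance $2$ so that $j$'s and $l$'s outside options arrive only after $\approx\lceil 1/\alpha\rceil$ rounds; until then Instance $2$ is indistinguishable from Instance $1$ (where those options never arrive), so any mass $A$ commits toward $i$ in order to build the required $1/2$ share of Instance $1$ survives as irrevocable $i$-mass into Instance $2$ and violates $A\to k$ with probability $1$, while withholding it prevents building the $1/2$ share in Instance $1$. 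Bounding the mass $A$ can hedge per round by $\alpha$ yields impossibility up to error $\tau\approx\alpha$ (and a constant $1/3$ for the $1$-LAHP case). I expect this last step to be the main obstacle: one must verify that the instances remain indistinguishable for the full $\approx 1/\alpha$ rounds under \emph{every} admissible clustering-and-proposal schedule and carefully separate irrevocable from reclaimable mass, whereas the doctor-proposing direction and the weak-stability forcing of the two target allocations are comparatively routine.
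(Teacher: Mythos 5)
Your proposal is correct and follows essentially the same route as the paper: the same case split into doctor-proposing and hospital-proposing classes, the same unfair hospital $A$ with $j \succ_A k \succ_A i \succ_A l$ over the two clusters $\{i,j\},\{k,l\}$, the same pair of instances whose unique PIIF and weakly ex-ante stable allocations disagree on $i$ (forced via the alternatives $(A,\{i,j\},\mathrm{Unif})$ and $(A,\{k\},\delta_k)$), and the same round-one indistinguishability plus irrevocability contradiction, including the $\lceil 1/\alpha\rceil$-round delay of hospital $D$ via auxiliary agents for general $\alpha$-LAHP. The one step you flag as remaining --- verifying that $D$ cannot reach $j$ early under any admissible schedule --- is exactly what the paper settles by an induction showing $D$ must propose to $d_1,\dots,d_\beta$ in preference order.
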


We start by showing this for each of the classes. 

\paragraph{Doctors propose.}
We show that if we do not require fair preferences from the hospitals, no LADP satisfies both the fairness and stability requirements. We present a setting with four doctors: $i, j, k, l$ where $d(i,j) = 0, d(k, l) = 0$ and the distance between these two clusters is 1.
Hospital $A$'s preferences are $$j \succ_A k \succ_A i \succ_A l.$$
In the first round, doctors $i$ and $k$ propose to hospital $A$ probability mass 1, and hospital $A$ needs to decide what to accept and reject. By controlling the preferences of other hospitals, we can create two cases:
\begin{enumerate}
    \item No other doctor proposes to hospital $A$ in any later round, i.e., doctors $j$ and $l$ are matched to hospitals they prefer over hospital $A$. In this case, the only way to achieve fairness and stability is to accept probability 1 from doctor $k$ and $0$ from doctor $i$.
    \item The doctors $j$ and $l$ propose probability mass $1$ to hospital $A$ in the second round. This implies that hospital $A$ is the most preferred hospital by doctors $i, j, k$ and $l$, among the hospitals to which they can be possibly matched to.
    Since $d(k, l) = 0$, in any fair allocation, any prospect for hospital $A$ that involves doctor $k$ with probability $p$ must involve doctor $l$ with probability $p$ too. Thus, the prospect $$\begin{cases}
        j,&\textit{ w.p. }1/2\\
        i,&\textit{ w.p. }1/2
    \end{cases}$$
    stochastically dominates any prospect that involves doctor $k$. Hence, in the first round, when only doctors $i$ and $k$ have proposed, the only way to achieve fairness and stability is to accept at least probability 1/2 from doctor $i$. That way, later hospital $A$ would be able to accept probability mass 1/2 from $j$.
\end{enumerate}

The problem is that hospital $A$ cannot distinguish between these two cases in the first round.

\begin{theorem}
\label{thm:unfair_prefs_ladp}
There is no algorithm $a \in LADP$ that returns a PIIF and weakly ex-ante stable allocation in the setting of a proto-metric and unfair preferences.
\end{theorem}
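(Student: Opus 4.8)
The plan is to exhibit two problem instances that a doctor-proposing local algorithm cannot tell apart during its first round, yet which demand incompatible first-round behavior from the unfair hospital $A$ whose preference is $j \succ_A k \succ_A i \succ_A l$. In both instances the doctors are $i,j,k,l$ with the proto-metric clusters $\{i,j\}$ and $\{k,l\}$ (so $d(i,j)=d(k,l)=0$ and all cross distances are $1$); doctors $i$ and $k$ rank $A$ first, while $j$ and $l$ rank other hospitals first. Hence in round $1$ of any LADP, $i$ and $k$ each propose all of their mass $1$ to $A$ and $j,l$ propose elsewhere. Since in an LADP a hospital observes only the proposals directed at it and is oblivious to the remaining preferences, the prospect $A$ forms in round $1$ -- described by the amounts $a_i,a_k \ge 0$ (with $a_i+a_k\le 1$) it accepts from $i$ and $k$ -- has the same distribution across the two instances. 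The first building block is the observation that, because a doctor never re-proposes mass a hospital has rejected and $i$ (resp. $k$) proposes to $A$ only in round $1$, the final allocation satisfies $\pi(i)[A]\le a_i$ and $\pi(k)[A]\le a_k$.

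In Instance~$1$ (the analogue of Case~$1$) I would let $j$ and $l$ be ``locked'' to hospitals they strictly prefer to $A$ (each forming a mutually top-ranked pair with a private hospital $B$, resp. $C$, and ranking $A$ below it), so that in every stable allocation $j$ and $l$ are matched there and never contend for $A$. I would then prove that every PIIF and weakly ex-ante stable allocation must satisfy $\pi(k)[A]=1$: if $\pi(k)[A]<1$, the triple $\nu=(A,\{k\},\{k \text{ w.p. } 1\})$ is a selectively fair alternative (Definition~\ref{def:selective_stability}) -- its singleton support is trivially IF, the only doctor within distance $<1$ of $k$ is $l$ whose entire support is $\succ_l A$, and $k$ ranks $A$ first so the last clause holds -- and it is active because receiving $k$ with certainty strongly dominates $A$'s current prospect, contradicting weak stability. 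Combining $\pi(k)[A]=1$ with $\pi(k)[A]\le a_k$ forces $a_k=1$, hence $a_i=0$.

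In Instance~$2$ (the analogue of Case~$2$) I would keep the identical first-round behavior but arrange that the hospitals $j$ and $l$ initially propose to are \emph{infeasible} for them (taken by strictly preferred doctors), so that among the hospitals they can actually obtain all four doctors rank $A$ highest; in particular $j,l$ are rejected in round $1$ and propose to $A$ in round $2$. Here fairness forces $A$ to treat cluster-mates equally, so the only PIIF prospects for $A$ have the form $\{i:q,\,j:q,\,k:p,\,l:p\}$ with $2q+2p\le 1$, and among these $A$ ranks $\{i:\tfrac{1}{2},\,j:\tfrac{1}{2}\}$ strictly above every other (it uniquely maximizes the top-$1$ probability of receiving $j$). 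I would then show any weakly ex-ante stable allocation realizes this prospect: the triple $\nu=(A,\{i,j\},\{i:\tfrac{1}{2},\,j:\tfrac{1}{2}\})$ is a selectively fair alternative -- its induced prospects for $i,j$ are identical hence IF, no doctor outside $\{i,j\}$ lies within distance $<1$ of it, and $i,j$ each weakly prefer $A$ to every hospital in their support -- and is active unless $A$ already attains it. Thus $\pi(i)[A]=\tfrac{1}{2}$, which with $\pi(i)[A]\le a_i$ forces $a_i\ge\tfrac{1}{2}$.

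Finally I would conclude: the common round-$1$ distribution of $(a_i,a_k)$ would have to be supported on $\{a_k=1\}$ to succeed on Instance~$1$ and on $\{a_i\ge\tfrac{1}{2}\}$ to succeed on Instance~$2$; since $a_i+a_k\le 1$ these events are disjoint, a contradiction, so no LADP is simultaneously PIIF and weakly ex-ante stable on both instances. I expect the main obstacle to be the two forcing (uniqueness) arguments -- verifying that in \emph{every} PIIF and weakly ex-ante stable allocation, not merely a convenient one, the prospect of $A$ is pinned down, which requires showing $j,l$ are robustly matched to their locked hospitals (Instance~$1$) or robustly excluded from their first choices (Instance~$2$) across all such allocations, and carefully discharging each clause of Definition~\ref{def:selective_stability}. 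By contrast, the first-round indistinguishability and the ``no re-proposal'' capacity bounds $\pi(i)[A]\le a_i$, $\pi(k)[A]\le a_k$ are routine once the class definition is in hand.
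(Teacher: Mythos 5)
Your plan is essentially the paper's own proof: two instances that present hospital $A$ with identical round-one proposals (mass $1$ from each of $i$ and $k$), where the unique PIIF and weakly ex-ante stable allocation forces $A$ to accept all of $k$'s mass in one instance and at least half of $i$'s mass in the other, combined with the observation that rejected mass is never re-proposed. The forcing arguments via the selectively fair alternatives $(A,\{k\},\{k\text{ w.p. }1\})$ and $(A,\{i,j\},\{i:\tfrac12,\,j:\tfrac12\})$ are exactly the ones the paper uses.

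There is one concrete gap in the construction as written. You fix the doctor set to be $\{i,j,k,l\}$ in both instances, but Instance~2 requires $j$'s and $l$'s first-choice hospitals to be ``taken by strictly preferred doctors'' so that $j$ and $l$ are rejected in round one and fall back to $A$; with only four doctors, no such strictly preferred doctors are available ($i$ and $k$ propose to $A$, and making $k$ or $i$ the displacing doctor conflicts with the target prospect $\pi(A)=\{i:\tfrac12,\,j:\tfrac12\}$ and with their own top choice being $A$). Moreover, displacing $j$ from its first choice is not optional: clause (3) of Definition~\ref{def:selective_stability} requires $j$ to weakly prefer $A$ to everything in its support, and the PIIF-forced equality $\Pr[\pi(i)=A]=\Pr[\pi(j)=A]$ also needs $\Pr[\pi(j)=D]=0$ for $j$'s first choice $D$; both hinge on $D$ being provably locked to a better doctor in \emph{every} stable allocation. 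The fix is what the paper does: add auxiliary doctors $x,y$ and hospitals $E,F$ that exist in \emph{both} instances (the universes must coincide for the indistinguishability argument to make sense), with only the preferences of $E$ and $F$ differing between the two cases --- in Case~1 they lock up $j$ and $l$, and in Case~2 they lock up $x$ and $y$, pushing $j$ and $l$ onto $A$. With that repair your argument goes through as the paper's does.
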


\begin{proof}
Assume for contradiction there exists a LADP $a$ that returns a PIIF and weakly ex-ante stable allocation. Consider the following example:

Let $\D = \set{i, j, k, l, x, y}$ be the set of doctors, $\H = \set{A, B, C, D, E, F}$ be the set of hospitals and all the distances are 1 except: $d(k, l) = 0, d(i, j) = 0$.

The preferences of the doctors are: 
\begin{align*}
&A \succ_i B \succ_i C \succ_i D \succ_i E \succ_i F,\\
&F \succ_j A \succ_j B \succ_j C \succ_j D \succ_j E,\\
&A \succ_k B \succ_k C \succ_k D \succ_k E \succ_k F,\\
&E \succ_l A \succ_l B \succ_l C \succ_l D \succ_l F,\\
&F \succ_x D \succ_x A \succ_x B \succ_x C \succ_x E,\\
&E \succ_y C \succ_y A \succ_y B \succ_y D \succ_y F.
\end{align*}

The preferences of the hospitals (except $E$ and $F$) are:
\begin{align*}
    &j \succ_A k \succ_A i \succ_A l \succ_A x \succ_A y, \\
    &j \succ_B k \succ_B i \succ_B l \succ_B x \succ_B y, \\
    &y \succ_C x \succ_C j \succ_C k \succ_C i \succ_C l, \\
    &y \succ_D x \succ_D j \succ_D k \succ_D i \succ_D l.
\end{align*}

Let us divide into two cases:
\begin{enumerate}
    \item Hospital $F$'s preferences are: 
    $$j \succ_F x \succ_F k \succ_F i \succ_F l\succ_F y$$
    and hospital $E$'s preferences are: 
    $$l \succ_E y \succ_E k \succ_E i \succ_E j\succ_E x.$$

    In any PIIF allocation $\pi$, the selectively fair alternative allocation $\nu_{F,j} = (F, \set{j}, \set{j,\textit{ w.p. }1})$ exists and is active if $\Pr[\pi(F) = j] < 1$. The alternative $\nu_{F,j}$ satisfies the fairness requirement between doctors $i$ and $j$ since hospital $F$ is doctor $i$'s least preferred hospital. 
    The alternative $\nu_{F,j}$ is active since doctor $j$ and hospital $F$ are each other's most preferred match.
    
    
    Given the assumption $\Pr[\pi(F) = j] = 1$, similarly, for any PIIF allocation $\pi$ where $\Pr[\pi(E) = l] < 1$, the selectively fair alternative allocation $(E, \set{l}, \set{l, \textit{ w.p. } 1})$ is active.
    
    Given $\Pr[\pi(F) = j] = 1$ and $\Pr[\pi(E) = l] = 1$, for any PIIF allocation $\pi$ where $\Pr[\pi(D) = x] < 1$, the selectively fair alternative allocation $(D, \set{x}, \set{x, \textit{ w.p. } 1})$ is active. 
    
    Given $\Pr[\pi(F) = j] = 1$, $\Pr[\pi(D) = x] = 1$ and $\Pr[\pi(E) = l] = 1$, for any PIIF allocation $\pi$ where $\Pr[\pi(C) = y] < 1$, the selectively fair alternative allocation $(C, \set{y}, \set{y, \textit{ w.p. } 1})$ is active. 
    
    Given all the above and since $k \succ_A i$, if $\Pr[\pi(A) = k] < 1$ the selectively fair alternative allocation $(A, \set{k}, \set{k, \textit{ w.p. } 1})$ is active. Since doctor $l$ is always matched to hospital $E$ and $E \succ_l A$, the fairness requirement between doctors $k$ and $l$ is satisfied.
    
    Thus, the only PIIF and weakly ex-ante stable allocation is $$\pi^* = \set{(A, k), (B, i), (C, y), (D, x), (E, l), (F, j)\textit{ w.p. }1}.$$
    
    Since algorithm $a$ is a LADP, in the first round, doctors $j$ and $x$ propose 1 to hospital $F$, doctors $l$ and $y$ propose 1 to hospital $E$ and doctors $i$ and $k$ propose 1 to hospital $A$. By the assumptions above, hospital $F$ accepts doctor $j$ and rejects doctor $x$, hospital $E$ accepts doctor $l$ and doctor rejects $y$ and hospital $A$ accepts doctor $k$ and rejects doctor $i$. At the second round, doctor $i$ proposes 1 to hospital $B$, doctor $x$ propose 1 to hospital $D$ and doctor $y$ propose 1 to hospital $C$ and they all accept.
    
    \item Hospital $F$'s preferences are: $x \succ_F j \succ_F k \succ_F i \succ_F l\succ_F y$ and hospital $E$'s preferences are: $y \succ_E l \succ_E k \succ_E i \succ_E j\succ_E x$.
    
    By a similar analysis to the first case, for any PIIF and weakly ex-ante stable matching $\pi$: $\Pr[\pi(F) = x] = 1$ and $\Pr[\pi(E) = y] = 1$.
    
    The only possibility for a PIIF and weakly ex-ante stable allocation is if hospital $A$'s prospect satisfies that $\Pr[\pi(A) = i] = \Pr[\pi(A) = j] = 1/2$. Otherwise the selectively fair alternative allocation $$
        \nu_{A,i,j} = \left(A, \set{i,j}, \sigma_{i,j} = \begin{cases}
            i,& \textit{w.p. } 1/2\\ j,&\textit{w.p. } 1/2
        \end{cases}\right)
    $$
    is active. This is because:
    \begin{itemize}
        \item The distribution $\sigma_{i,j}$ is IF.
        \item Since doctor $j$ and hospital $F$ are never matched, hospital $A$ is at least preferred by doctors $i$ and $j$ as any hospital in the support of their prospect.
        \item Hospital $A$ is the most preferred hospital that doctors $k$ and $l$ can be matched to, and $d(k,l) = 0$. Therefore, in any PIIF allocation, $\Pr[\pi(A) = l] = \Pr[\pi(A) = k]$. Thus, the prospect $\sigma_{i,j}$ stochastically dominates any PIIF prospect by $A$, subject to PIIF.
    \end{itemize}
    
    Since algorithm $a$ is a LADP in the first round, doctors $j$ and $x$ propose 1 to hospital $F$, doctors $l$ and $y$ propose 1 to hospital $E$ and doctors $i$ and $k$ propose 1 to hospital $A$.
    
    Hospital $A$ got the same proposals as in the previous run and cannot distinguish that the preferences of the other hospitals have changed. Thus, hospital $A$ accepts doctor $k$ and rejects doctor $i$. In any stable allocation, hospital $A$ is matched to doctor $i$ with a probability of 1/2; this contradicts the assumption that algorithm $a$'s output is PIIF and weakly ex-ante stable.
\end{enumerate}
\end{proof}

\paragraph{Hospitals propose.}
Next, we show that no LAHP outputs a PIIF and weakly ex-ante stable allocation for unfair preferences.
We start by showing this for $\alpha = 1$.

The idea behind the proof is showing a scenario with four doctors, $i, j, k$ and $l$ where doctors $i$ and $j$ are similar and doctors $k$ and $l$ are similar. Hospital $A$'s preferences are $j 
\succ_A k \succ_A i \succ_A l$. The doctors $i$ and $k$ rank hospital $A$ first. If the doctors $j$ and $l$ rank hospital $A$ first too, in any PIIF and weakly ex-ante stable allocation, hospital $A$'s prospect is
$$\begin{cases}
    i,& \textit{w.p. } 1/2 \\ 
    j,& \textit{w.p. } 1/2.
\end{cases}$$ 
Since hospital $A$ is ranked first by all the doctors, the first proposal $A$ makes is accepted and is its final allocation. Thus, $A$ must propose to $$\begin{cases}
    i,& \textit{w.p. } 1/2 \\ 
    j,& \textit{w.p. } 1/2
\end{cases}$$
in the first round. If doctors $j$ and $l$ do not rank hospital $A$ first, and there are other hospitals, $D$ and $C$, that rank doctors $j$ and $l$ first and are ranked first by doctors $j$ and $l$ respectively, in any PIIF and weakly ex-ante stable allocation, hospital $A$'s prospect is $\set{k , \textit{ w.p. } 1}$. Hospital $A$ cannot distinguish these two cases (since it doesn't know the doctors' preferences). Thus, it must propose to doctors $i$ and $j$ in the second case too, and doctor $i$ must accept. Thus, the algorithm outputs an unstable allocation. 

\begin{theorem}
In the setting of a proto-metric and unfair preferences, there is no algorithm $a \in 1$-LAHP that returns a PIIF and weakly ex-ante stable allocation.
\end{theorem}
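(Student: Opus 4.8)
The plan is to mirror the doctor-proposing impossibility (Theorem~\ref{thm:unfair_prefs_ladp}), exploiting the same tension but now forcing the hospital to commit in the first round. I would fix the cluster structure $\{i,j\}$ and $\{k,l\}$ with $d(i,j)=d(k,l)=0$ and all cross distances $1$, and give the distinguished hospital $A$ the unfair preference $j \succ_A k \succ_A i \succ_A l$ (unfair precisely because it separates the clustermates $i$ and $j$). In both cases doctors $i$ and $k$ rank $A$ first; the two cases differ only in the preferences of $j,l$ and of a few auxiliary hospitals. In Case~1, $j$ and $l$ also rank $A$ first; in Case~2, there are hospitals $D,C$ with $D$ and $j$ mutually top-ranked and $C$ and $l$ mutually top-ranked, so that every weakly ex-ante stable allocation matches $j\to D$ and $l\to C$ with probability $1$, while auxiliary hospitals absorb the remaining doctors.

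First I would pin down the unique target prospect of $A$ in each case, using the proto-metric form of PIIF (equal treatment within a cluster up to stochastic domination) together with weak ex-ante stability (Definition~\ref{def:selective_stability}). In Case~1 all four doctors rank $A$ first, so PIIF forces any prospect of $A$ to assign $i,j$ equal mass and $k,l$ equal mass; among such prospects $\mathrm{Unif}\{i,j\}$ strongly dominates under $A$'s ranking (at $r=1$ it gives $1/2$ versus anything touching $\{k,l\}$, it ties at $r=2$, and strictly wins at $r=3$), so the selectively fair alternative $(A,\{i,j\},\mathrm{Unif}\{i,j\})$ is active unless $\pi(A)=\mathrm{Unif}\{i,j\}$. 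Hence $A$ is matched to $i$ with probability exactly $1/2$. In Case~2, $j$ and $l$ are absorbed by $D$ and $C$ and strictly prefer them to $A$, so the only doctors ranking $A$ first are $i$ and $k$; as $k\succ_A i$ and $l$ prefers $C$ to $A$ (making the fairness clause between $k$ and $l$ vacuous), the alternative $(A,\{k\},\{k\text{ w.p. }1\})$ is active unless $\pi(A)=\{k\text{ w.p. }1\}$, forcing $A$ to be matched to $i$ with probability $0$.

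Next I would use $\alpha=1$: the constraint $\sum_C f_{h,t}(C)\in[\min\{\alpha,p_{h,t}\},p_{h,t}]$ forces $A$ to propose its entire unit of free mass in the first round. Because $A$'s local view (its own preferences, the round index, $n$) is identical across the two cases, $A$ makes the \emph{same} round-$1$ clustering and proposal (fixing the algorithm's internal randomness). Since $i$ ranks $A$ first, the LAHP acceptance rules imply $i$ never relinquishes mass it accepts from $A$, so the $A$-mass $i$ accepts in round~$1$ is its final $A$-probability. The contradiction then follows from a case split on whether $A$ places $i$ and $j$ in a common cluster in round~$1$. If they are separated (in particular if $i$ is a singleton), then $i$'s acceptance is independent of $j$'s preferences---the only thing that changes between the cases---so $i$ receives the same $A$-mass in both cases, impossible since it must be $1/2$ in Case~1 and $0$ in Case~2. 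If $i$ and $j$ share an $A$-cluster $C$, then case-$1$ correctness forces $f_{A,1}(C)=1$ with $i,j$ splitting it evenly; but in Case~2 doctor $j$ is filled by its mutual-top hospital $D$ and takes none of $A$'s mass, so (the equal-share constraint being satisfied vacuously for $j$, whose $A$-mass stays at its carried-over value $0$) doctor $i$ absorbs all of $C$'s mass and is matched to $A$ with probability $1$, again contradicting case-$2$ correctness.

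The main obstacle is the second step: rigorously establishing that the target prospects are \emph{forced}, i.e.\ translating the permissive weak ex-ante stability definition into a tight constraint on $\pi(A)$. This requires verifying that the candidate selectively fair alternatives satisfy all three clauses of Definition~\ref{def:selective_stability}---in particular that each clustermate outside the support (the $\{k,l\}$ pair in Case~1, and $l$ in Case~2) either lies at distance $1$ from the support or strictly prefers its own allocation to $A$---and that they are \emph{active}, i.e.\ strongly dominate $\pi(A)$ under $A$'s probabilistic preferences. A secondary subtlety is making the ``$A$ cannot distinguish the cases'' argument airtight against re-proposal of rejected mass; but since $i$ is the single doctor whose final $A$-mass must differ across the two cases, and $i$ accepts every unit of $A$-mass offered to it, tracking $i$ alone suffices.
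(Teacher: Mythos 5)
Your proposal is correct and follows essentially the same route as the paper's proof: the identical four-doctor/four-hospital instance with $j \succ_A k \succ_A i \succ_A l$, the same two cases (all doctors rank $A$ first versus $j,l$ being absorbed by mutually top-ranked hospitals $D,C$), the same use of selectively fair alternatives to force $\pi(A)=\mathrm{Unif}\{i,j\}$ in Case~1 and $\{k\}$ in Case~2, and the same round-1 commitment argument with the sub-case split on whether $A$ clusters $i$ and $j$ together. The only (immaterial) difference is that in the joint-cluster sub-case of Case~2 you argue $i$ absorbs all of $A$'s mass, whereas the paper only needs $\pi_1(i)[A]\ge 1/2$; both yield the contradiction.
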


\begin{proof}
Assume for contradiction there exists a 1-LAHP $a$ that returns a PIIF and weakly ex-ante stable allocation. Consider the following example:

Let $\D = \set{i, j, k, l}$ be the set of doctors, $\H = \set{A, B, C, D}$ be the set of hospitals and all the distances are 1 except: $d(i, j) = 0, d(k, l) = 0$.

The preferences of the doctors $i$ and $k$ are: 
\begin{align*}
&A \succ_i B \succ_i C \succ_i D, \\
&A \succ_k B \succ_k C \succ_k D. 
\end{align*}
The preferences of the hospitals are:
\begin{align*}
&j \succ_A k \succ_A i \succ_A l,\\
&j \succ_B k \succ_B i \succ_B l,\\
&j \succ_D l \succ_D i \succ_D k,\\
&k \succ_C j \succ_C l \succ_C i.
\end{align*}

We divide into two cases:
\begin{enumerate}
    \item The preference of doctor $j$ are $$A \succ_j B \succ_j C \succ_j D$$
    and the preferences of doctor $l$ are
    $$A \succ_l B\succ_l D \succ_l C.$$
    First, we show that the only PIIF and weakly ex-ante stable solution for this set of preferences is
    $$
    \pi^*_1(A) = \begin{cases}
        i,&\textit{ w.p. }1/2 \\ j,&\textit{ w.p. }1/2.
    \end{cases}
    $$
    Let $\pi$ be some PIIF and weakly ex-ante stable allocation, assume that 
    $$\pi(A) \ne \begin{cases}
        i,&\textit{ w.p. }1/2 \\ j,&\textit{ w.p. }1/2
    \end{cases},$$
    we claim that the selectively fair alternative allocation $$\nu_{A,i,j} = \left(A, \set{i,j}, \sigma_{i,j}=\begin{cases}
        i,&\textit{ w.p. }1/2 \\ j,&\textit{ w.p. }1/2
    \end{cases}\right)$$
    is active. This is because:
    \begin{itemize}
        \item The distribution $\sigma_{i,j}$ is IF.
        \item Hospital $A$ is the most preferred hospital by doctors $i$ and $j$.
        \item Since $\pi$ is PIIF, $d(k,l) = 0$ and $A$ is ranked first by $k$ and $l$, we know that $\Pr[\pi(A) = k] = \Pr[\pi(A) = l]$, which implies that $\sigma_{i,j} \succ_A \pi(A)$.
    \end{itemize}

    
    Hospital $A$ is ranked first by all the doctors. Thus, no matter what clustering hospital $A$ chooses in the first round, in any cluster $C$ that hospital $A$ proposes to, every doctor in cluster $C$ must accept $1/|C|$ fraction of the proposed probability mass from hospital $A$. For the same reason, none of the clusters would reject any probability mass from hospital $A$. Thus, any probability mass proposed by hospital $A$ to any cluster would be accepted and assigned to hospital $A$ in the final allocation. 
    
    Thus, hospital $A$ has two options:
    \begin{enumerate}
        \item Choose a clustering where doctors $i$ and $j$ are a cluster $C_{i,j}$, and propose probability mass 1 to cluster $C_{i,j}$ on the first round. Both doctors $i$ and $j$ rank hospital $A$ first. Thus, they cannot reject any of the probability mass hospital $A$ proposed. From this reason and since they both do not have any probability mass from previous rounds, they must choose the allocation $\sigma_{i,j}$.
        \item Choose a clustering such that doctors $i$ and $j$ are each on their own singleton cluster $C_i$ and $C_j$ respectively and propose probability mass 1/2 to each of them. From they same reasons they both must accept.
    \end{enumerate}
    \item The preference of doctor $j$ are 
    $$D \succ_j A \succ_j B \succ_j C,$$
    and the preferences of doctor $l$ are 
    $$C \succ_l B\succ_l A \succ_l D.$$
    
    The only PIIF and weakly ex-ante stable solution for this set of preferences is 
    $$
    \pi^*_2 = \set{(k, A), (j, D), (l, C), (i, B) \textit{ w.p. } 1}.
    $$
    This is because doctor $j$ ranks hospital $D$ first and hospital $D$ ranks doctor $j$ first, and doctor $i$ ranks hospital $D$ last. Thus, doctor $j$ and hospital $D$ must be matched. Otherwise, the selectively fair alternative allocation $(D, \set{j}, \set{j, \textit{ w.p. }1})$ is active.
    
    Similarly, the pairs $(C,l)$, $(A,k)$ and $(B,i)$ rank each other first between individuals whose prospect has not been determined yet, and are at distance 1 from $\D\backslash\set{j}$. Thus, algorithm $a$ must match them with probability 1.
    
    However, we know that hospital $A$ has only two possible way to choose a clustering and proposals on the first round:
    \begin{enumerate}
        \item Choose the clustering where doctors $i$ and $j$ are in the same cluster $C_{i,j}$ and proposes 1 to this cluster.
        Doctor $i$ still ranks hospital $A$ first, therefore, none of the probability mass hospital $A$ proposed is rejected from this cluster. Since there is no allocated probability mass from previous rounds, at the end of this round $\pi_1(i)[A] \ge \pi_1(j)[A]$ which implies that $\pi_1(i)[A] \ge 1/2$.
        \item Choose the clustering where doctors $i$ and $j$ are each on their own cluster $C_i$ and $C_j$ respectively and propose probability mass 1/2 to each of them. From the same reasons, doctor $i$ cannot reject any probability mass that hospital $A$ proposed, and $\pi_1(i)[A] = 1/2$.
    \end{enumerate}
    Since doctor $i$ ranks hospital $A$ first, no matter what clustering or proposals hospital $A$ chooses in later rounds, doctor $i$ would accept from hospital $A$ at least the same probability mass as in this round.
    This means that in the final allocation $\pi(i)[A] \ge \pi_1(i)[A] = 1/2 > 0$. However, in the only stable allocation, $\pi^*_2(i)[A] = 0$. Thus, algorithm $a$ does not return the only PIIF and weakly ex-ante stable solution.
\end{enumerate}
\end{proof}

\paragraph{Extending to general $\alpha$.}
Next, we show that the negative result can be extended to any $\alpha > 0$. The idea behind this proof is similar to the case where $\alpha=1$. However, in this case, we only know that while hospital $A$ does not distinguish the cases described above, hospital $A$ proposes to $$\begin{cases}
        i,&\textit{ w.p. }1/2 \\ j,&\textit{ w.p. }1/2
\end{cases}$$
in the first $\ceil{\frac{1}{\alpha}}$ rounds. Moreover, any probability that hospital $A$ proposes to doctor $i$, will remain matched to $i$ at the final allocation. Thus, hospital $A$ might want to delay proposing to doctor $i$ as long as possible. To make sure that hospital $A$ does propose some probability mass to doctor $i$, hospital $D$ must be delayed for at least $\ceil{\frac{1/2}{\alpha}}$ rounds before proposing to doctor $j$ and letting hospital $A$ notice the difference in the preferences of doctor $j$. For simplicity, in the proof we delay hospital $D$ by $\ceil{\frac{1}{\alpha}}$ rounds.

\begin{theorem}
\label{thm:unfair_prefs_lahp}
There is no algorithm $a \in \alpha$-LAHP that returns a PIIF and weakly ex-ante stable allocation in the setting of a proto-metric and unfair preferences.
\end{theorem}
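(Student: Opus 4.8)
The plan is to extend the construction from the $\alpha=1$ case (the preceding theorem) by adding a \emph{delay gadget} that keeps the adversary's two inputs indistinguishable to hospital $A$ for the $\ceil{1/\alpha}$ rounds it needs in order to be trapped. As before, I would keep the two cluster pairs $d(i,j)=0$, $d(k,l)=0$, hospital $A$ with the unfair preference $j \succ_A k \succ_A i \succ_A l$, and doctors $i,k$ ranking $A$ first. The two inputs differ only in the preferences of $j$ and $l$: in Case~1 both rank $A$ first, forcing the unique PIIF and weakly ex-ante stable prospect $\pi^*(A)=\{i\text{ w.p. }1/2,\ j\text{ w.p. }1/2\}$; in Case~2 doctor $j$ prefers a hospital $D$ that wants $j$ and doctor $l$ prefers a hospital $C$ that wants $l$, forcing the unique stable prospect $\pi^*(A)=\{k\text{ w.p. }1\}$, i.e. $\Pr[\pi(A)=i]=0$. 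The whole proof rests on showing that any $\alpha$-LAHP is forced to commit positive, irreversible mass to $i$ before it can tell the two inputs apart, which is fatal in Case~2.

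The quantitative core is a forced-commitment lemma for Case~1. Since every doctor in $\{i,j,k,l\}$ ranks $A$ first, \emph{every} unit of mass $A$ proposes to one of them is accepted and never returned (the acceptance rules forbid a doctor to relinquish mass on its top hospital). Hence $A$'s final prospect equals the total mass it ever proposes to these doctors, and by PIIF between $i$ and $j$ (same cluster, same preferences) the unique fair-and-stable outcome places $1/2$ on each of $i,j$ and $0$ on $k,l$. An $\alpha$-LAHP must propose at least $\min\{\alpha,p_{A,t}\}$ each round, so a correct algorithm may place at most $1/2$ in total on $j$ and nothing on $k,l$; once its cumulative (sticking) proposals exceed $1/2$ it is forced to place mass on $i$. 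Because all proposals stick, this happens within the first $\ceil{1/\alpha}$ rounds (indeed within $\ceil{1/(2\alpha)}+1$). Let $t_i \le \ceil{1/\alpha}$ be the first round at which $A$ places positive mass on $i$ in Case~1.

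It remains to guarantee that $A$'s transcript is identical in the two inputs through round $t_i$, so that $A$ places the same mass on $i$ in Case~2, where it is stuck and yields $\Pr[\pi(A)=i]>0$, contradicting stability. Hospital $A$ only learns information through the accept/reject responses of the doctors it proposes to; $i$ and $k$ behave identically in both inputs, so the only possible leak is through $j$ (resp. $l$), which rejects $A$ exactly when it is saturated with mass from $D$ (resp. $C$), its only option preferred to $A$ in Case~2. I would therefore insert, for each of $D$ and $C$, a short deferred-acceptance chain of about $\ceil{1/\alpha}$ auxiliary doctors and hospitals whose sole effect is that $D$ (resp. $C$) cannot deliver mass to $j$ (resp. $l$) during the first $\ceil{1/\alpha}$ rounds, realizing the ``delay $D$ by $\ceil{1/\alpha}$ rounds'' step. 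During these rounds $j$ and $l$ are unsaturated and accept $A$ in both inputs, so $A$'s transcript, and in particular its commitment to $i$ at round $t_i$, is identical across the two inputs. This forces $\Pr[\pi(A)=i]>0$ (bounded below by roughly $\alpha$) in Case~2, which is then at statistical distance $\approx\alpha$ from every stable allocation, giving the impossibility and matching the stated bound $\tau < 1/(\ceil{1/\alpha}+3)$ for the approximate version.

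The main obstacle is establishing the delay rigorously against \emph{every} algorithm in the class rather than a greedy one: since the class lets each hospital choose its clustering and targets freely (subject only to proposing at least $\alpha$ per round), I must argue that the auxiliary chain genuinely prevents $D$ and $C$ from saturating $j$ and $l$ early no matter how the gadget hospitals and doctors route their forced proposals, and that the forced $\ge\alpha$ proposals of the gadget hospitals never themselves leak information into $A$'s transcript. Choosing the chain preferences so that the auxiliary doctors reject $D$ and $C$ one step per round, and verifying that these rejections stay invisible to $A$, is the delicate part; the forced-commitment lemma and the final contradiction are then routine given the $\alpha=1$ analysis.
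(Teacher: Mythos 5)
Your overall architecture is exactly the paper's: two inputs differing only in the preferences of $j$ and $l$, a forced-commitment argument showing that in Case~1 hospital $A$ must place mass $1/2$ on doctor $i$ within $\lceil 1/\alpha\rceil$ rounds (because every doctor ranks $A$ first, so all of $A$'s proposals stick and its final prospect equals its cumulative proposals), and a delay gadget of $\beta=\lceil 1/\alpha\rceil$ auxiliary doctors and hospitals that keeps $D$ away from $j$ for those $\beta$ rounds so that $A$'s transcript is identical across the two inputs. The forced-commitment lemma and the final contradiction are correct and match the paper.

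The gap is in the delay gadget, and it is precisely the part you flag as delicate. Your proposed mechanism --- a chain in which the auxiliary doctors become saturated and reject $D$ one step per round --- does not by itself constrain $D$: nothing in the class definition forces $D$ to propose to $d_1$ before $j$, so $D$ could simply propose its $\alpha$ mass directly to $j$ in round~1 and break the indistinguishability immediately. Rejections only inform $D$ after it has already chosen whom to propose to. What the paper does instead is turn the indistinguishability weapon on $D$ itself: it gives $D$ the preference order $d_1 \succ_D \cdots \succ_D d_\beta \succ_D j \succ_D \cdots$ and argues by induction that a correct algorithm must have $D$ propose to $d_1, d_2, \ldots$ one at a time in this order, because at each step $D$ cannot rule out the world in which the next un-contacted doctor ranks $D$ first --- in that world any premature proposal sticks, while the unique weakly ex-ante stable allocation matches $D$ to its highest-ranked remaining doctor, so a premature proposal would make the algorithm incorrect on that world. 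This forces $D$ to burn $\beta$ rounds before reaching $j$, with no saturation machinery needed (each $d_m$ is simply paired with a hospital $H_m$ that ranks it first, pinning down the unique stable allocation). To complete your proof you would need to replace the saturate-and-reject chain with this forcing argument, or otherwise prove that no admissible strategy for $D$ can reach $j$ early; as written, the gadget does not rule that out.
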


\begin{proof}
Assume for contradiction that there exists a $\alpha$-LAHP $a$, for $\alpha \in (0,1]$, that returns a PIIF and weakly ex-ante stable allocation. Consider the following example:

Define $\beta = \ceil{\frac{1}{\alpha}}$. Let $\D = \set{i, j, k, l, d_1, ..., d_{\beta}}$ be the set of doctors, $\H = \set{A, B, C, D, H_1, ..., H_{\beta}}$ be the set of hospitals and all the distances are 1 except: $d(i, j) = 0, d(k, l) = 0$.

The preferences of the doctors (except doctors $j$ and $l$) are: 
\begin{align*}
&A \succ_i B \succ_i C \succ_i D \succ_i H_1 \succ_i ... \succ_i H_{\beta},\\
&A \succ_k B \succ_k C \succ_k D \succ_k H_1 \succ_k ... \succ_k H_{\beta},\\
\forall m \in [{\beta}] :& A \succ_{d_m} H_m \succ_{d_m} H_1 \succ_{d_m}... \succ_{d_m} H_{m-1} \succ_{d_m} H_{m+1} \succ_{d_m}...\succ_{d_m} H_{\beta} \succ_{d_m} B \succ_{d_m} C \succ_{d_m} D. 
\end{align*}
The preferences of the hospitals are:
\begin{align*}
&j \succ_A k \succ_A i \succ_A l \succ_A d_1 \succ_A ... \succ_A d_{\beta},\\
&j \succ_B k \succ_B i \succ_B l \succ_B d_1 \succ_B ... \succ_B d_{\beta},\\
&d_1 \succ_D ... \succ_D d_{\beta} \succ_D j \succ_D l \succ_D i \succ_D k,\\
&k \succ_C j \succ_C l \succ_C i \succ_C d_1 \succ_C ... \succ_C d_{\beta},\\
\forall m \in [{\beta}] :& d_m \succ_{H_m} d_1  \succ_{H_m}... \succ_{H_m} d_{m-1} \succ_{H_m} d_{m+1} \succ_{H_m}...  \succ_{H_m} d_{\beta} \succ_{H_m} i \succ_{H_m} k \succ_{H_m} j \succ_{H_m} l.
\end{align*}

We divide into two cases:
\begin{enumerate}
    \item The preference of doctor $j$ are $$A \succ_j B \succ_j C \succ_j D \succ_j H_1 \succ_j ... \succ_j H_{\beta}$$
    and the preferences of doctor $l$ are
    $$A \succ_l B\succ_l D \succ_l C \succ_l H_1 \succ_l ... \succ_l H_{\beta}.$$
    First, we show that the only PIIF and weakly ex-ante stable solution for this set of preferences is where
    $$
    \pi^*_1(A) = \begin{cases}
        i,&\textit{ w.p. }1/2 \\ j,&\textit{ w.p. }1/2.
    \end{cases}
    $$
    Let $\pi$ be some PIIF and weakly ex-ante stable allocation, assume that 
    $$\pi(A) \ne \begin{cases}
        i,&\textit{ w.p. }1/2 \\ j,&\textit{ w.p. }1/2,
    \end{cases}$$
    we claim that the selectively fair alternative allocation $$\nu_{A,i,j} = \left(A, \set{i,j}, \sigma_{i,j} = \begin{cases}
        i,&\textit{ w.p. }1/2 \\ j,&\textit{ w.p. }1/2
    \end{cases}\right)$$
    is active. This is because:
    \begin{itemize}
        \item The distribution $\sigma_{i,j}$ is IF.
        \item For any doctor $d \in \set{d_1, ..., d_{\beta}}$, replacing $d$ by doctor $i$ or doctor $j$ stochastically dominates the current allocation. Since the allocation $\pi$ is PIIF, $d(k,l) = 0$ and hospital $A$ is ranked first by doctors $k$ and $l$, we know that $\Pr[\pi(A) = k] = \Pr[\pi(A) = l]$, which implies that $\sigma_{i,j} \succ_A \pi(A)$.
        \item Hospital $A$ is doctors $i$ and $j$'s most preferred hospital.
    \end{itemize}

    Hospital $A$ is the most preferred hospital by all the doctors. Thus, for any clustering option, in any round, any proposals that hospital $A$ makes are accepted with some probability by all the doctors in the proposed clusters. 
    Thus, in the first $\beta$ rounds, hospital $A$ can propose only to clusters that contain doctors $i$ and $j$.
    This implies that after $\beta$ rounds, doctor $i$ has accepted probability mass 1/2 from hospital $A$.
    \item The preference of $j$ are $$D \succ_j A \succ_j B \succ_j C \succ_j H_1 \succ_j ... \succ_j H_{\beta}$$
    and the preferences of $l$ are 
    $$C \succ_l B\succ_l A \succ_l D \succ_l H_1 \succ_l ... \succ_l H_{\beta}.$$
    
    The only PIIF and weakly ex-ante stable solution for these preferences is 
    $$
    \pi^*_2 = \set{(k, A), (j, D), (l, C), (i, B), (d_1, H_1), ...,(d_{\beta},H_{\beta}) \textit{ w.p. } 1}.
    $$
    This is because doctor $j$ ranks hospital $D$ first, hospital $D$ ranks doctor $j$ first and doctor $i$ ranks hospital $D$ last. Thus, doctor $j$ and hospital $D$ must be matched. Otherwise, the selectively fair alternative allocation $(D, \set{j}, \set{j, \textit{ w.p. }1}$ is active. 
    
    Similarly, the pairs $(C,l)$, $(A,k)$, $(B,i)$, $\forall m \in [\beta]: (H_m, d_m)$ rank each other first between individuals whose prospect has not been determined yet, and are at distance 1 from $\D\backslash\set{j}$. Thus, algorithm $a$ must match them with probability 1.
    
    Assume hospital $A$ does not see any difference from the first case in the first $\beta$ rounds. Then, in one of the first $\beta$ rounds, hospital $A$ must propose 1/2 probability mass to doctor $i$, and $i$ must accept since it ranks $A$ first. For the same reason, doctor $i$ will never reject hospital $A$, and then algorithm $a$ does not return the only PIIF and weakly ex-ante stable solution. Hospital $A$ notices the difference in the preferences when hospital $D$ proposes some probability mass to doctor $j$, and $j$ rejects some probability mass from $A$, or reassign it to doctor $i$.
    
    Thus, to finish the proof, we have to show that hospital $D$ proposes to doctor $j$ after at least $\beta$ rounds.
    We show that hospital $D$ must propose to the doctors in $\D$ by order of its preferences before proposing to doctor $j$. Otherwise, algorithm $a$ might return an unstable allocation. We prove this by induction over the doctors in $\D \backslash \set{i,j,k,l}$ by order of $D$'s preferences. 
    
    \textbf{Base:} The first doctor that hospital $D$ proposes to must be $d_1$. Otherwise, if hospital $D$ is ranked highest by all the doctors, any other doctor that hospital $D$ proposes will accept and not reject in any later round. Then the output will be unstable.
    In the first round, hospital $D$ cannot distinguish this case from any other case.
    Thus, hospital $D$ must cluster doctor $d_1$ in a singleton cluster and propose to this singleton in the first round.
    
    \textbf{Step:} The $m$-th doctor that hospital $D$ proposes to must be the one ranked $m$-th by $D$. Assume from the induction assumption that until step $t$, hospital $D$ proposed to clusters that contains doctors from the set $$S_{m-1} = \set{r_D^{-1}(d) \le m-1: d \in \D}.$$
    We claim that at step $t+1$, hospital $D$ will propose only to clusters that contain doctors from $S_{m}$. Otherwise, consider the case where for every doctor $d \in S_{m-1}$, $r_d^{-1}(D) = |\D|$ there exists some hospital $h \in \H \backslash \set{D}$ such that $r_d(1) = h$ and  $r_h(1) = d$. 
    
    For every $d \in S_{m-1}$, in any weakly ex-ante stable allocation, $d$ must be matched to the hospital they ranked first with probability 1.
    
    At step $t+1$, hospital $D$ does not have any knowledge about the preferences of the doctors in $\D\backslash S_{m-1}$. Suppose that for every doctor $d \in \D\backslash S_{m-1}$, $r_d(1) = D$. Then in any weakly ex-ante stable allocation, hospital $D$ is matched $d_m$ with probability 1.
    
    More than that, any doctor from $\D\backslash S_m$ that hospital $D$ will propose to will be matched to $D$ in the final allocation some probability mass. 
    Thus, hospital $D$ cannot propose at step $t+1$ to any cluster that contains a doctor in $\D \backslash S_m$. 
\end{enumerate}
\end{proof}

We are now set to prove Theorem \ref{thm:unfair_prefs_fails}.
\begin{proof}[Proof of Theorem \ref{thm:unfair_prefs_fails}]
The proof is immediate from Theorems \ref{thm:unfair_prefs_lahp} and \ref{thm:unfair_prefs_ladp}.
\end{proof}

\subsection{Impossibility Result for General Metrics}
\label{sec:failure_fair_prefs}

In this section, we show that even if we require mutual replacement individually fair hospital preferences (see Definition \ref{def:mutual_replacement_individual_fairness}), the strongest fairness requirement we defined for general metrics, there is no local-proposing algorithm that outputs a PIIF and weakly ex-ante stable allocation.

\begin{theorem}
\label{thm:gen_metrics_fail}
There is no local-proposing algorithm that outputs a PIIF and weakly ex-ante stable allocation if the preferences of the hospitals are mutual replacement individually fair with respect to a general metric.
\end{theorem}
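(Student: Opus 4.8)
The plan is to mirror the two-part structure used for the unfair-preferences barrier (\Cref{thm:unfair_prefs_fails}): I would prove the statement separately for the doctor-proposing class (LADP) and the hospital-proposing class ($\alpha$-LAHP), adapting the constructions of \Cref{thm:unfair_prefs_ladp} and \Cref{thm:unfair_prefs_lahp}, and then combine. The single conceptual move is to replace the \emph{strict unfair} preference $j \succ_A k \succ_A i \succ_A l$ used there (with $d(i,j)=0$, which forces $D_{TV}(r_A, r_A^{i;j})=1$ and is hence fair only at distance $1$) by a \emph{probabilistic, mutual-replacement individually fair} preference over a metric in which the relevant pair sits at distance $\delta \in (0,1)$, with the remaining distances as before. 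Concretely, for the pair whose asymmetric treatment drives the argument I would let $A$ rank one doctor above the other with probability $(1+\delta)/2$; then the swap changes $r_A$ by total variation exactly $\delta$, so the preference sits on the boundary of \Cref{def:mutual_replacement_individual_fairness}, while the complementary pair can be kept symmetric (distance $0$) so that $A$ is genuinely indifferent there.

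The engine of the argument is the calibration observation already implicit in the example of \Cref{sec:generalizing_stability_fails}. For the $\delta$-calibrated preference, two facts hold simultaneously: (i) since the pair is at distance $\delta<1$, any PIIF allocation must keep the two doctors' prospects within statistical distance $\delta$, so if both rank $A$ first then the gap between their probabilities of receiving $A$ is at most $\delta$, which forbids $A$ from ever being matched to the favored doctor alone; yet (ii) with respect to $A$'s own probabilistic preference the prospect ``always the favored doctor'' \emph{strongly} dominates every PIIF-feasible prospect, since the former delivers first choice with probability $(1+\delta)/2$ whereas every PIIF-feasible prospect delivers first choice with probability at most $(1+\delta^2)/2 < (1+\delta)/2$. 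These are exactly the two facts that, in the unfair proof, were supplied for free by the strict total order and by $d(i,j)=0$. Thus the $\delta$-calibrated fair preference behaves, in every stochastic-domination comparison the weak-stability analysis invokes, like the strict ordering, while PIIF on the doctor side still creates the tension.

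With this substitution the remaining combinatorics are inherited. For LADP I would use the same proposal dynamics as \Cref{thm:unfair_prefs_ladp}: in the first round the two competing doctors propose full mass to $A$, and by controlling the (now fair) preferences of the auxiliary hospitals I would create two runs that are indistinguishable to $A$ in round one but whose unique PIIF and weakly ex-ante stable allocations (\Cref{def:selective_stability}) disagree --- in one run $A$ must end matched with probability $1$ to the doctor it genuinely prefers as a singleton, in the other it must accept a positive share of the doctor it would rather exclude --- so no admissible round-one decision is correct in both. For $\alpha$-LAHP I would likewise transplant the construction of \Cref{thm:unfair_prefs_lahp}, including the delaying gadget that forces hospital $D$ to propose in preference order for roughly the first $\ceil{1/\alpha}$ rounds, so that $A$ cannot detect the difference in the relevant doctor's preferences before it has already committed probability mass. \Cref{thm:gen_metrics_fail} then follows by combining the two sub-results exactly as in the proof of \Cref{thm:unfair_prefs_fails}.

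The main obstacle I anticipate is in the verification rather than the architecture. I must check that one fixed probabilistic preference for $A$ satisfies mutual-replacement IF under \emph{every} transposition (not only the calibrated pair), that the chosen distance $\delta$ together with the symmetric pair and the unit cross-distances still forms a genuine pseudometric (routine, since all nontrivial distances are at most $1$ and cross-cluster distances are $1$, so the triangle inequality holds automatically), and --- most delicately --- that each stochastic-domination comparison used to pin down the unique weakly stable allocation genuinely goes through for the probabilistic preference. In the unfair case these comparisons were immediate from a strict total order; here each must be re-derived from the calibration observation, and one must confirm that in every case the binding constraint is the PIIF requirement on the doctors rather than the hospital's own fairness, so that the tension driving the impossibility is not accidentally relieved.
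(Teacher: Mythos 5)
Your plan follows the paper's proof almost exactly: the same decomposition into LADP and $\alpha$-LAHP, the same two-run indistinguishability argument inherited from the unfair-preferences barrier, and the same conceptual move of replacing the strict order $j \succ_A k \succ_A i \succ_A l$ by a probabilistic mutual-replacement-IF preference over a metric with the sensitive pairs at distance $1/3$. Your calibration arithmetic for a single pair is correct (first-choice probability $(1+\delta)/2$ versus a PIIF cap of $(1+\delta^2)/2$), and with $\delta=1/3$ it reproduces the paper's motivating example.

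What you defer as ``verification'' is, however, the actual core of the paper's proof. The argument needs one fixed distribution over orderings of \emph{four} doctors realizing the full chain $j \,\Tilde{\succ}_A\, k \,\Tilde{\succ}_A\, i \,\Tilde{\succ}_A\, l$ in stochastic domination, mutual-replacement IF under \emph{all} transpositions simultaneously, and such that (a) in one run $\set{i \text{ w.p. } 1/3,\ j \text{ w.p. } 2/3}$ strongly dominates every PIIF-feasible prospect (which, once $A$ has rejected $i$, is at best $\set{j: 1/3,\ k: 1/2,\ l: 1/6}$ — note this is the comparison that actually drives the impossibility, not the ``always the favored doctor'' comparison your calibration observation addresses), and (b) in the other run $\set{k \text{ w.p. } 1}$ strongly dominates anything involving $i$. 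The paper supplies this gadget explicitly as an $18$-term distribution with weights $2/36$, $3/36$, $1/36$ and verifies its rank matrix; your two-doctor calibration does not extend to it automatically. Also, your proposed variant with the complementary pair at distance $0$ forces $A$ to be exactly indifferent between $k$ and $l$, which changes the domination comparisons in both runs and would need to be re-checked; the paper keeps both pairs at distance $1/3$ precisely so that one gadget serves all four hospitals' preferences uniformly. So: right architecture, but the construction you label as a checking step is the piece that constitutes most of the work.
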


For the counter examples we use later in this section, we define below a distribution $r_h$ over the preferences of a hospital $h$ over four doctors $d_1, d_2, d_3$ and $d_4$. The distances according to the metric are $d(d_1, d_3) = 1/3$, $d(d_2, d_4) = 1/3$ and the rest of the distances are 1. The distribution $r_h$ is mutual replacement individually fair. Moreover, the distribution $r_h$ satisfies that:
\begin{equation*}
    \forall i \in [3]: \set{d_i, \textit{ w.p. }1} \succeq_h \set{d_{i+1}, \textit{ w.p. }1}
\end{equation*}
where the comparison is by stochastic domination.

We denote this type of preferences as 
$$
d_1 \,\Tilde{\succ}_h\, d_2 \,\Tilde{\succ}_h\, d_3 \,\Tilde{\succ}_h\, d_4.
$$

The following doubly stochastic matrix describes the distribution $r_h$:
\begin{center}
\begin{tabular}{c|c|c|c|c}
     & $d_1$ & $d_2$ & $d_3$ & $d_4$ \\
     \hline
     Rank 1 & 1/2 & 1/3 & 1/6 & 0  \\
     Rank 2 & 1/6 & 2/9 & 5/18 & 1/3  \\
     Rank 3 & 1/6 & 2/9 & 5/18 & 1/3  \\
     Rank 4 & 1/6 & 2/9 & 5/18 & 1/3 
\end{tabular}
\end{center}

To verify mutual-replacement individual fairness, we specify the full distribution $r_h$:
\begin{align*}
    r_h = 
    \begin{cases}
        d_2 \succ_h d_4 \succ_h d_1 \succ_h d_3,&\textit{w.p. } 2/36 \\
        d_2 \succ_h d_4 \succ_h d_3 \succ_h d_1,&\textit{w.p. } 2/36 \\
        d_2 \succ_h d_1 \succ_h d_4 \succ_h d_3,&\textit{w.p. } 2/36 \\
        d_2 \succ_h d_3 \succ_h d_4 \succ_h d_1,&\textit{w.p. } 2/36 \\
        d_2 \succ_h d_1 \succ_h d_3 \succ_h d_4,&\textit{w.p. } 2/36 \\
        d_2 \succ_h d_3 \succ_h d_1 \succ_h d_4,&\textit{w.p. } 2/36 \\ \\
        d_1 \succ_h d_3 \succ_h d_2 \succ_h d_4,&\textit{w.p. } 3/36 \\
        d_1 \succ_h d_3 \succ_h d_4 \succ_h d_2,&\textit{w.p. } 3/36 \\
        d_1 \succ_h d_2 \succ_h d_3 \succ_h d_4,&\textit{w.p. } 3/36 \\
        d_1 \succ_h d_4 \succ_h d_3 \succ_h d_2,&\textit{w.p. } 3/36 \\
        d_1 \succ_h d_2 \succ_h d_4 \succ_h d_2,&\textit{w.p. } 3/36 \\
        d_1 \succ_h d_4 \succ_h d_2 \succ_h d_3,&\textit{w.p. } 3/36 \\ \\
        d_3 \succ_h d_1 \succ_h d_2 \succ_h d_4,&\textit{w.p. } 1/36 \\ 
        d_3 \succ_h d_1 \succ_h d_4 \succ_h d_2,&\textit{w.p. } 1/36 \\
        d_3 \succ_h d_2 \succ_h d_1 \succ_h d_4,&\textit{w.p. } 1/36 \\
        d_3 \succ_h d_4 \succ_h d_1 \succ_h d_2,&\textit{w.p. } 1/36 \\
        d_3 \succ_h d_2 \succ_h d_4 \succ_h d_1,&\textit{w.p. } 1/36 \\
        d_3 \succ_h d_4 \succ_h d_2 \succ_h d_1,&\textit{w.p. } 1/36
    \end{cases}
\end{align*}

We also note that for any doctor $d$ at distance 1 from $d_1, d_2, d_3, d_4$, if the preferences are
$$
d \succ_h d_1 \,\Tilde{\succ}_h\, d_2 \,\Tilde{\succ}_h\, d_3 \,\Tilde{\succ}_h\, d_4
$$
or 
$$
d_1 \,\Tilde{\succ}_h\, d_2 \,\Tilde{\succ}_h\, d_3 \,\Tilde{\succ}_h\, d_4 \succ_h d
$$
the mutual-replacement individual fairness requirement is still satisfied.

\paragraph{Doctors propose.}
We start by showing that for any LADP, there exists a set of preferences of the doctors and hospitals and a distance metric for which the algorithm will output an unstable allocation.

This proof is similar to the proof for Theorem \ref{thm:unfair_prefs_ladp}. There are four doctors $i, j, k$ and $l$ such that $d(i,j) = 1/3$ and $d(k,l) = 1/3$ and the rest of the distances are 1. The preferences of hospital $A$ are $j \,\Tilde{\succ}_h\, k \,\Tilde{\succ}_h\, i \,\Tilde{\succ}_h\, l.$
In the first round, doctors $k$ and $i$ propose probability mass 1 to hospital $A$ (they rank $A$ first), and hospital $A$ does not know which doctors will propose to it in the following rounds:
\begin{itemize}
    \item If no other doctor will propose to hospital $A$ later, i.e., all the other doctors are matched to hospitals they prefer over hospital $A$, then in any weakly ex-ante stable allocation, hospital $A$'s prospect is $\set{k,\textit{ w.p. }1}$.
    Thus, hospital $A$ must reject all the probability mass that doctor $i$ proposes.
    
    \item If doctors $j$ and $l$ propose to hospital $A$ in the second round, i.e., they rank hospital $A$ first between hospitals that did not reject them, then in any PIIF and weakly ex-ante stable allocation, hospital $A$'s prospect is 
    $$\begin{cases}
        j,&\textit{ w.p. } 2/3\\ i,&\textit{ w.p. }1/3.
    \end{cases}$$
\end{itemize}
Since hospital $A$ cannot distinguish these two cases on the first round, assume that $A$ rejected all the probability mass that doctor $i$ proposed. Since $d(i,j) =1/3$ and doctor $i$ ranks hospital $A$ first, hospital $A$ can be matched to doctor $j$ with no more than probability $1/3$. Since doctors $l$ and $k$ both prefer hospital $A$ over all their possible matches and $d(k,l) = 1/3$, the best prospect for hospital $A$ subject to PIIF is 
$$\begin{cases}
    j,&\textit{ w.p. }1/3\\ k,&\textit{ w.p. } 1/2\\ l,&\textit{ w.p. }1/6
\end{cases}$$ and this prospect is stochastically dominated by
$$\begin{cases}
    j,&\textit{ w.p. } 2/3\\ i,&\textit{ w.p. }1/3,
\end{cases}$$
which leads to instability.

\begin{theorem}
\label{claim:fair_prefs_ladp}
In the setting of general metrics and mutual replacement individually fair hospital preferences, there is no $a \in LADP$ that returns a PIIF and weakly ex-ante stable allocation.
\end{theorem}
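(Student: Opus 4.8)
The plan is to argue by contradiction, reusing the round-$1$ indistinguishability idea from the proof of Theorem \ref{thm:unfair_prefs_ladp} but now with fractional distances and fair preferences. I would instantiate the core doctors $i,j,k,l$ with $d(i,j)=d(k,l)=1/3$ and all remaining distances $1$, set hospital $A$'s preferences to the mutual-replacement-IF distribution $j \,\Tilde{\succ}_A\, k \,\Tilde{\succ}_A\, i \,\Tilde{\succ}_A\, l$ specified above, and add the same auxiliary doctors and hospitals as in Theorem \ref{thm:unfair_prefs_ladp} whose sole role is to route proposals. These auxiliaries let me build two instances that are identical from $A$'s local viewpoint in round $1$ (in both, doctors $i$ and $k$, who rank $A$ first, propose their full mass to $A$, and $A$ sees nothing else), but that differ in whether $j$ and $l$ ever reach $A$: in Case 1 they are absorbed by more-preferred hospitals and never propose to $A$, whereas in Case 2 they are rejected in round $1$ and propose to $A$ in round $2$.

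Next I would pin down the unique PIIF and weakly ex-ante stable prospect of $A$ in each case. In Case 1 only $i$ and $k$ can match $A$; since $d(i,k)=1$ and $A$ prefers $k$, the unique stable prospect is $\set{k,\textit{ w.p. }1}$: any prospect with $\Pr[\pi(A)=k]<1$ admits the active selectively-fair alternative $(A,\set{k},\set{k,\textit{ w.p. }1})$ (the only doctor at distance $<1$ from $k$ is $l$, which prefers its own match, so the alternative is selectively fair). In Case 2, using that $i$ ranks $A$ first together with $d(i,j)=1/3$, the PIIF Lipschitz condition forces $\Pr[\pi(A)=j]\le \Pr[\pi(A)=i]+1/3$; maximizing $A$'s prospect subject to this and the analogous $d(k,l)=1/3$ constraint yields the unique stable prospect $\set{j\textit{ w.p. }2/3,\,i\textit{ w.p. }1/3}$, sitting exactly at the Lipschitz boundary, and one checks that $|2/3-1/3|=1/3=d(i,j)$ so the inducing distribution is IF with respect to $i,j$.

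Then I would extract the contradiction from $A$'s forced round-$1$ decision. Because rejected mass is irrevocable and no further proposals reach $A$ in Case 1, for Case 1 to attain $\set{k,\textit{ w.p. }1}$ the hospital must accept all of $k$'s mass already in round $1$, which exhausts its capacity and forces it to reject all of $i$'s mass. By indistinguishability $A$ makes the identical decision in Case 2. But then $i$ is permanently shut out of $A$, i.e.\ $\Pr[\pi(i)=A]=0$, so $\Pr[\pi(A)=j]\le 1/3$, and combining with the $d(k,l)=1/3$ constraint the best prospect $A$ can still attain is $\set{j\textit{ w.p. }1/3,\,k\textit{ w.p. }1/2,\,l\textit{ w.p. }1/6}$. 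A direct rank-by-rank stochastic-domination computation against $r_A$ (using the explicit doubly-stochastic matrix for $\,\Tilde{\succ}\,$) shows $\set{j\textit{ w.p. }2/3,\,i\textit{ w.p. }1/3}\succ_A \set{j\textit{ w.p. }1/3,\,k\textit{ w.p. }1/2,\,l\textit{ w.p. }1/6}$, so the alternative $\nu=(A,\set{i,j},\set{j\textit{ w.p. }2/3,\,i\textit{ w.p. }1/3})$ is an active selectively-fair alternative and the Case-2 output is not weakly ex-ante stable, a contradiction.

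The main obstacle I anticipate is the Case-2 optimality and domination analysis. Unlike the proto-metric setting, the binding constraints are the fractional PIIF Lipschitz bounds rather than exact equalities, so I must argue carefully that $\set{j\textit{ w.p. }2/3,\,i\textit{ w.p. }1/3}$ is genuinely the PIIF-optimal prospect (both the $d(i,j)$ and $d(k,l)$ bounds tight) and, more delicately, that it stochastically dominates \emph{every} PIIF-valid prospect with $\Pr[\pi(A)=i]=0$, not merely the single best one, so that it dominates whatever output $A$ actually produces and hence $\nu$ is active. The strong-domination comparison itself must be verified against the probabilistic preference $r_A$ via its explicit matrix. Packaging the auxiliary doctors and hospitals so that the two instances are truly round-$1$ indistinguishable to $A$ (and $j,l$ reach $A$ only in round $2$ in Case 2) is routine given Theorem \ref{thm:unfair_prefs_ladp}, but must be transcribed with the updated distances $1/3$ and the fair preference $\,\Tilde{\succ}\,$.
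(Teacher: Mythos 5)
Your proposal follows essentially the same route as the paper's proof: the same six-doctor/six-hospital construction with auxiliary agents routing proposals, the same two cases distinguished by whether $j$ and $l$ reach $A$ in round 2, the same unique stable prospects ($\set{k,\textit{ w.p. }1}$ versus $\set{j\textit{ w.p. }2/3,\, i\textit{ w.p. }1/3}$), and the same round-1 indistinguishability argument concluding that rejecting $i$ caps $\Pr[\pi(A)=j]$ at $1/3$ and leaves $A$ with a dominated prospect. The details you flag as needing care (tightness of the Lipschitz bounds and the stochastic-domination check against the explicit matrix for $\,\Tilde{\succ}_A$) are exactly the ones the paper verifies.
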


\begin{proof}
Assume for contradiction that there exists a LADP $a$ that returns a PIIF and weakly ex-ante stable allocation for mutual replacement individually fair hospital preferences. Consider the following example:

Let $\D = \set{i, j, k, l, x, y}$ be the set of doctors, $\H = \set{A, B, C, D, E, F}$ be the set of hospitals and all the distances are 1 except: $d(k, l) = 1/3,$ $d(i, j) = 1/3$.

The preferences of the doctors are: 
\begin{align*}
    &A \succ_i B \succ_i C \succ_i D \succ_i E, \succ_i F, \\
    &F \succ_j A \succ_j B \succ_j C \succ_j D \succ_j E ,\\
    &A \succ_k B \succ_k C \succ_k D \succ_k E \succ_k F, \\
    &E \succ_l A \succ_l B \succ_l C \succ_l D \succ_l F ,\\
    &F \succ_x D \succ_x A \succ_x B \succ_x C \succ_x E, \\
    &E \succ_y C \succ_y A \succ_y B \succ_y D \succ_y F.
\end{align*}

The preferences of the hospitals (besides hospitals $E$ and $F$) are:
\begin{align*}
    &j \,\Tilde{\succ}_A\, k \,\Tilde{\succ}_A\, i \,\Tilde{\succ}_A\, l \succ_A x \succ_A y, \\
    &j \,\Tilde{\succ}_B\, k \,\Tilde{\succ}_B\, i \,\Tilde{\succ}_B\, l \succ_B x \succ_B y, \\
    &y \succ_C x \succ_C j \,\Tilde{\succ}_C\, k \,\Tilde{\succ}_C\, i \,\Tilde{\succ}_C\, l, \\
    &x \succ_D y \succ_D j \,\Tilde{\succ}_D\, k \,\Tilde{\succ}_D\, i \,\Tilde{\succ}_D\, l.
\end{align*}
Let us divide into two cases:
\begin{enumerate}
    \item Hospital $F$'s preferences are: 
    \begin{align*}
        j \,\Tilde{\succ}_F\, k \,\Tilde{\succ}_F\, i \,\Tilde{\succ}_F\, l \succ_F x \succ_F y
    \end{align*}
    and hospital $E$'s preferences are: 
    \begin{align*}
        l \,\Tilde{\succ}_E\, j \,\Tilde{\succ}_E\, k \,\Tilde{\succ}_E\, i \succ_E x \succ_E y.
    \end{align*}
    
    The preferences of the hospitals are of the structure presented above. Therefore, they are mutual replacement individually fair.
    
    Let $\pi$ be a PIIF and weakly ex-ante stable solution for this problem.
    We claim that there is a unique solution $$\pi^* = \set{(A, k), (B, i), (C, y), (D, x), (E, l), (F, j), \textit{ w.p. } 1}.$$
    
    We first show that $\Pr[\pi(F) = j] = 1$. Assume for contradiction that $\Pr[\pi(F) = j] < 1$, then the selectively fair alternative allocation $\nu_{F,j} = (F, \set{j}, \set{j, \textit{ w.p. } 1})$ is active. This is because:
    \begin{itemize}
        \item Hospital $F$ is doctor $i$'s least preferred hospital and doctor $j$ is at distance 1 from all the other doctors. Thus, $\nu_{F,j}$ satisfies the fairness requirement.
        \item Hospital $F$ is doctor $j$'s most preferred hospital, so $j$ prefers $F$ over any hospital in doctor $j$'s prospect.
        \item Being assigned to doctor $j$ stochastically dominates any other prospect for hospital $F$.
    \end{itemize}
    Similarly, $\Pr[\pi(E) = l] = 1$.
    
    Next we claim that $\Pr[\pi(D) = x] = 1$ and $\Pr[\pi(C) = y] = 1$ since otherwise the contracts $\nu_D = (D, \set{x}, \set{x , \textit{ w.p. } 1})$ and $\nu_C = (C, \set{y}, \set{y , \textit{ w.p. } 1})$ are active. This is because:
    \begin{itemize}
        \item The distributions $\set{x , \textit{ w.p. } 1}, \set{y , \textit{ w.p. } 1}$ are individually fair.
        \item Hospitals $D$ and $C$ are doctors $x$ and $y$'s most preferred hospitals that are not always assigned to other doctors respectively. Thus, $x$ and $y$ always prefer $D$ and $C$ respectively over any hospital in their prospect in the allocation $\pi$.
        \item Being assigned to doctors $x$ and $y$ stochastically dominates any other possible prospect for hospitals $D$ and $C$ respectively.
    \end{itemize}
    
    Thus, hospital $A$'s prospect must be some combination of doctors $k$ and $i$ and since $d(k,i)=1$ and $k \,\Tilde{\succ}_A\, i$, we have that the only weakly ex-ante stable solution is when $\Pr[\pi(A) = k] = 1$.
    
    Given all the above, the allocation $\pi^*$ is the unique weakly ex-ante stable solution.
    
    Since algorithm $a$ is a PIIF and weakly ex-ante stable LADP, we know the order of proposals of the doctors and know what probability mass the hospitals must accept.
    In the first round, doctors $i$ and $k$ propose 1 to hospital $A$, hospital $A$ rejects doctor $i$ and accepts doctor $k$. Doctors $j$ and $x$ propose 1 to hospital $F$, hospital $F$ rejects doctor $x$ and accepts doctor $j$. Doctors $l$ and $y$ propose $1$ to hospital $E$, hospital $E$ rejects doctor $y$ and accepts doctor $l$.
    In the second round, doctor $i$ proposes 1 to hospital $B$, and hospital $B$ accepts. Doctor $x$ propose 1 to hospital $D$, and hospital $D$ accepts. Doctor $y$ proposes $1$ to hospital $C$, and hospital $C$ accepts.
    
    \item Hospital $F$'s preferences are: 
    \begin{align*}
         x \succ_F j \,\Tilde{\succ}_F\, k \,\Tilde{\succ}_F\, i \,\Tilde{\succ}_F\, l \succ_F y,
    \end{align*}
    and hospital $E$'s preferences are: 
    \begin{align*}
        y \succ_E l \,\Tilde{\succ}_E\, j \,\Tilde{\succ}_E\, k \,\Tilde{\succ}_E\, i \succ_E x.
    \end{align*}
    Since the preferences of the hospitals are of the structure presented earlier they are mutual replacement individually fair.
    
    Let the allocation $\pi$ be a PIIF and weakly ex-ante stable solution.
    First we claim that $\Pr[\pi(F) = x] = 1, \Pr[\pi(E) = y] = 1$. Otherwise, the selectively fair alternative allocations $\nu_F = (F, \set{x}, \set{x ,\textit{ w.p. } 1})$ and $\nu_E = (E, \set{y}, \set{y ,\textit{ w.p. } 1})$ are active. This is because:
    \begin{itemize}
        \item The distributions $\set{x,\textit{ w.p. } 1}, \set{y,\textit{ w.p. }1}$ are individually fair.
        \item Doctors $x$ and $y$ are hospitals $E$ and $F$'s most preferred doctors respectively. Thus, the alternative of always being matched to doctors $x$ and $y$ stochastically dominates any prospect for hospitals $E$ and $F$ respectively.
        \item Hospitals $F$ and $E$ are doctors $x$ and $y$'s most preferred hospitals respectively. Thus, hospitals $F$ and $E$ are always more preferred by $x$ and $y$ respectively over any hospital in their prospect in the allocation $\pi$.
    \end{itemize}
    
    Next, we claim that 
    $$\pi(A) = \begin{cases}
        i,&\textit{w.p. }1/3\\
        j,&\textit{w.p. }2/3
    \end{cases}.$$
    For any other PIIF allocation the selectively fair alternative allocation 
    $$\nu = \left(A, \set{i,j},\sigma_{i,j} =  \begin{cases}
        i,&\textit{w.p. } 1/3\\
        j,&\textit{w.p. } 2/3
    \end{cases}\right)$$
    is active. This is because:
    \begin{itemize}
        \item The distribution $\sigma_{i,j}$ is IF.
        \item Between $\H \backslash \set{E,F}$, hospital $A$ is the most preferred hospital by doctors $i, j, k, l$. Thus, in any PIIF solution 
        \begin{align*}
            |\Pr[\pi(A) = j] - \Pr[\pi(A) = i]| \le 1/3, \quad |\Pr[\pi(A) = k] - \Pr[\pi(A) = l]| \le 1/3.
        \end{align*}
        
        Under these constraints, the prospect $\sigma_{i,j}$ stochastically dominates $A$'s prospect, subject to PIIF.
    \end{itemize}
     
    When running algorithm $a$, in the first round, all the doctors propose the same as in the previous described run: doctors $j$ and $x$ propose 1 to hospital $F$, doctors $l$ and $y$ propose $1$ to hospital $E$ and doctors $i$ and $k$ propose 1 to hospital $A$. Since hospital $A$ cannot distinguish this run from the previous, hospital $A$ rejects doctor $i$ and accepts doctor $k$. Thus, in algorithm $a$'s output, doctor $i$ is never matched to hospital $A$. Thus, algorithm $a$ does not return a PIIF and weakly ex-ante stable solution, and we have obtained a contradiction.
\end{enumerate}
\end{proof}

\paragraph{Hospitals propose.}
Next, we prove that no 1-LAHP outputs a PIIF and weakly ex-ante stable allocation for the setting of mutual replacement individual fairness and a general metric.
The proof is very similar to the proof of Theorem \ref{thm:unfair_prefs_lahp}. There are four hospitals $i, j, k$ and $l$, in the proof of Theorem \ref{thm:unfair_prefs_lahp} the preferences of hospital $A$ were $j \succ_A k \succ_A i \succ_A l$. In this case, since we require fair preferences, and the preferences of hospital $A$ are $j \,\Tilde{\succ}_A\, k \,\Tilde{\succ}_A\, i \,\Tilde{\succ}_A\, l$.

If all doctors rank hospital $A$ first, then any proposal that $A$ makes in the first round will be accepted and not rejected in the entire run. Moreover, we arrange the other preferences such that there is a single weakly ex-ante stable and PIIF allocation for $A$. Thus, hospital $A$ must propose in the first round to its outcome in this PIIF and weakly ex-ante stable allocation which is 
$$\begin{cases}
        i,&\textit{w.p. } 1/3\\
        j,&\textit{w.p. } 2/3
\end{cases}$$.

If doctor $j$ prefers hospital $D$ over hospital $A$ and hospital $D$ ranks doctor $j$ first, then we arrange the other preference such that in the only PIIF and weakly ex-ante stable allocation, hospital $A$'s prospect is $\set{k,\textit{w.p.}1}$, and hospital $A$ is still doctor $i$'s favorite hospital. However, hospital $A$ cannot distinguish between these two scenarios and proposes to doctor $i$ in the first round. Once doctor $i$ accepted hospital $A$'s proposal, doctor $i$ will never reject it. Thus, hospital $A$'s prospect is different from $\set{k,\textit{w.p.}1}$, although it is the only PIIF and weakly ex-ante stable allocation.

\begin{theorem}
In the setting of a general metric and mutual replacement individually fair preferences, there is no $a \in 1$-LAHP that returns a PIIF and weakly ex-ante stable allocation.
\end{theorem}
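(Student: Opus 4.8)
The plan is to adapt the hospital-proposing impossibility for unfair preferences (Theorem~\ref{thm:unfair_prefs_lahp}) to the general-metric, mutual-replacement-fair setting, replacing the distance-$0$ clusters by distance-$1/3$ pairs and the discriminatory order $j\succ_A k\succ_A i\succ_A l$ by the probabilistic preference $j\,\Tilde{\succ}_A\,k\,\Tilde{\succ}_A\,i\,\Tilde{\succ}_A\,l$ induced by the explicit distribution $r_h$ fixed at the start of this section. Concretely, I would take $\D=\{i,j,k,l\}$ and $\H=\{A,B,C,D\}$ with $d(i,j)=d(k,l)=1/3$ and all other distances equal to $1$, and set each hospital's preference to be a tilde-ordering of the special four-doctor distribution under a suitable relabeling (e.g.\ $A$ and $B$ as $j\,\Tilde{\succ}\,k\,\Tilde{\succ}\,i\,\Tilde{\succ}\,l$, $C$ as $k\,\Tilde{\succ}\,j\,\Tilde{\succ}\,l\,\Tilde{\succ}\,i$, and $D$ as $j\,\Tilde{\succ}\,l\,\Tilde{\succ}\,i\,\Tilde{\succ}\,k$). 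Each such relabeling places the two distance-$1/3$ couples exactly at the $(d_1,d_3)$ and $(d_2,d_4)$ positions of the special distribution, so mutual-replacement individual fairness holds by construction.

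Assuming for contradiction a $1$-LAHP $a$ that always returns a PIIF and weakly ex-ante stable allocation, I would run the same two-case analysis. In \emph{Case 1} all four doctors rank $A$ first, and I claim the unique stable value of $A$'s prospect is $\sigma_{i,j}$, the prospect matching $A$ to $i$ with probability $1/3$ and to $j$ with probability $2/3$: any other PIIF prospect admits the active selectively fair alternative $\nu_{A,i,j}=(A,\{i,j\},\sigma_{i,j})$. Its IF condition is exactly $|1/3-2/3|=1/3=d(i,j)$; condition~(2) of Definition~\ref{def:selective_stability} is vacuous because $k,l$ are at distance $1$ from both $i$ and $j$; condition~(3) holds since $A$ is the top choice of $i$ and $j$; and activeness amounts to showing that $\sigma_{i,j}$ strictly stochastically dominates (against $r_A$) every PIIF prospect of $A$ subject to the two envy constraints $|\Pr[\pi(A)=j]-\Pr[\pi(A)=i]|\le 1/3$ and $|\Pr[\pi(A)=k]-\Pr[\pi(A)=l]|\le 1/3$. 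In \emph{Case 2}, $j$ prefers $D$ (with $D$ preferring $j$) and $l$ prefers $C$ (with $C$ preferring $l$ to $i$ among the survivors); I would force $\pi(D)=\{j\}$, then $\pi(A)=\{k\}$, then $\pi(C)=\{l\}$, then $\pi(B)=\{i\}$ by the usual chain of ``mutual-top'' selectively fair alternatives, each valid because the relevant distance-$1/3$ partner is already matched to a strictly preferred hospital. In particular $i$ receives mass $0$ from $A$ in the unique stable solution.

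The contradiction then mirrors the $\alpha=1$ unfair argument. Since $\alpha=1$, hospital $A$ must propose its entire free mass in round~$1$; in Case~1 every doctor ranks $A$ first, so this mass is all accepted and never subsequently rejected, whence $\pi(A)$ equals $A$'s round-$1$ allocation. Reaching $\sigma_{i,j}$ therefore forces $A$ to allocate mass $1/3>0$ to $i$ already in round~$1$ (any clustering that bundles $i$ and $j$ would violate the equal-share rule, so $A$ must use separate clusters and hand $i$ positive mass). Because $A$ chooses its round-$1$ clustering and proposals knowing only its own preference, which is identical across the two cases, and before seeing any doctor response, it behaves identically in Case~2 and again proposes positive mass to $i$; as $i$ ranks $A$ first in Case~2 as well, $i$ accepts and never later rejects it, so the output has $\pi(A)\ne\{k\}$, contradicting stability.

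I expect the main obstacle to be the optimality/activeness computation in Case~1 (and the symmetric forcing of $\{k\}$ in Case~2): verifying that $\sigma_{i,j}$ is $\succ_A$-maximal among all prospects compatible with PIIF is a finite but slightly delicate stochastic-domination optimization against the explicit distribution $r_A$, since the two envy constraints couple the $(i,j)$ and $(k,l)$ pairs and $A$'s ranking interleaves the pairs as $j,k,i,l$. This is precisely the optimization already carried out in the general-metric doctor-proposing proof (Theorem~\ref{claim:fair_prefs_ladp}), so I would reuse that analysis rather than repeat it, and then only need to check that the hospital-proposing round structure (clustering, the equal-share acceptance rule, and the $\alpha=1$ ``propose-all-free-mass'' constraint) forces the round-$1$ behavior described above.
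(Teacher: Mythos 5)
Your proposal is correct and follows essentially the same route as the paper's proof: the identical four-doctor/four-hospital instance with $d(i,j)=d(k,l)=1/3$, the same tilde-preferences for $A,B,C,D$, the same two cases distinguished by $j$'s and $l$'s rankings, the same forcing of $\sigma_{i,j}$ via the selectively fair alternative $\nu_{A,i,j}$ in Case~1 (including the observation that bundling $i$ and $j$ into one cluster would force an equal split and hence singletons with masses $1/3$ and $2/3$ are required), and the same indistinguishability-in-round-one contradiction in Case~2. The only cosmetic difference is that you outsource the stochastic-domination optimization to the doctor-proposing argument, which the paper instead restates inline.
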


\begin{proof}
Assume for contradiction that there exists a 1-LAHP $a$ that returns a PIIF and weakly ex-ante stable allocation. Consider the following example:

Let $\D = \set{i, j, k, l}$ be the set of doctors, $\H = \set{A, B, C, D}$ be the set of hospitals and all the distances are 1 except: $d(i, j) = 1/3, d(k, l) = 1/3$.

The preferences of doctors $i$ and $k$ are: 
\begin{align*}
&A \succ_i B \succ_i C \succ_i D, \\
&A \succ_k B \succ_k C \succ_k D.
\end{align*}
The preferences of the hospitals are:
\begin{align*}
&j \,\Tilde{\succ}_A\, k \,\Tilde{\succ}_A\, i \,\Tilde{\succ}_A\, l,\\
&j \,\Tilde{\succ}_B \,k \,\Tilde{\succ}_B\, i \,\Tilde{\succ}_B \,l,\\
&j \,\Tilde{\succ}_D \,l \,\Tilde{\succ}_D\, i \,\Tilde{\succ}_D\, k,\\
&k \,\Tilde{\succ}_C \,j \,\Tilde{\succ}_C \,l \,\Tilde{\succ}_C\, i.
\end{align*}

We divide into two cases:
\begin{enumerate}
    \item The preference of doctor $j$ are 
    $$A \succ_j B \succ_j C \succ_j D$$
    and the preferences of doctor $l$ are
    $$A \succ_l B\succ_l D \succ_l C.$$
    First, we show that in any PIIF and weakly ex-ante stable solution $\pi_1^*$ for these preferences that satisfies that
    $$
    \pi^*_1(A) = \begin{cases}
        i,&\textit{w.p. } 1/3 \\
        j,&\textit{w.p. } 2/3.
    \end{cases}
    $$
    Let the allocation $\pi$ be some PIIF and weakly ex-ante stable allocation, assume that 
    $$\pi(A) \ne \begin{cases}
        i,&\textit{w.p. } 1/3 \\
        j,&\textit{w.p. } 2/3.
    \end{cases}$$
    We claim that the selectively fair alternative allocation $$\nu_{A,i,j} = \left(A, \set{i,j}, \sigma_{i,j} = \begin{cases}
        i,&\textit{w.p. } 1/3 \\
        j,&\textit{w.p. } 2/3
    \end{cases}\right)$$
    is active. This is because:
    \begin{itemize}
        \item The distribution $\sigma_{i,j}$ is IF.
        \item Hospital $A$ is the most preferred hospital by all the doctors. Thus, in any PIIF solution 
        \begin{align*}
            |\Pr[\pi(A) = j] - \Pr[\pi(A) = i]| \le 1/3, \quad |\Pr[\pi(A) = k] - \Pr[\pi(A) = l]| \le 1/3.
        \end{align*}
        
        Under these constraints, the prospect $\sigma_{i,j}$ stochastically dominates any other prospect for hospital $A$ subject to PIIF.
        \item Hospital $A$ is doctors $i$ and $j$'s most preferred hospital.
        Thus, hospital $A$ is preferred over any other hospital in their prospect.
    \end{itemize}
    
    Hospital $A$ is the most preferred hospital by all the doctors. Thus, for any clustering hospital $A$ chooses, any proposals it makes on the first round would be accepted uniformly by all the doctors in the proposed clusters. More than that, for the same reason, this accepted mass would not be rejected or reallocated in any later round. Thus, if hospital $A$ is assigned at the end of the run to the allocation $\sigma_{i,j}$, hospital $A$ must choose a clustering where doctors $i$ and $j$ are in different singleton clusters $C_i$ and $C_j$ respectively, and propose probability mass 1/3 to cluster $C_i$ and probability mass $2/3$ to cluster $C_j$ on the first round.
    
    \item The preference of doctor $j$ are $$D \succ_j A \succ_j B \succ_j C$$
    and the preferences of doctor $l$ are 
    $$C \succ_l B\succ_l A \succ_l D.$$
    
    We claim that the only PIIF and weakly ex-ante stable solution for these preferences is 
    $$
    \pi^*_2 = \set{(k, A), (j, D), (l, C), (i, B),\textit{ w.p. } 1}.
    $$
    This is because doctor $j$ ranks hospital $D$ first, hospital $D$ ranks doctor $j$ first and doctor $i$ ranks hospital $D$ last. Thus, doctor $j$ and hospital $D$ must be matched. Otherwise, the selectively fair alternative allocation $(D, \set{j}, \set{j, \textit{ w.p. }1}$ is active.
    
    Similarly, the pairs $(C,l)$, $(A,k)$ and $(B,i)$ rank each other first between individuals whose prospect has not been determined yet, and are at distance 1 from $\D\backslash\set{j}$. Thus, algorithm $a$ must match them with probability 1.
    
    However, we know that in the first round, hospital $A$ proposes probability mass $1/3$ to doctor $i$ and $i$ must accept, since $A$ is its most preferred hospital. From the same reason, doctor $i$ will never reject or reallocate any probability mass from hospital $A$. In the allocation $\pi^*_2$, doctor $i$ is matched to hospital $A$ with probability 0.
    Thus, algorithm $a$ does not return the only PIIF and weakly ex-ante stable solution. 
\end{enumerate}
\end{proof}

\paragraph{Extending to general $\alpha$.}
Next, we generalize the proof above to any value of $\alpha$ in $(0,1]$. The difference between the two cases is that now, if $\alpha \le 2/3$, $A$ would want to start by proposing $\alpha$ probability mass to doctor $j$ and only if $j$ accepts, then it proposes to doctor $i$. However, if we could ``delay'' hospital $D$ so it proposes to doctor $j$ only after $\ceil{\frac{2/3}{\alpha}}$ rounds, then hospital $A$ would have no other option than proposing some probability mass to doctor $i$, before being rejected by doctor $j$. This puts us in a similar situation to the proof of the $\alpha = 1$ case, and we show the algorithm will not reach a fair and stable solution. 
For simplicity, in the proof we delay hospital $D$ by $\ceil{\frac{1}{\alpha}}$ rounds.

\begin{theorem}
\label{thm:fair_prefs_lahp}
There is no $a \in \alpha$-LAHP that returns a PIIF and weakly ex-ante stable allocation in the setting of a general metric and mutual replacement individually fair hospital preferences.
\end{theorem}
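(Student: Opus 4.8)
The plan is to graft the delaying gadget from the proof of Theorem~\ref{thm:unfair_prefs_lahp} onto the four-doctor construction of the preceding $1$-LAHP theorem. Set $\beta = \lceil 1/\alpha\rceil$, take doctors $\D = \set{i,j,k,l,d_1,\dots,d_\beta}$ and hospitals $\H = \set{A,B,C,D,H_1,\dots,H_\beta}$, with $d(i,j)=d(k,l)=1/3$ and every other distance equal to $1$. The preferences of $A,B,C,D$ over $\set{i,j,k,l}$ are of the fair form $j\,\tilde{\succ}\,k\,\tilde{\succ}\,i\,\tilde{\succ}\,l$ (for $A$ and $B$), $k\,\tilde{\succ}\,j\,\tilde{\succ}\,l\,\tilde{\succ}\,i$ (for $C$) and $j\,\tilde{\succ}\,l\,\tilde{\succ}\,i\,\tilde{\succ}\,k$ (for $D$), while the gadget hospitals $H_m$ and gadget doctors $d_m$ keep exactly the cyclic preferences of Theorem~\ref{thm:unfair_prefs_lahp} (each $d_m$ ranking $A$ first and $D$ last, and $D$ preferring all $d_m$'s to $j$). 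I would note that appending distance-$1$ doctors above or below a $\tilde{\succ}$ block leaves the preference mutual-replacement individually fair, exactly as observed for the base $r_h$ distribution, so $A,B,C,D$ remain fair.

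Next I would reproduce the two indistinguishable cases from the $1$-LAHP proof, reusing its selectively-fair-alternative arguments essentially verbatim. In Case~1 doctors $j$ and $l$ rank $A$ first; activating $\nu_{A,i,j}$ with $\sigma_{i,j}=\set{i\text{ w.p. }1/3,\,j\text{ w.p. }2/3}$ (which is IF and stochastically dominating under the PIIF constraints $|\Pr[\pi(A)=j]-\Pr[\pi(A)=i]|\le 1/3$ and $|\Pr[\pi(A)=k]-\Pr[\pi(A)=l]|\le 1/3$) shows the unique PIIF and weakly ex-ante stable prospect is $\pi(A)=\set{i\text{ w.p. }1/3,\,j\text{ w.p. }2/3}$, with each $d_m$ matched to $H_m$ with probability $1$. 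In Case~2 doctor $j$ prefers $D$ (which ranks $j$ first) and $l$ prefers $C$, and the same chain of active alternatives forces the unique solution $\pi^*_2=\set{(k,A),(j,D),(l,C),(i,B),(d_1,H_1),\dots,(d_\beta,H_\beta)}$, in which $\pi^*_2(i)[A]=0$.

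The core is a forcing-plus-indistinguishability argument. In Case~1 every doctor ranks $A$ first, so none of $A$'s proposals is ever rejected; by the $\alpha$-LAHP lower bound $\sum_C f_{A,t}(C)\ge\min\set{\alpha,p_{A,t}}$, the free mass of $A$ reaches $0$ by round $\beta=\lceil 1/\alpha\rceil$. Since accepted mass sticks, reaching the unique stable prospect forces $A$ to propose exactly $1/3$ to $i$, $2/3$ to $j$, and nothing to $k,l,d_1,\dots,d_\beta$ (any stray mass would stick and break stability); in particular $A$ proposes positive mass to $i$ in some round $t\le\beta$. I would then argue by induction on $t\le\beta$ that $A$'s transcript is identical in the two cases: $A$ proposes only to the singletons $\set{i}$ and $\set{j}$, doctor $i$ ranks $A$ first in both, and $j$ accepts identically because $A$ is the highest-ranked hospital to have proposed to it so far. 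The latter needs $D$ to be stalled past round $\beta$, which I would obtain by transplanting the delay induction of Theorem~\ref{thm:unfair_prefs_lahp}: $D$ must serve $d_1,\dots,d_\beta$ in preference order through singleton clusters (since it cannot rule out a world where each $d_m$ alone must be matched to $D$), so it reaches $j$ only after round $\beta$. Indistinguishability then makes $A$ propose the same positive mass to $i$ in Case~2; as $i$ ranks $A$ first it accepts and, by the ``never give up the top proposer'' rule, never relinquishes it, so $\pi(i)[A]>0$, contradicting $\pi^*_2(i)[A]=0$.

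The step I expect to be most delicate is the indistinguishability induction, where I must certify that through round $\beta$ every response $A$ observes is literally the same in both cases. This couples two sub-claims that have to be maintained together: that $A$ never proposes to $k,l$ or any gadget doctor in Case~1 (so their differing preferences stay invisible to $A$), and that $D$ is provably stalled on the $d_m$'s for $\beta$ rounds so that $j$ responds to $A$ as its current top proposer in both worlds. The delay induction for $D$ is the heaviest piece; it must be re-run under mutual-replacement-IF preferences rather than the unfair preferences of Theorem~\ref{thm:unfair_prefs_lahp}, but since the $\tilde{\succ}$ structure only reinforces $D$'s preference for serving the $d_m$'s before $j$, the argument carries over, and the choice $\beta=\lceil 1/\alpha\rceil$ (rather than the tighter $\lceil (2/3)/\alpha\rceil$ suggested by the incentive analysis) keeps the bookkeeping simple.
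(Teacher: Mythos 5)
Your proposal is correct and follows essentially the same route as the paper's proof: the same gadget with $\beta=\lceil 1/\alpha\rceil$ delay doctors and hospitals, the same two indistinguishable cases with uniqueness of the stable prospect established via selectively fair alternatives, the same forcing argument that $A$ must commit mass to $i$ within $\beta$ rounds, and the same induction showing $D$ must serve $d_1,\dots,d_\beta$ in singleton clusters before reaching $j$. Even your remark about using $\lceil 1/\alpha\rceil$ instead of the tighter $\lceil(2/3)/\alpha\rceil$ for bookkeeping simplicity mirrors the paper's own choice.
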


\begin{proof}
Assume for contradiction that there exists a $\alpha$-LAHP $a$, for $\alpha \in (0,1]$, that returns a PIIF and weakly ex-ante stable allocation. Consider the following example:

Define $\beta = \ceil{\frac{1}{\alpha}}$. Let $\D = \set{i, j, k, l, d_1, ..., d_{\beta}}$ be the set of doctors, $\H = \set{A, B, C, D, H_1, ..., H_{\beta}}$ be the set of hospitals and all the distances are 1 except: $d(i, j) = 1/3, d(k, l) = 1/3$.

The preferences of the doctors (besides doctors $j$ and $l$) are: 
\begin{align*}
&A \succ_i B \succ_i C \succ_i H_1 \succ_i ... \succ_i H_{\beta} \succ_i D,\\
&A \succ_k B \succ_k C \succ_k D \succ_k H_1 \succ_k ... \succ_k H_{\beta},\\
\forall m \in [{\beta}] :& A \succ_{d_m} H_m \succ_{d_m} H_1 \succ_{d_m}... \succ_{d_m} H_{m-1} \succ_{d_m} H_{m+1} \succ_{d_m}...\succ_{d_m}  H_{\beta} \succ_{d_m} B \succ_{d_m} C \succ_{d_m} D.
\end{align*}
The preferences of the hospitals are:
\begin{align*}
&j \,\Tilde{\succ}_A\, k \,\Tilde{\succ}_A\, i \,\Tilde{\succ}_A\, l \succ_A d_1 \succ_A ... \succ_A d_{\beta},\\
&j \,\Tilde{\succ}_B\, k \,\Tilde{\succ}_B\, i \,\Tilde{\succ}_B\, l \succ_B d_1 \succ_B ... \succ_B d_{\beta},\\
&k \,\Tilde{\succ}_C\, j \,\Tilde{\succ}_C\, l \,\Tilde{\succ}_C\, i \succ_C d_1 \succ_C ... \succ_C d_{\beta},\\
&d_1 \succ_D ... \succ_D d_{\beta} \succ_D j \,\Tilde{\succ}_D\, l \,\Tilde{\succ}_D\, i \,\Tilde{\succ}_D\, k,\\
\forall m \in [{\beta}] :& d_m \succ_{H_m} d_1  \succ_{H_m}... \succ_{H_m} d_{m-1} \succ_{H_m} d_{m+1} \succ_{H_m}...  \succ_{H_m} d_{\beta} \succ_{H_m} i \,\Tilde{\succ}_{H_m}\, k \,\Tilde{\succ}_{H_m}\, j \,\Tilde{\succ}_{H_m}\, l.
\end{align*}

We divide into two cases:
\begin{enumerate}
    \item The preference of doctor $j$ are $$A \succ_j B \succ_j C \succ_j D \succ_j H_1 \succ_j ... \succ_j H_{\beta},$$
    and the preferences of doctor $l$ are
    $$A \succ_l B\succ_l D \succ_l C \succ_l H_1 \succ_l ... \succ_l H_{\beta}.$$
    First, we show that in any PIIF and weakly ex-ante stable solution $\pi^*_1$ for these preferences satisfies that
    $$
    \pi^*_1(A) = \begin{cases}
    i,&\textit{w.p. } 1/3 \\
    j,&\textit{w.p. } 2/3.
    \end{cases}
    $$
    Let the allocation $\pi$ be some PIIF and weakly ex-ante stable allocation, assume that 
    $$\pi(A) \ne \begin{cases}
    i,&\textit{w.p. } 1/3 \\
    j,&\textit{w.p. } 2/3.
    \end{cases}$$
    We claim that the selectively fair alternative allocation $$\nu_{A,i,j} = \left(A, \set{i,j}, \sigma_{i,j} =  \begin{cases}
    i,&\textit{w.p. } 1/3 \\
    j,&\textit{w.p. } 2/3
    \end{cases}\right)$$
    is active. This is because:
    \begin{itemize}
        \item The distribution $\sigma_{i,j}$ is IF.
        \item Hospital $A$ is doctors $i$ and $j$'s most preferred hospital. Thus, they prefer hospital $A$ over any hospital in their prospect in the allocation $\pi$.
        \item For any doctor $d \in \set{d_1, ..., d_{\beta}}$, replacing doctor $d$ by doctors $i$ or $j$ stochastically dominates hospital $A$'s prospect in the allocation $\pi$. Since the allocation $\pi$ is PIIF and all the doctors rank hospital $A$ first, \begin{align*}
            |\Pr[\pi(A) = j] - \Pr[\pi(A) = i]| \le 1/3, \quad |\Pr[\pi(A) = k] - \Pr[\pi(A) = l]| \le 1/3.
        \end{align*} 
        Under these constraints, the prospect $\sigma_{i,j}$ maximizes over the rank of hospital $A$'s prospect.
    \end{itemize}
    
    Hospital $A$ is the most preferred hospital by all the doctors. Thus, for any clustering hospital $A$ chooses, any proposals it makes would be accepted by all the doctors in the proposed clusters. More than that, for the same reason, this accepted mass would not be rejected or reallocated in any later round. At the end of the run, hospital $A$ is assigned to the prospect $\sigma_{i,j}$. 
    This implies that hospital $A$ must propose probability mass only to clusters that contain the doctors $i$ and $j$, and it must propose at least $\alpha$ probability mass at each round.
    Thus, after $\beta = \ceil{\frac{1}{\alpha}}$ rounds, hospital $A$'s prospect must be $\sigma_{i,j}$.
    
    \item The preference of doctor $j$ are 
    $$D \succ_j A \succ_j B \succ_j C \succ_j H_1 \succ_j ... \succ_j H_{\beta},$$
    and the preferences of doctor $l$ are 
    $$C \succ_l B\succ_l A \succ_l D \succ_l H_1 \succ_l ... \succ_l H_{\beta}.$$
    
    We show that the only PIIF and weakly ex-ante stable solution for these preferences is 
    $$
    \pi^*_2 = \set{(k, A), (j, D), (l, C), (i, B), (d_1, H_1), ...,(d_{\beta},H_{\beta}): 1}.
    $$
    This is because doctor $j$ ranks hospital $D$ first, hospital $D$ ranks doctor $j$ first and doctor $i$ ranks hospital $D$ last. Thus, doctor $j$ and hospital $D$ must be matched. Otherwise, the selectively fair alternative allocation $(D, \set{j}, \set{j, \textit{ w.p. }1})$ is active. 
    
    Similarly, the pairs $(C,l)$, $(A,k)$, $(B,i)$, $\forall m \in [\beta]:(H_m,d_m)$, rank each other first between individuals that their prospect had not been determined yet (determining by this order), and are in distance 1 from every doctor in $\D \backslash \set{j}$. Thus, algorithm $a$ must match these pairs with probability 1.
    
    Assume hospital $A$ does not see any difference from the first case in the first $\beta$ rounds. Then, in one of the first $\beta$ rounds, hospital $A$ must propose 1/3 probability mass to doctor $i$, and $i$ must accept since it ranks hospital $A$ first. For the same reason, doctor $i$ will never reject hospital $A$, and then algorithm $a$ does not return the only PIIF and weakly ex-ante stable solution. Hospital $A$ notices the difference in the preferences when hospital $D$ proposes some probability mass to doctor $j$, and doctor $j$ rejects hospital $A$ some probability mass from $A$.
    
    Thus, finish the proof, we have to show that hospital $D$ proposes to doctor $j$ after at least $\beta+1$ rounds.
    We show that hospital $D$ must propose to the doctors in $\D \backslash \set{i, j, k, l}$ by order of its preferences, and for this cluster each of them in a singleton. Otherwise, algorithm $a$ might return an unstable allocation. We show this by induction over the doctors in $\D \backslash \set{i, j, k, l}$ by order of preferences. 
    
    \textbf{Base:} The first doctor that hospital $D$ proposes must be $d_1$. Otherwise, if hospital $D$ is ranked highest by all the doctors, any other doctor that $D$ proposes will accept, and the output will be unstable. 
    In the first round, hospital $D$ cannot distinguish this case from any other case.
    Thus, hospital $D$ must cluster $d_1$ in a singleton, and propose to it in the first round.
    
    \textbf{Step:} The $m$-th doctor that hospital $D$ proposes must be ranked $m$-th by $D$. Denote by $\widetilde{r}_D$, hospital $D$'s preferences in terms of stochastic domination.
    Assume from the induction assumption that until step $t$, hospital $D$ proposed to clusters that contain doctors from the set $$S_{m-1} = \set{\widetilde{r}_D(m'): m' \le m-1}.$$
    We claim that at step $t+1$, hospital $D$ proposes to clusters that contain doctors from $S_{m}$. Otherwise, consider the case where for every doctor $d \in S_{m-1}$, $r_d^{-1}(D) = |\D|$ there exists some hospital $h \in \H\backslash \set{D}$ such that $r_d^{-1}(h) = 1$ and  $\widetilde{r}_h^{-1}(d) = 1$.
    For every doctor $d \in S_{m-1}$, in any weakly ex-ante stable allocation, doctor $d$ must be matched to the hospital they ranked first with probability 1. 
    
    In round $t+1$, hospital $D$ does not have any knowledge about the preferences of the doctors in $\D \backslash S_{m-1}$. Assume that for every doctor $d \in \D\backslash S_{m-1}$, $r_d^{-1}(D) = 1$.
    Then, since they rank hospital $D$ first, any doctor from $\D\backslash S_m$ that hospital $D$ proposes is matched to hospital $D$ in the final allocation. However, the only weakly ex-ante stable solutions are those where hospital $D$ is matched to its $m$-th preferred doctor. Thus, hospital $D$ cannot propose at step $t+1$ to any doctor in $\D \backslash S_m$. 
    Hospital $D$ has no knowledge about the preferences of the doctors in $\D \backslash S_m$ in round $t+1$, therefore, hospital $D$ must propose only to clusters that contain doctors from $S_{m}$.
\end{enumerate}
\end{proof}

We are now set to prove Theorem \ref{thm:gen_metrics_fail}.

\begin{proof}[Proof of Theorem \ref{thm:gen_metrics_fail}]
The proof is immediate from Theorem \ref{thm:fair_prefs_lahp} and Theorem \ref{claim:fair_prefs_ladp}.
\end{proof}
\section{Open Questions}
\label{sec:conclussion}

We leave several questions open. We do not resolve the question of whether fairness and stability are compatible for general metrics or unfair preferences. Specifically, fair and stable solutions exist in our negative examples, and it is only that the local-proposing algorithm fails to find it. 
Proving the general existence of fair and stable solutions and whether they can be obtained by different algorithms than the ones we considered is an interesting direction for future work.

Also, the negative results presented in Section \ref{sec:failure_fair_prefs}, where the hospital preferences are individually fair, brings to mind that this might not be the right way to enforce fairness over the preferences. The inherent problem in the examples in Section \ref{sec:impossibility} is that in some cases, a hospital has to be matched to a doctor it does not want to in order to achieve its best outcome, and in some not. The fact that the hospital does not want to be matched to the doctor but has to implies that the preferences are not really fair, although they do satisfy the criteria of individual fairness.




\section{Acknowledgements}

We thank Shahar Dobzinski and Moni Naor for helpful conversations and suggestions about this work and its presentation. We also thank the anonymous ITCS 2022 reviewers for their feedback.

\bibliographystyle{alpha}
\newcommand{\etalchar}[1]{$^{#1}$}

\newpage
\appendix
\section{Generalization of Compatibility Result}
\label{sec:reduction}

In this section, we present a generalization for the result from Section \ref{sec:piif_algs}. We show that for a more relaxed requirement for the hospital preferences, we can still guarantee fairness and stability for the setting of a proto-metric, using \Cref{alg:GS_PSP} and \Cref{alg:GS_WA}. 
We show this for the general stability definition, set contract stability, see \cref{def:contract_stability_general}.

First, we present another definition for IF preferences, \emph{rank individual fairness}, which is a relaxation of both Definition \ref{def:mutual_replacement_individual_fairness} and Definition \ref{def:strict_IF}.
This definition guarantees that for every two doctors $i$ and $j$, the statistical distance between the rank distribution induced by the preferences of any hospital $h$ for doctors $i$ and $j$ is bounded by their distance according to the metric.

\begin{definition}[Rank Individual Fairness Preferences]
\label{def:rank_IF}
Let ${\D}$ be a set of doctors, $h \in \H$ a hospital, $d : {\D} \times {\D} \to [0,1]$ a pseudometric and let $r_h$ be $h$'s (probabilistic) preference. Preference $r_h$ is considered rank individually fair if for every two doctors $i, j \in {\D}$,
$$D_{TV}(r_h^{-1}(i), r_h^{-1}(j)) \le d(i,j).
$$
\end{definition}
\begin{corollary}
If $d$ is a proto metric and $h \in \H$ is a hospital with rank individually fair preferences then for every $i, j \in {\D}$ such that $d(i,j) = 0$, i.e. $i$ and $j$ are in the same cluster,
$$\forall k \in [n] : \Pr[r_h(k) = i] = \Pr[r_h(k) = j].$$
\end{corollary}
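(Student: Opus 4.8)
The plan is to unwind the definition of rank individual fairness (Definition \ref{def:rank_IF}) in the special case where the metric is a proto-metric. The key observation is purely definitional: the corollary is asking us to show that two rank-distributions coincide, and rank IF already bounds the total variation distance between them by the metric distance, which is zero in a proto-metric for doctors in the same cluster.

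Concretely, I would proceed as follows. First, fix a hospital $h \in \H$ and two doctors $i, j \in \D$ with $d(i,j) = 0$, i.e., lying in the same cluster (this is exactly the proto-metric hypothesis). Second, apply Definition \ref{def:rank_IF} to the pair $(i,j)$ to obtain
\begin{equation*}
    D_{TV}(r_h^{-1}(i), r_h^{-1}(j)) \le d(i,j) = 0.
\end{equation*}
Since total variation distance is nonnegative, this forces $D_{TV}(r_h^{-1}(i), r_h^{-1}(j)) = 0$. Third, I would invoke the standard fact that the total variation distance between two distributions vanishes if and only if the distributions are identical; hence the rank random variables $r_h^{-1}(i)$ and $r_h^{-1}(j)$ have the same law over $[n]$.

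Finally, I would translate this equality of laws back into the statement of the corollary by matching notation. The random variable $r_h^{-1}(i)$ records the rank of doctor $i$, so for every $k \in [n]$ we have $\Pr[r_h^{-1}(i) = k] = \Pr[r_h(k) = i]$, and symmetrically for $j$. Equality of the two rank-distributions therefore reads $\Pr[r_h(k) = i] = \Pr[r_h(k) = j]$ for all $k \in [n]$, which is precisely the claim.

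I do not anticipate a genuine obstacle here: the corollary is an immediate consequence of the fact that vanishing total variation distance characterizes equal distributions. The only point requiring a small amount of care is the notational bookkeeping in the last step—specifically, recognizing that the random variable $r_h^{-1}(i)$ viewed as a distribution over ranks is exactly the vector $\left(\Pr[r_h(k) = i]\right)_{k \in [n]}$—so that the TV-distance statement about rank random variables converts cleanly into the per-rank equality asserted in the corollary.
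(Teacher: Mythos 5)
Your proof is correct and matches the paper's intent: the paper states this corollary without proof precisely because it follows immediately from Definition \ref{def:rank_IF} in the way you describe — the Lipschitz bound forces $D_{TV}(r_h^{-1}(i), r_h^{-1}(j)) = 0$ when $d(i,j)=0$, and vanishing total variation distance means the two rank distributions agree at every rank $k$. The notational translation $\Pr[r_h^{-1}(i)=k]=\Pr[r_h(k)=i]$ is exactly the right bookkeeping step.
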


\subsection{Reduction from Rank IF Preferences to Strict IF Preferences}
We present an algorithm that takes as input rank individually fair preferences and outputs strict individually fair preferences. We show that any allocation that is PIIF and contract stable for the strict individually fair preferences is also PIIF and set contract stable for the rank individually fair preferences.

For a fixed hospital $h \in \H$, algorithm \ref{alg:Rank2Cluster} takes as input rank individually fair preferences and returns a function $o_h : \C \times \C \to \set{-1, 1, 0}$. If for two clusters $C_1, C_2 \in \mathcal{C}$ the preferences of the hospital are exactly the same, then the function will return 0, i.e., hospital $h$ is indifferent. Otherwise, the algorithm finds the minimal value of rank $r \in [n]$ such that there exists doctors $i \in C_1$ and $j \in C_2$
$$
\Pr[r_h^{-1}(i) \le r] \ne \Pr[r_h^{-1}(j) \le r].
$$
If $$\Pr[r_h^{-1}(i) \le r] > \Pr[r_h^{-1}(j) \le r]$$ then $o_h(C_1, C_2) = 1$, otherwise, $o_h(C_1, C_2) = -1$.

From the function $o_h$, we can conclude an order over the clusters, if $o_h(C_1, C_2) = 1$ then $C_1 \succ_h C_2$, if $o_h(C_1, C_2) = -1$, then $C_2 \succ_h C_1$, and if $o_h(C_1, C_2) = 0$ the order is chosen arbitrarily.

\begin{algorithm}[H]
    \caption{Rank IF to cluster IF ($RankToCluster(P, \mathcal{C}, R, r)$)}
    \label{alg:Rank2Cluster}
        \hspace*{\algorithmicindent} \textbf{Input} Rank IF preferences $r_h:[R] \to \D$, set of clusters $\mathcal{C}$, number of ranks $R$, current rank $r$ 
    \begin{algorithmic}[1]
    \IF{$r = R + 1$ or $|\mathcal{C}| = 1$}
        \STATE Define the function $o : \mathcal{C} \times \mathcal{C} \rightarrow \{-1,0,1\}$ such that 
        $$
        \forall C_1, C_2 \in \mathcal{C}: o(C_1,C_2) = 0.
        $$
        \RETURN $o$.
    \ELSE
        \STATE Define $p_{C,r} \leftarrow \Pr_{r_h}[r_h(r) \in C] / |C|$.
        \hfill\COMMENT The probability of each doctor in $C$ to be ranked $r$-th.
        \STATE Create a partition $\mathcal{C}_1 , ..., \mathcal{C}_k$ over $\mathcal{C}$ such that $\mathcal{C}_i = \{C \in \mathcal{C}: p_{C,r} = p_i\}$ and $p_1 > p_2 > ... > p_k$.
        \FORALL{$i \in [k]$}
            \STATE $o_i \leftarrow RankToCluster(r_h, \mathcal{C}_i, R, r+1)$.
        \ENDFOR
        \STATE Define $o : \mathcal{C} \times \mathcal{C} \rightarrow \{-1, 0, 1\}$ such that $\forall i, j \in [k], C_1 \in \mathcal{C}_i, C_2 \in \mathcal{C}_j$,
        $$o(C_1, C_2) = \begin{cases}
        -1, & i > j \\
        o_{i}(C_1, C_2), & i = j \\
        1, & i > j
        \end{cases}$$
        \RETURN $o$.
    \ENDIF
    \end{algorithmic}
\end{algorithm}

\begin{claim}
Let $\H$ be a set of hospitals, $\D$ be a set of doctors, $d : \D \times \D \rightarrow \{0,1\}$ be a proto-metric, $\mathcal{C}$ be a partition to clusters over $\D$ and $\mathcal{P} = \set{r_h : [n] \to \D}_{h\in\H}$ be rank individually fair preferences of the hospitals over $\D$. 
If for every $h \in \H$, we run \Cref{alg:Rank2Cluster} over $r_h, \mathcal{C}, n, 1$, and receive a preferences over the partition $\mathcal{C}$, $\mathcal{P}_\mathcal{C} = \set{\widetilde{r}_h:[|\C|] \to \C}_{h\in\H}$ for every $h \in \H$. 
Then every allocation that is contract stable with respect to $\mathcal{P}_\mathcal{C}$ is set contract stable with respect to $\mathcal{P}$.
\end{claim}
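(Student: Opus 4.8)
The plan is to argue the contrapositive: assuming an active set contract $\mu=(h,\D',a,\sigma)$ with respect to the rank individually fair preferences $\mathcal{P}$ (Definition \ref{def:contract_stability_general}), I will exhibit an active singleton contract with respect to the induced clustered preferences $\mathcal{P}_{\mathcal{C}}$ (Definition \ref{def:contract_stability_proto}), contradicting the hypothesized contract stability. The first step is to collapse the problem to the level of clusters. Since $\mathcal{P}$ is rank individually fair and $d$ is a proto-metric, the corollary following Definition \ref{def:rank_IF} gives that all doctors in a common cluster share the same rank distribution under $r_h$; writing $q_C(k)=\Pr[r_h^{-1}(i)\le k]$ for any $i\in C$, the ex-ante value that $h$ assigns to any prospect $p$ at level $k$ is $\sum_{C}\Pr[p\in C]\,q_C(k)$, so it depends only on the cluster-marginal of $p$. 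Hence activeness $\pi^\mu(h)\succ_h\pi(h)$ is equivalent to the signed cluster-marginal $\Delta(C)=\Pr[\pi^\mu(h)\in C]-\Pr[\pi(h)\in C]$ satisfying $F(k):=\sum_C\Delta(C)q_C(k)\ge 0$ for all $k$, with strict inequality for some $k$, together with $\sum_C\Delta(C)=0$.

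The heart of the argument is to translate this rank-level improvement into an improvement in the clustered order $\widetilde{r}_h$ returned by Algorithm \ref{alg:Rank2Cluster}. I would first observe that RankToCluster orders clusters lexicographically by the descending cumulative vector $(q_C(1),\dots,q_C(n))$: the primary key is $q_C(1)$, ties are broken by $q_C(2)$, and so on, exactly matching the recursion on the current rank $r$. The key lemma I need is: \emph{if $\sum_C\Delta(C)=0$ and $F(k)\ge 0$ for every $k$, then $\sum_{C\in U}\Delta(C)\ge 0$ for every upper set $U=\{C: C\succeq_{\widetilde{r}_h}C_0\}$ of the clustered order, and this sum is strictly positive for some $U$ whenever some $F(k)$ is strict.} I expect this lemma to be the main obstacle, because it is genuinely false that a gain at a single rank $k$ forces a gain in the lexicographic order (an individual swap can raise the top-$k$ probability while lowering the lexicographic position); only the full family $\{F(k)\ge 0\}$ controls the lexicographic cumulatives. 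I would prove it by induction mirroring the recursion of RankToCluster: condition on the rank-$1$ partition into blocks $\mathcal{C}_1\succ\cdots\succ\mathcal{C}_g$ with $q_C(1)=p_1>\cdots>p_g$, use the $k=1$ inequality together with $\sum_C\Delta(C)=0$ to control the block totals, and apply the inductive hypothesis to the rank-$(\ge 2)$ conditional distributions within each block to handle upper sets that split a block.

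Granting the lemma, I would extract the contract in the spirit of Claim \ref{claim:contract_stability_implies_weak_stability}. Fix an upper set $U$ with $\sum_{C\in U}\Delta(C)>0$. Because $\pi^\mu$ modifies $\pi$ only by moving $h$ off doctors in $\D'$ onto doctors sampled from $\sigma\in\Delta(\D\setminus\D')$, and only when the sampled doctor accepts (i.e.\ strictly prefers $h$ to its current match), the net mass gained by $U$ must come from a boundary-crossing swap: swaps contained in $U$ or disjoint from $U$ leave its total unchanged, so a strictly positive net gain forces at least one positive-probability realization in which $h$ is matched to some $i'=\pi(h)\in\D'$ lying in a cluster $C'\notin U$, the proposal to some $i\in\mathrm{supp}(\sigma)$ lying in a cluster $C\in U$ is accepted, and $\pi(i)=h'$. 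This realization gives $\Pr[\pi(h)=i'\wedge\pi(i)=h']>0$. Since $C\in U$ and $C'\notin U$ we have $C\succ_{\widetilde{r}_h}C'$, hence $d(i,i')=1$, and the single swap strictly improves $h$'s clustered prospect, so $\pi^{\mu'}(h)\succ_{\widetilde{r}_h}\pi(h)$; acceptance gives $h\succ_i h'$, i.e.\ $r_i(h)<r_i(h')$. Thus $\mu'=(h,i;h',i')$ is an active contract with respect to $\mathcal{P}_{\mathcal{C}}$ (Definitions \ref{def:contract} and \ref{def:active_contract}), contradicting the assumption that $\pi$ is contract stable with respect to $\mathcal{P}_{\mathcal{C}}$. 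Therefore no active set contract exists and $\pi$ is set contract stable with respect to $\mathcal{P}$.
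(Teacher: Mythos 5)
Your overall architecture---pass to cluster marginals, show that an active set contract forces some lexicographic prefix of the \Cref{alg:Rank2Cluster} order to gain strictly positive mass, then extract a boundary-crossing elementary swap as an active contract against $\mathcal{P}_\C$---is sound, and the extraction step at the end is correct. The genuine gap is in your key lemma: its first half is false. The family $\{F(k)\ge 0\}_k$ together with $\sum_C\Delta(C)=0$ does \emph{not} imply $\sum_{C\in U}\Delta(C)\ge 0$ for every upper set $U$. Concretely, take three singleton clusters (all distances $1$, so rank IF is vacuous) whose rank PMFs over ranks $1,2,3$ are the rows $(0.5,0.1,0.4)$, $(0.4,0.5,0.1)$, $(0.1,0.4,0.5)$ of a doubly stochastic matrix; the cumulative vectors are $(0.5,0.6,1)$, $(0.4,0.9,1)$, $(0.1,0.5,1)$, so \Cref{alg:Rank2Cluster} orders the clusters $C_1\succ C_2\succ C_3$. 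For $\Delta=(-0.3,\,0.4,\,-0.1)$ one checks $F(1)=0$, $F(2)=0.13>0$, $F(3)=0$, yet the upper set $\{C_1\}$ has sum $-0.3<0$. Only the second half of your lemma---\emph{some} upper set has strictly positive sum---is true, and fortunately that is all your extraction step uses. But your proposed induction (``use the $k=1$ inequality \ldots to control the block totals'') is aimed at the false half and cannot succeed: $F(1)\ge 0$ only says that a positively weighted combination of the block-prefix sums is nonnegative, not that each is. The repair is to prove the contrapositive of the half you need: if every upper-set sum were $\le 0$, Abel summation over the rank-$1$ blocks gives $F(1)\le 0$, hence $F(1)=0$ and every block total vanishes; recursing within blocks yields $F(k)=0$ for all $k$, contradicting strictness.

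For comparison, the paper avoids any such lemma by using the hypothesis in the opposite order: contract stability with respect to $\mathcal{P}_\C$ first forces every elementary swap $(i,i',h')$ available to the set contract to move $h$ weakly \emph{down} the clustered order; taking $r^*$ to be the minimum, over these swaps, of the first rank at which the two clusters' cumulative distributions differ, every swap contributes a nonpositive amount to $F(r^*)$ and the minimizing swap a strictly negative amount, so $F(r^*)<0$ and the set contract is not active. Both routes can be made to work, but the paper's is shorter because the stability hypothesis eliminates the ``upward'' swaps before any aggregation is needed, whereas your route must aggregate first and only then locate an upward swap---precisely the place where the false half of your lemma would have been doing illegitimate work had the extraction step relied on it.
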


\begin{proof}
Let $\pi$ be a contract stable allocation with respect to $\mathcal{P}_\mathcal{C}$. Assume for contradiction there exists an active set contract $\mu = (h, \D', a, \sigma) \in \H \times \mathcal{P}(\D)\times [0,1] \times \Delta(\D\backslash\D')$ with respect to $\mathcal{P}$. Denote $\pi^{\mu}$ as the allocation $\pi$ after activating $\mu$.

In this proof, we assume without loss of generality that for every two clusters $C_1, C_2 \in \C$, $o_h(C_1, C_2) \ne 0$. Otherwise, we can define $\widetilde{\C}$ to be the set of clusters where every two clusters $C_1, C_2 \in \C$ where $o_h(C_1, C_2) = 0$ are united to the same cluster and continue this proof over $\widetilde{\C}$.

Since $\mathcal{P}_\C$ is strict individually fair and $d$ is a proto-metric we use contract stability (Definition \ref{def:contract_stability_proto}) for the stability of $\mathcal{P}_\C$.

Let's denote by $\mathcal{S}_\mu \in \D \times \D \times \H$ the set of ``possible contracts'' with respect to the set contract $\mu$: 
For every $(i, i', h') \in \mathcal{S}_\mu$, doctor $i'$ is in the set $\D'$, doctor $i$ is in the support of $\sigma$, $h \succ_i h'$ and $\Pr[\pi(h) = i' \wedge \pi(i) = h'] > 0$. 
In other words, $\mathcal{S}_\mu$ is the set of doctors and hospitals such that there is an event (with positive probability) where hospital $h$ is matched to doctor $i'$ and proposes to doctor $i$, and doctor $i$ accepts since doctor $i$ prefer hospital $h$ over its current match, hospital $h'$.

Since the allocation $\pi$ is contract stable with respect to $\mathcal{P}_\C$, for every $(i, i', h') \in \mathcal{S}_\mu$, we know that $\widetilde{r}_h^{-1}(C) \ge \widetilde{r}_h^{-1}(C')$ where clusters $C$ and $C'$ are the clusters that contain doctors $i$ and $i'$ respectively.

We also know that $\mathcal{S}_\mu$ is not empty and there is at least one set $(i, i', h') \in \mathcal{S}_\mu$ where $\widetilde{r}_h^{-1}(C) > \widetilde{r}_h^{-1}(C')$ since otherwise $\pi(h) \succeq_h \pi^\mu(h)$ which contradicts the assumption that $\mu$ is active.
For any set $(i, i', h') \in \mathcal{S}_\mu$ where $\widetilde{r}_h^{-1}(C) > \widetilde{r}_h^{-1}(C')$, there exists $r_{i,i'} \in [|\C|-1]$ such that
\begin{align*}
    \forall r < r_{i,i'}: &\Pr[r_h^{-1}(i) \le r] = \Pr[r_h^{-1}(i') \le r] \\
    &\Pr[r_h^{-1}(i) \le r_{i,i'}] < \Pr[r_h^{-1}(i') \le r_{i,i'}]
\end{align*}
For every set where $\widetilde{r}_h^{-1}(C) = \widetilde{r}_h^{-1}(C')$, for convenience, we denote $r_{i,i'} = |\C|$.

Let's denote by $r^*$ the minimal rank that satisfy the above for some set  $i^*, {i^*}', h^*$, i.e. 
\begin{align*}
    &r^* = \min_{(i, i', h') \in \mathcal{S}_\mu}r_{i,i'} \\
    &i^*, {i^*}', h^* \in \argmin_{(i, i', h') \in \mathcal{S}_\mu}r_{i,i'}.
\end{align*}
Then we have that
\begin{align*}
    \forall (i, i', h') \in \mathcal{S}_\mu: &\Pr[r_h^{-1}(i) \le r^*] \le \Pr[r_h^{-1}(i') \le r^*]\\
    &\Pr[r_h^{-1}(i^*) \le r^*] < \Pr[r_h^{-1}({i^*}') \le r^*].
\end{align*}
This implies that
\begin{align*}
    &\Pr[r_h^{-1}(\pi^\mu(h)) \le r^*] - \Pr[r_h^{-1}(\pi^\mu(h)) \le r^*] = \\
    &\sum_{(i, i', h') \in \mathcal{S}_\mu}\Pr[\pi(h) = i' \wedge \pi(i) = h']\Pr[\sigma = i](\Pr[r_h^{-1}(i) \le r^*] -\Pr[r_h^{-1}(i') \le r^*]) \le\\
    &\Pr[\pi(h) = {i^*}{}' \wedge \pi(i^*) = {h^*}{}']\Pr[\sigma = i](\Pr[r_h^{-1}(i^*) \le r^*] -\Pr[r_h^{-1}({i^*}{}') \le r^*]) < 0
\end{align*}

Thus, $\pi^\mu(h) \not\succ_h \pi(h)$ with contradiction to the assumption that $\mu$ is an active set contract.
\end{proof}

\begin{corollary}
Given a set of rank individually fair preferences $\mathcal{P}$ we can run \Cref{alg:Rank2Cluster} and get a set of strict individually fair preferences $\mathcal{P}_\C$. Then we can run \Cref{alg:GS_PSP} or \Cref{alg:GS_WA} and get a fair and contract stable allocation with respect to $\mathcal{P}_\C$ and it will also be fair and set contract stable with respect to $\mathcal{P}$.
\end{corollary}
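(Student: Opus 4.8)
The plan is to obtain the corollary purely by composing results already in hand: the conversion procedure \Cref{alg:Rank2Cluster}, the guarantees of \Cref{alg:GS_PSP} and \Cref{alg:GS_WA} (Theorems \ref{thm:gs_psp_piif_stable} and \ref{thm:gs_wa_fair_stable}), and the reduction claim immediately preceding this corollary. No new combinatorics should be needed; the work is in checking that each interface lines up.

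First I would confirm that \Cref{alg:Rank2Cluster}, applied to a rank individually fair profile $\mathcal{P}$ under a proto-metric, really returns a strict individually fair profile $\mathcal{P}_\C$ in the sense of \Cref{def:strict_IF}. The within-cluster requirement is handed to us by the corollary following \Cref{def:rank_IF}: for a proto-metric, rank IF forces $\Pr[r_h(k)=i]=\Pr[r_h(k)=j]$ at every rank $k$ for doctors $i,j$ in a common cluster, which is exactly the indifference \Cref{def:strict_IF} demands. The between-cluster requirement is supplied by the output function $o_h$, which (after arbitrary tie-breaking when $o_h(C_1,C_2)=0$) induces a total order on clusters. The one point to check with care is that the recursive refinement in \Cref{alg:Rank2Cluster} never splits a single cluster across two parts $\mathcal{C}_i,\mathcal{C}_j$: this holds because two doctors of the same cluster share an identical rank distribution, so at every recursion level their clusters have the same value $p_{C,r}$ and are placed in the same part, leaving the induced order well defined on clusters.

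With $\mathcal{P}_\C$ in hand, Theorem \ref{thm:gs_psp_piif_stable} (for \Cref{alg:GS_PSP}) and Theorem \ref{thm:gs_wa_fair_stable} (for \Cref{alg:GS_WA}) guarantee that the returned allocation $\pi$ is (approximately) PIIF and $\tau$-contract stable with respect to $\mathcal{P}_\C$. The preceding reduction claim then upgrades contract stability with respect to $\mathcal{P}_\C$ into set contract stability with respect to the original $\mathcal{P}$. Fairness transfers for free: PIIF (\Cref{PIIF}) is a predicate on the allocation, the metric $d$, and the \emph{doctors'} preferences only, and never references the hospitals' preferences; hence $\pi$ being (approximately) PIIF is insensitive to whether we view the hospitals as holding $\mathcal{P}_\C$ or $\mathcal{P}$. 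Composing these three facts yields an allocation that is simultaneously fair and set contract stable with respect to $\mathcal{P}$.

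The step I expect to require the most care is reconciling the approximation parameters, since the corollary is phrased with exact ``contract stable'' wording while the algorithms deliver only $\tau$-contract stability (and, for \Cref{alg:GS_WA}, only $2\tau$-PIIF). I would handle this by noting that the reduction claim's argument is pointwise over the support --- any active set contract under $\mathcal{P}$ is shown to force an active single contract under $\mathcal{P}_\C$ on the \emph{same} positive-probability event --- so the probability mass on which an active set contract can occur under $\mathcal{P}$ is bounded by the mass on which an active contract occurs under $\mathcal{P}_\C$, i.e.\ by $\tau$. This produces the natural $\tau$-approximate version of set contract stability, which is the honest reading of the corollary in the approximate regime; the doctor-proposing variant inherits its $2\tau$-PIIF guarantee unchanged.
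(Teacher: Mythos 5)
Your proposal is correct and follows exactly the composition the paper intends: the paper states this corollary without a separate proof, treating it as an immediate consequence of the preceding reduction claim together with Theorems \ref{thm:gs_psp_piif_stable} and \ref{thm:gs_wa_fair_stable}. If anything you are more careful than the source, since your checks that \Cref{alg:Rank2Cluster} never splits a cluster across recursion parts and that the $\tau$-approximate stability guarantee survives the reduction (yielding $\tau$-approximate set contract stability rather than the exact version the corollary's wording suggests) address gaps the paper leaves implicit.
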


\end{document}